\setlist{noitemsep}
\newif\ifabstract
\newif\iffull
\DeclareMathAlphabet{\mathcal}{OMS}{cmsy}{m}{n}
\newtheorem{theorem}{Theorem}[section]
\newtheorem{lemma}[theorem]{Lemma}
\newtheorem{claim}[theorem]{Claim}
\newtheorem{corollary}[theorem]{Corollary}
\theoremstyle{definition}
\newcommand{\retheorem}{theorem}
\def\th@plain{%
  \thm@notefont{}
  \itshape 
}
\def\th@definition{%
  \thm@notefont{}
  \normalfont 
}
\newcommand{\myparagraph}[1]{\textbf{#1}}
\newcommand{\dist}{\mathrm{dist}}
\newcommand{\Oh}{\ensuremath{\mathcal{O}}}
\newcommand{\R}{\ensuremath{\mathbb{R}}}
\newcommand{\N}{\ensuremath{\mathbb{N}}}
\def\cqedsymbol{\ifmmode$\lrcorner$\else{\unskip\nobreak\hfil
\penalty50\hskip1em\null\nobreak\hfil$\lrcorner$
\parfillskip=0pt\finalhyphendemerits=0\endgraf}\fi} 
\newcommand{\cqed}{\renewcommand{\qed}{\cqedsymbol}}
\newcommand{\executeiffilenewer}[3]{%
\ifnum\pdfstrcmp{\pdffilemoddate{#1}}%
{\pdffilemoddate{#2}}>0%
{\immediate\write18{#3}}\fi%
} 
\newcommand{%
 \executeiffilenewer{figures/.svg}{figures/.pdf}%
{inkscape -z -D --file=figures/.svg %
--export-pdf=figures/.pdf --export-latex}%
{\input{figures/.pdf_tex}}}[1]{%
 \executeiffilenewer{figures/#1.svg}{figures/#1.pdf}%
{inkscape -z -D --file=figures/#1.svg %
--export-pdf=figures/#1.pdf --export-latex}%
{\input{figures/#1.pdf_tex}}}%
\newcommand{\svg}[2]{\def\svgwidth{#1}%
 \executeiffilenewer{figures/#2.svg}{figures/#2.pdf}%
{inkscape -z -D --file=figures/#2.svg %
--export-pdf=figures/#2.pdf --export-latex}%
{\input{figures/#2.pdf_tex}}}
\newcommand{\rectfam}{\mathcal{R}}
\newcommand{\pointfam}{\mathcal{P}}
\newcommand{\probIOutB}{\textsc{$k$-Internal Out-Branching}\xspace}
\newcommand{\probLOutB}{\textsc{$k$-Leaf Out-Branching}\xspace}
\newcommand{\probkdCent}{\textsc{$(k,d)$-Center}\xspace}
\newcommand{\probPVC}{\textsc{Partial Vertex Cover}\xspace}
\newcommand{\cnfsat}{\textsc{CNF-SAT}\xspace}
\newcommand{\gridtiling}{\textsc{Grid Tiling}\xspace}
\newcommand{\probClique}{\textsc{Clique}\xspace}
\newcommand{\probDS}{\textsc{Dominating Set}\xspace}
\newcommand{\probScatter}{\textsc{$d$-Scattered Set}\xspace}
\newcommand{\probdDS}{\textsc{$d$-Dominating Set}\xspace}
\newcommand{\probIS}{\textsc{Independent Set}\xspace}
\newcommand{\probFVS}{\textsc{Feedback Vertex Set}\xspace}
\newcommand{\probKPath}{\textsc{Longest Path}\xspace}
\newcommand{\probMWay}{\textsc{Multiway Cut}\xspace}
\newcommand{\probSTSP}{\textsc{Subset TSP}\xspace}
\newcommand{\probTSP}{\textsc{TSP}\xspace}
\newcommand{\probSCSS}{\textsc{Strongly Connected Steiner Subgraph}\xspace}
\newcommand{\probSteiner}{\textsc{Steiner Tree}\xspace}
\newcommand{\covProb}{{\textsc{Disjoint Network Coverage}}\xspace}
\newcommand{\partbic}{{\textsc{Partitioned Biclique}}\xspace}
\newcommand{\partsub}{{\textsc{Partitioned Subgraph Isomorphism}}\xspace}
\newcommand{\wei}{{\bf w}}
\newcommand{\Fam}{{\mathcal{F}}}
\newcommand{\SolObj}{{\mathcal{Z}}}
\newcommand{\Obj}{{\mathcal{D}}}
\newcommand{\Cli}{{\mathcal{C}}}
\newcommand{\onum}{d}
\newcommand{\cnum}{c}
\newcommand{\pla}{\mathbf{pla}}
\newcommand{\pri}{\pi}
\newcommand{\Pri}{\Pi}
\newcommand{\cst}{\lambda}
\newcommand{\sen}{s}
\newcommand{\loc}{\mathbf{loc}}
\newcommand{\rad}{r}
\newcommand{\cen}{\mathbf{cen}}
\newcommand{\iso}{\eta}
\newcommand{\noose}{\vec{\delta}}
\newcommand{\md}{\text{mid}}
\newcommand{\cn}{\mathbf{cen}}
\newcommand{\bw}{\mathbf{bw}}
\newcommand{\tree}{T}
\newcommand{\sep}{S}
\newcommand{\walk}{W}
\newcommand{\Val}{\mathbf{Val}}
\newcommand{\Rad}{\mathbf{Rad}}
\newcommand{\Med}{\mathbf{Med}}
\newcommand{\extree}{\widehat{T}}
\newcommand{\Vorpar}{\mathbb{M}}
\newcommand{\dgm}{\widetilde{\mathcal{H}}}
\newcommand{\pdgm}{\mathcal{H}}
\newcommand{\Rp}{\mathbb{R}^{+}}
\newcommand{\minf}{-\infty}
\newcommand{\pinf}{+\infty}
\newcommand{\ass}{$(\clubsuit)$\xspace}
\newcommand{\crv}{\vec{\gamma}}
\newcommand{\enc}{\mathbf{enc}}
\newcommand{\exc}{\mathbf{exc}}
\newcommand{\bnd}{\partial}
\newcommand{\impFaces}{\mathcal{E}}
\newcommand{\Vorsepfam}{\mathcal{N}}
\newcommand{\Ii}{\mathcal{I}}
  \date{}
  \author{
  D\'{a}niel Marx\thanks{
	Institute for Computer Science and Control, Hungarian Academy of Sciences (MTA SZTAKI)
	\texttt{dmarx@cs.bme.hu}.
  }
  \and
  Micha\l{} Pilipczuk\thanks{
    Institute of Informatics, University of Warsaw, Poland, 
    \texttt{michal.pilipczuk@mimuw.edu.pl}
  }
  }
\title{Optimal parameterized algorithms for planar facility location problems using 
Voronoi diagrams\thanks{The research leading to these results has received funding from the European Research Council under the European Union's Seventh Framework Programme (FP/2007-2013) / ERC Grant Agreements n.~280152 (D.~Marx) and n.~267959 (M.~Pilipczuk, while he was affiliated with the University of Bergen, Norway). The further work of M.~Pilipczuk on this research at the University of Warsaw is supported by Polish National Science Centre grant DEC-2013/11/D/ST6/03073; at this time, M.~Pilipczuk held also a post-doc position at Warsaw Center of Mathematics and Computer Science. The work of D.~Marx is also supported by Hungarian Scientific Research Fund (OTKA) grant NK105645.}}
\begin{document}


\maketitle

\begin{abstract}
We study a general family of facility location problems defined on
planar graphs and on the 2-dimensional plane. In these problems, a
subset of $k$ objects has to be selected, satisfying certain packing
(disjointness) and covering constraints. Our main result is showing
that, for each of these problems, the $n^{\Oh(k)}$ time brute force
algorithm of selecting $k$ objects can be improved to
$n^{\Oh(\sqrt{k})}$ time.  The algorithm is based on an idea that was
introduced recently in the design of geometric QPTASs, but was not yet
used for exact algorithms and for planar graphs.  We focus on the
Voronoi diagram of a hypothetical solution of $k$ objects, guess a
balanced separator cycle of this Voronoi diagram to obtain a set that
separates the solution in a balanced way, and then recurse on the
resulting subproblems.

The following list is an exemplary selection of concrete consequences of our main result.
We can solve each of the following problems in time $n^{\Oh(\sqrt{k})}$, where $n$ is the total size of the input: 
\begin{itemize}
\item \probScatter: find $k$ vertices in an edge-weighted planar graph that pairwise are at distance at least $d$ from each other ($d$ is part of the input).
\item \probdDS (or \probkdCent): find $k$ vertices in an edge-weighted planar graph such that every vertex of the graph is at distance at most $d$ from at least one selected vertex ($d$ is part of the input).
\item Given a set $\Obj$ of connected vertex sets in a planar graph $G$, find $k$ disjoint vertex sets in $\Obj$.
\item Given a set $\Obj$ of disks in the plane (of possibly different radii), find $k$ disjoint disks in $\Obj$.
\item Given a set $\Obj$ of simple polygons in the plane, find $k$ disjoint polygons in $\Obj$.
\item Given a set $\Obj$ of disks in the plane (of possibly different radii) and a set $\pointfam$ of points, find $k$ disks in $\Obj$ that together cover the maximum number of points in $\pointfam$.
\item Given a set $\Obj$ of axis-parallel squares in the plane (of possibly different sizes) and a set $\pointfam$ of points, find $k$ squares in $\Obj$ that together cover the maximum number of points in $\pointfam$.
\end{itemize}
It is known from previous work that, assuming the Exponential Time
Hypothesis (ETH), there is no $f(k)n^{o(\sqrt{k})}$ time algorithm for
any computable function $f$ for any of these problems. Furthermore, we
give evidence that packing problems have $n^{\Oh(\sqrt{k})}$ time
algorithms for a much more general class of objects than covering
problems have. For example, we show that, assuming ETH, the problem
where a set $\Obj$ of axis-parallel rectangles and a set $\pointfam$
of points are given, and the task is to select $k$ rectangles that
together cover the entire point set, does not admit an $f(k)n^{o(k)}$
time algorithm for any computable function $f$.


\end{abstract}


\clearpage

\section{Introduction}

Parameterized problems often become easier when restricted to planar
graphs: usually significantly better running times can be achieved and
sometimes even problems that are W[1]-hard on general graphs can
become fixed-parameter tractable on planar graphs. In most cases, the
improved running time involves a square root dependence on the
parameter: for example, it is often of the form $2^{\Oh(\sqrt{k})}\cdot
n^{\Oh(1)}$ or $n^{\Oh(\sqrt{k})}$. The appearance of the square root
can be usually traced back to the fact that a planar graph with $n$
vertices has treewidth $\Oh(\sqrt{n})$. Indeed, the theory of
bidimensionality gives a quick explanation why problems such as
\probIS, \probKPath, \probFVS, \probDS, or even distance-$r$ versions
of \probIS and \probDS (for fixed $r$) have algorithms with running
time $2^{ \Oh(\sqrt{k})}\cdot n^{\Oh(1)}$
\cite{DBLP:journals/siamdm/DemaineFHT04,DBLP:journals/talg/DemaineFHT05,DBLP:journals/jacm/DemaineFHT05,DBLP:journals/cj/DemaineH08,DBLP:journals/combinatorica/DemaineH08,DBLP:conf/gd/DemaineH04,DBLP:journals/csr/DornFT08,DornPBF10,DBLP:journals/siamcomp/FominT06,DBLP:conf/esa/Thilikos11}. In
all these problems, there is a relation between the size of the
largest grid minor and the size of the optimum solution, which allows
us to bound the treewidth of the graph in terms of the parameter of
the problem. More recently, subexponential parameterized algorithms
have been explored also for problems where there is no such
straightforward parameter-treewidth bound: for example, for \probPVC
and \probDS \cite{DBLP:journals/ipl/FominLRS11}, \probIOutB and
\probLOutB \cite{DBLP:conf/stacs/DornFLRS10}, \probMWay
\cite{DBLP:conf/icalp/KleinM12}, \probSTSP
\cite{DBLP:conf/soda/KleinM14}, \probSCSS
\cite{DBLP:conf/soda/ChitnisHM14}, \probSteiner
\cite{DBLP:conf/stacs/PilipczukPSL13,DBLP:conf/focs/PilipczukPSL14}. For
some of these problems, it is easy to see that they are
fixed-parameter tractable on planar graphs, and the challenge is to
make the dependence on $k$ subexponential, e.g., to obtain
$2^{\Oh(\sqrt{k})}\cdot n^{\Oh(1)}$ time (or perhaps $2^{\Oh(\sqrt{k}\log
  k)}\cdot n^{\Oh(1)}$ time) algorithms. Others are W[1]-hard on planar
graphs, and then the challenge is to improve the known $n^{\Oh(k)}$ time
algorithm to $n^{\Oh(\sqrt{k})}$ time. For all these problems, there are
matching lower bounds showing that, assuming the Exponential Time
Hypothesis (ETH) of Impagliazzo, Paturi, and Zane~\cite{MR1894519,DBLP:journals/eatcs/LokshtanovMS11},
there are no $2^{o(\sqrt{k})}\cdot n^{\Oh(1)}$ time algorithms (for FPT
problems) or $n^{o(\sqrt{k})}$ time algorithms (for W[1]-hard
problems).

A similar ``square root phenomenon'' has been observed in the case of
geometric problems: it is usual to see a square root in the exponent
of the running time of algorithms for NP-hard problems defined in the
2-dimensional Euclidean plane. For example, \probTSP and \probSteiner
on $n$ points can be solved in time $2^{ \Oh(\sqrt{n}\log n)}$
\cite{DBLP:conf/focs/SmithW98}. Most relevant to our paper is the fact
that \probIS for unit disks (given a set of $n$ unit disks,
select $k$ of them that are pairwise disjoint) and the discrete
$k$-center problem (given a set of $n$ points and a set of
$n$ unit disks, select $k$ disks whose union covers every point) can
be solved in time $n^{\Oh(\sqrt{k})}$ by geometric separation theorems
and shifting arguments
\cite{DBLP:journals/siamcomp/AgarwalOS06,DBLP:journals/algorithmica/AgarwalP02,DBLP:journals/jal/AlberF04,DBLP:journals/algorithmica/HwangLC93,
  DBLP:conf/compgeom/MarxS14}, improving on the trivial $n^{\Oh(k)}$
time brute force algorithm. However, all of these algorithms are
crucially based on a notion of area and rely on the property that all
the disks have the same size (at least approximately). Therefore, it
seems unlikely that these techniques can be generalized to the case
when the disks can have very different radii or to planar-graph
versions of the problem, where the notion of area is
meaningless. Using similar techniques, one can obtain approximation
schemes for these and related geometric problems, again with the
limitation that the objects need to have (roughly) the same area.\iffull

\fi Very recently, a new and powerful technique emerged from a line of
quasi-polynomial time approximation schemes (QPTAS) for geometric
problems
\cite{DBLP:conf/focs/AdamaszekW13,DBLP:conf/soda/AdamaszekW14,
  DBLP:conf/compgeom/Har-Peled14, DBLP:journals/corr/MustafaRR14}. As
described explicitly by Har-Peled
\cite{DBLP:conf/compgeom/Har-Peled14}, the main idea is to reason
about the Voronoi diagram of the $k$ objects in the solution. In
particular, we are trying to guess a separator consisting of
$\Oh(\sqrt{k})$ segments that corresponds to a balanced separator of the
Voronoi diagram. In this paper, we show how this basic idea and its
extensions can be implemented to obtain $n^{\Oh(\sqrt{k})}$ time exact
algorithms for a wide family of geometric packing and covering
problems in a uniform way.  In fact, we show that the algorithms can be made to work in the much more
general context of planar graph problems.

\myparagraph{Algorithmic results.} We study a general family of
facility location problems for planar graphs, where a set of $k$
objects has to be selected, subject to certain independence and
covering constraints. Two archetypal problems from this family are (1)
selecting $k$ vertices of an edge-weighted planar graph that are at
distance at least $d$ from each other (\probScatter) and (2) selecting
$k$ vertices of an edge-weighted planar graph such that every vertex
of the graph is at distance at most $d$ from some selected vertex
(\probdDS); for both problems, $d$ is a real value being part of the
input. We show that, under very general conditions, the trivial
$n^{\Oh(k)}$ time brute force algorithm can be improved to
$n^{\Oh(\sqrt{k})}$ time for problems in this family.  Our result is
not just a simple consequence of bidimensionality and bounding the
treewidth of the input graph. Instead, we focus on the Voronoi diagram
of a hypothetical solution, which can be considered as a planar graph
with $\Oh(k)$ vertices. It is known that such a planar graph has a
balanced separator cycle of length $\Oh(\sqrt{k})$, which can be
translated into a separator that breaks the instance in way suitable
for using recursion on the resulting subproblems: each subproblem
contains at most a constant fraction of the $k$ objects of the
solution.  Of course, we do not know the Voronoi diagram of the
solution and its balanced separator cycle.  However, we argue that
only $n^{\Oh(\sqrt{k})}$ separator cycles can be potential
candidates. Thus, by guessing one of these cycles, we define and solve
$n^{\Oh(\sqrt{k})}$ subproblems. The reason why this scheme yields an $n^{\Oh(\sqrt{k})}$ time algorithm is the fact that recurrence relations of the form $f(k)=n^{\Oh(\sqrt{k})}f(k/2)$ resolves  to $f(k)=n^{\Oh(\sqrt{k})}$.

In Section~\ref{sec:problem-definition}, we define a general facility
location problem called \covProb, which contains numerous concrete
problems of interest as special cases. \iffull We defer to Section~\ref{sec:problem-definition} the formal definition
of the problem and the exact statement of the running time we can
achieve.\fi\ In this introduction, we discuss specific algorithmic results following from the general result.

Informally, the input of \covProb consists of an edge-weighted planar graph $G$, a
set $\Obj$ of objects (which are connected sets of vertices in $G$) and
a set $\Cli$ of clients (which are vertices of $G$). The task is to
select a set of exactly $k$ pairwise-disjoint\footnote{More preceisely, 
if the objects have different radii, then instead of requiring them to be pairwise disjoint, we require a technical condition called ``normality,'' which we define in Section~\ref{sec:problem-definition}.}
 objects that maximizes
the total number (or total prize) of the covered clients. We need to
define what we mean by saying that an object covers a client: the
input contains a radius for each object in $\Obj$ and a sensitivity
for each client in $\Cli$, and the client is considered to be covered
by an object if the sum of the radius and the sensitivity is at least
the distance between the object and the client. In the special case when
both the radius and the sensitivity are $0$, this is equivalent to
saying that the client is inside the object; when the radius is
$r$ and the sensitivity is $0$, then this is equivalent to saying that
the client is at distance at most $r$ from the object.  The
objects and the clients may be equipped with
weights and we may want to maximize/minimize the weight of the
selected objects or the weight of covered clients.

The first special case of the problem is when there are no clients at
all: then the task is to select $k$ objects that are pairwise
disjoint. Our algorithm solves this problem in complete generality:
the only condition is that each object is a connected vertex set (i.e. it induces a connected subgraph of $G$).
\begin{restatable}[packing connected sets]{\retheorem}{restateintroindependent}\label{thm:intro_independent}
  Let $G$ be a planar graph, $\Obj$ be a family of connected vertex sets of
  $G$, and $k$ be an integer. In time $|\Obj|^{\Oh(\sqrt{k})}\cdot
  n^{\Oh(1)}$, we can find a set of $k$ pairwise disjoint objects in
  $\Obj$, if such a set exists.
\end{restatable}
We can also solve the weighted version, where we want to select $k$
members of $\Obj$ maximizing the total weight. As a special case, if
we define the open ball $B_{v,d}$ to be the set of vertices at distance less $d$ from $v$ and let $\Obj=\{B_{v,d/2}\mid v\in V(G)\}$, then
Theorem~\ref{thm:intro_independent} gives us an $n^{\Oh(\sqrt{k})}$ time
algorithm for \probScatter, which asks for $k$ vertices that are at
distance at least $d$ from each other (with $d$ being part of the input). For unweighted graphs and fixed values of $d$, this
problem is actually fixed-parameter tractable and can be solved in
time $2^{\Oh(d\log d\cdot \sqrt{k})}\cdot n^{\Oh(1)}$ using a simple application of
bidimensionality \cite{DBLP:conf/esa/Thilikos11}.

If each object in $\Obj$ is a single vertex and $\rad(\cdot)$ assigns a
radius to each object (potentially different radii for different objects),
then we get a natural covering problem. Thus, the following theorem is also a corollary of our general result.

\begin{restatable}[covering vertices with centers of different radii]{\retheorem}{restateintrocovering}\label{thm:intro_covering}
  Let $G$ be a planar graph, let $D,C\subseteq V(G)$ be two subset of
  vertices, let $\rad\colon D\to \mathbb{Z}^+$ be a function, and $k$
  be an integer.  In time $|D|^{\Oh(\sqrt{k})}\cdot n^{\Oh(1)}$, we can
  find a set $S\subseteq D$ of $k$ vertices that maximizes the number
  of vertices covered in $C$, where a vertex $u\in C$ is covered by
  $v\in S$ if the distance of $u$ and $v$ is at most $\rad(v)$.
\end{restatable}

If $D=C=V(G)$, $\rad(v)=d$ for every $v\in V(G)$, and we are looking for a solution fully covering $C$, then we obtain as
a special case \probdDS (also called \probkdCent)\iffull,  that is, the problem of finding a set $S$ of
vertices such that every other vertex is at distance at most $d$ from
$S$\fi. Theorem~\ref{thm:intro_covering} gives an $n^{\Oh(\sqrt{k})}$ time
algorithm for this problem (with $d$ being part of the input). Again, for fixed $d$, bidimensionality
theory gives a $2^{\Oh(d\log d\cdot \sqrt{k})}\cdot n^{\Oh(1)}$ time algorithm \cite{DBLP:journals/talg/DemaineFHT05}.

Theorem~\ref{thm:intro_covering} can be interpreted as covering the
vertices in $C$ by very specific objects: we want to maximize the
number of vertices of $C$ in the union of $k$ objects, where each
object is a ball of radius $\rad(v)$ around a center $v$. If we require
that the selected objects of the solution are pairwise disjoint, then
we can generalize this problem to arbitrary objects.
\begin{restatable}[covering vertices with independent objects]{\retheorem}{restateintrocoveringx}\label{thm:intro_covering2}
  Let $G$ be a planar graph, let $\Obj$ be a set of connected vertex
  sets in $G$, let $C\subseteq V(G)$ be a set of vertices, and let $k$
  be an integer. In time $|\Obj|^{\Oh(\sqrt{k})}\cdot n^{\Oh(1)}$, we can
  find a set $S$ of at most $k$ pairwise disjoint objects in $\Obj$
  that maximizes the number of vertices of $C$ in the union of the
  vertex sets in $S$.
\end{restatable}

Our algorithmic results are also applicable to geometric problems.
By simple reductions\iffull\ (see Section~\ref{sec:appl})\fi, packing and covering geometric problems can be
reduced to problems on planar graphs \iffull (although one has to handle
rounding and precision issues carefully)\fi. In particular, given a set
of disks (of possibly different radii), the problem of selecting $k$
pairwise disjoint disks can be reduced to the problem of selecting
disjoint connected vertex sets in a planar graph, which can be solved
using Theorem~\ref{thm:intro_independent}. \iffull Alternatively, the
algorithmic techniques behind our main results
(Voronoi diagrams, balanced separators, recursion)
can be expressed directly in the geometric setting, yielding a
somewhat simpler algorithm that does not have to handle some of the
degeneracies appearing in the more general setting of planar graphs. In Section~\ref{sec:geom}, we discuss some of these geometric algorithms in a self-contained way, which may help the reader to understand the main ideas of the more technical planar graph algorithm.
\fi

\begin{restatable}[packing disks]{\retheorem}{restateintropackingdisks}\label{thm:intro_packingdisks}
  Given a set $\Obj$ of disks (of possibly different radii) in the
  plane, in time $|\Obj|^{\Oh(\sqrt{k})}$ we can find a set of $k$ pairwise
  disjoint disks, if such a set exists.
\end{restatable}
This is a strong generalization of the results of Alber and
Fiala~\cite{DBLP:journals/jal/AlberF04}, which gives an
$|\Obj|^{\Oh(\sqrt{k})}$ time algorithm only if the ratio of the radii of the
smallest and largest disks can bounded by a constant (in particular, if
all the disks are unit disks). \iffull

\fi As Theorem~\ref{thm:intro_independent} works for arbitrary connected
sets of vertices, we can prove the analog of
Theorem~\ref{thm:intro_packingdisks} for most reasonable sets of connected
geometric objects. We do not want dwell on exactly what kind of
geometric objects we can handle (e.g., whether the objects can have holes, how the boundaries are described etc.), hence we state the
result only for simple (that is, non-self-crossing) polygons.

\begin{restatable}[packing simple polygons]{\retheorem}{restateintropackingconvex}\label{thm:intro_packingconvex}
  Given a set $\Obj$ of simple polygons in the plane, in time
  $|\Obj|^{\Oh(\sqrt{k})}\cdot n^{\Oh(1)}$ we can find a set of $k$ polygons in
  $\Obj$ with pairwise disjoint closed interiors, if such a set exists. Here $n$ is the total number of vertices of the polygons in $\Obj$.
\end{restatable}

Geometric covering problems can be also reduced to planar problems.
The problem of covering the maximum number of points
by selecting $k$ disks from a given set $\Obj$ of disks can be reduced
to a problem on planar graphs and then
Theorem~\ref{thm:intro_covering} can be invoked. This reduction relies
on the fact that covering by a disk of radius $r$ can be expressed as
being at distance at most $r$ from the center of the disk, which is
precisely what Theorem~\ref{thm:intro_covering} is about.
\begin{restatable}[covering points with disks]{\retheorem}{restateintrocoveringdisks}\label{thm:intro_coveringdisks}
  Given a set $\Cli$ of points and a set $\Obj$ of disks (of possibly
  different radii) in the plane, in time $|\Obj|^{\Oh(\sqrt{k})}\cdot
  |\Cli|^{\Oh(1)}$ we can find a set of $k$ disks in $\Obj$ maximizing
  the total number of points they cover in $\Cli$.
\end{restatable}
The problem of covering points with axis-parallel squares (of
different sizes) can be handled similarly. Observe that
an axis-parallel square with side length $s$ covers a point $p$ if and only if $p$ is at
distance at most $s/2$ from the center of the square {\em in the
  $\ell_{\infty}$ metric.} This allows us to reduce the geometric problem
to the planar problem solved by Theorem~\ref{thm:intro_covering}.

\begin{restatable}[covering points with squares]{\retheorem}{restateintrocoveringsquares}\label{thm:intro_coveringsquares}
  Given a set $\Cli$ of points and a set $\Obj$ of axis-parallel
  squares (of possibly different size) in the plane, in time
  $|\Obj|^{\Oh(\sqrt{k})}\cdot |\Cli|^{\Oh(1)}$ we can find a set of $k$
  squares in $\Obj$ maximizing the total number of points they cover in
  $\Cli$.
\end{restatable}

\myparagraph{Hardness results.}\iffull\ There are several lower bounds
suggesting that our main algorithmic result is, in many aspects, best
possible---both in terms of the form of the running time and the
generality of the problem being solved. The \covProb problem we define
in Section~\ref{sec:problem-definition} gives a very general family of
problems, including many artificial problems. Therefore, it is not
very enlightening to show that the running time we obtain for \covProb
cannot be improved. What we really want to know is whether our
algorithm gives the best possible running time in concrete special
cases of interest, such as those in
Theorems~\ref{thm:intro_independent}--\ref{thm:intro_coveringsquares}.

\fi
There have been investigations of the parameterized complexity of
various geometric packing and covering problems, giving tight
ETH-based lower bounds in many cases
\cite{DBLP:conf/esa/Marx05,DBLP:conf/iwpec/Marx06,DBLP:conf/focs/Marx07a}. Many
of these reductions can be described conveniently using the
\gridtiling problem as the source of reductions.  Reductions using
\gridtiling involve a quadratic blow up in the parameter and therefore
they give lower bounds with a square root in the exponent.  For
example, assuming ETH, the problem of finding $k$ disjoint objects
from a set of unit disks, or a set of axis-parallel unit squares, or a
set of unit segments (of arbitrary directions) cannot be solved in
time $f(k)n^{o(\sqrt{k})}$ for any computable function $f$ \cite{DBLP:conf/iwpec/Marx06,DBLP:conf/focs/Marx07a}.  This
shows the optimality of Theorems~\ref{thm:intro_packingdisks} and
\ref{thm:intro_packingconvex}. In \probDS problem for unit disks, the
task is to select $k$ of the disks such that every disk is either
selected or intersected by a selected disk; assuming ETH, there is no
$f(k)n^{o(\sqrt{k})}$ time algorithm for this problem for any
computable function $f$ \cite{DBLP:conf/iwpec/Marx06}. Observe that if
$\Obj$ is a set of unit disks (that is, disks of radius 1) and $C$ set
of centers of these disks, then a subset of $S\subseteq \Obj$ is a
dominating set if and only if replacing every disk in $S$ with a disk
of radius 2 covers every point in $C$. Therefore, covering points with
disks (of the same radius) is more general than \probDS for unit
disks, hence the optimality of Theorem~\ref{thm:intro_coveringdisks}
follows from the lower bounds for \probDS for unit disks. In a similar
way, the optimality of Theorem~\ref{thm:intro_coveringsquares} follows
from the lower bounds on \probDS for axis-parallel unit squares
\cite{DBLP:conf/iwpec/Marx06}.

\iffull
\iffull
As we shall see in Section~\ref{sec:appl}, there are reductions from the geometric
problems to the planar problems.
\else
There are simple reductions from the geometric
problems to the corresponding planar problems.
\fi Besides the algorithmic consequences, these reductions allow us to transfer lower bounds for
geometric problems to the corresponding planar problems. In
particular, it follows that, assuming ETH, the running time in
Theorem~\ref{thm:intro_independent} cannot be improved to
$f(k)n^{o(\sqrt{k})}$ (even if $|\Obj|=n^{\Oh(1)}$). With a direct
reduction from \gridtiling, one can also show that there is no
$f(k)n^{o(\sqrt{k})}$ time algorithm for \probScatter and \probdDS on
planar graphs, with $d$ being part of the input (these reductions will
appear elsewhere).
\fi

Comparing packing results Theorems~\ref{thm:intro_independent} and
\ref{thm:intro_packingconvex} with covering results
Theorems~\ref{thm:intro_covering}, \ref{thm:intro_coveringdisks}, and
\ref{thm:intro_coveringsquares}, one can
observe that our algorithm solves packing problems in much
wider generality than covering problems. It seems that we can handle
arbitrary objects in packing problems, while it is essential for
covering problems that each object is a ``ball,'' that is, it is defined as
a set of points that are at most at a certain distance from a
center. (Theorem~\ref{thm:intro_covering2} seems to be an exception,
as it is a covering problem with arbitrary objects, but notice that we
require independent objects in the solution, so it is actually a
packing problem as well.)  We present a set of hardness results
suggesting that this apparent difference between packing and covering
problems is not a shortcoming of our algorithm, but it is inherent to the
problem: there are very natural geometric covering problems where the
square root phenomenon does not occur.

Our strongest lower bound is not based not on ETH, but on the variant
called Strong Exponential Time Hypothesis (SETH), which can be
informally stated as $n$-variable \cnfsat not having algorithms with
running time $(2-\epsilon)^n$ for any $\epsilon>0$ (cf.~\cite{DBLP:journals/eatcs/LokshtanovMS11}).
\iffull
 Using a result of
P\u{a}tra\c{s}cu and Williams \cite{DBLP:conf/soda/PatrascuW10} and a simple reduction from \probDS, we show
that if the task is to cover all the vertices of a planar graph $G$ by selecting $k$ sets from a collection $\Obj$ of connected vertex sets, then  is unlikely that one can do significantly better than trying
all $|\Obj|^k$ possible sets of objects. 
\begin{restatable}[covering vertices with connected sets, lower bound]{\retheorem}{restateintroplanarcover}\label{thm:intro_planarcover}
  Let $G$ be a planar graph and let $\Obj$ be a set of connected
  vertex sets of $G$.  Assuming SETH, there is no $f(k)\cdot
  (|\Obj|+|V(G)|)^{k-\epsilon}$ time algorithm for any computable
  function $f$ and any $\epsilon>0$ that decides if there are $k$ sets
  in $\Obj$ whose union covers $|V(G)|$.
\end{restatable}

A similar reduction gives a lower bound for covering points with convex polygons.
\else
Using a result of
P\u{a}tra\c{s}cu and Williams \cite{DBLP:conf/soda/PatrascuW10} and a simple reduction from \probDS, we show
that if the task is to cover points with convex polygons, then  is unlikely that one can do significantly better than trying
all $|\Obj|^k$ possible sets of polygons. 
\fi

\begin{restatable}[covering points with convex polygons, lower bound]{\retheorem}{restateintroconvexhard}\label{thm:intro_convexhard}
Let $\Obj$ be a set of convex polygons and let $\pointfam$ be a set of points in the plane.
Assuming SETH, there is no $f(k)\cdot (|\Obj|+|\pointfam|)^{k-\epsilon}$ time algorithm for any computable function $f$ and $\epsilon>0$ that decides if there are $k$ 
polygons in $\Obj$ that together cover $\pointfam$.
\end{restatable}

The convex polygons appearing in the hardness proof of
Theorem~\ref{thm:intro_convexhard} are relatively ``fat'' \iffull(i.e., the
area of each polygon is at most a constant factor smaller than the
smallest enclosing disk), \fi and they have an unbounded number of
vertices. Therefore, it may still be possible that the square root
phenomenon occurs for simpler polygons and we can have $n^{\Oh(\sqrt{k})}$ time algorithms. We show that this is not the case: we give two lower bounds for axis-parallel rectangles. The first bound is for ``thin'' rectangles (of only two types), while the second bound is for rectangles that are ``almost squares.''

\iffull
\begin{restatable}[covering points with thin rectangles, lower bound]{\retheorem}{restateintrorectharda}\label{thm:intro_recthard1}
  Consider the problem of covering a set $\pointfam$ of points by
  selecting $k$ axis-parallel rectangles from a set $\Obj$. Assuming
  ETH, there is no algorithm for this problem with running time $f(k)\cdot(|\pointfam|+|\Obj|)^{o(k)}$ for any
  computable function $f$, even if each rectangle in $\Obj$ is of size
  $1\times k$ or $k\times 1$.
\end{restatable}
\begin{restatable}[covering points with almost squares, lower bound]{\retheorem}{restateintrorecthardb}\label{thm:intro_recthard2}
  Consider the problem of covering a set $\pointfam$ of points by
  selecting $k$ axis-parallel rectangles from a set $\Obj$. Assuming
  ETH, for every $\epsilon_0>0$, there is no algorithm for this problem with running
  time $f(k)\cdot(|\pointfam|+|\Obj|)^{o(k/\log k)}$ for any computable function $f$, even if
  each rectangle in $\Obj$ has both width and height in the range
  $[1-\epsilon_0,1+\epsilon_0]$.
\end{restatable}
\else
\begin{theorem}[covering points with rectangles, lower bound]\label{thm:intro_recthard}
Consider the problem of covering a set $\pointfam$ of points by selecting $k$ axis-parallel rectangles from a set $\Obj$.
\begin{enumerate}
\item Assuming ETH, there is no algorithm for this problem with running time
  $f(k)\cdot(|\pointfam|+|\Obj|)^{o(k)}$ for any computable function $f$, even if each
  rectangle in $\Obj$ is of size $1\times k$ or $k\times 1$.
\item Assuming ETH, for every $\epsilon_0>0$, there is no algorithm for this problem
  with running time $f(k)\cdot(|\pointfam|+|\Obj|)^{o(k/\log k)}$ for any computable function
  $f$, even if each rectangle in $\Obj$ has both width and height in
  the range $[1-\epsilon_0,1+\epsilon_0]$.
\end{enumerate}
\end{theorem}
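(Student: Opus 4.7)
I would attack both parts by reductions from $k$-\probClique, which under ETH has no $f(k)\cdot n^{o(k)}$-time algorithm. The goal is to produce in each case a rectangle covering instance whose parameter $k'$ is comparable to $k$ (exactly linear for part 1, $\Theta(k\log n)$ for part 2) and whose total size is polynomial in $n$.

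For part 1, the plan is to build a $k\times k$ arrangement of ``cell'' gadgets, with $n$ horizontal $1\times k$ rectangles available per row of the arrangement (one per potential vertex of $G$) and $n$ vertical $k\times 1$ rectangles available per column; for each row we must pick exactly one horizontal rectangle and for each column exactly one vertical rectangle, giving $k'=2k$ selections in total. The points inside a diagonal cell $(i,i)$ are arranged so that they can be covered only when the row-$i$ horizontal rectangle and the column-$i$ vertical rectangle encode the \emph{same} vertex $v_i\in V(G)$; the points in an off-diagonal cell $(i,j)$ are arranged so that they can be covered only when the encoded vertices $v_i,v_j$ are \emph{adjacent} in $G$. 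Forced ``barrier'' points outside the cells make sure that exactly one rectangle of each type per row/column has to be selected. An $f(k')\cdot(|\pointfam|+|\Obj|)^{o(k')}$-time algorithm would then give a $g(k)\cdot n^{o(k)}$-time algorithm for $k$-\probClique, refuting ETH.

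For part 2, the near-uniform size of the squares prevents encoding a choice out of $n$ by a single square. Instead, I would represent each vertex of $G$ by a $\log n$-bit string and replace each vertex choice with $\log n$ consecutive ``bit-selection'' squares placed at positions that differ by sub-unit shifts, which, up to an appropriate rescaling, can be accommodated within the tolerance window $[1-\epsilon_0,1+\epsilon_0]$. Pairwise adjacency among the chosen vertices is verified by auxiliary points placed so that they are covered precisely by those combinations of bit-selector squares that correspond to edges of $G$, reusing the already-chosen squares without increasing their count. This brings the parameter to $k'=\Theta(k\log n)$ and the instance size to $\poly(n)$; since $\log k'=\Theta(\log n)$ and hence $k'/\log k'=\Theta(k)$, an $f(k')\cdot N^{o(k'/\log k')}$-time algorithm would again yield a $g(k)\cdot n^{o(k)}$-time algorithm for $k$-\probClique, contradicting ETH.

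The main technical difficulty I anticipate is in the gadget design, namely ensuring that the only ways to cover every point with $k'$ rectangles of the prescribed shape correspond bijectively to $k$-cliques in $G$. This requires ruling out spurious configurations where, for instance, a horizontal rectangle covers a point it was not intended to, or bit-selector squares for different slots conspire to cover an adjacency point without encoding a true edge. The standard toolbox of forcing gadgets, barrier points, and suitable scale separation between the various coordinate groups should suffice, but laying out the coordinates carefully enough so that the almost-unit tolerance $\epsilon_0$ is respected uniformly across the construction is the step I expect to be the most delicate.
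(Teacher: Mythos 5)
Your part 1 is essentially the paper's own construction: the paper reduces from \partbic (which is equivalent to \probClique up to the standard bipartition trick), and realizes exactly the grid picture you sketch --- thin horizontal rectangles per row and thin vertical rectangles per column, groups of boundary points that force the several rectangles representing one index to be synchronized, and a point placed in cell $(i,j)$ for every non-adjacent pair so that it remains uncovered precisely when both coordinates select that pair. So for the $1\times k$ / $k\times 1$ case your plan is sound and matches the paper, the remaining work being routine gadget bookkeeping.

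Part 2, however, has a genuine gap, and it is not the gadget design you flag but the parameter accounting. Your reduced parameter is $k'=\Theta(k\log n)$, which depends on the input size and not on $k$ alone. First, this breaks the ``for any computable $f$'' quantification: assuming a covering algorithm with running time $f(k')\cdot N^{o(k'/\log k')}$, the composed \probClique algorithm runs in time $f(\Theta(k\log n))\cdot \poly(n)^{o(k'/\log k')}$, and $f(k\log n)$ can dwarf brute force (e.g.\ $f(x)=2^{2^x}$ gives a factor $2^{n^{\Theta(k)}}$), so no contradiction with the \probClique lower bound follows. Second, the step ``$\log k'=\Theta(\log n)$, hence $k'/\log k'=\Theta(k)$'' is false in the regime that matters: the ETH-hard \probClique instances have $k$ essentially constant (or very slowly growing) while $n$ is huge, so $\log k'=\Theta(\log k+\log\log n)$ and $k'/\log k'=\Theta\bigl(k\log n/\log\log n\bigr)=\omega(k)$; the exponent of $n$ in the derived \probClique algorithm is therefore not of the form $o(k)$, and again nothing is contradicted. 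This is exactly why the paper does not reduce from \probClique for almost-squares at all: with parameter blowup bounded by a function of $k$, almost-unit squares do not seem able to carry the $\Theta(k^2)$ pairwise interactions a clique check requires, and the $\log$-factor in the statement is not something to be recreated by bit-encoding. Instead the paper starts from the known lower bound for \partsub with a 3-regular bipartite pattern graph (which already incorporates the $k/\log k$ loss and has only $O(k)$ constraints), proves hardness of an intermediary problem --- covering a 2-track point set by $k$ 2-track intervals --- and then converts 2-track intervals into rectangles with side lengths in $[1-\epsilon_0,1+\epsilon_0]$ by placing the two tracks on two parallel diagonal lines (the Chan--Grant transformation). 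To repair your route you would need either a blowup $k'\le g(k)$ independent of $n$, or to start from a problem whose lower bound already tolerates the logarithmic loss, which is precisely what the choice of \partsub accomplishes.
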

\fi

\iffull Theorem~\ref{thm:intro_recthard2} \else Theorem~\ref{thm:intro_recthard} \fi shows that even
a minor deviation from the setting of
Theorem~\ref{thm:intro_coveringsquares} makes it unlikely that
$n^{\Oh(\sqrt{k})}$ algorithms exist. Therefore, it seems that for
covering problems the existence of the square root phenomenon depends
not on the objects being simple, or fat, or almost the same size, but
really on the fact that the objects are defined as balls in a metric.

\myparagraph{Our techniques.} The standard technique of
bidimensionality does not seem to be applicable to our problems: it is
not clear for any of the problems how the existence of an
$\Oh(\sqrt{k})\times \Oh(\sqrt{k})$ grid minor helps in solving the
problem\iffull, hence we cannot assume that the input graph has
treewidth $\Oh(\sqrt{k})$.\else.\fi\ In more recent subexponential
parameterized algorithms, we can observe a different algorithmic
pattern: instead of trying to bound the treewidth of the {\em input
  graph,} we define a ``skeleton graph'' describing the structure of
the {\em solution} and use planarity of the skeleton to bound its
treewidth. Then the fact that this skeleton graph has treewidth
$\Oh(\sqrt{k})$ (and, in particular, has balanced separators of size
$\Oh(\sqrt{k})$) can be used to solve the problem. The right choice of
the skeleton graph can be highly nonobvious: for example, for
\probMWay \cite{DBLP:conf/icalp/KleinM12}, the skeleton graph is the
union of the dual of the solution with a minimum Steiner tree, while
for \probSTSP \cite{DBLP:conf/soda/KleinM14}, the skeleton is the
union of the solution with a locally optimal solution. In our problem,
we again define a skeleton graph based on a solution and exploit that
its treewidth is $\Oh(\sqrt{k})$. This time, the right choice for the
skeleton seems to be the Voronoi diagram of the $k$ objects forming
the solution. \iffull Then we exploit the fact that this Voronoi
diagram has separator cycles of length $\Oh(\sqrt{k})$ to find a
suitable way of separating the instance into subproblems.\fi

Given a set $\pointfam$ of points in the plane, the Voronoi region of
a point $p\in \pointfam$ consists of those points of the plane that
are closer to $p$ than any other member of $\pointfam$. The boundaries
of the Voronoi regions are segments that are equidistant to two points from $\pointfam$, forming a diagram that can be considered
a planar graph. We can define Voronoi regions of a graph and a set $\Obj$ of
disjoint connected vertex sets in a similar way, by classifying vertices according to
the closest object in $\Obj$. If the graph is planar, then we can use
the edges on the boundary of the regions (in the dual graph) to construct a planar graph
that is an analog of the Voronoi diagram. If
$|\Obj|=k$, then this diagram has $\Oh(k)$ vertices and\iffull, by a well-known
property of planar graphs,\fi\ has treewidth $\Oh(\sqrt{k})$.

The separator cycle that we need is actually a noose: a closed curve
that intersects the graph only in its vertices and visits each face at
most once. It can be deduced from known results on sphere cut
decompositions that a planar graph with $k$ vertices has a noose that
visits $\Oh(\sqrt{k})$ vertices and there are at most $\frac{2}{3}k$ faces strictly
inside/outside the noose. The noose can be described by a cyclic
sequence of $\Oh(\sqrt{k})$ vertices and faces. Such a sequence of
vertices and faces of the Voronoi diagram can be translated into a
sequence of shortest paths connecting points from $\Obj$ and the
branch vertices of the Voronoi diagram, forming a closed cycle in the
original graph, separating the inside and the outside. These separator
cycles are the most important conceptual objects for our
algorithm. The crucial observation is that, because of the properties
of the Voronoi diagram, objects of the optimum solution inside the
cycle cannot interact with outside world: they cannot intersect the
cycle and, in covering problems, if a point is sufficiently close to
an object inside the cycle, then it is sufficiently close also to an
object on the boundary.

Of course, we do not know the Voronoi diagram of the $k$ objects in
the solution. But since the separator cycles are defined by a
selection of $\Oh(\sqrt{k})$ objects/branch vertices, we can
enumerate $n^{\Oh(\sqrt{k})}$ candidates for them. We branch into $n^{\Oh(\sqrt{k})}$ subproblems indexed by these candidates, and in each subproblem we assume that the selected candidate corresponds to a balanced
noose in the Voronoi diagram of the solution. This assumption has
certain consequences and we modify the instance accordingly. For
example, we can deduce that certain members of $\Obj$ cannot be in the
solution, and hence they can be removed from $\Obj$. After these
modifications, we can observe that the problem falls apart into into
subproblems: this is because there cannot be any interaction between
the objects inside and outside the separator. Therefore, we can
recursively solve these subproblems, where the parameter value is at
most $\frac{2}{3}k$. Because of guessing the separator cycle, we
eventually solve $n^{\Oh(\sqrt{k})}$ subproblems, each with parameter
value at most $\frac{2}{3}k$, which results in the claimed
$n^{\Oh(\sqrt{k})}$ running time following from the recursive formula.

Lower bounds on how $k$ has to appear in the exponent, such as the
lower bounds in
\iffull
Theorems~\ref{thm:intro_convexhard}--\ref{thm:intro_recthard2},\else
Theorems~\ref{thm:intro_convexhard}--\ref{thm:intro_recthard},\fi
\ can be
obtained by parameterized reductions from a W[1]-hard problem. The
strength of the lower bound depends on how the parameter $k$ changes in
the reduction. The reductions based on \gridtiling involve a quadratic
blowup and hence are able to rule out only algorithms of the form
$f(k)n^{o(\sqrt{k})}$ (assuming ETH). The lower bounds in
\iffull
Theorems~\ref{thm:intro_convexhard}--\ref{thm:intro_recthard2}\else
Theorems~\ref{thm:intro_convexhard}--\ref{thm:intro_recthard}\fi
\ are
stronger: they show that even $n^{\Oh(\sqrt{k})}$ time algorithms are
far from being possible. Therefore, they are very different from
typical hardness proofs for planar and geometric problems based on
\gridtiling. Of particular interest is 
\iffull\ Theorem~\ref{thm:intro_recthard2}\else\ 
the second statement of Theorem~\ref{thm:intro_recthard}\fi, where a tight reduction from
\probClique seems problematic, as it seems difficult to
implement the $\Oh(k^2)$ pairwise interaction of the \probClique problem
with squares of almost the same size. Instead, we rely on a nontrivial
hardness result for \partsub \cite{marx-toc-treewidth}, which gives a
strong lower bound even when the graph $H$ to be found is sparse.


\renewcommand{\retheorem}{rtheorem}

\section{Geometric problems}
\label{sec:geom}

\def\bndmax{8}
\newcommand{\voronoiscale}{1}
\newcommand{\hyp}[6]{
  \pgfmathsetmacro{\voronoiscalex}{1}
  \pgfmathsetmacro{\xz}{#3*\voronoiscalex}
    \pgfmathsetmacro{\yz}{#4*\voronoiscalex}
    \pgfmathsetmacro{\xq}{#1*\voronoiscalex}
    \pgfmathsetmacro{\yq}{#2*\voronoiscalex}
     \pgfmathsetmacro{\xd}{(\xq-\xz)}
     \pgfmathsetmacro{\yd}{(\yq-\yz)}
     \pgfmathsetmacro{\dist}{sqrt((\xd)^2+(\yd)^2)}
  \pgfmathsetmacro{\c}{1}
  \pgfmathsetmacro{\a}{(#5)/(\dist)}
    \pgfmathsetmacro{\b}{(sqrt((\c)^2-(\a)^2))} 
    \pgfmathsetmacro{\xb}{\xd/(2*\c)}
    \pgfmathsetmacro{\yb}{\yd/(2*\c)}
    \pgfmathsetmacro{\myrange}{atan(40/min(\dist,20))}
\path[#6] 
   plot[samples=25,domain=-\myrange:\myrange,variable=\t] 
    ({
    (\xz+\xq)/2 +
 (\xb*\a/cos(\t)) - (\yb*\b*tan(\t))
    },{
      (\yz+\yq)/2+
 (\yb*\a/cos(\t)) + (\xb*\b*tan(\t))
    }) 
-- ({\xq+122*\xd/(\dist*\voronoiscalex)},{\yq+122*\yd/(\dist*\voronoiscalex)})
 -- cycle;

\fill[red] 
    ({
    (\xz+\xq)/2 +
 (\xb*\a/cos(\myrange)) - (\yb*\b*tan(\myrange))
    },{
      (\yz+\yq)/2+
 (\yb*\a/cos(\myrange)) + (\xb*\b*tan(\myrange))
    }) 
 circle (0.1);
\fill[red] 
    ({
    (\xz+\xq)/2 +
 (\xb*\a/cos(-\myrange)) - (\yb*\b*tan(-\myrange))
    },{
      (\yz+\yq)/2+
 (\yb*\a/cos(-\myrange)) + (\xb*\b*tan(-\myrange))
    }) 
 circle (0.1);
}

\newcommand{\drawvoronoi}{
\tikzexternalenable
\begin{tikzpicture}[scale=0.6]
\path[clip]
  (-\bndmax,-\bndmax) rectangle (\bndmax,\bndmax);
\foreach \x/\y/\r/\cc/\dcc in \disks
{
 \begin{scope}
 \foreach \xx/\yyb/\rr/\ccc/\dccc in \disks
  {
 \ifdefstrequal{\x/\y}{\xx/\yyb}{}{\hyp{\x*\voronoiscale}{\y*\voronoiscale}{\xx*\voronoiscale}{\yyb*\voronoiscale}{\rr-\r}{clip}}
 }
  \fill[\cc]  (-\bndmax,-\bndmax) rectangle (\bndmax,\bndmax);
 \end{scope}
}
\foreach \x/\y/\r/\cc/\diskcolor in \disks
{
\fill[\diskcolor] (\x*\voronoiscale,\y*\voronoiscale) circle (\r);
\draw[black] (\x*\voronoiscale,\y*\voronoiscale) circle (\r);
\begin{scope}
 \foreach \xx/\yyh/\rr/\ccc/\diskcolor in \disks
  {
 \ifdefstrequal{\x/\y}{\xx/\yyh}{}{\hyp{\x*\voronoiscale}{\y*\voronoiscale}{\xx*\voronoiscale}{\yyh*\voronoiscale}{\rr-\r}{clip}}
 }
 \foreach \xx/\yyq/\rr/\ccc/\diskcolor in \disks
  {
 \ifdefstrequal{\x/\y}{\xx/\yyq}{}{\hyp{\x*\voronoiscale}{\y*\voronoiscale}{\xx*\voronoiscale}{\yyq*\voronoiscale}{\rr-\r}{draw}}
 }
\end{scope}
}
\draw (-\bndmax,-\bndmax) rectangle (\bndmax,\bndmax);
\end{tikzpicture}
\tikzexternaldisable
}

\newcommand{\disks}{}

\definecolor{colsua}{HTML}{FFFFCE}
\definecolor{colsub}{HTML}{FFDDDD}
\definecolor{colsuc}{HTML}{E5FFE5}
\definecolor{colsud}{HTML}{FFDAFF}
\definecolor{colsue}{HTML}{CECEFF}
\definecolor{colsuf}{HTML}{CEFFFF}

\definecolor{colaa}{HTML}{FFFFCE}
\definecolor{colab}{HTML}{FFDDDD}
\definecolor{colac}{HTML}{E5FFE5}
\definecolor{colad}{HTML}{FFDAFF}
\definecolor{colae}{HTML}{CECEFF}
\definecolor{colaf}{HTML}{CEFFFF}

\definecolor{colda}{HTML}{FDFF37}
\definecolor{coldb}{HTML}{FF7A73}
\definecolor{coldc}{HTML}{72FF73}
\definecolor{coldd}{HTML}{FFA0FF}
\definecolor{colde}{HTML}{3737FF}
\definecolor{coldf}{HTML}{37FFFF}

Our main algorithmic result is a technique for solving a
general facility location problem on planar graphs in time
$n^{\Oh(\sqrt{k})}$. With simple reductions, we can use these algorithms
to solve 2-dimensional geometric problems\iffull\ (see Section~\ref{sec:appl})\fi. However, our main
algorithmic ideas can be implemented also directly in the geometric
setting, giving self-contained algorithms for geometric
problems. These geometric algorithms avoid some of the technical
complications that arise in the planar graph counterparts, such as the
Voronoi diagram having bridges or shortest paths sharing
subpaths.
\iffull Unfortunately, a large part of the paper is devoted to the
formal handling of these issues. Therefore, it could be instructive for
the reader to see first a self-contained presentations of some of the
geometric results.
\fi

\textbf{Packing unit disks.}  We start with the \probIS problem for
unit disks: given a set $\Obj$ of closed disks of unit radius in the plane,
the task is to select a set of $k$ pairwise disjoint disks.
This problem is known to be solvable in time $n^{\Oh(\sqrt{k})}$
\cite{DBLP:journals/jal/AlberF04,DBLP:conf/compgeom/MarxS14}. We
present another $n^{\Oh(\sqrt{k})}$ algorithm for the problem,
demonstrating how we can solve the problem recursively by focusing on
the Voronoi diagram of a hypothetical solution.  This idea appeared
recently in the context of constructing quasi-polynomial time
approximation schemes (QPTAS) for geometric problems (see, e.g.,
\cite{DBLP:conf/compgeom/Har-Peled14}), but it has not been used
explicitely for exact algorithms.

While the algorithm we present in this section on its own does not
deliver any new result yet, it is significantly different from the
previous algorithms, which crucially use the notion of ``area,'' in
particular, by using the fact that a region of area $\Oh(1)$ can contain
only $\Oh(1)$ independent unit disks. Our algorithm uses only the notion
of distance, making it possible to translate it to the language of
planar graphs, where the notion of area does not make
sense. Furthermore, as we shall see, generalizations to disks of
different radii and to covering problems are relatively easy for our
algorithm. In what follows, it is oftem more convenient to think in terms of an equivalent formulation of the problem where instead of packing disks from $\Obj$ we are packing their centers subject to a constraint that every pair of packed centers has to be at distance more than $2$ from each other. We will switch between these two formulations implicitly.

Let $\pointfam$ be a set of points in the plane. The {\em Voronoi region} of
$p\in \pointfam$ is the set of those points $x$ in the plane that are
``closest'' to $p$ in the sense that the distance of $x$ and $\pointfam$ is
exactly the distance of $x$ and $p$ (see Figure~\ref{fig:vor}(a)). The Voronoi region of $p$ can be
obtained as the intersection of half-planes: for every $p'\in \pointfam$
different from $p$, the Voronoi region is contained in the half-plane
of points whose distance from $p$ is not greater than the distance
from $p'$. This implies that every Voronoi region is convex.

Even though defining Voronoi diagrams and working with them is much
simpler in the plane than for their analogs in planar graphs\iffull\ (see
Section~\ref{sec:partitions})\fi, there is a technical difficulty
specific to the plane. The issue is that the Voronoi region of a point
$p\in \pointfam$ can be infinite and consequently the Voronoi diagram
consists of finite segments and infinite rays. Therefore, it is not
clear what we mean by the graph of the Voronoi diagram. While this
complication does not give any conceptual difficulty in the algorithm,
we need to address it formally.

First, we introduce three new ``guard'' points into $\pointfam$, at
distance more than $r$ from the other points and from each other.  We
introduce these three guards in such a way that every original point
is inside the triangle formed by them (see
Figure~\ref{fig:vor}(b)). It is easy to see now that the Voronoi
region of every original point is finite. Therefore, the only infinite
regions are the regions of the three guards and it follows that finite
segments of the Voronoi diagram form a 2-connected planar graph and
there are three infinite rays in the infinite face. In the case of the
packing problem, if we introduce three new disks corresponding to the
three guards, then these three disks can be always selected into
every solution. Thus instead of trying to find $k$ independent disks
in the original set, we can equivalently try to find $k+3$ disks in
the new set. As increasing $k$ by 3 does not change the asymptotic
running time $n^{\Oh(\sqrt{k})}$ we are aiming for, in the following
we assume that the set $\pointfam$ of center points has this form,
that is, contains three guard points.

\begin{figure}[t]
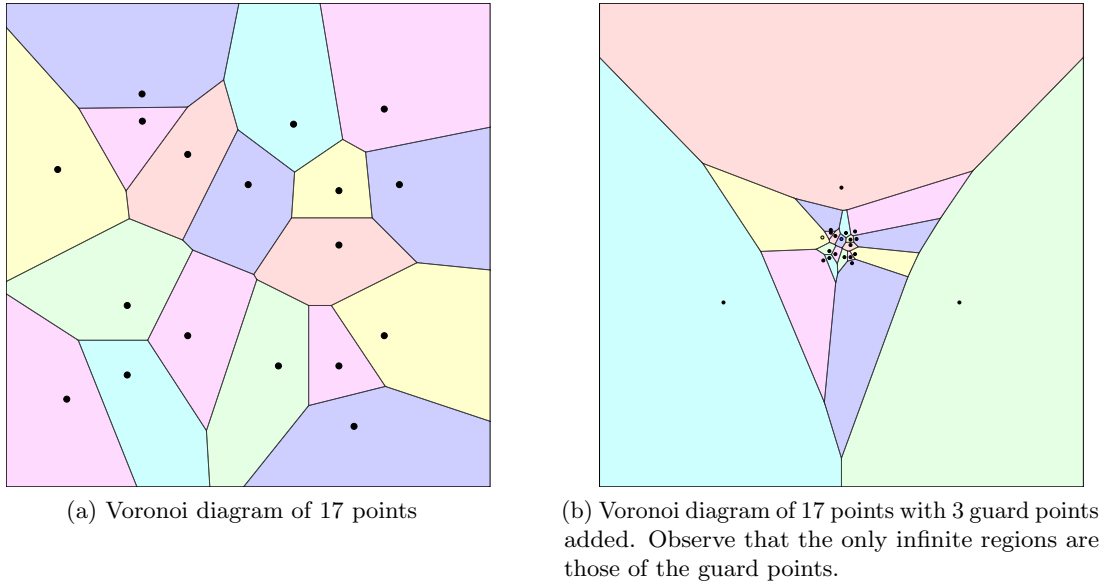

        \centering
        \subfloat[Voronoi diagram of 17 points]{\input{voronoi-fig1}}
\qquad
\subfloat[Voronoi diagram of 17 points with 3 guard points added. Observe that the only infinite regions are those of the guard points.]{\input{voronoi-fig2}}

\caption{Voronoi diagrams}\label{fig:vor}
\end{figure}

If a vertex of the Voronoi diagram has degree more than 3, then this
means that there are four points appearing on a common circle. In
order to simplify the presentation, we may introduce small
perturbation to the coordinates to ensure that this does not happen for any four
points. Moreover, we may identify the ``endpoints'' of the three
infinite rays into a new point at infinity. Therefore, in the
following we assume that the Voronoi diagram is actually a 2-connected
3-regular planar graph. Let us emphasize again that these technicalities
appear only in the geometric setting and will not present a problem
when we are proving the main result for planar graphs.


We are now ready to explain the main combinatorial idea behind the
algorithm for finding $k$ independent unit disks. Consider now a hypothetical solution
consisting of $k$ independent disks and let us consider the Voronoi
diagram of the centers of these $k$ points (see Figure~\ref{fig:vordisk}(a)). To emphasize that we consider the Voronoi diagram of the centers of the $k$ disks in the solution and {\em not} the centers of the $n$ disks in the input, we call this diagram the {\em solution Voronoi diagram.}  As we discussed above, we
may assume that the solution Voronoi diagram is a 2-connected 3-regular planar
graph. There are various separator theorems in the literature showing
that a $k$-vertex planar graph has balanced separators of size
$\Oh(\sqrt{k})$. Certain technicalities appear in these
theorems: for example, they may require the graph to be triangular or
2-connected, the separator may be a cycle in the primal graph or in
the dual graph, etc. Therefore, in Section~\ref{sec:balanc-noos-plane} we give a short proof
showing that the known results on sphere cut decompositions imply a
separator theorem of the form we need. A {\em{noose}} of a
plane graph $G$ is a closed curve $\delta$ on the sphere such that
$\delta$ alternately travels through faces of $G$ and vertices of $G$
and every vertex and face of $G$ is visited at most once. We show that
every 3-regular planar graph $G$ with $k$ faces has a noose $\delta$
of length $\Oh(\sqrt{k})$ (that is, going through $\Oh(\sqrt{k})$ faces
and vertices) that is {\em face balanced} in the sense that there are
at most $\frac{2}{3}k$ faces of $G$ strictly inside $\delta$ and at
most $\frac{2}{3}k$ faces of $G$ strictly outside $\delta$.

\begin{figure}[t]
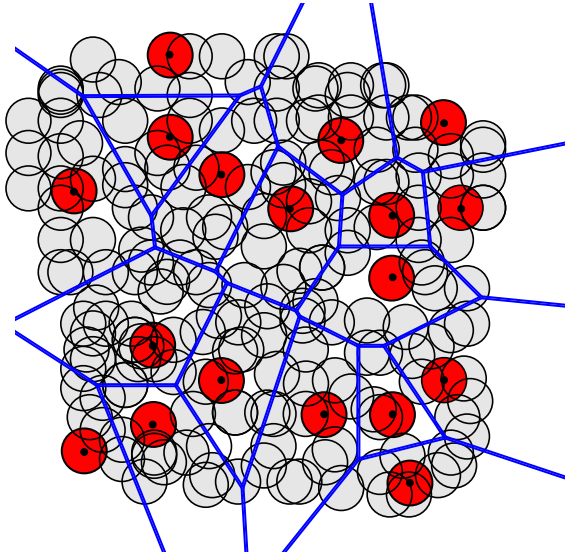
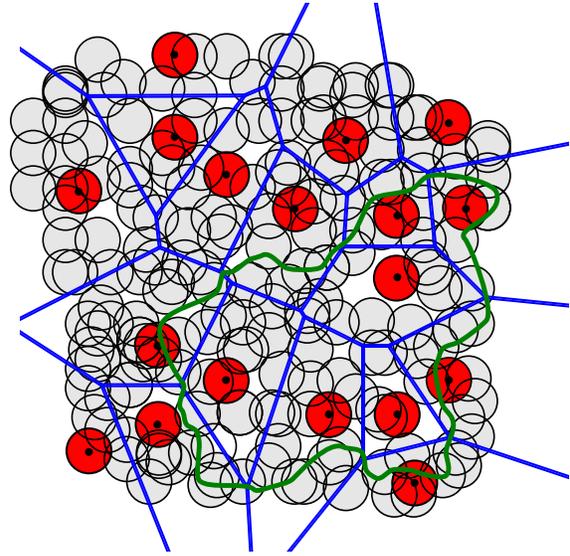
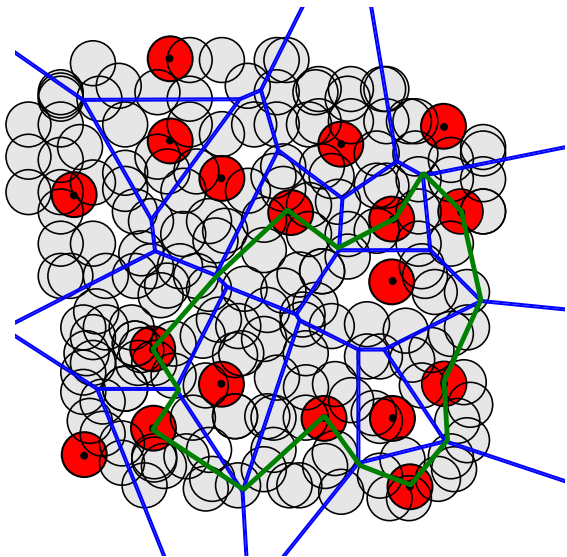
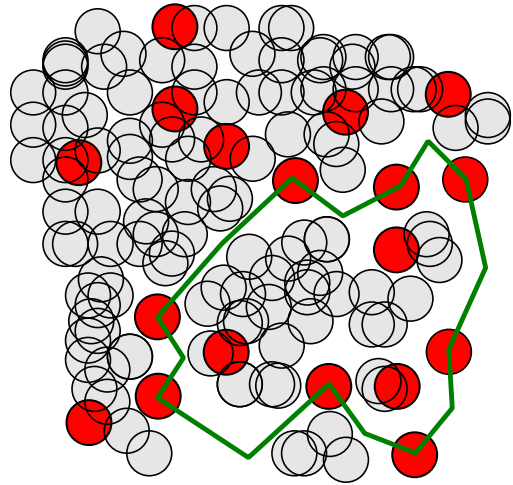

        \centering
        \subfloat[The Voronoi diagram of the centers of the disks in solution.]{\svg{0.45\columnwidth}{vor3}}
        \qquad
        \subfloat[A noose in the Voronoi diagram]{\svg{0.45\columnwidth}{vor4}}\\

        \subfloat[The polygon corresponding to the noose.]{\svg{0.45\columnwidth}{vor5}}
\qquad
        \subfloat[Removing every disk intersecting the polygon breaks the problem into two independent parts.]{\svg{0.45\columnwidth}{vor7}}

\caption{Using a noose in the Voronoi diagram for divide and conquer.}\label{fig:vordisk}
      \end{figure}

      Consider a face-balanced noose $\delta$ of length $\Oh(\sqrt{k})$
      as above (see the green curve in
      Figure~\ref{fig:vordisk}(b)). Noose $\delta$ goes through
      $\Oh(\sqrt{k})$ faces of the solution Voronoi diagram, which
      corresponds to a set $Q$ of $\Oh(\sqrt{k})$ disks of the
      solution. The noose can be turned into a polygon $\Gamma$ with
      $\Oh(\sqrt{k})$ vertices the following way (see the green polygon
      in Figure~\ref{fig:vordisk}(c)). Consider a subcurve of $\delta$ that is contained in the face corresponding to disk $p\in Q$ and its endpoints 
      are vertices $x$ and $y$ of the solution Voronoi
      diagram. Then we can ``straighten'' this subcurve by replacing it with a
      straight line segment connecting $x$ and the center of $p$, and a
      straight line segment connecting the center of $p$ and
      $y$. Therefore, the vertices of the polygon $\Gamma$ are center
      points of disks in $Q$ and vertices of the solution Voronoi
      diagram. Observe that $\Gamma$ intersects the Voronoi regions of
      the points in $Q$ only. This follows from the convexity of the
      Voronoi regions: the segment between the center of $p\in Q$ and
      a point on the boundary of the region of $p$ is fully contained
      in the region of $p$. In particular, this means that $\Gamma$
      does not intersect any disk other than those in $Q$.

The main idea is to use this polygon $\Gamma$ to separate the problem
into two subproblems. Of course, we do not know the solution Voronoi diagram and hence we have no way of computing from it the balanced noose
$\delta$ and the polygon $\Gamma$. However, we can efficiently list
$n^{\Oh(\sqrt{k})}$ candidate polygons. By definition, every vertex of
  the polygon $\Gamma$ is either the center of a disk in $\Obj$ or a vertex of the
solution  Voronoi diagram. Every vertex of the solution Voronoi diagram is equidistant
  from the the centers of three disks in $\Obj$ and for any three such center points (in general
  position) there is a unique point in the plane equidistant from
  them. Thus every vertex of the polygon $\Gamma$ is either a center of a disk in $\Obj$ or can
  be described by a triple of disks in $\Obj$. This means that $\Gamma$
  can be described by an $\Oh(\sqrt{k})$-tuple of disks from $\Obj$. That
  is, by branching into $n^{\Oh(\sqrt{k})}$ directions, we may assume
  that we have correctly guessed the subset $Q$ of the solution and the polygon $\Gamma$.

What do we do with the set $Q$ and the polygon $\Gamma$? First, if $Q$ is indeed part of the solution, then we may remove these disks from $\Obj$ and decrease the target number of disks to be found by $|Q|$. Second, we perform the following cleaning steps:
\begin{enumerate}
\item[(1)] Remove any disk that intersects a disk in $Q$.
\item[(2)] Remove any disk that intersects  $\Gamma$.
\end{enumerate}
Indeed, if $Q$ is part of the solution, then no other disk
intersecting $Q$ can be part of the solution. Moreover, we have
observed above that in the solution the polygon $\Gamma$ is contained in the Voronoi regions of
the points in $Q$ and hence no disk other than the disks in $Q$
intersects $\Gamma$, justifying the removal of such disks. We say that
these removed disks are {\em banned} by $(Q,\Gamma)$.

After these cleaning steps, the instance falls apart into two
independent parts: each remaining disk is either strictly inside $\Gamma$ or
strictly outside $\Gamma$ (see Figure~\ref{fig:vordisk}(d)). Moreover, recall that the noose $\delta$ was
face balanced and hence there are at most $\frac{2}{3}k$ faces of the
solution Voronoi diagram inside/outside $\delta$. This implies that
the solution contains at most $\frac{2}{3}k$ center points
inside/outside $\Gamma$. Therefore, in the two recursive
calls, we need to look for at most that many independent
disks. For $k':=1,\dots,\lfloor\frac{2}{3}k\rfloor$, we recursively try
to find exactly $k'$ independent disks from the input restricted to
the inside/outside $\Gamma$, resulting in $2\cdot \frac{2}{3}k$ recursive
calls.\footnote{Doing a recursive call for each $k'$ may seem unnecessarily complicated at this point:
what we really need is a single recursive call returning the maximum number of independent disks, or $\frac{2}{3}k$ independent disks, whichever is smaller.
However, we prefer to present the algorithm in a way similar to how the
more general problems will be solved later on.} Taking into account
the $n^{\Oh(\sqrt{k})}$
 guesses for $Q$ and $\Gamma$, the number of
subproblems we need to solve is $2\cdot \frac{2}{3}k\cdot
n^{\Oh(\sqrt{k})}=n^{\Oh(\sqrt{k})}$ (as $k\le n$, otherwise there is no
solution) and the parameter value is at most $\frac{2}{3}k$ in each
subproblem. Therefore, if we denote by $T(n,k)$ the time needed to
solve the problem with at most $n$ points and parameter value at most
$k$, we arrive to the recursion
\[
T(n,k)=n^{\Oh(\sqrt{k})} \cdot T(n,(2/3)k).
\]
Solving the recursion gives
\begin{align*}
T(n,k)& {=n^{\Oh(\sqrt{k})} \cdot n^{\Oh(\sqrt{\frac{2}{3}k})} \cdot
 n^{\Oh(\sqrt{(\frac{2}{3})^2 k})} \cdot n^{\Oh(\sqrt{(\frac{2}{3})^3 k})} \cdots}\\ &{=
 n^{\Oh((1+(\frac{2}{3})^{\frac{1}{2}}+(\frac{2}{3})^{\frac{2}{2}}+(\frac{2}{3})^{\frac{3}{2}}+\dots)\sqrt{k})}=n^{\Oh(\sqrt{k})},} 
\end{align*}
as the coefficient of $\sqrt{k}$ in the exponent is a constant (being
the sum of a geometric series with ratio $\sqrt{2/3}$).  Therefore,
the total running time for finding $k$ independent disks is
$n^{\Oh(\sqrt{k})}$.  This proves the first result: packing unit disks
in the plane in time $n^{\Oh(\sqrt{k})}$.  Let us repeat that this
result was known before
\cite{DBLP:journals/jal/AlberF04,DBLP:conf/compgeom/MarxS14}, but as
we shall see, our algorithm based on Voronoi diagrams can be
generalized to objects of different size, planar graphs, and covering
problems.

\textbf{Covering points by unit disks.}  Let us now consider the
following problem: given a set $\Obj$ of unit disks and a set $\Cli$
of client points, we need to select $k$ disks from $\Obj$ that
together cover every point in $\Cli$.  We show that this problem
can be solved in time $n^{\Oh(\sqrt{k})}$ using an approach based on
finding separators in the Voronoi diagram.

Similarly to the way we handled the packing of unit disks, we can
consider the Voronoi diagram of the center points in the
solution. Note, however, that this time the disks in the solution are
not necessarily disjoint, but this does not change the fact that their
center points (which can be assumed to be distinct) define a Voronoi
diagram. Therefore, it will be convenient to switch to an equivalent
formulation of the problem described in terms of the centers of the
disks: $\Obj$ is a set of points and we say that a selected point in
$\Obj$ covers a point in $\Cli$ if their distance is at most $1$.

Similarly to the case of packing, we can try $n^{\Oh(\sqrt{k})}$
possibilities to guess a set $Q\subseteq \Obj$ of center points and a
polygon $\Gamma$ that corresponds to a face-balanced noose. The
question is how to use $\Gamma$ to split the problem into two
independent subproblems.  The cleaning steps (1) and (2) above for the
packing problem are no longer applicable: the solution may contain
disks intersecting the disks with centers in $Q$ and the solution may
contain further disks intersecting the polygon $\Gamma$. What we do
instead is the following. First, if we assume that $Q$ is part of the
solution, then any point in $\Cli$ covered by some point in $Q$ can be
removed, as it is already covered. Second, we know that in the
solution Voronoi diagram every point of $\Gamma$ belongs to the
Voronoi region of some point in $Q$, hence we can remove any point
from $\Obj$ that is inconsistent with this assumption. That is, if
there is a $p\in\Obj$ and $v\in \Gamma$ such that $p$ is closer to $v$
than to every point in $Q$, then $p$ can be safely removed from $\Obj$;
we say in this case that $(Q,\Gamma)$ {\em bans $p$}. For a
$p\in \Obj$ and for each segment of $\Gamma$, it is not difficult to
check if the segment contains such a point $v$ (we omit the
details). Thus we have now the following two cleaning steps:
\begin{enumerate}
\item[(1)] Remove every point from $\Cli$ that is covered by $Q$.
\item[(2)] Remove every point from $\Obj$ that is closer to a point of $\Gamma$ than every point in $Q$.
\end{enumerate}
Let $\Obj_\textup{in}$ and $\Obj_\textup{out}$ be the remaining points
in $\Obj$ strictly inside/outside $\Gamma$ and let $\Cli_\textup{in}$
and $\Cli_\textup{out}$ be the remaining points in $\Cli$ strictly
inside/outside $\Gamma$.  We know that the solution contains at most
$\frac{2}{3}k$ center points inside/outside $\Gamma$. Therefore, for
$k'=1,\dots,\lfloor\frac{2}{3}{k}\rfloor$, we solve two subproblems, with point sets
$(\Obj_\textup{in}, \Cli_\textup{in})$ and
$(\Obj_\textup{out},\Cli_\textup{out})$.

If there is a set of $k_\textup{in}$ of center points in $\Obj_\textup{in}$
covering $\Cli_\textup{in}$ and there is a set of $k_\textup{out}$
center points in $\Obj_\textup{out}$ covering $\Cli_\textup{out}$, then,
together with $Q$, they form a solution of
$|Q|+k_\textup{in}+k_\textup{out}$ center points. By solving the defined
subproblems optimally, we know the minimum value of $k_\textup{in}$
and $k_\textup{out}$ required to cover $\Cli_\textup{in}$ and
$\Cli_\textup{out}$, and hence we can determine the smallest solution
that can be put together this way.  But is it true that we can always
put together an optimum solution this way? The problem is that, in
principle, the solution may contain a center point $p\in \Obj_\textup{out}$
that covers some point $q\in \Cli_\textup{in}$ that is not covered by
any center point in $\Obj_\textup{in}$. In this case, in the optimum solution
the number of center points selected from $\Obj_\textup{in}$ can be strictly
less than what is needed to cover $\Cli_\textup{in}$ and hence the way
we are putting together a solution cannot result in an optimum
solution.

Fortunately, we can show that this problem never arises, for the
following reason. Suppose that there is such a $p\in
\Obj_\textup{out}$ and $q\in \Cli_\textup{in}$. As $p$ is outside
$\Gamma$ and $q$ is inside $\Gamma$, the segment connecting $p$ and
$q$ has to intersect $\Gamma$ at some point $v\in \Gamma$, which means
$\dist(p,q)=\dist(p,v)+\dist(v,q)$. By cleaning step (2), there has to be a $p'\in
Q$ such that $\dist(p',v)\le \dist(p,v)$, otherwise $p$ would be banned and we
would have removed it from $\Obj$.  This means that
$\dist(p,q)=\dist(p,v)+\dist(v,q)\ge \dist(p',v)+\dist(v,q)\ge \dist(p',q)$. Therefore, if $p$
covers $q$, then so does $p'\in Q$. But in this case we would have
removed $q$ from $\Cli$ in the first cleaning step. Thus we can indeed
obtain an optimum solution the way we proposed, by solving optimally
the defined subproblems.

As in the case of packing, we have $2\cdot\frac{2}{3}k\cdot
n^{\Oh(\sqrt{k})}=n^{\Oh(\sqrt{k})}$ subproblems, with parameter value at
most $\frac{2}{3}k$. Therefore, the same recursion applies to the
running time, resulting in an $n^{\Oh(\sqrt{k})}$ time algorithm.

\textbf{Packing in planar graphs.} How can we translate the geometric ideas
explained above to the context of planar graphs?  Let $G$ be an
edge-weighted planar graph and let $\Fam$ be a set of disjoint 
``objects,'' where each object is a connected set of vertices in
$G$. Then we can define the analog of the Voronoi regions in a
straightforward way: for every $p\in \Fam$, let $M_p$ contain every
vertex $v$ to which $p$ is the closest object in $\Fam$, that is,
$\dist(v,\Fam)=\dist(v,p)$. By a perturbation of the edge weights, we may
assume that there are no ties: vertex $v$ cannot be at exactly the
same distance from two objects $p_1,p_2\in \Fam$. It follows that the
sets $M_p$ form a partition of $V(G)$ (here we use that the objects in
$\Fam$ are disjoint). It is easy to verify that region $M_p$ has the
following convexity property: if $v\in M_p$ and $P$ is a shortest path
between $v$ and $p$, then every vertex of $P$ is in $M_p$. 

While Voronoi regions are easy to define in graphs, the proper definition of
Voronoi diagrams is far from obvious and it is also nontrivial how a
noose $\delta$ in the Voronoi diagram defines the analog of the
polygon $\Gamma$. We leave the discussion of these issues to
Section~\ref{sec:algo}, here we only define in an abstract way what
our goal is and state in Lemma~\ref{lem:guardedenum0} below (a
simplified version of) the main technical tool that is at the core of
the algorithm.  Note that the statement of
Lemma~\ref{lem:guardedenum0} involves only the notion of Voronoi
regions, hence there are no technical issues in interpreting and using
it. However, in the proof we have to define the analog of the Voronoi
diagram for planar graphs and address issues such that this diagram is
not 2-connected etc. We defer the required technical definitions
to Section~\ref{sec:algo}.

Let us consider first the packing problem: given an edge-weighted
graph $G$, a set $\Obj$ of $d$ objects (connected subsets of vertices),
and an integer $k$, the task is to find a subset $\Fam\subseteq \Obj$ of
$k$ pairwise disjoint objects.  Looking at the algorithm for packing
unit disks described above, what we need is a suitable {\em guarded
  separator}, which is a pair $(Q,\Gamma)$ consisting of a set
$Q\subseteq \Obj$ of $\Oh(\sqrt{k})$ objects and a subset
$\Gamma\subseteq V(G)$ of vertices. If there is a hypothetical
solution $\Fam\subseteq \Obj$ consisting of $k$ disjoint objects, then
we would like to have a guarded separator $(Q,\Gamma)$ satisfying the
following three properties: (1) $Q$ is subset of the solution, (2) $\Gamma$ is fully contained in the
Voronoi regions of the objects in $Q$, and and (3) $\Gamma$  separates the
objects in $\Fam$ in a balanced way. Our main technical result is that
it is possible to enumerate a set of $d^{\Oh(\sqrt{k})}$
guarded separators such that for every solution $\Fam$, one of the
enumerated guarded separators satisfies these three properties. We state here a simplified version that is suitable for packing problems.

\begin{lemma}\label{lem:guardedenum0}
  Let $G$ be an $n$-vertex edge-weighted planar graph, $\Obj$ a set of
  $d$ connected subsets of $V(G)$, and $k$ an integer. We can
  enumerate (in time polynomial in the size of the output) a set
  $\Vorsepfam$ of $d^{\Oh(\sqrt{k})}$ pairs $(Q,\Gamma)$ with
  $Q\subseteq \Obj$, $|Q|=\Oh(\sqrt{k})$, $\Gamma\subseteq V(G)$ such
  that the following holds. If $\Fam\subseteq \Obj$ is a set of $k$
  pairwise disjoint objects, then there is a pair $(Q,\Gamma)\in
  \Vorsepfam$ such that
\begin{enumerate}
\item $Q\subseteq \Fam$,
\item if $(M_p)_{p\in\Fam}$ are the Voronoi regions of $\Fam$, then $\Gamma\subseteq \bigcup_{p\in Q}M_p$,
\item for every connected component $C$ of $G-\Gamma$, there are at most
  $\frac{2}{3}k$ objects of $\Fam$ that are fully contained in $C$.
\end{enumerate}
\end{lemma}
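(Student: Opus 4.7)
The plan is to reduce the lemma to a balanced separator theorem on a suitably defined Voronoi diagram of the hypothetical solution, and then to argue that each candidate separator is described by only $\Oh(\sqrt{k})$ objects from $\Obj$. Given a hypothetical $\Fam \subseteq \Obj$ of $k$ disjoint objects, I associate with it a plane graph $\pdgm$ whose faces are the Voronoi regions $(M_p)_{p \in \Fam}$: informally, the vertices of $\pdgm$ are ``branch points'' of $G$ where three regions meet, its edges correspond to maximal pieces of the boundary between two fixed regions, and its $k$ faces are in bijection with the objects of $\Fam$. After the usual planar-graph tricks (perturbing edge weights so distance-ties are broken, contracting bridges and merging parallel separators) one may assume that $\pdgm$ is a 2-connected plane graph in which every vertex has degree at most three and every face corresponds to some $p \in \Fam$.

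Next, I apply the sphere-cut decomposition-based separator theorem announced in Section~\ref{sec:balanc-noos-plane} to obtain a face-balanced noose $\delta$ in $\pdgm$ of length $\Oh(\sqrt{k})$. By definition, $\delta$ is a closed curve that alternates between $\Oh(\sqrt{k})$ faces $F_1, \dots, F_\ell$ of $\pdgm$ and $\Oh(\sqrt{k})$ vertices $x_1, \dots, x_\ell$ of $\pdgm$, visiting each at most once, with at most $\frac{2}{3}k$ faces strictly on each side. The faces traversed correspond to a set $Q \subseteq \Fam$ of $\Oh(\sqrt{k})$ objects, which yields property~(1). To turn $\delta$ into a set $\Gamma \subseteq V(G)$, I ``straighten'' each subarc of $\delta$ lying inside a face $M_p$ with endpoints $x_i, x_{i+1}$: I replace it by the vertex set of the concatenation of a shortest $x_i$-to-$p$ path and a shortest $p$-to-$x_{i+1}$ path in $G$. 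By the convexity property of Voronoi regions (if $v \in M_p$ and $P$ is a shortest $v$-to-$p$ path then $V(P) \subseteq M_p$), these path fragments lie entirely in $M_p$, so their union $\Gamma$ is contained in $\bigcup_{p \in Q} M_p$, giving~(2). Moreover, $\Gamma$ traces a closed walk in $G$ whose homotopy type matches $\delta$, so it separates the inside from the outside of $\delta$; since $\delta$ is face-balanced, at most $\frac{2}{3}k$ objects of $\Fam$ lie entirely on each side, yielding~(3).

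For the enumeration bound, the key point is that every object $F_i \in Q$ is an object of $\Obj$, and every branch vertex $x_i$ of $\pdgm$ is canonically determined by a bounded number of objects of $\Obj$ whose Voronoi regions coincide at $x_i$ (three, after normalising to degree three); thus $x_i$ can be specified by naming $\Oh(1)$ elements of $\Obj$. Therefore $\delta$, and hence the pair $(Q, \Gamma)$, is encoded by a cyclic sequence of $\Oh(\sqrt{k})$ objects from $\Obj$ plus $\Oh(\sqrt{k})$ bits of auxiliary combinatorial data (e.g.\ which shortest paths to select in case of ties). Running over all such sequences and discarding ill-formed ones yields the family $\Vorsepfam$ of size $d^{\Oh(\sqrt{k})}$, computable with polynomial overhead per output element using standard shortest-path computations.

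The main obstacle I anticipate is entirely in making the definition of $\pdgm$ robust enough so that the branch-vertex-by-$\Oh(1)$-objects argument goes through. In the Euclidean setting described earlier in the excerpt, $\pdgm$ is automatically 2-connected and 3-regular after adding guard points, but in a planar graph the boundary between two Voronoi regions may be disconnected, a region may be glued to the rest of the graph through a cut vertex or a bridge, and two shortest paths used in constructing $\Gamma$ may share vertices or even whole subpaths in nontrivial ways. Handling these degeneracies --- essentially by a careful normalisation of $\pdgm$ that still preserves the face-bijection with $\Fam$ and keeps every branch vertex describable from $\Obj$ --- is exactly the technical content that the excerpt defers to Section~\ref{sec:algo}, and will constitute the bulk of the work behind the lemma.
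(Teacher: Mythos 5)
Your high-level strategy is exactly the paper's: build the Voronoi diagram of the hypothetical solution, extract a face-balanced noose of length $\Oh(\sqrt{k})$ via sphere cut decompositions, straighten it inside the Voronoi regions using shortest paths to get $\Gamma$, and enumerate candidates by guessing $\Oh(\sqrt{k})$ objects. However, there is a genuine gap precisely at the step that yields the $d^{\Oh(\sqrt{k})}$ (rather than $n^{\Oh(\sqrt{k})}$) bound: your claim that every branch vertex of the diagram is ``canonically determined by a bounded number of objects of $\Obj$ whose Voronoi regions coincide at it'' is false as stated in the planar-graph setting, and the paper explicitly warns that enumerating vertices where three regions meet is not sufficient. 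The reason is that the diagram of a normal family in a planar graph is in general only a $3$-regular \emph{multigraph with bridges and loops}: a branching point may be incident to only two, or even just one, distinct Voronoi region (the region wraps around it), so no triple of ``meeting'' regions identifies it. The paper handles this by classifying branching points as singular faces of types $1$, $2$, and $3$ (Lemma~\ref{lem:types-bridges}), where types $2$ and $3$ are certified not by regions meeting at the face but by triples/quadruples of objects together with a topological separation condition involving walks in the trees $\extree(p)$; it then proves counting lemmas (at most $2$, $1$, $1$ faces per certificate, via a $K_{3,3}$-minor argument and Jordan-type separation arguments, Lemmas~\ref{lem:type-1}--\ref{lem:type-3}) and a monotonicity lemma (Lemma~\ref{lem:small-subfamily}) ensuring the certificates can be checked against the $\Oh(1)$ named objects alone, without knowing $\Fam$ — this last point also undercuts your phrasing, since ``Voronoi regions coinciding at $x_i$'' are regions of the unknown $\Fam$, not of the guessed triple. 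None of these ideas appear in your proposal, and you acknowledge them only as an anticipated obstacle.

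A second, related gap is your normalization ``one may assume that $\pdgm$ is 2-connected'': bridges in the diagram cannot be contracted or assumed away, because they carry real structure of the partition (they are exactly what produces the type-$2$/$3$ faces above), and the sphere-cut-decomposition theorem you invoke requires a bridgeless graph. The paper instead proves a balanced-noose theorem directly for connected $3$-regular multigraphs with loops and bridges (Theorem~\ref{thm:edge-balanced-noose}), by working on the tree of bridgeless components with a weighting scheme, and only then converts edge-balance to face-balance. So while your outline is the right one and properties (1)--(3) would follow once these two points are settled, the proposal as written defers rather than supplies the two technical ingredients that constitute the actual content of the lemma: the $d^{\Oh(1)}$ candidate set for branching points, and the separator theorem in the presence of bridges.
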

The proof goes along the same lines as the argument for the geometric
setting. After carefully defining the analog of the Voronoi diagram,
we can use the planar separator result to obtain a noose $\delta$. The
same way as this noose was turned into a polygon in the geometric
algorithm, we ``straighten'' the noose into a closed walk in the graph
connecting using shortest paths $\Oh(\sqrt{k})$ objects and
$\Oh(\sqrt{k})$ vertices of the Voronoi diagram. The vertices of this
walk separate the objects that are inside/outside the noose, hence it has
the required properties. Thus by trying all sets of $\Oh(\sqrt{k})$
objects and $\Oh(\sqrt{k})$ vertices of the Voronoi diagram, we can
enumerate a suitable set $\Vorsepfam$. A technical difficulty in the
proof is that the definition of the vertices of the Voronoi diagram is
nontrivial. Moreover, to achieve the bound $d^{\Oh(\sqrt{k})}$ instead
of $n^{\Oh(\sqrt{k})}$, we need a nontrivial way of finding a set of
$d^{\Oh(1)}$ candidate vertices; unlike in the geometric setting,
enumerating vertices equidistant from three objects is not sufficient.

Armed with the set $\Vorsepfam$ given by Lemma~\ref{lem:guardedenum0}, the packing problem can be solved in
a way analogous to how we handled unit disks. We guess a pair
$(Q,\Gamma)\in\Vorsepfam$ that satisfies the three properties of
Lemma~\ref{lem:guardedenum0}. From the first property, we know that
$Q$ is part of the solution, hence $Q$ can be removed from $\Obj$ and
the target number $k$ of objects to be found can be decreased by
$|Q|$. From the second property, we know that in the solution $\Fam$
the set $\Gamma$ is fully contained in the Voronoi regions of the
objects in $Q$. Thus we can remove any object from $\Obj$ intersecting
$\Gamma$. That is, we have to solve the problem on the graph
$G-\Gamma$. We can solve the problem independently on the components
of $G-\Gamma$ and the third property of Lemma~\ref{lem:guardedenum0}
implies that each component contains at most $\frac{2}{3}k$ objects of
the solution. Thus for each component $C$ of $G-\Gamma$ containing at
least one object of $\Obj$ and for $k'=1,\dots,\lfloor\frac{2}{3}k\rfloor$, we
recursively solve the problem on $C$ with parameter $k'$, that is, we
try to find $k'$ disjoint objects in $C$. Assuming that $(Q,\Gamma)$
indeed satisfied the properties of Lemma~\ref{lem:guardedenum0} for
the solution $\Fam$, the solutions of these subproblems allow us to
put together a solution for the original problem. As at most $d$ components of $G-\Gamma$ can
contain objects from $\Obj$, we recursively solve at most $d\cdot
\frac{2}{3}k$ subproblems for a given $(Q,\Gamma)$. Therefore, the
total number of subproblems we need to solve is at most $d\cdot
\frac{2}{3}k \cdot |\Vorsepfam|=d\cdot \frac{2}{3}k \cdot
d^{\Oh(\sqrt{k})}=d^{\Oh(\sqrt{k})}$ (assuming $k\le d$). If we denote by
$T(n,d,k)$ the running time for $|V(G)|\le n$ and $|\Obj|\le d$, then we
arrive to the recursion $T(n,d,k)=d^{\Oh(\sqrt{k})}\cdot
T(n,d,(2/3)k)+n^{\Oh(1)}$, which gives $T(n,k)=d^{\Oh(\sqrt{k})}\cdot
n^{\Oh(1)}$.

\textbf{Covering in planar graphs.}  Let us consider now the following
graph-theoretic analog of covering points by unit disks: given an
edge-weighted planar graph $G$, two sets of vertices $\Obj$ and $\Cli$,
and integers $k$ and $r$, the task is to find a set $\Fam\subseteq
\Obj$ of $k$ vertices that covers every vertex in
$\Cli$. Here we say that $p\in \Obj$ covers $q\in \Cli$ if $\dist(p,q)\le
r$, that is, we can imagine $p$ to represent a ball of radius $r$ in the graph with
center at $p$. Note that, unlike in the case of packing, $\Obj$ is a set of vertices, not a set of connected sets.

Let $\Fam$ be a hypothetical solution. We can construct the set
$\Vorsepfam$ given by Lemma~\ref{lem:guardedenum0} and guess a guarded
separator $(Q,\Gamma)$ satisfying the three properties. As we assume
that $Q$ is part of the solution (second property of
Lemma~\ref{lem:guardedenum0}), we decrease the target number $k$ of
vertices to select by $|Q|$ and we can remove from $\Cli$ every vertex
that is covered by some vertex in $Q$; let $\Cli'$ be the remaining set
of vertices. Moreover, by the third property of
Lemma~\ref{lem:guardedenum0}, we can assume that in the solution $\Fam$,
the set $\Gamma$ is fully contained in the Voronoi regions of the
vertices in $Q$. This means that if there is a $p\in \Obj\setminus Q$ and
$v\in \Gamma$ such that $\dist(p,v)<\dist(p,Q)$, then $p$ can be
removed from $\Obj$, since it is surely not part of the solution. As in the case of covering with disks, we say that $(Q,\Gamma)$ {\em bans} $p$. Let
$\Obj'$ be the remaining set of vertices. For every component $C$ of
$G-S$ and $k'=1,\dots,\lfloor\frac{2}{3}k\rfloor$, we recursively solve the problem
restricted to $C$, that is, with the restrictions $\Obj'[C]$ and
$\Cli'[C]$ of the vertex sets. It is very important to point out that
now (unlike how we presented the packing problem above) we {\em do not}
change the graph $G$ in each call: we use the same graph $G$ with the
restricted sets $\Obj'[C]$ and $\Cli'[C]$. The reason is that
restricting to the graph $G[C]$ could potentially change the distances
between two vertices $x,y\in C$, as it is possible that the shortest
$x-y$ path leaves $C$. 

 If $k_C$ is the minimum number of vertices
in $\Obj'[C]$ that can cover $\Cli'[C]$, then we know that there are
$|Q|+ \sum k_C$ vertices in $\Obj$ that cover every vertex in $\Cli$. We
argue that if there is a solution, we can obtain a solution
this way. Analogously to the case of covering vertices with disks, we have
to show that in the solution $\Fam$, every vertex in $\Cli'[C]$ is
covered by some vertex in $\Obj'[C]$: it is not possible that there are
two distinct components $C_1,C_2$ of $G-\Gamma$ and some $q\in \Cli'[C_1]$ is
covered only by $p\in \Obj'[C_2]$. Suppose that this is the case, and
let $P$ be a shortest $p-q$ path (which has length at most $r$). As $p$
and $q$ are in two different components of $G-\Gamma$, the path $P$
has to intersect $\Gamma$ at some vertex $v\in \Gamma$ and we have
$\dist(p,q)=\dist(p,v)+\dist(v,q)$. We know that there has to be a
$p'\in Q$ such that $\dist(p',v)\le \dist(p,v)$, otherwise $p$ would be banned and it would not be in $\Obj'$.
Then the same calculation as in the geometric  case shows that $p'\in Q$ also covers $q$, which means that 
we removed $q$ and it cannot be in $\Cli'$. Thus we can
indeed obtain a solution by solving the subinstances restricted to the
components of $G-\Gamma$. As in the case of packing, we have at most
$d\cdot \frac{2}{3}k\cdot d^{\Oh(\sqrt{k})}$ subproblems and the running time
$d^{\Oh(\sqrt{k})}\cdot n^{\Oh(1)}$ follows the same way.
 
\textbf{Covering in planar graphs (maximization version).}  Let us
consider now the variant of the previous problem where we want to
select $k$ vertices from $\Obj$ that cover the maximum number of vertices
in $\Cli$. We proceed the same way as before: we guess a guarded
separator $(Q,\Gamma)$, remove from $\Cli$ the set of vertices covered
by $Q$ (let $m$ be the number of these vertices), remove from $\Obj$ the
set of vertices banned by $(Q,\Gamma)$, and recursively solve the problem for
every component $C$ of $G-\Gamma$ and every $k'=1,\dots,
\lfloor\frac{2}{3}k\rfloor$. Having solved these subproblems, we have at our hand
the values $m(C,k')$, the maximum number of vertices in $\Cli'[C]$ that
can be covered by $k'$ vertices from $\Obj'[C]$. How can we compute from
these values the maximum number of vertices in $\Cli$ that can be
covered by $k$ vertices from $\Obj$? We need to solve a knapsack-type
problem: for each component $C$, we have to select a solution
containing a certain number $0\le k_C \le \frac{2}{3}k$ vertices of
$\Obj$ such that the sum of the $k_C$'s is at most $k-|Q|$ and the sum
of the values $m(C,k_C)$ is maximum possible. Given the values
$m(C,k')$, this maximum can be computed by a standard polynomial-time
dynamic programming algorithm. Thus instead of just deciding if $k$
vertices from $\Obj$ are sufficient to cover every vertex in $\Cli$, we
can also find a set of $k$ vertices covering the maximum number of
vertices from $\Cli$. Let us point out that in this problem it is really
essential that we solve the subinstances for each value of $k'$
instead of just finding a minimum/maximum cardinality solution: as the
optimum solution may not cover all vertices in a component $C$, computing the
minimum number of vertices required to cover all vertices in $C$ is clearly not sufficient.

\textbf{Covering in planar graphs (nonuniform radius).}  A natural
generalization of the covering problem is when every vertex $p\in \Obj$
is given a radius $r(p)\ge 0$ and now we say that $p$ covers a vertex
$q\in \Cli$ if $\dist(p,q)\le r(p)$. That is, now the vertices in $\Obj$
represent balls with possibly different radii. 

There are two ways in which we can handle this more general
problem. The first is a simple graph-theoretic trick. Let $R$ be the
maximum of all the $r(p)$'s. For every $p\in \Obj$, let us attach a
path of length $R-r(p)$ to $p$, let us replace $p$ in $\Obj$ with the
other end $p'$ of this path. Observe that now a vertex $q\in \Cli$ is
at distance at most $r(p)$ from $p$ if and only if it is at distance
at most $R$ from $p'$. Therefore, by solving the resulting covering
problem with uniform radius $R$, we can solve the original problem as
well. Note that, interestingly, this trick uses the flexibility of the
problem being stated in terms of graphs and there is no analogous
geometric trick to solve the problem of covering points with disks of
nonuniform radius in the plane: by attaching the paths, we are
distorting the distance metric in a specific way, whose geometric
meaning is difficult to interpret in the 2-dimensional plane.

The second way is somewhat more complicated, but it seems to be the
robust mathematical solution of the issue and it has also a geometric
interpretation.  The issue of nonuniform radius can be handled by
working with the {\em additively weighted} version of the Voronoi
diagram, that is, instead of defining the Voronoi regions of
$\pointfam$ by comparing the distances $\dist(p,v)$ for $p\in
\pointfam$, we compare the weighted distances $\dist(p,v)-r(p)$. It
can be verified that the main arguments of the algorithm described
above go through. For example, it remains true that Voronoi regions
have the convexity property that if $v$ is in the region of $p$, then
the shortest path from $v$ to $p$ is fully contained in the Voronoi region of
$p$. Given a guarded separator $(Q,\Gamma)$, the crucial property
allowing us to separate the problem to the components of $G-\Gamma$
was that if $p\in\Obj$ and $q\in\Cli$ are in two different components
and $p$ covers $q$, then some $p'\in Q$ also covers $q$. This property
also remains true: we can redo the same calculation to compare
$\dist(p,q)-r(p)$ and $\dist(p',q)-r(p')$.

\textbf{Combining covering and packing.} Comparing the algorithms for
packing objects in planar graphs and for covering vertices by balls of
radius $r$ in planar graphs, we can observe that the set $\Obj$ played
very different roles in the two algorithms: in the packing algorithm
$\Obj$ contained the actual objects, while in the covering algorithm
$\Obj$ contained only the centers of the balls. The reason for this is
fundamental: in order to define the Voronoi regions of the
solution $\Fam\subseteq \Obj$, we need $\Fam$ to be a set of disjoint
objects (which is not true for the balls in the solution of the
covering problem, but of course true for the centers of these
balls). This means that we cannot generalize the covering algorithm to
objects different from metric balls simply by putting 
objects into $\Obj$ more general than single vertices. However, we have no trouble
obtaining a common generalization if we require that the objects
selected from $\Obj$ are disjoint. That is, in the {\em independent
  covering} problem, we are given a set $\Obj$ of connected objects in
a planar graph $G$, a set $\Cli$ of vertices, and integers $k$ and $r$,
the task is to select a {\em pairwise disjoint} subset $\Fam\subseteq \Obj$
of size exactly $k$ that covers the maximum number of vertices in $\Cli$
(where, as usual, an object $p\in \Obj$ covers a vertex $q\in \Cli$ if
$\dist(p,q)\le r$). Exactly the same algorithm as above goes through: as
the solution $\Fam$ is disjoint, it is possible to define the Voronoi
regions.

In the next section, we define the \covProb problem that
generalizes all our applications. The main algorithmic result is
expressed as an algorithm for this problem
(Theorem~\ref{thm:main1}). To handle covering with nonuniform radius,
the problem formulation allows different radius $r(p)$ for each $p\in
\Obj$. However, if we allow nonuniform radius, then the requirement
stating that we have to select disjoint objects should be replaced by
a technical condition that we have to select a {\em normal} family
of objects (see Section~\ref{sec:problem-definition}). Essentially,
this condidition states that in the solution $\Fam$, every vertex of every object
$p\in \Fam$ should be in the weighted Voronoi region of $p$. 
In other words, if we allow arbitrary radii, it could in principle happen that one object $p$ has so large radius compared to some other object $p'$ that the weighted Voronoi region of $p$ actually contains some vertices of $p'$, thus ``eating'' the object $p'$ itself and making it not contained in its own Voronoi region. The normality condition states that, in addition to being disjoint, the solution $\Fam$ does not contain such pathological situations. Observe that when the objects are single vertices only, then this problem is not very dangerous: if $p$ ``dominates'' $p'$ in the way described above, then every client covered by $p'$ is also covered by $p$, and we can discard $p'$ from $\Obj$ because taking $p$ instead is always more profitable. Unfortunately, if the objects are not just single vertices, or they are equipped with nonuniform costs, then this
technicality makes the arising independent covering problems with
nonuniform radii somewhat unnatural.


\section{The general problem}
\label{sec:prelims}
\iffull

In this section we introduce the generic problem \covProb. The main result of this paper is an algorithm for this problem, expressed in Theorem~\ref{thm:main1}. Before we give the algorithm for \covProb, in Section~\ref{sec:appl} we shall see how the concrete results mentioned in the introduction (i.e., Theorem~\ref{thm:intro_independent}-\ref{thm:intro_coveringsquares}) follow from Theorem~\ref{thm:main1} by simple reductions.

\subsection{Problem definition}
\label{sec:problem-definition}
\else
\label{sec:problem-definition}
\fi

Suppose we are given an undirected graph $G$ embedded on a sphere $\Sigma$, together with a positive edge weight function $\wei\colon E(G)\to \Rp$. We are given a family of {\em{objects}} $\Obj$ and a family of {\em{clients}} $\Cli$. 

Every object $p\in \Obj$ has three attributes. It has its {\em{location}} $\loc(p)$, which is a nonempty subset of vertices of $G$ such that $G[\loc(p)]$ is a connected graph. Moreover, it has its {\em{cost}} $\cst(p)$, which is a real number. Note that costs may be negative. Finally, it has its {\em{radius}} $\rad(p)$, which is just a nonnegative real value denoting the strength of domination imposed by $p$. \iffull Note that locations of objects can intersect, and even there can be multiple objects with exactly the same location.
\fi

Every client $q\in \Cli$ has three attributes. It has its {\em{placement}} $\pla(q)$, which is just a vertex of $G$ where the client resides. It has also its {\em{sensitivity}} $\sen(q)$, which is a real value that denotes how sensitive the client is to domination from objects. Finally, it has also {\em{prize}} $\pri(q)$, which is a real value denoting the prize gained by dominating the client. Note that there can be multiple clients placed in the same vertex, and the prizes may be negative.

We say that a subfamily $\Fam\subseteq \Obj$ is {\em{normal}} if locations of objects from $\Fam$ are disjoint, and moreover $\dist(\loc(p_1),\loc(p_2))>\rad(p_1)-\rad(p_2)$ for all pairs $(p_1,p_2)$ of different objects in $\Fam$; here $\dist(X,Y)$ denotes the distance between two vertex sets in $G$ w.r.t. edge weights $\wei$. As we require the same inequality for the pair $(p_2,p_1)$ as well, it follows that actually $\dist(\loc(p_1),\loc(p_2))>|\rad(p_1)-\rad(p_2)|$ is true. In particular, this implies disjointness of locations of objects from $\Fam$, but if all radii are equal, then normality boils down to just disjointness of locations. Note also that a subfamily of a normal family is also normal.

We say that a client $q$ is {\em{covered}} by an object $p$ if the following holds: $\dist(\pla(q),\loc(p))\leq \sen(q)+\rad(p)$. \iffull In other words, a client gets covered by an object if the distance from the object does not exceed the sum of its sensitivity and the object's domination radius.
\fi

We are finally ready to define \covProb. As input we get an edge-weighted graph $G$ embedded on a sphere, families of objects $\Obj$ and clients $\Cli$ (described using locations, costs, radii, placements, sensitivities, and prizes), and a nonnegative integer $k$. For a subfamily $\SolObj\subseteq \Obj$, we define its {\em{revenue}}, denoted $\Pri(\SolObj)$, as the total sum of prizes of clients covered by at least one object from $\SolObj$ minus the total sum of costs of objects from $\SolObj$. In the \covProb problem, the task is to find a subfamily $\SolObj\subseteq \Obj$ such that the following holds:
\begin{enumerate}[label=(\roman*)]
\item\label{pr-ex} Family $\SolObj$ is normal and has cardinality {\em{exactly}} $k$.
\item\label{pr-mx} Subject to the previous constraint, family $\SolObj$ maximizes the revenue $\Pri(\SolObj)$.
\end{enumerate}
It can happen that there is no subfamily $\SolObj$ satisfying property~\ref{pr-ex}. In this case, value $\minf$ should be reported by the algorithm. For an instance $(G,\Obj,\Cli,k)$ of \covProb, we denote $\onum=|\Obj|$, $\cnum=|\Cli|$, and $n=|V(G)|$. Observe that we can assume that $k\leq \onum$ and that the input graph $G$ is simple: loops can be safely removed, and it is safe to only keep the edge with the smallest weight from any pack of parallel edges.

\iffull Note that if we do not provide any clients in the problem and we set all the radii to be equal to $0$, then we arrive exactly at the problem of packing disjoint objects in the graph. However, by introducing also clients we can ask for (partial) domination-type constraints in the graph, as well as define prize-collecting objectives. In this manner, \covProb generalizes both packing problems as well as covering problems. The caveat is, however, that we need to require that the objects that cover clients are pairwise disjoint, and moreover they have to form a normal family. 
\fi

The main result of this paper is the following theorem.

\begin{theorem}[Main result]\label{thm:main1}
\covProb can be solved in time $\onum^{\Oh(\sqrt{k})}\cdot (\cnum n)^{\Oh(1)}$.
\end{theorem}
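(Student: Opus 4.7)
The plan is to follow the Voronoi-diagram-based recursion sketched in Section~\ref{sec:geom}, but extended to handle all features of \covProb simultaneously: connected objects of nonuniform cost, nonuniform radii, and clients with nonuniform sensitivity and (possibly negative) prizes. I would fix a hypothetical optimum solution $\Fam \subseteq \Obj$ with $|\Fam|=k$ and reason about the additively weighted Voronoi partition $(M_p)_{p\in\Fam}$ of $V(G)$, where each vertex $v$ is assigned to the object $p\in\Fam$ minimizing $\dist(v,\loc(p))-\rad(p)$. Normality of $\Fam$ ensures that every object is contained in its own region and that each region is connected and ``convex'' with respect to shortest paths --- this is exactly what is needed to replicate the geometric argument.

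First, I would prove the analogue of Lemma~\ref{lem:guardedenum0} in the weighted setting: one can enumerate, in time polynomial in the output, a collection $\Vorsepfam$ of $\onum^{\Oh(\sqrt{k})}$ guarded separators $(Q,\Gamma)$ with $|Q|=\Oh(\sqrt{k})$ and $\Gamma\subseteq V(G)$, such that for every normal $\Fam$ of size $k$ some $(Q,\Gamma)\in\Vorsepfam$ satisfies (1) $Q\subseteq \Fam$, (2) $\Gamma\subseteq \bigcup_{p\in Q} M_p$, and (3) every component of $G-\Gamma$ contains at most $\tfrac{2}{3}k$ objects of $\Fam$. Given this, the recursion proceeds as in Section~\ref{sec:geom}: branch over all $(Q,\Gamma)\in\Vorsepfam$; for each, commit $Q$ into the partial solution, charge the costs of $Q$ and the prizes of all clients covered by $Q$, and then ``ban'' every object $p\in\Obj\setminus Q$ that is either incompatible with $Q$ being in a normal extension (i.e., $\dist(\loc(p),\loc(p'))\leq |\rad(p)-\rad(p')|$ for some $p'\in Q$, or $\loc(p)$ intersects $\Gamma$), or that witnesses a violation of property~(2), namely a vertex $v\in\Gamma$ with $\dist(v,\loc(p))-\rad(p) < \min_{p'\in Q}(\dist(v,\loc(p'))-\rad(p'))$.

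The key separation lemma is then: after cleaning, no remaining object in one component $C$ of $G-\Gamma$ can cover a client placed in a different component. The proof is exactly the three-step inequality used in the geometric covering argument, transferred to weighted distances: if $p\in\Obj'[C_2]$ covered $q\in\Cli$ with $\pla(q)\in C_1\neq C_2$, then a shortest $\loc(p)$-to-$\pla(q)$ path crosses some $v\in\Gamma$, and the guarantee that some $p'\in Q$ minimizes $\dist(v,\loc(\cdot))-\rad(\cdot)$ at $v$ yields $\dist(\loc(p'),\pla(q))-\rad(p')\leq \dist(\loc(p),\pla(q))-\rad(p)\leq \sen(q)$, so $p'\in Q$ already covers $q$ and the prize of $q$ has been charged. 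Thus revenue decomposes across components of $G-\Gamma$, up to the already-accounted contribution of $Q$.

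Because prizes and costs are nonuniform and the problem insists on cardinality \emph{exactly} $k$, for each component $C$ of $G-\Gamma$ and each $k'\in\{0,1,\dots,\lfloor\tfrac{2}{3}k\rfloor\}$ I would recursively compute the maximum revenue achievable by picking a normal subfamily of $\Obj'[C]$ of size exactly $k'$ that covers clients from $\Cli'[C]$. Combining these values across components into a single solution of total size $k-|Q|$ is a standard multidimensional knapsack over at most $\onum$ components, solvable in $(\onum\cdot k)^{\Oh(1)}$ time. Taking $T(n,\cnum,\onum,k)$ for the running time, each branch generates $\onum\cdot (\tfrac{2}{3}k+1)$ subproblems of parameter at most $\tfrac{2}{3}k$, so
\[
T(n,\cnum,\onum,k) \;\leq\; \onum^{\Oh(\sqrt{k})}\cdot T(n,\cnum,\onum,\tfrac{2}{3}k) + (\cnum n)^{\Oh(1)},
\]
which telescopes to $\onum^{\Oh(\sqrt{k})}\cdot (\cnum n)^{\Oh(1)}$ by the geometric-series argument in Section~\ref{sec:geom}.

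The main obstacle I expect is the proof of the generalized Lemma~\ref{lem:guardedenum0} for additively weighted Voronoi diagrams in planar graphs. Unlike the geometric case, the Voronoi ``diagram'' is not canonically a 2-connected 3-regular planar graph: boundaries between regions are not single edges but collections of dual edges, some regions may have bridges, and the natural notion of Voronoi vertex (a vertex or face where three regions meet) requires care because equidistances can be shared along paths rather than isolated points. Establishing that there is still a face-balanced noose of length $\Oh(\sqrt{k})$, and then certifying that each of its $\Oh(\sqrt{k})$ ``vertices'' can be described by $\Oh(1)$ objects from $\Obj$ (so that only $\onum^{\Oh(\sqrt{k})}$ candidates need to be enumerated, rather than $n^{\Oh(\sqrt{k})}$), is the technical heart of the argument and is precisely what Section~\ref{sec:algo} is designed to handle.
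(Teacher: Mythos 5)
Your overall architecture is exactly the paper's: guess a guarded separator $(Q,\Gamma)$ from an enumerated family, ban objects that violate normality with $Q$ or that are closer (in the additively weighted sense) to some vertex of $\Gamma$ than all of $Q$, charge the prizes of clients covered by $Q$, recurse on the pieces with every target size up to $\lfloor\frac{2}{3}k\rfloor$, and recombine by a knapsack dynamic program; your separation inequality for object--client pairs is precisely Lemma~\ref{lem:sep-oc}, and the recurrence analysis matches Lemma~\ref{lem:running-time}. Two smaller remarks: besides object--client separation you also need object--object separation (Lemma~\ref{lem:sep-oo}), i.e., that two allowed objects lying in different pieces automatically form a normal pair, since otherwise gluing the per-component solutions into one family of size exactly $k$ could violate normality --- your banning rule only handles normality with respect to $Q$ itself; and the paper recurses on components of the interaction graph rather than of $G-\Gamma$, but with the separation lemmas in place these are interchangeable.

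The genuine gap is that the statement you call ``the analogue of Lemma~\ref{lem:guardedenum0} in the weighted setting'' is not proved but only announced, and it is the technical heart of the theorem; without it the proposal does not yield the claimed bound. Concretely, three things must be established and none of them is routine: (i) that the additively weighted Voronoi partition of a normal $k$-family induces a connected $3$-regular planar multigraph (the diagram $\dgm_\Fam$) with $\Oh(k)$ branching points, which requires the prediagram construction and handling of bridges and loops (Lemmas~\ref{lem:pdiagram} and~\ref{lem:diagram}); (ii) that such a multigraph admits a $\frac{2}{3}$-face-balanced noose of length $\Oh(\sqrt{k})$ --- the off-the-shelf sphere-cut-decomposition theorem requires a bridgeless graph, and the paper needs the separate argument of Theorem~\ref{thm:edge-balanced-noose} to push the bound through bridged diagrams; and (iii) that each branching point of the (unknown) diagram can be certified by a constant-size tuple of objects, so that only $\onum^{\Oh(1)}$ candidate faces (Theorem~\ref{thm:impFaces}, via the three types of singular faces) and hence only $\onum^{\Oh(\sqrt{k})}$ candidate separators need to be enumerated --- your remark that ``a vertex equidistant from three objects'' does not suffice is exactly right, but the proposal offers no replacement. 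Without (iii) you only get $n^{\Oh(\sqrt{k})}$ candidates, which still falls short of the stated $\onum^{\Oh(\sqrt{k})}\cdot(\cnum n)^{\Oh(1)}$ bound when $n$ is much larger than $\onum$; without (i) and (ii) you do not even get a balanced separator to enumerate. So the plan is the right one, but as written it reproduces the paper's outline while leaving its central lemma (Theorem~\ref{thm:enumeration}) unproven.
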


\iffull
\subsection{Applications of Theorem~\ref{thm:main1}}\label{sec:appl}

In this section we provide formal argumentation of how the concrete results mentioned in the introduction follow from Theorem~\ref{thm:main1}. While for problems on planar graphs the reductions are straightforward, for geometric problems some non-trivial technicalities arise. We remark that we are proving the exact statements given in the introduction, but the generality of the \covProb problem would equally easily allow us to solve more general variants with different costs, prizes, sensitivities of clients, etc.


\paragraph*{Applications to planar graphs: Theorems~\ref{thm:intro_independent}-\ref{thm:intro_covering2}.}

\restateintroindependent*
Theorem~\ref{thm:intro_independent} follows by taking $\Obj$ to be the set of objects (identified with their locations), assigning each of them radius and cost equal to $0$, and considering no clients (putting $\Cli=\emptyset$). As distances play no role in the problem, we can simply put unit weights.

\restateintrocovering*

For Theorem~\ref{thm:intro_covering}, for every $v\in D$ we construct an object $p_v$ with $\loc(p_v)=\{v\}$, $\cst(v)=0$ and $\rad(p_v)=\rad(v)$. Then, for every $u\in C$ we construct a client $q_u$ with $\pla(q_u)=u$, $\sen(q_u)=0$ and $\pri(q_u)=1$. Let $\Obj,\Cli$ be the sets of constructed objects and clients, respectively. We claim that the optimum value for the input instance is equal to the maximum among optimum revenues for instances $(G,\Obj,\Cli,k')$ of \covProb, for $0\leq k'\leq k$. On one hand, any solution to any such instance $(G,\Obj,\Cli,k')$ trivially yields a solution to the input instance with the same revenue. On the other hand, if $S\subseteq D$ is a solution to the input instance, then observe that without loss of generality we may assume that $S$ does not contain any two distinct vertices $v_1,v_2$ such that $\dist(v_1,v_2)\leq \rad(v_1)-\rad(v_2)$. This is because when $\dist(v_1,v_2)\leq \rad(v_1)-\rad(v_2)$, then any client covered by $v_1$ is also covered by $v_2$, and $v_1$ can be safely removed from $S$. Then a solution $S$ with this property corresponds to a subfamily $\SolObj\subseteq \Obj$ that is normal and has size $|S|\leq k$, i.e., it is a solution for $(G,\Obj,\Cli,|S|)$ with revenue equal to the number of vertices of $C$ covered by $S$. We conclude that to solve the input instance it suffices to solve all the instances $(G,\Obj,\Cli,k')$ of \covProb for $0\leq k'\leq k$, using the algorithm of Theorem~\ref{thm:main1}.

\restateintrocoveringx*

Theorem~\ref{thm:intro_covering2} follows by taking $\Obj$ to be the set of objects (identified with their locations), assigning each of them radius and cost equal to $0$, and for every $u\in C$ introducing a client $q_u$ with $\pla(q_u)=u$, $\sen(q_u)=0$ and $\pri(q_u)=1$. As distances play no role in the problem, we can simply put unit weights.

\paragraph*{Applications to geometric problems: Theorems~\ref{thm:intro_packingdisks}-\ref{thm:intro_coveringsquares}.}

In all the geometric reductions that follow, we perform arithmetic operations only of the following types:
\begin{itemize}
\item Computing new points on the plane with coordinates given as constant-size rational expressions (i.e., yielded by the four basic arithmetic operations) over the input coordinates and radii.
\item Computing Euclidean distances between the obtained points.
\end{itemize}
Note that finding intersection of two segments between pairs of points given on the input gives a point that is compliant with this requirement.

Thus, if the input coordinates and radii are $s$-bit integers, then the coordinates of the points obtained in the reductions can be stored as rationals represented using $\text{poly}(s)$ bits. Then the distances can be safely stored using $\text{poly}(s,n)$-bit precision in order to avoid rounding errors when adding at most $n$ of them. Similarly, if the input coordinates and radii are given as floating point numbers with $s$ bits before the point and $s$ bits after the point, then we can scale them up to get $2s$-bit integers, and perform the same analysis.

\restateintropackingdisks*

\begin{proof}
We provide a reduction to the problem considered in Theorem~\ref{thm:intro_independent}. Let $I=(\Obj,k)$ be the input instance. Based on the disk set $\Obj$, we define a planar graph $G$ as follows. First, construct a set of points $X$ (see Figure~\ref{fig:packingdisks}) by taking (a) all the centers of disks from $\Obj$, and (b) for any two disks $A_1,A_2$ that intersect, an arbitrarily chosen point in their intersection (for example, a point on the segment between the centers for which distances to the centers are in the same ratio as radii of the disks). Start the construction of $G$ by putting $V(G)=X$. Then, for each pair of points $x_1,x_2\in X$, draw a segment between $x_1$ and $x_2$ on the plane, provided that no other vertex of $X$ lies on this segment. These segments form so far the edge set of $G$, but they may cross. To make $G$ planar, for every crossing of two segments introduce a new vertex at the intersection point, and connect all the old and new points according to the segments. That is, every former edge between two points $x_1,x_2\in X$ becomes a path from $x_1$ to $x_2$ traversing consecutive crossing points, and the embedding of this path is exactly the segment between $x_1$ and $x_2$. To make $G$ edge-weighted, to every edge $uv\in E(G)$ assign weight equal to the Euclidean length of the segment between $u$ and $v$. Observe that $G$ is connected and $|V(G)|\leq \Oh(|\Obj|^4)$.
\begin{figure}
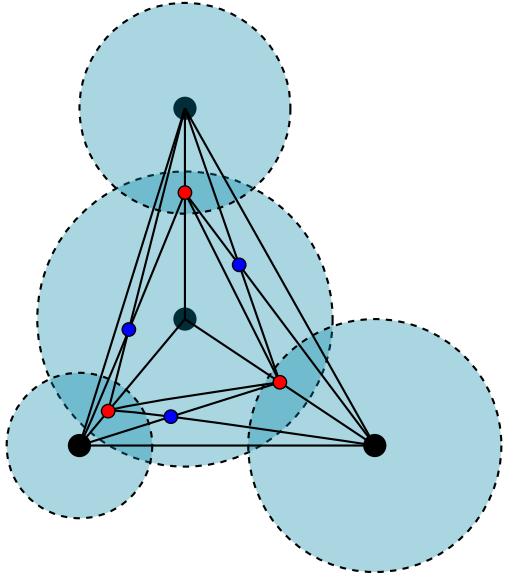

\begin{center}
\svg{0.4\linewidth}{packingdisks}
\end{center}
\caption{Proof of Theorem~\ref{thm:intro_packingdisks}. The black dots are the centers of the disks, the red dots are the points chosen in the intersections of pairs of disks, and the blue dots were introduced at the intersections of the segments to make the graph planar.}\label{fig:packingdisks}
\end{figure}

Now, for every disk $A\in \Obj$ with radius $r_A$ construct a vertex set $X(A)$ as follows: $X(A)$ comprises all the vertices of $G$ that are at distance at most $r_A$ from the center of $A$ (which belongs to $V(G)$) in the graph $G$. Obviously $G[X(A)]$ is connected, because it is defined as a ball in the graph $G$. Moreover, by the triangle inequality it follows that every vertex of $X(A)$ is actually embedded into the disk $A$. Note, however, that $X(A)$ may not contain some intersection points that are actually embedded into the disk $A$, but in $G$ they are at distance larger than $r_A$ from the center of $A$. Nonetheless, $X(A)$ contains all the elements of $X$ that are embedded into the disk $A$. This is because by the construction of $G$, for any $x\in X\cap A$ we have that $x$ is connected to the center of $A$ via a path consisting of parallel segments, which in particular has length equal to the Euclidean distance from $x$ to this center.

Let $\Obj_0=\{X(A)\ |\ A\in \Obj\}$ be the family of constructed vertex subsets, and let $I'=(G,\Obj_0,k)$ be the constructed instance of the problem considered in Theorem~\ref{thm:intro_independent}. We now claim that the input instance $I$ has a solution if and only if the constructed instance $I'$ has a solution.

On one hand, if $\mathcal{S}\subseteq \Obj$ is a subfamily of $k$ disjoint disks, then $\mathcal{S}_0=\{X(A)\ |\ A\in \mathcal{S}\}$ is a subfamily of $k$ vertex sets that are pairwise disjoint, and hence $I'$ has some solution. On the other hand, suppose there exists some solution $\mathcal{S}_0$ to $I'$, and let $\mathcal{S}\subseteq \Obj$ be the corresponding set of $k$ disks. We claim that these disks are pairwise disjoint. Suppose that, on the contrary, there are some distinct disks $A_1,A_2\in \mathcal{S}$ that intersect. Recall that for these disks we have added some point $x\in A_1\cap A_2$ to $X$. As we have argued, by the construction of $G$ this means that $x$ belongs both to $X(A_1)$ and to $X(A_2)$, which is a contradiction with the fact the vertex sets in $\mathcal{S}_0$ are pairwise disjoint. 
\end{proof}

\restateintropackingconvex*
\begin{proof}
We give a reduction to the problem considered in Theorem~\ref{thm:intro_independent}. Let $I=(\Obj,k)$ be the input instance. Based on the set of polygons $\Obj$, define a planar graph $G$ as follows. Start with taking $V(G)$ to be the set of all the vertices of all the polygons of $\Obj$, and then for every polygon $A\in \Obj$ draw all the sides of $A$ as edges of $G$, embedded as respective segments (see Figure~\ref{fig:packingconvex}). Then perform a similar construction as in the proof of Theorem~\ref{thm:intro_packingdisks}: in order to make $G$ planar, for every crossing of two segments introduce a vertex on their intersection. The edge set of $G$ is defined as the set of all the segments $xy$ between two vertices $x,y\in V(G)$, being either the original vertices of the polygons or the introduced crossing points, such that $xy$ is contained in some side of an original polygon and the interior of $xy$ does not contain any other vertex of $V(G)$. Thus, this definition yields a planar embedding of $G$. Note that the definition is also correct when we have two sides $xy$ and $x'y'$ of two different polygons that share a common subinterval (not being a single point); then vertices $x,y$ count as crossing points subdividing $x'y'$ and vice versa.
\begin{figure}
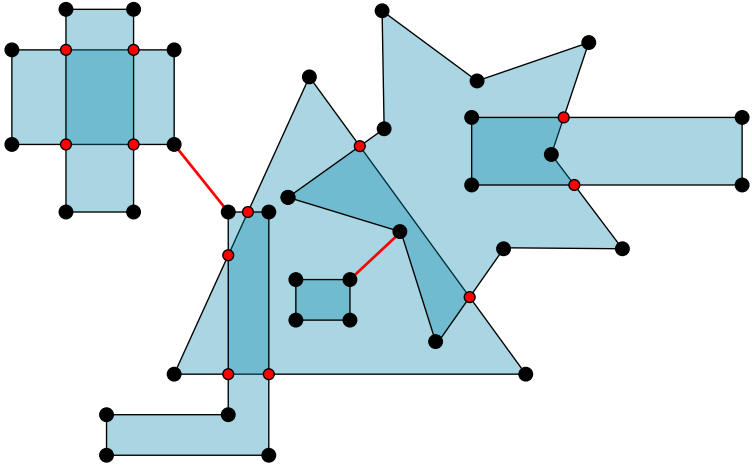

\begin{center}
\svg{0.6\linewidth}{packingconvex}
\end{center}
\caption{Proof of Theorem~\ref{thm:intro_packingconvex}. The black dots are the vertices of the polygons and the red dots are the points introduced at the intersection of segments. The red segments were introduced to make the graph connected.}\label{fig:packingconvex}
\end{figure}

Observe that $G$ defined in this way is planar, but may not be connected. Hence, we make $G$ connected by iteratively taking two connected components $C_1$, $C_2$ of $G$ that are incident to a common face, and adding an edge within this face between two arbitrarily chosen vertices from $C_1$ and $C_2$. Observe that $|V(G)|=\Oh(n^2)$.

Now, for every polygon $A\in \Obj$ construct a vertex set $X(A)$ as follows: $X(A)$ comprises all the vertices of $G$ that are embedded into the polygon $A$. We regard polygons as closed, so $X(A)$ in particular contains all the vertices of $A$. Note that the perimeter of $A$ naturally induces a simple cycle in $G$, and $X(A)$ is exactly the set of points enclosed or on this cycle. Since $G$ is connected, it also follows that $G[X(A)]$ is connected. Let $\Obj_0=\{X(A)\ |\ A\in \Obj\}$ be the set of constructed vertex sets, and let $I'=(G,\Obj_0,k)$ be the constructed instance of the problem considered in Theorem~\ref{thm:intro_independent}. We now claim that the input instance $I$ has a solution if and only if the constructed instance $I'$ has a solution.

On one hand, if $\mathcal{S}\subseteq \Obj$ is a subfamily of $k$ disjoint polygons, then $\mathcal{S}_0=\{X(A)\ |\ A\in \mathcal{S}\}$ is a subfamily of $k$ vertex sets that are pairwise disjoint, and hence $I'$ has some solution. On the other hand, suppose there exists some solution $\mathcal{S}_0$ to $I'$, and let $\mathcal{S}\subseteq \Obj$ be the corresponding set of polygons. We claim that these polygons are pairwise disjoint. Suppose that, on the contrary, there are some distinct polygons $A_1,A_2\in \mathcal{S}$ that intersect. If the perimeters of $A_1$ and $A_2$ intersect, then each their intersection point $x$ belongs to $V(G)$ and is contained in $X(A_1)\cap X(A_2)$, contradicting the fact that $X(A_1)$ and $X(A_2)$ are disjoint. Otherwise, either $A_1$ is entirely contained in $A_2$, or $A_2$ is entirely contained in $A_1$. In the former case every vertex of $A_1$ belongs to $V(G)$ and is contained in $X(A_1)\cap X(A_2)$, and in the latter case every vertex of $A_2$ has this property. In both cases we obtain a contradiction with the fact that $X(A_1)$ and $X(A_2)$ are disjoint.
\end{proof}

\restateintrocoveringdisks*
\restateintrocoveringsquares*

\begin{proof}
We prove both theorems at the same time by showing that the algorithm can be constructed whenever sets from $\Obj$ are defined as balls in some norm $||\cdot||$ on $\R^2$. More precisely, there exists a norm $||\cdot||$ on $\R^2$, such that every $A\in \Obj$ is defined as $A=\{x\in \R^2\ |\ ||x-c(A)||\leq r(A)\}$, where $c(A)$ is the center of ball $A$ and $r(A)$ is its radius. Then Theorem~\ref{thm:intro_coveringdisks} follows by taking $||\cdot||$ to be the $\ell_2$-norm and Theorem~\ref{thm:intro_coveringsquares} follows by taking $||\cdot||$ to be the $\ell_\infty$-norm.

We give a reduction to the problem considered in Theorem~\ref{thm:intro_covering}, and to this end we perform a similar construction as in the proof of Theorem~\ref{thm:intro_packingdisks}. Let $I=(\Obj,\Cli,k)$ be the input instance. Let $X$ be the set of all the centers of balls from $\Obj$ and all the points from $\Cli$. Start the construction of $G$ by putting $V(G)=X$. Then, for each pair of points $x_1,x_2\in X$ such that $x_1$ is a center of a ball from $\Obj$ and $x_2$ is a point from $\Cli$, draw a segment between $x_1$ and $x_2$ on the plane, provided that no other vertex of $X$ lies on this segment. These segments form so far the edge set of $G$, but they may cross. To make $G$ planar, for every crossing of two segments introduce a new vertex at the intersection point, and connect all the old and new points naturally according to the segments. To make $G$ edge-weighted, observe that every edge $e$ of $G$ is embedded as a segment on the plane, so put $\wei(e)$ to be the $||\cdot||$-length of this segment. Observe that $G$ is connected and $|V(G)|=\Oh(|\Obj|^2|\Cli|^2)$.
\begin{figure}
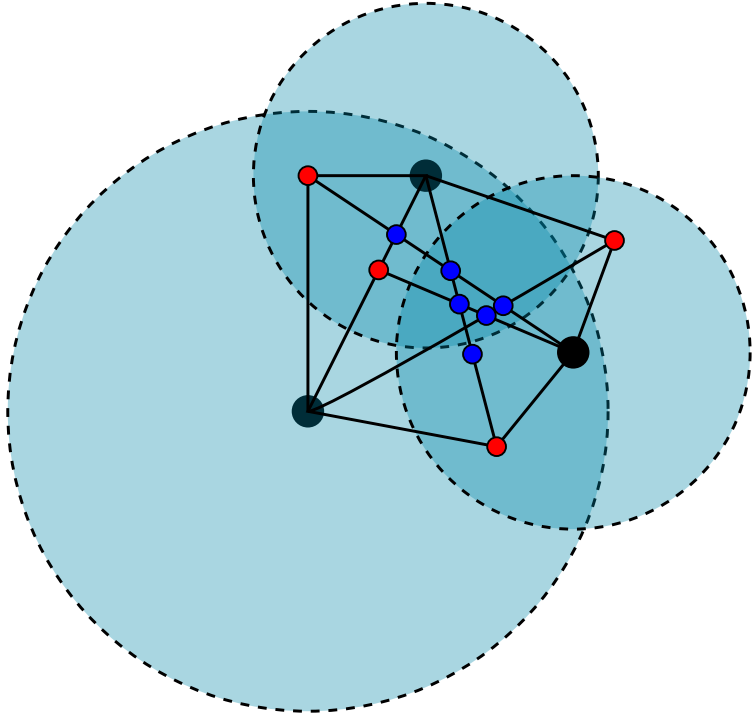

\begin{center}
\svg{0.6\linewidth}{coveringdisks}
\end{center}
\caption{Proof of Theorem~\ref{thm:intro_coveringdisks}. The black dots are the centers of the disks in $\Obj$, the red dots are in the points in $\Cli$, and the blue dots were introduced at the intersections of the segments to make the graph planar.}\label{fig:coveringdisks}
\end{figure}

Now, let us set $D=\{c(A)\ |\ A\in \Obj\}$ and $C=\Cli$. Construct an instance $I'$ of the problem considered in Theorem~\ref{thm:intro_covering} by taking graph $G$ with subsets of vertices $D$ and $C$, assigning $\rad(c(A))=\rad(A)$ for each $c(A)\in D$, and putting budget $k$. We now claim that a vertex $x\in C$ is covered by some $c(A)\in D$ in the instance $I'$ if and only if point $x\in \Cli$ belongs to $A\in \Obj$ in the input instance $I$. To prove this it suffices to show that $\dist_G(x,c(A))=||x-c(A)||$, since $x$ is covered by $c(A)$ in $I'$ if and only if $\dist_G(x,c(A))\leq \rad(A)$, whereas $x$ belongs to $A$ in $I$ if and only if $||x-c(A)||\leq \rad(A)$. First, by the triangle inequality it follows that distances in $G$ are lower bounded by distances w.r.t. norm $||\cdot||$ on the plane, so $\dist_G(x,c(A))\geq ||x-c(A)||$. On the other hand, since $x,c(A)\in X$, then by the construction of $G$ there exists a path in $G$ that connects $x$ and $c(A)$ and consists of parallel segments. In particular, this path has total length $||x-c(A)||$, which means that $\dist_G(x,c(A))\leq ||x-c(A)||$. We conclude that indeed $\dist_G(x,c(A))=||x-c(A)||$.

Solutions to the input instance $I$ correspond one-to-one to solutions of the constructed instance $I'$, and the argumentation of the previous paragraph shows that this correspondence preserves the number of points covered by the solution. Hence, it suffices to run the algorithm of Theorem~\ref{thm:intro_covering} on the instance $I'$.
\end{proof}


\fi


\ifabstract
\clearpage
\bibliographystyle{abbrv}
\bibliography{voronoi}
\appendix
\fi

\iffull

\section{The main algorithm: proof of Theorem~\ref{thm:main1}}\label{sec:algo}

\subsection{Notation and general definitions}

We first establish notation and recall some general definitions that will be used throughout the proof.

For a graph $G$, by $V(G)$ we denote its vertex set and by $E(G)$ we denote its edge set. Graph $G$ is {\em{simple}} if every edge connects two different vertices, and every pair of vertices is connected by at most one edge. On the other hand, $G$ is a {\em{multigraph}} if we allow (a) {\em{parallel edges}}, that is, multiple edges connecting the same pair of vertices, and (b) {\em{loops}}, that is, edges that connect some vertex $u$ with itself. A graph {\em{induced}} by a vertex subset $X\subseteq V(G)$, denoted $G[X]$, has $X$ as the vertex set, and its edge set comprises all the edges of $G$ whose both endpoints belong to $X$. A graph {\em{spanned}} by an edge subset $F\subseteq E(G)$, denoted $G[F]$, has $F$ an the edge set, and its vertex set comprises all the vertices of $G$ incident with at least one edge of $F$.

In this work we will be mostly working with edge-weighted graphs. That is, we assume that the input graph $G$ is given together with a positive edge weight function $\wei\colon E(G)\to \Rp$. The {\em{distance}} between two vertices $u,v\in V(G)$, denoted $\dist(u,v)$, is defined as the minimum total weight of a path connecting $u$ and $v$. We extend this notation to subsets in the obvious manner: $\dist(u,Y)$ is the minimum weight of a path from $u$ to any vertex of $Y$, and $\dist(X,Y)$ is the minimum weight of a path from any vertex of $X$ to any vertex of $Y$. Note that the $\dist(\cdot,\cdot)$ function satisfies the triangle inequality: $\dist(u,v)+\dist(v,w)\geq \dist(u,w)$, and similar inequalities hold for sets as well.

Let $\Sigma$ be the sphere $S^2=\{(x,y,z)\in \R^3\ |\ x^2+y^2+z^2=1\}$. A {\em{curve}} on $\Sigma$ is a homeomorphic image of the interval $[0,1]$, and a {\em{closed curve}} on $\Sigma$ is a homeomorphic image of the circle $S^1=\{(x,y)\in \R^2\ |\ x^2+y^2=1\}$. Note that, in particular, a curve and a closed curve do not have self-crossings, i.e., no point of $\Sigma$ is visited twice. The Jordan curve theorem states that for any closed curve $\gamma$ on $\Sigma$, $\Sigma\setminus \gamma$ consists of two connected sets homeomorphic to open disks. A {\em{sphere embedding}} of a graph $G$ is a mapping that maps vertices of $G$ to distinct points of $\Sigma$, and edges of $G$ to curves connecting respective endpoints that do not intersect apart from their endpoints. In case $e\in E(G)$ is a loop, the corresponding curve connects a point with itself, i.e., it is a closed curve. Note that this definition is also valid for multigraphs. It is well known that a graph is planar, i.e., it can be embedded on the $2$-dimensional plane in the same manner, if and only if it can be embedded on a sphere.

If $G$ is a sphere-embedded (multi)graph, then a {\em{face}} of $G$ is an inclusion-wise maximal connected set that does not contain any point of the embedding of $G$. Note that in particular every face is open. It is well known that if $G$ is connected, then every face is homeomorphic to an open disk. Note, however, that if $G$ has bridges or cutvertices, then there may be faces whose closures are not homeomorphic to closed disks. If the closure of each face is homeomorphic to a closed disk, then we say that the embedding is a {\em{$2$-cell embedding}}. The set of faces of a (sphere-embedded) graph $G$ is denoted by $F(G)$. The well-known Euler formula states that for every connected multigraph $G$ embedded on a sphere, it holds that $|V(G)|-|E(G)|+|F(G)|=2$. 

Let $f\in F(G)$ be a face of a connected, sphere-embedded multigraph $G$. We say that a vertex $v$ or an edge $e$ is {\em{incident}} with $f$ if it is contained in the closure of $f$. By the {\em{boundary of $f$}}, denoted $\bnd f$, we mean a walk in $G$ that visits consecutive vertices and edges incident with $f$ in the order of their appearance around $f$. Note that if the closure of $f$ is homeomorphic to a closed disk, then $\bnd f$ is a simple cycle in $G$. However, otherwise a vertex can be visited more than once on $\bnd f$ (this can happen if it is a cutvertex in $G$), and an edge can be traversed more than once on $\bnd f$ (this can happen if it is a bridge in $G$). In case $v$ is a cutvertex of $G$, then there is a face $f$ of $G$ that appears multiple times around $v$ in the planar embedding. Whenever some curve is leaving $v$ to $f$ (or entering $v$ from $f$), we will distinguish these different appearances and call then {\em{directions}}. Note that every such direction corresponds to an appearance of $v$ on $\bnd f$.

A sphere-embedded graph is {\em{triangulated}}, if for every face $f$, walk $\bnd f$ is a triangle in $G$; in particular, the graph is simple and the embedding is a $2$-cell embedding. It is well-known that for any simple graph $G$ embedded on a sphere, one can add edges to the embedding so that the new supergraph is triangulated.

\subsection{Simplifying assumptions}

Throughout the proof we assume that $(G,\Obj,\Cli,k)$ is the input instance of \covProb. Before we proceed to the proof, let us make a few simplifying assumptions about the input instance $(G,\Obj,\Cli,k)$. These assumptions can be made without loss of generality, and they will streamline further reasonings.

Firstly, we will assume that the graph $G$ is connected. Indeed, for disconnected graph we can apply the algorithm for every connected component of $G$ separately, for all the parameters $i=0,1,2,\ldots,k$, and then merge the results using a simple dynamic programming algorithm.

Secondly, we will assume that all the values $\dist(u,v)$ and $\dist(u,v)-\rad(p)$ for $u,v\in V(G)$ and $p\in \Obj$ are pairwise different, and moreover that the shortest paths between pairs of vertices in $G$ are unique. This can be easily obtained using the standard technique of breaking ties by adding very small, distinct values to the weights of edges as well as to the radii of objects. To do this, we need to use $(\onum+n)^{\Oh(1)}$ more bits of precision in the representation of floating point numbers. Since we never estimate precisely the exact running time of polynomial-time subroutines run on $G$, this does not influence the claimed asymptotic running time of the algorithm. For brevity, we will denote this assumption by \ass. 

Thirdly, we fix some sphere embedding of $G$, and without loss of generality we assume that $G$ is triangulated. This can be achieved by triangulating every face of $G$ using edges of weight $\pinf$ (equivalently, a value much larger than all the radii and sensitivities). After this operation, every face of $G$ is a triangle whose closure is homeomorphic to a closed disc (i.e., it is a $2$-cell embedding). Note that thus $G$ is $2$-connected, so in particular it does not have any cut-vertices, bridges, or vertices of degree $1$. We also assume that $G$ has more than $3$ vertices, since we can always add a new vertex within a triangular face connected with its vertices via edges of weight $\pinf$.

Finally, for every object $p\in \Obj$ we arbitrarily distinguish one vertex $\cen(p)\in \loc(p)$, called the {\em{center}} of object $p$. We also pick an arbitrary spanning tree $\tree(p)$ of $G[\loc(p)]$, rooted at $\cen(p)$. For concreteness, the reader may think of $\tree(p)$ as the tree of shortest paths from $\cen(p)$ in $G[\loc(p)]$, but we will never use this property.

\subsection{Voronoi partitions}\label{sec:partitions}

We want to define analogs of Voronoi regions and related notions in the context of planar graphs: given a set $\Fam$ of objects, we would like to partition the vertices of the graph according to which object in $\Fam$ is closest to a vertex. For this to make sense, we clearly need the assumption that the objects in $\Fam$ are pairwise disjoint, otherwise it is not clear how to classify vertices in the intersection of the objects. Moreover, to express the fact that different objects have different power of domination ($\rad(p)$ can be different for different objects), we need additively weighted Voronoi regions. That is, instead of comparing the distances to locations of objects from $\Fam$, we compare the values of distances to $\loc(p)$ minus $\rad(p)$, for $p\in \Fam$. Here we need the technical assumption that the family $\Fam$ is normal, which rules out degenerate situations by ensuring that every point of the object is in the Voronoi region of the object itself, and hence the Voronoi regions are nonempty and connected.

Formally, suppose $\Fam\subseteq \Obj$ is a normal subfamily of objects. Then we define the {\em{Voronoi partition}} of $G$ with respect to $\Fam$, denoted $\Vorpar_{\Fam}$, as a partition of $V(G)$ into parts $\{M_p\}_{p\in \Fam}$ as follows: a vertex $v\in V(G)$ belongs to part $M_{p_0}$ if and only if $\dist(v,\loc(p_0))-\rad(p_0)$ is the smallest value among $\{\dist(v,\loc(p))-\rad(p)\}_{p\in \Fam}$. Note that assumption \ass implies there can be no ties between different objects. Set $M_p$ shall be called the {\em{Voronoi region}} of $p$. For some Voronoi partition $\Vorpar_\Fam$ and $p\in \Fam$, the Voronoi region of $p$ will be also denoted by $\Vorpar_\Fam(p)$. Whenever the family we are working with will be clear from the context, we shall drop the index $\Fam$. This remark holds for all the objects defined in the sequel that depend on the currently considered family $\Fam$.

The following simple lemma shows that the normality condition is a sanity check that the Voronoi regions are well-defined.

\begin{lemma}\label{lem:sanity}
Suppose $\Fam\subseteq \Obj$ is a normal subfamily of objects and let $\Vorpar=\Vorpar_\Fam$ be the Voronoi partition w.r.t. $\Fam$. Then for every $p\in \Fam$, it holds that $\loc(p)\subseteq \Vorpar(p)$.
\end{lemma}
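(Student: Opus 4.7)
The plan is to unfold the definitions and use the normality condition directly. Fix $p \in \Fam$ and an arbitrary vertex $v \in \loc(p)$; my goal is to show $v \in M_p$, i.e., that $\dist(v,\loc(p)) - \rad(p)$ is strictly smaller than $\dist(v,\loc(p')) - \rad(p')$ for every other $p' \in \Fam$. Since $v \in \loc(p)$, the first quantity equals $-\rad(p)$, so the inequality to establish reduces to
\[
\dist(v,\loc(p')) \;>\; \rad(p') - \rad(p) \qquad \text{for every } p' \in \Fam \setminus \{p\}.
\]

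The bound follows from combining two facts. First, because $v \in \loc(p)$, the triangle-inequality-style monotonicity of $\dist(\cdot,\cdot)$ on sets gives $\dist(v,\loc(p')) \geq \dist(\loc(p),\loc(p'))$. Second, normality of $\Fam$ provides $\dist(\loc(p),\loc(p')) > \rad(p) - \rad(p')$ as well as the symmetric inequality with the roles of $p$ and $p'$ swapped, which together imply $\dist(\loc(p),\loc(p')) > |\rad(p)-\rad(p')| \geq \rad(p')-\rad(p)$. Chaining these two inequalities yields exactly the required strict bound.

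Since the inequality is strict for every $p' \neq p$, the minimum of $\{\dist(v,\loc(p'')) - \rad(p'')\}_{p''\in \Fam}$ is uniquely attained at $p'' = p$, so $v \in \Vorpar(p)$. There is no real obstacle here: the lemma is a two-line consequence of the definition of normality, and the point of stating it is merely to confirm that the Voronoi region of $p$ always contains $p$ itself, so that assumption \ass (ruling out ties) is actually consistent with the partition being well-defined. No case analysis, separator arguments, or planarity input is needed at this stage.
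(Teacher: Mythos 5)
Your proof is correct and uses exactly the same ingredients as the paper's: the bound $\dist(v,\loc(p'))\geq \dist(\loc(p),\loc(p'))$ for $v\in\loc(p)$ combined with the normality inequality $\dist(\loc(p),\loc(p'))>\rad(p')-\rad(p)$. The paper phrases it as a proof by contradiction while you argue directly, but this is the same argument in contrapositive form.
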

\begin{proof}
For the sake of contradiction, suppose there are two distinct objects $p_1$ and $p_2$ such that some vertex $v\in \loc(p_1)$ belongs to $\Vorpar(p_2)$. Then in particular we have that
$$\dist(v,\loc(p_2))-\rad(p_2)<\dist(v,\loc(p_1))-\rad(p_1)=-\rad(p_1),$$
and hence 
$$\dist(\loc(p_1),\loc(p_2))\leq \dist(v,\loc(p_2)) < \rad(p_2)-\rad(p_1).$$
This is a contradiction with the definition of normality.
\end{proof}

The following simple claim shows that Voronoi regions behave as expected.

\begin{lemma}\label{lem:connectivity}
Suppose $\Fam\subseteq \Obj$ is a normal subfamily of objects and let $\Vorpar=\Vorpar_\Fam$ be the Voronoi partition w.r.t. $\Fam$. Then for every $p\in \Fam$ and every $v\in \Vorpar(p)$, the unique shortest path from $v$ to $\loc(p)$ is entirely contained in $\Vorpar(p)$. In particular, $G[\Vorpar(p)]$ is connected.
\end{lemma}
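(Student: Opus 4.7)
The plan is to prove the ``shortest path stays in the region'' property by a direct exchange argument, and then deduce connectivity as an immediate corollary. Fix $p \in \Fam$ and $v \in \Vorpar(p)$, and let $P$ be the (unique, by \ass) shortest path from $v$ to $\loc(p)$. Suppose for contradiction that some vertex $u$ on $P$ lies in $\Vorpar(p')$ for some $p' \in \Fam$ with $p' \neq p$. Then by definition of the Voronoi partition,
\[
\dist(u,\loc(p')) - \rad(p') < \dist(u,\loc(p)) - \rad(p).
\]

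Next I would exploit that subpaths of a shortest path are shortest: since $u$ lies on $P$, the portion of $P$ from $v$ to $u$ has length exactly $\dist(v,u)$, and the portion from $u$ to $\loc(p)$ has length exactly $\dist(u,\loc(p))$, with $\dist(v,\loc(p)) = \dist(v,u) + \dist(u,\loc(p))$. Adding $\dist(v,u) - \rad(p)$ to the assumed strict inequality and invoking the triangle inequality $\dist(v,\loc(p')) \le \dist(v,u) + \dist(u,\loc(p'))$ yields
\[
\dist(v,\loc(p')) - \rad(p') \;\le\; \dist(v,u) + \dist(u,\loc(p')) - \rad(p') \;<\; \dist(v,u) + \dist(u,\loc(p)) - \rad(p) \;=\; \dist(v,\loc(p)) - \rad(p),
\]
which contradicts $v \in \Vorpar(p)$. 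Therefore every vertex of $P$ lies in $\Vorpar(p)$.

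For the final sentence, I would combine this with Lemma~\ref{lem:sanity}: since $\loc(p) \subseteq \Vorpar(p)$ and $G[\loc(p)]$ is connected by definition of an object, and every $v \in \Vorpar(p)$ is connected to $\loc(p)$ inside $\Vorpar(p)$ via the shortest path $P$ above, the induced subgraph $G[\Vorpar(p)]$ is connected. I do not foresee a real obstacle here; the only subtlety worth double-checking is that the uniqueness of shortest paths (so that ``the shortest path'' is well-defined) follows from assumption \ass, and that we really only need the weak triangle inequality plus the strict inequality coming from $u \in \Vorpar(p')$ to derive the contradiction.
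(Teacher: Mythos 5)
Your proposal is correct and follows essentially the same argument as the paper: assume a vertex of the shortest path lies in another region, use subpath optimality of shortest paths together with the triangle inequality to contradict $v\in\Vorpar(p)$, and then derive connectivity from Lemma~\ref{lem:sanity} and the connectivity of $G[\loc(p)]$. No gaps.
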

\begin{proof}
Let $P$ be the shortest path connecting $v$ with some $w\in \loc(p)$. Note that the uniqueness of $P$ follows from \ass. For the sake of contradiction, suppose there exists $v'\in V(P)$ such that $\dist(v',\loc(p'))-\rad(p')<\dist(v',\loc(p))-\rad(p)$, for some $p'\in \Fam$, $p'\neq p$. Since $P$ is the shortest path, we have:
\begin{eqnarray*}
\dist(v,\loc(p))-\rad(p) & = & \dist(v,w)-\rad(p) \\
& = & \dist(v,v')+\dist(v',w)-\rad(p) \\
& = & \dist(v,v')+\dist(v',\loc(p))-\rad(p) \\
& > & \dist(v,v')+\dist(v',\loc(p'))-\rad(p') \geq \dist(v,\loc(p'))-\rad(p').
\end{eqnarray*}
This is a contradiction with $v\in \Vorpar(p)$. The second part of the claim follows from the first part, Lemma~\ref{lem:sanity}, and the fact that $G[\loc(p)]$ is connected.
\end{proof}

Lemma~\ref{lem:connectivity} gives rise to the following definition of a spanning tree of a Voronoi region. For a normal subfamily of objects $\Fam\subseteq \Obj$ and some $p\in \Fam$, we shall distinguish a spanning tree $\extree_\Fam(p)$ of $G[\Vorpar(p)]$ constructed as follows:
\begin{itemize}
\item start with $\extree_\Fam(p)=\tree(p)$;
\item for every $v\in \Vorpar(p)$, add to $\extree_\Fam(p)$ the unique shortest path from $v$ to $\loc(p)$.
\end{itemize}
By Lemma~\ref{lem:connectivity} and properties of shortest paths, $\extree_\Fam(p)$ defined in this manner is a spanning tree of $G[\Vorpar(p)]$. Note that we may view $\extree_\Fam(p)$ as rooted in $\cen(p)$, the center of $p$. However, $\extree_\Fam(p)$ is {\em{not}} necessarily the tree of shortest paths from $\cen(p)$ in $G[\Vorpar(p)]$; for example, it is possible that the shortest path between two vertices of $\loc(p)$ is not contained fully in $\loc(p)$.

The following statement is straightforward, but it is instructive to state it explicitly.

\begin{lemma}\label{lem:inclusion}
Let $\Fam'\subseteq \Fam\subseteq \Obj$ be two normal families of objects. Then for every $p\in \Fam'$ it holds that $\Vorpar_{\Fam'}(p)\supseteq \Vorpar_\Fam(p)$ and $\extree_{\Fam'}(p)$ is a supergraph of $\extree_\Fam(p)$.
\end{lemma}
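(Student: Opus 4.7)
\textbf{Proof plan for Lemma~\ref{lem:inclusion}.} The claim has two parts, both of which should follow essentially by unwinding the definitions; the key observation is that going from $\Fam$ to a subfamily $\Fam'$ only removes candidates from the minimization that defines Voronoi regions.

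First I would prove the region inclusion $\Vorpar_{\Fam'}(p)\supseteq \Vorpar_\Fam(p)$. Take any vertex $v\in \Vorpar_\Fam(p)$. By definition of the Voronoi partition, $p$ is the object in $\Fam$ minimizing $\dist(v,\loc(\cdot))-\rad(\cdot)$ over $\Fam$. Since $\Fam'\subseteq \Fam$ and $p\in \Fam'$, the same value $\dist(v,\loc(p))-\rad(p)$ is still the minimum of $\dist(v,\loc(\cdot))-\rad(\cdot)$ taken over the smaller family $\Fam'$ (assumption \ass rules out ties). Hence $v\in \Vorpar_{\Fam'}(p)$, as desired.

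For the second part, I would reason directly from the construction of $\extree_\Fam(p)$ and $\extree_{\Fam'}(p)$. Both trees start from the same base, namely $\tree(p)$; then $\extree_\Fam(p)$ is obtained by adding, for each $v\in \Vorpar_\Fam(p)$, the unique shortest path in $G$ from $v$ to $\loc(p)$, and analogously for $\extree_{\Fam'}(p)$ using $\Vorpar_{\Fam'}(p)$. By the first part, every $v$ contributing to $\extree_\Fam(p)$ also contributes to $\extree_{\Fam'}(p)$. Furthermore, the shortest path from $v$ to $\loc(p)$ depends only on $G$ and $\wei$, not on the family, and is unique by \ass. Therefore every edge inserted while building $\extree_\Fam(p)$ is also inserted while building $\extree_{\Fam'}(p)$, so $\extree_{\Fam'}(p)$ is a supergraph of $\extree_\Fam(p)$.

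There is no real obstacle here: the lemma is a direct sanity check on the definitions, and the only subtlety is making sure that uniqueness of shortest paths (guaranteed by \ass) is invoked when arguing that the two constructions add literally the same edges rather than merely parallel ones. The statement and its proof are likely included only to be cited later when the algorithm passes from a hypothetical full solution $\Fam$ to various subsolutions $\Fam'$, and it is convenient to know that Voronoi regions and their spanning trees can only grow under this operation.
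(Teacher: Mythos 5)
Your proof is correct and is exactly the definitional argument the paper has in mind; indeed the paper states this lemma without proof, calling it straightforward. Both halves of your argument (a minimizer over $\Fam$ lying in $\Fam'$ remains the minimizer over $\Fam'$, and the shortest paths defining the trees depend only on $G$, $\wei$, and $\loc(p)$, so the tree over the larger Voronoi region of $\Fam'$ contains the one for $\Fam$) are what the authors intend.
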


In other words, when we restrict the family of objects, then the Voronoi regions and their spanning trees can only grow.

\subsection{Voronoi (pre-)diagrams}\label{sec:diagrams}

Let us fix some normal subfamily of objects $\Fam$, let $\ell=|\Fam|$, and let $\Vorpar=\Vorpar_\Fam$ be the corresponding Voronoi partition. Intuitively, $\Vorpar$ partitions $G$ into regions, such that if we put an edge between adjacent regions, then the resulting graph is planar. We now formalize this concept by introducing {\em{Voronoi prediagrams}} and {\em{Voronoi diagrams}}. More precisely, Voronoi (pre-)diagrams correspond to the dual of the adjacency graph between the regions.

A {\em{Voronoi prediagram}} for $\Fam$ is a subgraph $\pdgm_\Fam$ of $G^*$, the dual of $G$, defined as follows:
\begin{enumerate}
\item Start with the graph $G^*$.
\item For every $p\in \Fam$, remove all the dual edges corresponding to edges of $\extree(p)$.
\item Iteratively remove vertices of degree $1$ in the obtained graph, up to the point when the minimum degree of the graph is $2$.
\end{enumerate}
In this definition we drop the weights of edges, i.e., $\pdgm$ is an unweighted graph. See Figure~\ref{fig:prdg-const} for an example. The following lemma formalizes the properties of the Voronoi prediagram.

\begin{figure}[t]
                \centering
                \def\svgwidth{0.7\columnwidth}
                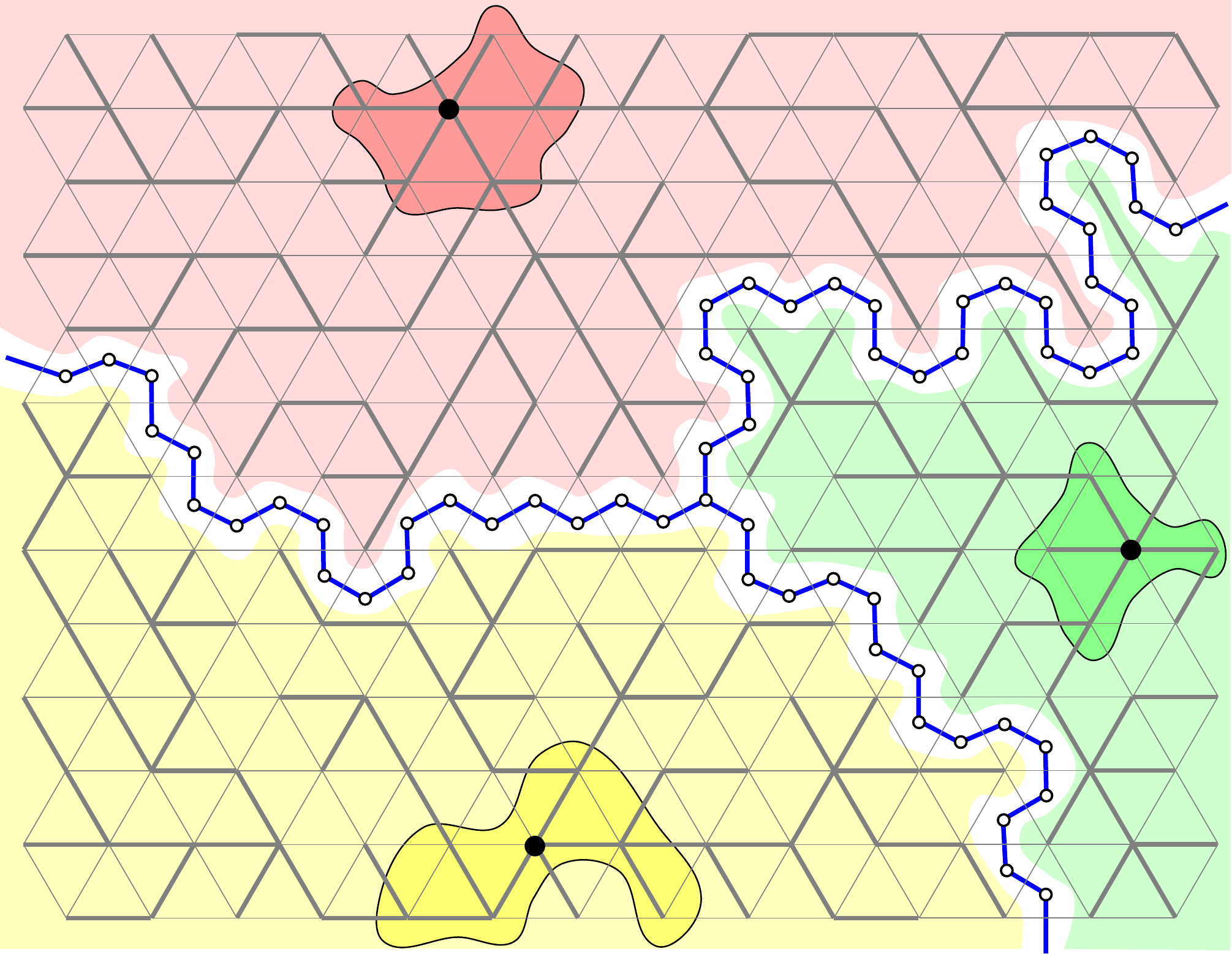
\caption{Construction of the prediagram in an area between three objects: red, yellow, and green. The Voronoi regions are depicted in light red, light yellow, and light green. Gray edges belong to the primal graph $G$, and thick gray edges belong to spanning trees $\extree_{\Fam}(p)$. The white vertices and thick blue edges belong to the prediagram $\pdgm_{\Fam}$.}\label{fig:prdg-const}
\end{figure}

\begin{lemma}\label{lem:pdiagram}
The prediagram $\pdgm$ is a simple, connected graph embedded on the sphere, where every vertex is of degree $2$ or $3$. Moreover, $\pdgm$ has exactly $\ell$ faces, and faces of $\pdgm$ correspond to objects of $\Fam$ in the following sense: object $p$ is associated with a face $f^*_p\in F(\pdgm)$ such that $f^*_p$ contains all the vertices of $\Vorpar(p)$, and no other vertex.
\end{lemma}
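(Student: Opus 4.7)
The plan is to analyse the construction of $\pdgm$ in two stages: first the graph $\pdgm_0 := G^* \setminus (\bigcup_{p \in \Fam} E(\extree(p)))^*$ obtained after step 2, and then the effect of the leaf-pruning in step 3. The key tool is the standard planar-duality identity
$(G/F)^* = G^* \setminus F^*$
for any $F \subseteq E(G)$, which implies that the faces of $G^* \setminus F^*$ are in canonical bijection with the connected components of the spanning subgraph $(V(G), F)$, with each face being the union on the sphere of the faces of $G^*$ centred at the vertices of the corresponding component.

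First I would apply this duality to $F = \bigcup_p E(\extree(p))$. Since each $\extree(p)$ is a spanning tree of $G[\Vorpar(p)]$ and the regions $\Vorpar(p)$ partition $V(G)$, the components of $(V(G), F)$ are exactly the Voronoi regions; this immediately gives the face-object bijection $p \mapsto f^*_p$ in $\pdgm_0$, together with the statement that $f^*_p$ contains precisely the vertices of $\Vorpar(p)$. For connectivity of $\pdgm_0$ I would invoke the dual characterisation of edge cuts: $G^* \setminus F^*$ is disconnected iff $F$ contains a cycle of $G$, and a forest contains none. Simplicity uses that $G$ is a simple triangulation, so two triangular faces share at most one edge (no parallel edges in $\pdgm_0$) and each triangle has three distinct bounding edges (no loops). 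The degree bound at this stage follows from a short case analysis on how many of the three vertices of a triangular face of $G$ lie in a common Voronoi region: each vertex of $G^*$ has initial degree $3$, and all three incident dual edges would be removed only if some $\extree(p)$ contained all three edges of the triangle, contradicting the acyclicity of a tree. Hence every vertex of $\pdgm_0$ has degree in $\{1,2,3\}$.

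Step 3 then iteratively prunes the degree-$1$ vertices. This classical leaf-pruning preserves connectivity, simplicity, and, most importantly, the sphere face structure, because a pendant edge lies inside a single face and its removal merely retracts that face rather than splitting it. After pruning, every surviving vertex has degree $2$ or $3$, and the bijection $p \mapsto f^*_p$ together with the containment $\Vorpar(p) \subseteq f^*_p$ are inherited verbatim from $\pdgm_0$. The main obstacle in formalising this outline is to apply the duality identity $(G/F)^* = G^* \setminus F^*$ carefully in the multigraph setting---the contraction $G/F$ typically has loops and parallel edges even though $G$ itself is simple and triangulated---and to read off from the resulting sphere embedding that no vertex of $G$ escapes its designated face $f^*_p$, neither during the face-merging performed in step 2 nor during the subsequent leaf pruning in step 3.
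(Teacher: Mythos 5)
Your proposal is correct and establishes everything in Lemma~\ref{lem:pdiagram}, but it routes the two substantive claims through packaged planar-duality facts where the paper argues by hand, so a comparison is worthwhile. For the face--region correspondence the paper works directly: every face of $\pdgm$ is a union of faces of $G^*$ whose primal vertices form a connected set, deleting the duals of $E(\extree(p))$ merges exactly the faces centred at the vertices of $\Vorpar(p)$, and no face can absorb two regions because only intra-region duals are ever deleted; you instead invoke the deletion--contraction identity $(G/F)^*=G^*\setminus F^*$, which packages the same statement as ``faces of $G^*\setminus F^*$ correspond to components of $(V(G),F)$'', the components being precisely the Voronoi regions since the trees $\extree(p)$ are vertex-disjoint and span their regions. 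For connectivity the paper constructs, for each tree $\extree(p)$, an explicit closed walk in $G^*$ around the tree that avoids the deleted dual edges and links all faces affected by the deletion, whereas you use cut--cycle duality: deleting $F^*$ can disconnect $G^*$ only if $F$ contains a cycle, and a disjoint union of trees contains none. Your route is shorter, yields the face count $\ell$ for free from the bijection, and is arguably more robust; its cost is that the duality identities must be stated carefully for plane multigraphs (fine here, since $F$ is a forest), while the paper's walk argument is self-contained and needs no such machinery. The handling of step 3 (leaf pruning preserves connectivity, cannot merge faces since a pendant dual edge is a bridge with the same face on both sides, and moves no vertex of $G$ across a face because only dual vertices and dual edges are pruned), as well as simplicity and the degree bounds, is essentially the same in both proofs.

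Two small points to tighten. The absence of loops in $G^*$ comes from $G$ having no bridges (guaranteed because $G$ is a triangulation on more than $3$ vertices), not from the three edges of a triangle being distinct; and in the degree argument you should state explicitly that if all three edges of a triangular face lay in trees, disjointness of the Voronoi regions forces them all into the \emph{same} tree $\extree(p)$, which is where acyclicity is contradicted---you gesture at this case analysis, so it is only a matter of spelling it out. Finally, like the paper you implicitly assume the nondegenerate case $\ell\geq 2$ so that the pruning cannot erase the whole graph; this matches how the paper defers $\ell\leq 2$ to the discussion following the lemma.
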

\begin{proof}
Observe that $\pdgm$ is a subgraph of $G^*$, so it inherits a sphere embedding from $G^*$. The fact that $\pdgm$ is simple follows from the fact that $G^*$ is simple, since $G$ is triangulated, simple, and has more than $3$ vertices. Also, the fact that $G$ is triangulated implies that $G^*$ is a $3$-regular graph. Hence $\pdgm$, as a subgraph of $G^*$, has maximum degree at most $3$. The fact that $G^*$ has minimum degree at least $2$ follows from the construction.

We now show that $\pdgm$ is connected. Obviously $G^*$ is connected, as a dual of a sphere-embedded graph. We now argue that during the step when the dual edges to the edges of some $\extree(p)$ are removed, we also cannot spoil the connectivity of $\pdgm$. Let $W$ be a walk in $G^*$ that traverses the set of faces incident to vertices and edges of $\extree(p)$, but does not use any dual edge to an edge of $\extree(p)$; $W$ is constructed by traversing $\extree(p)$ around, with respect to its planar embedding of $G$. All the edges used $W$ are not removed when $\extree(p)$ is removed, and could not have been removed when some previous $\extree(p')$ was removed. Hence, walk $W$ remains in the graph after $\extree(p)$ is removed. Since the removed dual edges of $\extree(p)$ always connect two faces visited by $W$, we infer that all the vertices of $G^*$ that could have been possibly disconnected by removal of $\extree(p)$, are actually still connected by $W$. Therefore, after the second step of the construction of $\pdgm$ the graph is still connected. Since removal of a degree-$1$ vertex does not spoil connectivity, we infer that at the end $\pdgm$ is indeed connected.

We are left with proving the correspondence between objects of $\Fam$ and faces of $\pdgm$. Since $\pdgm$ is a subgraph of $G^*$, every face of $\pdgm$ consists of a union of a nonempty set of original faces of $G^*$, which moreover were connected in $G^*$. Every face of $G^*$ corresponds to vertex of $G$, which means that every face $f^*$ of $\pdgm$ contains a nonempty subset of vertices $X(f^*)$, which moreover is connected in $G$. Note now that for every $p\in \Fam$, removal of the dual edges of $\extree(p)$ merges all the faces of $G^*$ corresponding to vertices of $\Vorpar(p)$, which means that vertices of every Voronoi region $\Vorpar(p)$ are contained in one face $f^*_p\in F(\pdgm)$. It remains to show that no face $f^*\in F(\pdgm)$ can accommodate more than one Voronoi region. If this was the case, then, since $X(f^*)$ is connected in $G$, there would be an edge $uv\in E(G)$, $u,v\in X(f^*)$, that would connect two vertices from different Voronoi regions, and whose dual was removed in the construction of $\pdgm$. However, in the construction of $\pdgm$, in the first step we removed only duals to the edges that connect two vertices of the same Voronoi region, and in the second step we removed only vertices of degree $1$, which cannot merge two faces of the graph.
\end{proof}

Lemma~\ref{lem:pdiagram} shows that the prediagram $\pdgm$ is a suitable skeleton of the final diagram. More precisely, $\pdgm$ consists of vertices of degree $3$ that are connected using long paths consisting of vertices of degree $2$. From now on we assume that $\ell>2$. If $\ell=1$ then $\pdgm$ is empty, and if $\ell=2$ then $\pdgm$ is a cycle. Both these cases are degenerate in the argumentation to follow.

Given the prediagram $\pdgm=\pdgm_\Fam$, we define the {\em{Voronoi diagram}} $\dgm=\dgm_\Fam$ as follows. To obtain $\dgm$ from $\pdgm$, we iteratively take a vertex of degree $2$ in the graph, remove it, and add a new edge between its neighbors. Thus, all the paths consisting of vertices of degree $2$ in $\pdgm$ are shortened to single edges. Note, that in this manner $\dgm$ may become a multigraph: 
\begin{itemize}
\item There may appear multiple edges between a pair of vertices $u$ and $v$, in case there was more than one path of vertices of degree $2$ connecting $u$ and $v$ in $\pdgm$.
\item There may appear a loop at some vertex $u$, in case there was a path of vertices of degree $2$ traveling from $u$ back to itself.
\end{itemize}
In the following, we assume that a loop at vertex $u$ contributes with $2$ to the degree of $u$. The following lemma formalizes the properties of the Voronoi diagram $\dgm$.

\begin{lemma}\label{lem:diagram}
If $\ell>2$, then the diagram $\dgm$ is a connected $3$-regular multigraph embedded on a sphere. Moreover, $\dgm$ has exactly $2\ell-4$ vertices, $3\ell-6$ edges, and $\ell$ faces, and the set of faces of $\dgm$ is exactly the same as the set of faces of $\pdgm$.
\end{lemma}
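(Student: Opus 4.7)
}

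The strategy is to transfer the structural information about $\pdgm$ established in Lemma~\ref{lem:pdiagram} through the degree-$2$ suppression operation, and then pin down the numerical invariants via Euler's formula together with the handshake lemma. First I would verify the qualitative properties. By Lemma~\ref{lem:pdiagram}, $\pdgm$ is a connected sphere-embedded simple graph with minimum degree $2$ and maximum degree $3$, and it has $\ell$ faces, one for each object in $\Fam$. Suppressing a degree-$2$ vertex is a purely local operation on the sphere embedding: removing $v$ of degree $2$ together with its two incident edges and replacing them with a single edge drawn along the concatenation of the corresponding two arcs preserves the embedding, preserves connectivity (since the two neighbors of $v$ remain joined by the new edge), and, most importantly, does not alter the set of faces---the two faces originally incident with $v$ remain distinct after the operation, as $v$ did not separate them. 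Hence after suppressing all degree-$2$ vertices the faces of $\dgm$ are in natural bijection with the faces of $\pdgm$, giving $|F(\dgm)|=\ell$ and the stated correspondence with objects of $\Fam$.

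Next, since $\pdgm$ had only vertices of degrees $2$ and $3$ and we have removed all vertices of degree $2$, every surviving vertex has degree exactly $3$ in $\dgm$ (counting a loop at $v$ as contributing $2$ to $\deg(v)$, as stipulated). The assumption $\ell>2$ guarantees that $\pdgm$ is not a single cycle and hence contains at least one vertex of degree $3$, so $\dgm$ is nonempty; combined with the preservation of connectivity, $\dgm$ is a connected $3$-regular multigraph embedded on the sphere.

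For the counts, I would apply Euler's formula to the connected sphere-embedded (multi)graph $\dgm$,
\[
|V(\dgm)|-|E(\dgm)|+|F(\dgm)|=2,
\]
together with the handshake identity for a $3$-regular multigraph, $2|E(\dgm)|=3|V(\dgm)|$. Substituting $|F(\dgm)|=\ell$ and $|E(\dgm)|=\tfrac{3}{2}|V(\dgm)|$ into Euler yields $|V(\dgm)|=2\ell-4$ and then $|E(\dgm)|=3\ell-6$, exactly as required.

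The only delicate point---and the step that deserves the most care---is the claim that suppressing degree-$2$ vertices does not change the face set. One should check that after the suppression the graph is still $2$-cell-embedded (so Euler applies), and that the possibly resulting parallel edges or loops are correctly accounted for; this is straightforward once one observes that each maximal path of degree-$2$ vertices in $\pdgm$ lies in the boundary walk of exactly two (not necessarily distinct) faces of $\pdgm$, so collapsing it to an edge merely shortens those boundary walks without merging the faces.
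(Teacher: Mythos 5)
Your proposal is correct and follows essentially the same route as the paper's proof: suppress the degree-$2$ vertices, observe that the new edges are embedded as the unions of the suppressed paths so the face set (and hence the count $\ell$) is unchanged, note that connectivity and $3$-regularity are inherited, and derive $|V(\dgm)|=2\ell-4$, $|E(\dgm)|=3\ell-6$ from Euler's formula together with the handshake identity. Your extra remark that $\ell>2$ rules out $\pdgm$ being a single cycle is handled in the paper just before the lemma statement, so there is no substantive difference.
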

\begin{proof}
The diagram $\dgm$ is constructed from $\pdgm$ by taking every path $P$ in $\pdgm$ that begins and ends in vertices $u,v$ of degree $3$ but travels only through vertices of degree $2$, and replacing it with a single edge $uv$ (possibly $u=v$). The embedding of $\dgm$ into the sphere is defined as follows: The vertices of $V(\dgm)$ are embedded exactly in the same manner as in the embedding of $\pdgm$. Whenever a path $P$ is replaced with an edge $uv$ between the endpoints of $P$, then the embedding of $uv$ is defined as the union of the embeddings of the edges of $P$. Thus, the faces of $\dgm$ are exactly the same as faces of $\pdgm$.

When constructing $\dgm$, we removed all the vertices of degree $2$, while all the vertices of degree $3$ keep their degrees. Hence $\dgm$ is $3$-regular. Moreover, since $\pdgm$ was connected (Lemma~\ref{lem:pdiagram}), then by the construction so is $\dgm$. 

The fact that $|F(\dgm)|=\ell$ follows from Lemma~\ref{lem:pdiagram} and the fact that the faces of $\dgm$ are exactly the same as faces of $\pdgm$. The conclusion that $|V(\dgm)|=2\ell-4$ and $|E(\dgm)|=3\ell-6$ follows from the fact that $\dgm$ is connected and $3$-regular, and a simple application of the Euler's formula, which is valid also for connected multigraphs.\end{proof}

In the following, the vertices of $V(\dgm)$ will be also call the {\em{branching points}} of diagram $\dgm$. The following auxiliary lemma shows that loops of the diagram $\dgm$ have very simple structure.

\begin{figure}[t]
        \centering
        \subfloat[Voronoi partition $\Vorpar$ and the prediagram $\pdgm$]{
                \centering
                \def\svgwidth{0.45\columnwidth}
                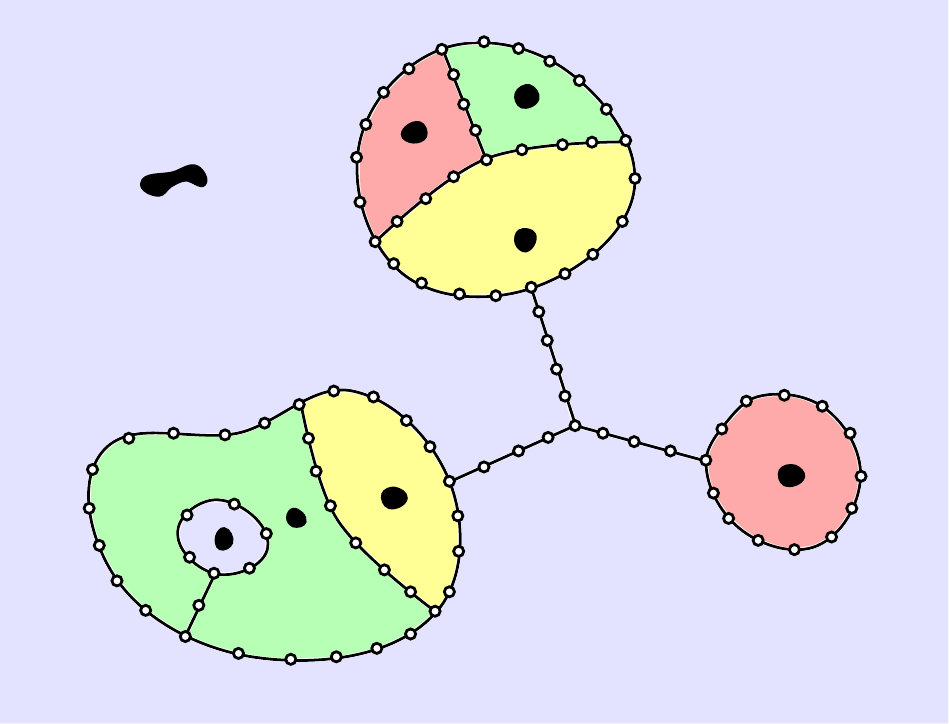
        }
        \qquad
        \subfloat[Diagram $\dgm$]{
                \centering
                \def\svgwidth{0.45\columnwidth}
                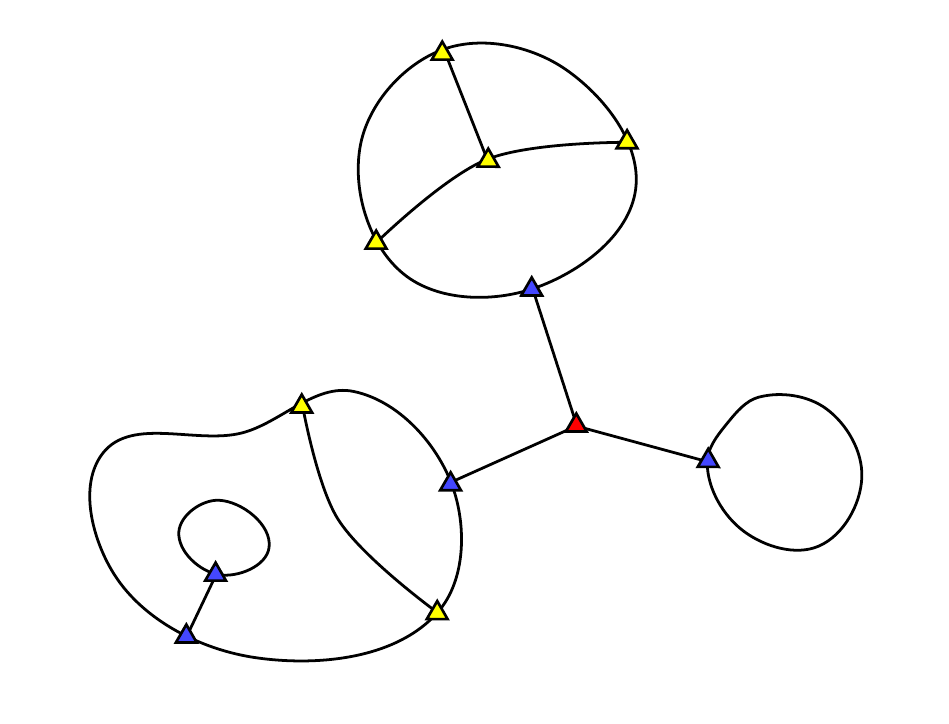
        }
\caption{Example of a Voronoi partition $\Vorpar_{\Fam}$ and the corresponding prediagram and diagram. Objects of $\Fam$ are denoted in black. Note that vertices in the prediagram $\pdgm_{\Fam}$ are faces of the original graph $G$. Branching points of the diagram $\dgm_{\Fam}$ are depicted in yellow, blue, and red, depending whether they correspond to type-$1$, -$2$, or -$3$ singular face according to definitions in Section~\ref{sec:impfaces}.}\label{fig:diagram}
\end{figure}

\begin{lemma}\label{lem:3-reg-loops}
Suppose $H$ is a connected $3$-regular multigraph embedded on a sphere. Then for every loop in $H$, one of the disks in which the loop separates the sphere is a face of $H$.
\end{lemma}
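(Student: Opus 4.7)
The plan is to use the fact that a loop at a vertex $u$ is, in the sphere embedding, a closed curve based at $u$, and then exploit $3$-regularity and connectedness to argue that one of the two open disks carved out by this curve must be empty of $H$.

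First, I would fix a loop $\ell$ at some vertex $u\in V(H)$, and consider it as a closed curve on $\Sigma$ whose complement, by the Jordan curve theorem, is the disjoint union of two open disks $D_1$ and $D_2$. The key local observation is that since $H$ is $3$-regular and the loop contributes $2$ to $\deg(u)$, there is exactly one further edge $e$ incident to $u$ (which may or may not be another loop). Thus at $u$ the cyclic ordering of edge-ends consists of three elements: the two endpoints of $\ell$ and the single endpoint of $e$.

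Next, I would read off the cyclic order at $u$: the two ends of $\ell$ are separated by the single end of $e$ on one side and by nothing on the other. Hence exactly one of the two disks, say $D_1$, contains no edge-end at $u$ in its boundary corner at $u$. I want to show $D_1$ is a face of $H$, i.e.\ contains no vertex or edge of $H$.

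For this I would invoke connectedness of $H$. Any vertex $v\in V(H)$ lying in $D_1$ would admit a path in $H$ to $u$, which as a curve on $\Sigma$ must cross $\bnd D_1=\ell\cup\{u\}$; since $H$-edges are disjoint from the interior of $\ell$ (edges do not cross in an embedding), such a path can leave $D_1$ only through the vertex $u$, meaning its first edge out of $v$ eventually reaches $u$ via an edge-end at $u$ incident to $D_1$. But we just established there is no such edge-end. So $D_1$ contains no vertex of $H$, and since every edge is anchored at two vertices and cannot enter $D_1$ from outside without crossing $\ell$ or passing through $u$, $D_1$ also contains no edge. Therefore $D_1\in F(H)$, as required.

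The only subtlety to be careful about is the case when $e$ is itself a loop at $u$ (then $u$ has two loops and no other neighbours, so $H$ has just one vertex); the same argument still works since the cyclic-order analysis at $u$ is unchanged, and connectedness forces $H=\{u\}\cup\{\text{two loops}\}$, in which case both disks are faces and the statement is trivially true. No further obstacles arise: the proof is essentially a direct application of the Jordan curve theorem combined with the local degree count at $u$.
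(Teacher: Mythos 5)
Your proof is correct and follows essentially the same route as the paper's: $3$-regularity leaves only one further edge-end at $u$, and connectivity of $H$ then forces the entire rest of the embedding into the disk containing that edge-end, so the other disk is a face. The only quibble is that your closing ``subtlety'' is vacuous --- a second loop at $u$ would contribute $2$ more to its degree, giving degree $4$, so that case cannot arise in a $3$-regular multigraph.
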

\begin{proof}
Consider a loop $e$ at vertex $u$ in $H$. Since $H$ is $3$-regular and loop $e$ contributes with $2$ to the degree of $u$, then $u$ is incident to one more edge $e'$. Then $e'$ is embedded into one of the disks into which $e$ partitions the sphere, and by the connectivity of $H$ we infer that the whole rest of $H$ is embedded into the same disk. The second disk contains therefore no point of the embedding of $H$, and hence is a face of $H$.
\end{proof}

\subsection{Important faces}\label{sec:impfaces}

In the algorithm we consider a large family of long sequences of faces of $G$, in hope of capturing a sequence that forms a short balanced separator of the Voronoi diagram imposed by the solution, on which the recursion step can be employed. Since balanced separators of this diagram can have width as large as $\Oh(\sqrt{k})$, we need to take into consideration all the sequences of at most this length. Of course, we could consider every face of $G$ as a possible candidate to be used in a separator, but then we would have $n^{\Oh(\sqrt{k})}$ candidate separators, which is too much for the running time promised in Theorem~\ref{thm:main1}.

In this section we prove the following: one can enumerate in polynomial time a family of at most $\onum^{4}$ faces of $G$, called further {\em{important faces}}, such that the candidates for faces used in separators can be safely picked from this family. Thus, we arrive at $\onum^{\Oh(\sqrt{k})}$ candidates for separators, instead of the original $n^{\Oh(\sqrt{k})}$. Formally, this section is devoted to the proof of the following result.

\begin{theorem}\label{thm:impFaces}
There exists a family $\impFaces\subseteq F(G)$ such that
\begin{itemize}
\item $|\impFaces|\leq \onum^4$;
\item for every normal subfamily of objects $\Fam\subseteq \Obj$, it holds that $V(\dgm_{\Fam})\subseteq \impFaces$.
\end{itemize}
Moreover, family $\impFaces$ can be constructed in time $\Oh(\onum^4\cdot n^{\Oh(1)})$.
\end{theorem}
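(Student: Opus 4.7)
The plan is to classify branching points of $\dgm_\Fam$ by how the three vertices of the triangular face sit in the Voronoi partition, and then, for each class, enumerate candidates parameterized by a bounded number of objects from $\Obj$. Concretely, I would take $\impFaces$ to be the union of the vertex sets of $\dgm_{\Fam'}$ over all normal subsets $\Fam' \subseteq \Obj$ with $|\Fam'| \leq 4$.

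First I would unfold what it means for a face $f = \{u,v,w\}$ of $G$ to be a vertex of $\dgm_\Fam$: by construction $f$ must have degree exactly $3$ in $\pdgm_\Fam$ and survive the iterative degree-$1$ removal step. The degree-$3$ condition is equivalent to asking that none of the three triangle edges $uv$, $vw$, $wu$ lies in any spanning tree $\extree_\Fam(p)$ with $p \in \Fam$. I then classify branching points by the number of distinct Voronoi regions containing $\{u,v,w\}$: \emph{type~1} (three regions), \emph{type~2} (exactly two regions, where the intra-region triangle edge is necessarily absent from the spanning tree of the doubly-occupied region), and \emph{type~3} (one region, where all three triangle edges are absent from its spanning tree). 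This classification is exhaustive.

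For type~1, suppose $u \in \Vorpar_\Fam(p_1)$, $v \in \Vorpar_\Fam(p_2)$, $w \in \Vorpar_\Fam(p_3)$. Then $\Fam' = \{p_1,p_2,p_3\}$ inherits normality from $\Fam$, and by Lemma~\ref{lem:inclusion} the three triangle edges still cross region boundaries in $\Vorpar_{\Fam'}$, so $f$ retains degree $3$ in $\pdgm_{\Fam'}$; Lemma~\ref{lem:diagram} then identifies $f$ with one of the at most $2\cdot 3-4=2$ vertices of $\dgm_{\Fam'}$. Iterating over unordered triples in $\Obj$ thus captures every type~1 candidate in time $O(\onum^3 \cdot n^{O(1)})$. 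For types~2 and~3, the key observation is that whether an edge $xy$ lies in $\extree_\Fam(p)$ depends only on $x$, $y$, $p$, and the fixed tree $\tree(p)$, independently of $\Fam$: by \ass the shortest paths are unique, so $xy \in \extree_\Fam(p)$ iff $xy \in \tree(p)$ or $xy$ is the first edge of the unique shortest path from $x$ or $y$ to $\loc(p)$. Exploiting this, I argue that for every type~2 or type~3 branching point $f$ of $\dgm_\Fam$ there is a subset $\Fam' \subseteq \Fam$ with $|\Fam'| \leq 4$ such that $f$ is also a branching point of $\dgm_{\Fam'}$: one or two objects are the regions containing $\{u,v,w\}$, and the remaining (at most three) objects are ``witnesses'' chosen from $\Fam$ which preserve enough Voronoi structure in the three directions leaving $f$ so that the cascading pruning in $\pdgm_{\Fam'}$ cannot reach $f$. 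Enumerating all $4$-subsets of $\Obj$ and collecting the at most $2\cdot 4-4=4$ vertices of each resulting diagram contributes $O(\onum^4)$ further candidates.

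Summing the three enumerations yields $|\impFaces| \leq \onum^4$, and the construction runs in time $O(\onum^4 \cdot n^{O(1)})$, dominated by the shortest-path and Voronoi-partition computations required for each candidate subset. The main obstacle I foresee lies in types~2 and~3, where the branching arises from the intra-region shape of $\extree_\Fam(p)$ rather than from clean three-region contact; the delicate task is to show that the ``aliveness'' of $f$ (its immunity to the cascading pruning in step~(3) of the prediagram construction) can always be certified by an $\Oh(1)$-sized $\Fam'$, by tracing each of the three maximal paths of degree-$2$ vertices leaving $f$ in $\pdgm_\Fam$ to a single object of $\Fam$ whose presence is sufficient to prevent that path from collapsing.
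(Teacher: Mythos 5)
Your high-level architecture (classify branching points by how many distinct Voronoi regions contain the three corners of the triangular face, then hunt for each branching point inside diagrams of at most four objects) is sound, and your observation that membership of a fixed edge $v_iv_j$ with $v_i,v_j\in\Vorpar_\Fam(p)$ in $\extree_{\Fam'}(p)$ is stable under passing to a subfamily is correct and genuinely useful. If your central claim were established -- that every branching point of $\dgm_{\Fam}$ is a branching point of $\dgm_{\Fam'}$ for some normal $\Fam'\subseteq\Fam$ with $|\Fam'|\leq 4$ -- your counting would even bypass the paper's uniqueness lemmas (Lemmas~\ref{lem:type-1}--\ref{lem:type-3}), since a diagram on three (resp.\ four) objects has only $2$ (resp.\ $4$) branching points. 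But that central claim is exactly what you do not prove, and it is where all the difficulty of the theorem sits. Even in your type~1 case the argument is incomplete: you verify that $f$ keeps degree~$3$ after the tree-edge removal in $\pdgm_{\Fam'}$, but not that it survives the iterative degree-$1$ pruning (here one must note that all three dual edges at $f$ lie on the dual cycles of the cuts $\delta(\Vorpar_{\Fam'}(p_i))$, hence on cycles, hence cannot be pruned). For types~2 and~3 you offer only an intention ("witnesses chosen from $\Fam$ which preserve enough Voronoi structure"), and your proposed strategy -- tracing the maximal degree-$2$ paths leaving $f$ in $\pdgm_\Fam$ to single objects that "prevent that path from collapsing" -- is aimed at the wrong object: $\pdgm_{\Fam'}$ is not a subgraph of $\pdgm_\Fam$ (regions and trees only grow, by Lemma~\ref{lem:inclusion}), so the old degree-$2$ paths are simply not present in the new prediagram and there is nothing to "preserve."

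What is actually needed, and what the paper supplies, is a specification of the witnesses in terms of the separation structure at $f$: for a type-2 point one must take $p_3$ to be an object whose location is separated from $\loc(p_1)$ by the closed walk built from the two $\extree(p_2)$-paths to $v_2,v_3$ plus the edge $v_2v_3$ (this is the content of Lemma~\ref{lem:types-bridges}), and then argue that in the freshly built $\pdgm_{\{p_1,p_2,p_3\}}$ the dual edge of $v_2v_3$ still survives pruning because the only surviving dual edge crossing that walk is $(v_2v_3)^*$ itself, so the boundary cycle of the (grown) region of $p_3$ lies on its far side; an analogous argument with three witnesses, one per disk, handles type~3. The paper avoids proving your stronger "still a branching point" statement altogether: it replaces it by a family-independent certificate (singular faces of types 1--3, stable under restriction by Lemma~\ref{lem:small-subfamily}) and proves per-certificate uniqueness by topological arguments (a $K_{3,3}$-minor contradiction for type~1, curve-crossing contradictions in Lemmas~\ref{lem:type-2} and~\ref{lem:type-3}). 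So while your route can be made to work, the step you flag as "the delicate task" is not a technicality to be deferred -- it is the theorem's core, and as written your proposal has a genuine gap there.
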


Let us remark that Theorem~\ref{thm:impFaces} is needed {\em{only}} for the purpose of reducing the running time --- if we just allowed all the faces of $G$ to be used to build separators, then we would obtain an algorithm solving \covProb in time $\onum^{\Oh(\sqrt{k})}\cdot \cnum^{\Oh(1)}\cdot n^{\Oh(\sqrt{k})}$. In many of our applications in Section~\ref{sec:appl}, it actually holds that $n=\onum^{\Oh(1)}$, and hence we would obtain the same asymptotic running time without the speed-up due to the results of this section. Therefore, the result on enumerating a small family of important faces can be considered as optional, and the reader interested only in a $\onum^{\Oh(\sqrt{k})}\cdot \cnum^{\Oh(1)}\cdot n^{\Oh(\sqrt{k})}$ time algorithm is advised to skip this section.

We now move on to the proof of Theorem~\ref{thm:impFaces}, which spans the whole this section. However, before we proceed, we need some definitions and auxiliary lemmas. Let us fix some normal family of objects $\Fam\subseteq \Obj$, and let $\Vorpar=\Vorpar_\Fam$ be the corresponding Voronoi partition. For two distinct objects $p_1,p_2\in \Fam$, we say that Voronoi regions $\Vorpar(p_1)$ and $\Vorpar(p_2)$ {\em{meet}} at face $f$ if $\bnd f$ contains both a vertex of $\Vorpar(p_1)$ and a vertex of $\Vorpar(p_2)$. 

We say that face $f$ is a {\em{type-$1$ singular face}} for $\Fam$, if for a triple of distinct objects $p_1,p_2,p_3\in \Fam$, the Voronoi regions $\Vorpar(p_1),\Vorpar(p_2),\Vorpar(p_3)$ pairwise meet at $f$. Then we say that triple $(p_1,p_2,p_3)$ {\em{certifies}} that $f$ is a type-$1$ singular face. The following lemma shows that there are generally not so many type-$1$ singular faces.

\begin{lemma}\label{lem:type-1}
Suppose $\{p_1,p_2,p_3\}=\Fam\subseteq \Obj$ is a normal family of three objects. Then there are at most $2$ faces of $G$ that are type-$1$ singular faces for $\Fam$.
\end{lemma}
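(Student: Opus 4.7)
The plan is to identify type-$1$ singular faces of $\Fam$ with branching points of the Voronoi diagram $\dgm_\Fam$ (that is, with the degree-$3$ vertices of $\pdgm_\Fam$), and then invoke Lemma~\ref{lem:diagram} to bound their number.

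First I would show that every type-$1$ singular face $f$, with $\bnd f = v_1 v_2 v_3$ and $v_i \in \Vorpar(p_i)$ for distinct $p_1,p_2,p_3$, becomes a vertex of degree $3$ in $\pdgm_\Fam$. Each edge of $\bnd f$ joins two vertices lying in different Voronoi regions, so it cannot belong to any spanning tree $\extree(p)$, because $\extree(p) \subseteq G[\Vorpar(p)]$. Hence none of the three dual edges incident to $f$ is removed in the first step of the construction, and $f$ has degree $3$ in the graph $\pdgm_0$ obtained immediately after that step. The crucial point is that these three dual edges also survive the subsequent degree-$1$ elimination.

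For the surviving argument, I would use a (mild strengthening of) Lemma~\ref{lem:pdiagram}: the $\ell$ faces of $\pdgm_0$ itself are in bijection with the $\ell$ Voronoi regions of $\Fam$, each face of $\pdgm_0$ containing exactly the vertices of one region. This follows from Euler's formula once one observes that $\pdgm_0$ is connected; and $\pdgm_0$ is connected because the removed edges are duals of $\bigcup_{p\in\Fam} \extree(p)$, which is a forest in $G$ that can be extended to a spanning tree, so its dual complement contains a spanning tree of $G^*$. Given this correspondence, any chromatic dual edge (i.e., dual to an edge of $G$ whose endpoints lie in different Voronoi regions) separates two \emph{distinct} faces of $\pdgm_0$; hence it is not a bridge of $\pdgm_0$, lies on some cycle, and is preserved in the $2$-core $\pdgm$. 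Applied to the three dual edges incident to $f$, this shows $f$ retains degree $3$ in $\pdgm$, so $f \in V(\dgm_\Fam)$.

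Once this injection of type-$1$ singular faces into $V(\dgm_\Fam)$ is established, the bound is immediate from Lemma~\ref{lem:diagram}: since $\ell = |\Fam| = 3 > 2$, the diagram has exactly $2\ell - 4 = 2$ branching points, so there are at most $2$ type-$1$ singular faces. The main obstacle is the preservation step for the chromatic dual edges: one must argue cleanly that they are never swept away during the degree-$1$ elimination, and this reduces to verifying that faces of $\pdgm_0$ already correspond one-to-one with Voronoi regions (rather than only faces of $\pdgm$, as stated in Lemma~\ref{lem:pdiagram}).
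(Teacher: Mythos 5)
Your proof is correct, but it follows a genuinely different route from the paper's. The paper settles Lemma~\ref{lem:type-1} with a short, self-contained planarity argument: if three faces $f_1,f_2,f_3$ were all type-$1$ singular, one adds a vertex $w_i$ inside each $f_i$ joined to its three boundary vertices, contracts each connected region $G[\Vorpar(p_j)]$ (connectivity from Lemma~\ref{lem:connectivity}) to a single vertex $m_j$, and exhibits $K_{3,3}$ as a minor of a planar graph, a contradiction. You instead route the argument through the diagram machinery: the boundary edges of a type-$1$ singular face join distinct Voronoi regions, so none lies in any $\extree(p)$ and the three incident dual edges survive the first deletion step; they are non-bridges of the intermediate graph $\pdgm_0$ because the face--region correspondence already holds there, hence they lie on cycles and persist through the degree-$1$ elimination, making the face a branching point of $\dgm_\Fam$; Lemma~\ref{lem:diagram} then caps the count at $2\ell-4=2$. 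The one extra ingredient you need, the strengthening of Lemma~\ref{lem:pdiagram} to $\pdgm_0$, is real but easily supplied: either by your Euler-formula count (connectivity of $\pdgm_0$ via the spanning-tree/dual-spanning-tree argument, since the removed duals come from the vertex-disjoint forest $\bigcup_p \extree(p)$), or simply by noting, as the paper's own proof of Lemma~\ref{lem:pdiagram} does, that deleting degree-$1$ vertices never merges faces, so the correspondence for $\pdgm$ transfers to $\pdgm_0$. In exchange your argument proves something slightly stronger and structurally informative, namely that every type-$1$ singular face certified by $(p_1,p_2,p_3)$ is a vertex of $\dgm_{\{p_1,p_2,p_3\}}$, a natural converse companion to Lemma~\ref{lem:types-bridges}, and it explains why the bound is exactly $2$; the paper's $K_{3,3}$ argument, by contrast, is more elementary and does not lean on the prediagram/diagram lemmas at all.
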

\begin{proof}
Let $M_i=\Vorpar_\Fam(p_i)$ for $i=1,2,3$. For the sake of contradiction, suppose that there are $3$ different faces $f_1,f_2,f_3$, such that $M_1,M_2,M_3$ all pairwise meet at each of $f_1,f_2,f_3$. Construct a planar graph $H$ as follows:
\begin{itemize}
\item For each $i=1,2,3$, introduce a vertex $w_i$ inside face $f_i$, and connect $w_i$ to all the vertices of $\bnd f_i$.
\item For each $j=1,2,3$, contract the whole subgraph $G[M_j]$ to a single vertex $m_j$. Note that this is possible since $G[M_j]$ is connected by Lemma~\ref{lem:connectivity}.
\item Remove all the vertices and edges of the graph apart from $\{w_i\colon i=1,2,3\}$, $\{m_j\colon j=1,2,3\}$, and the edges between them.
\end{itemize}
Observe that the obtained graph $H$ is isomorphic to $K_{3,3}$. After performing the first step of the construction we still had a graph with a sphere embedding, since the new vertices and edges could be drawn inside faces $f_i$. Then, we applied only edge contractions, vertex deletions and edge deletions. Hence, we have found a $K_{3,3}$ as a minor of a planar graph, a contradiction.
\end{proof}

Now we define the second type of singular faces. We say that face $f$ is a {\em{type-$2$ singular face}} for $\Fam$, if there exists a triple of distinct objects $p_1,p_2,p_3\in \Fam$ satisfying the following. One vertex $v_1$ of $\bnd f$ belongs to $\Vorpar(p_1)$ and two vertices $v_2,v_3$ of $\bnd f$ belong to $\Vorpar(p_2)$. Moreover, if we take a closed walk $W$ in $G$ constructed by concatenating paths in $\extree(p_2)$ from $\cen(p_2)$ to $v_2$ and $v_3$, and the edge $v_2v_3$, then this closed walk separates $\loc(p_1)$ from $\loc(p_3)$ on the sphere. Observe that this walk is not necessarily a path, but, as it is formed by two paths in the tree $\extree(p_2)$ closed by an edge, its removal partitions the sphere into two disks. The requirement is exactly that one of these disks contains $\loc(p_1)$ and the second contains $\loc(p_3)$. 

Similarly as before, we will say that triple $(p_1,p_2,p_3)$ {\em{certifies}} that $f$ is a type-$2$ singular face. The following lemma shows that, again, there are not so many type-$2$ singular faces.

\begin{figure}[t]
        \centering
        \subfloat[Type $1$]{
                \centering
                \def\svgwidth{0.28\columnwidth}
                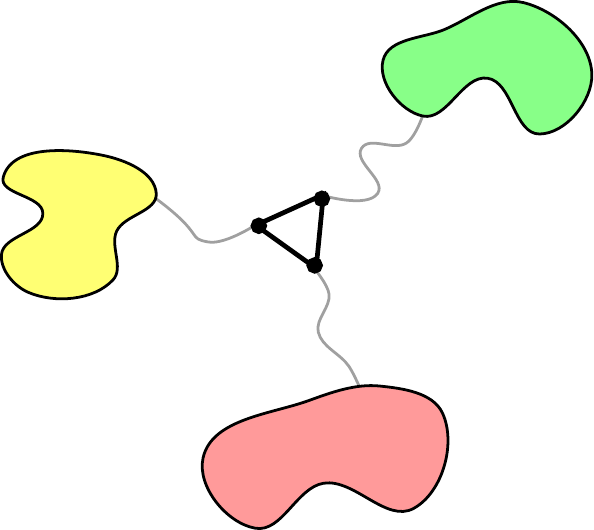
        }
        \quad
        \subfloat[Type $2$]{
                \centering
                \def\svgwidth{0.28\columnwidth}
                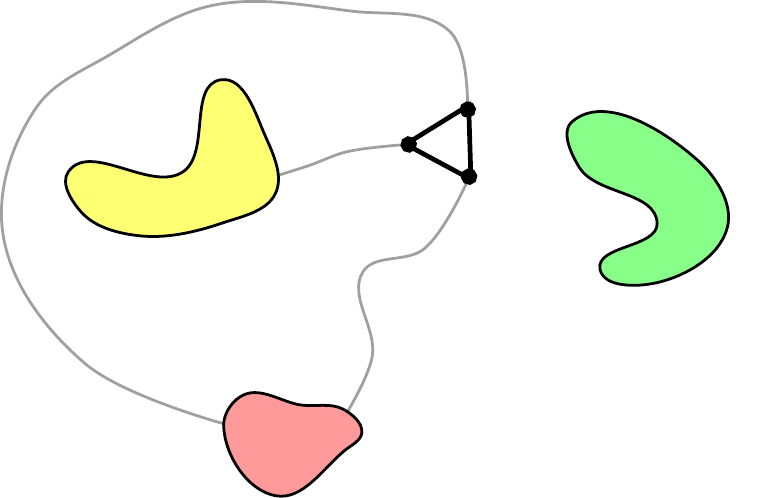
        }
        \quad
        \subfloat[Type $3$]{
                \centering
                \def\svgwidth{0.28\columnwidth}
                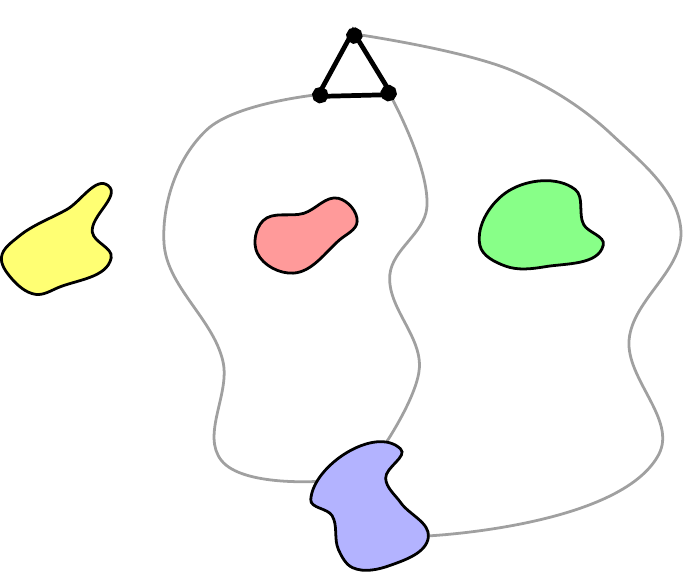
        }
\caption{Singular faces of types $1$, $2$, and $3$. Paths from a vertex to the nearest object have been depicted in gray.}\label{fig:types}
\end{figure}

\begin{lemma}\label{lem:type-2}
Suppose $\{p_1,p_2,p_3\}=\Fam\subseteq \Obj$ is a normal family of three objects. Then there is at most $1$ face of $G$ that is a type-$2$ singular face for $\Fam$ and is certified by the triple $(p_1,p_2,p_3)$.
\end{lemma}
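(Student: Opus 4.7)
The plan is to derive a contradiction by assuming two distinct type-2 singular faces $f \ne f'$ certified by $(p_1,p_2,p_3)$ exist. Write $\bnd f = \{v_1,v_2,v_3\}$ with $v_1 \in \Vorpar(p_1)$ and $v_2,v_3 \in \Vorpar(p_2)$, and analogously $v_1', v_2', v_3'$ for $f'$; let $T = \extree(p_2)$, $e = v_2v_3$, $e' = v_2'v_3'$, and let $P, P'$ be the unique tree paths in $T$ between the two pairs of endpoints. The walks $W, W'$ decompose into simple closed cycles $C = P \cup \{e\}$ and $C' = P' \cup \{e'\}$ together with a ``thorn'' from $\cen(p_2)$ to the LCA of the respective endpoints, traversed twice and hence topologically invisible; so $C$ and $C'$ are both simple closed curves on the sphere separating $\loc(p_1)$ from $\loc(p_3)$. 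If $e = e'$ then $W = W'$, and both $f, f'$ are forced to be the unique triangular face adjacent to $e$ on the $\loc(p_1)$-side whose third vertex lies in $\Vorpar(p_1)$, so $f = f'$, a contradiction. Thus I may assume $e \ne e'$.

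Next I would fix orientations so that $\loc(p_1)$ lies in the inside disk $I_C$ of $C$ and in the inside disk $I_{C'}$ of $C'$. Since $\extree(p_1) \subseteq \Vorpar(p_1)$ is vertex-disjoint from $V(C), V(C') \subseteq \Vorpar(p_2)$, every vertex of $\Vorpar(p_1)$---in particular $v_1$ and $v_1'$---lies in $I_C \cap I_{C'}$. The edges $v_1v_2$ and $v_1v_3$ of the triangular face $f$ are drawn as curves of $G$ that cannot cross $C'$ (edges meet only at shared endpoints in a planar embedding), which forces $v_2, v_3 \in I_{C'} \cup V(C')$. On the other hand, $v_2, v_3 \in V(C) \subseteq \partial I_C$, so under the nested configuration (WLOG $I_{C'} \subseteq I_C$, whence $V(C) \cap I_{C'} = \emptyset$) I conclude that $v_2, v_3 \in V(C) \cap V(C')$.

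Exploiting the tree structure, $V(C') = V(P') \cup \{v_2', v_3'\}$; the case $\{v_2,v_3\} = \{v_2',v_3'\}$ forces $e = e'$ by simplicity of $G$, already handled. Hence at most one of $v_2, v_3$ coincides with one of $v_2', v_3'$. In both remaining sub-cases---a single shared endpoint (say $v_2 = v_2'$) with $v_3 \in V(P')$ internal, or no shared endpoint but both $v_2, v_3 \in V(P')$ internal---uniqueness of tree paths in $T$ forces $P$ to be a contiguous sub-path of $P'$, so $E(P) \subseteq E(P')$. The symmetric difference $\gamma := E(C) \triangle E(C') = (E(P') \setminus E(P)) \cup \{e, e'\}$ then assembles into a single simple closed cycle of $G$: one travels from $v_2$ to $v_3$ via $e$, then along the remaining part of $P'$ from $v_3$ to $v_3'$, then via $e'$ to $v_2'$, and (in the disjoint-endpoints sub-case) back to $v_2$ along the initial segment of $P'$.

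Finally I would close with a homological contradiction on the annulus obtained by removing $\loc(p_1), \loc(p_3)$ from the sphere. Its first homology group is $\mathbb{Z}$, and every simple closed curve has class in $\{-1, 0, +1\}$. Both $C, C'$ are non-trivial separators with $\loc(p_1)$ on the same chosen side, so $[C] = [C'] = \pm 1$; the symmetric-difference identity yields $[\gamma] = [C] + [C'] = \pm 2$, contradicting the fact that $\gamma$ is a simple closed curve. The main obstacle is the topological step in the second paragraph, namely establishing that $v_2, v_3 \in V(C) \cap V(C')$ rather than merely lying in $I_{C'}$: this requires ruling out a ``transversally crossing'' configuration of $C$ and $C'$, which I would handle by noting that on the sphere two simple closed curves must cross an even number of times with cancelling signs, so that combined with the tree structure of $T$ the only viable arrangement is nested (or tangent at shared vertices), legitimising the case analysis.
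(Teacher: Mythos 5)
Your strategy is genuinely different from the paper's, which simply exhibits a curve from $\cen(p_1)$ to $\cen(p_3)$ avoiding $W'$ (along $P_1\subseteq\extree(p_1)$ to $v_1$, through the face $f$ to the midpoint of $v_2v_3$, then inside the disk $D_3$), contradicting that $W'$ separates $\loc(p_1)$ from $\loc(p_3)$. Your proposal has two gaps. The first you acknowledge: the reduction to a nested configuration of $C$ and $C'$ is only sketched, and ``an even number of crossings with cancelling signs'' does not by itself exclude transversal crossings at shared vertices of the tree $\extree(p_2)$; the entire case analysis after that point depends on this unproven step.

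The second gap is fatal: the concluding identity $[\gamma]=[C]+[C']=\pm 2$ is not justified and is in fact false in the configuration your own deductions force. That identity requires the shared path $P$ to be traversed in opposite directions by the two oriented cycles, so that it cancels at the chain level. But orient both $C$ and $C'$ with $\loc(p_1)$ on the left. The interior of the face $f$ is disjoint from $C'$ and has points arbitrarily close to $v_1\in I_{C'}$, so the interior of $f$, and hence the interior of the edge $e=v_2v_3$ (which is neither an edge of $\extree(p_2)$ nor $e'$), lies in the open disk $I_{C'}$. With $P\subseteq P'$ this means $C\subseteq\overline{I_{C'}}$, so $I_C$ is the ``lens'' cut off inside $I_{C'}$ by the chord $e$ and the arc $P$, and with these orientations $C$ and $C'$ traverse $P$ in the \emph{same} direction. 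The symmetric difference is then realized by the chain $C-C'$, giving $[\gamma]=[C]-[C']=0$, which is no contradiction at all. Worse, the only facts you feed into the final step (both cycles separate $\loc(p_1)$ from $\loc(p_3)$, $P\subseteq P'$, $\gamma$ simple) are mutually consistent --- take any separating cycle $C'$ and a chord $e$ on the $\loc(p_1)$ side cutting off a lens containing $\loc(p_1)$ --- so no argument using only these ingredients can succeed. You must use where the \emph{second} face $f'$ (equivalently $v_1'$ or the edge $e'$) sits relative to $C$; for example, the same face argument applied to $f'$ places the interior of $e'$ in $I_C\subseteq I_{C'}$, contradicting $e'\subseteq C'$, which would repair this route --- or one can avoid all of this with the paper's direct curve construction.
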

\begin{proof}
Suppose there are two such faces $f$ and $f'$. We adopt the notation from the definition of a type-$2$ singular face for $f$: $\{v_1\}=V(\bnd f)\cap \Vorpar(p_1)$, $\{v_2,v_3\}=V(\bnd f)\cap \Vorpar(p_2)$, $P_t$ is the $\cen(p_2)$-$v_t$ path inside $\extree(p_2)$, for $t=2,3$, and $W$ is the closed walk formed by concatenation of $P_2$, $P_3$, and edge $v_2v_3$. Note that edge $v_2v_3$ does not belong to $\extree(p_2)$, since then $P_2$ would be a subpath of $P_3$ or vice versa, and $W$ would degenerate to a path that would not separate any two nonempty disks. Let $D_1$ be the open disk of $\Sigma\setminus W$ that contains $p_1$, and $D_3$ be the one that contains $p_3$. Since $v_1\in \Vorpar(p_1)$, $\loc(p_1)$ is contained in $D_1$, and $G[\Vorpar(p_1)]$ is connected, we infer that $v_1\in D_1$. Moreover, if $P_1$ is the $\cen(p_1)$-$v_1$ path inside $\extree(p_1)$, then $P_1$ is entirely contained in $D_1$.

Let $v_t'$ for $t=1,2,3$, $P_t'$ for $t=2,3$ and $W'$ be the analogical objects defined for $f'$. Since $f'$ contains a vertex of $\Vorpar(p_1)$, then we have that $f'$ is contained in $D_1$. Since $P_2'$ and $P_3'$ are paths inside the tree $\extree(p_2)$, and $W$ is defined also by paths inside the tree $\extree(p_2)$, we infer that both $P_2'$ and $P_3'$ are entirely contained in $D_1\cup W$. Moreover, they do not contain edge $v_2v_3$ (since this edge does not belong to $\extree(p_2)$). As $f'\neq f$ and the other face incident to $v_2v_3$ lies outside $D_1$, we have also that $v_2v_3\neq v_2'v_3'$. Consequently, we infer that the closed walk $W'$ does not use the edge $v_2v_3$.

To obtain a contradiction, we now exhibit a curve $\gamma$ on the sphere that connects $\cen(p_1)$ with $\cen(p_3)$ and does not touch $W'$; this will contradict the fact that $W'$ separates $\loc(p_1)$ from $\loc(p_3)$. Construct $\gamma$ by:
\begin{enumerate}[label=(\roman*)]
\item\label{p1} starting by traveling from $\cen(p_1)$ to $v_1$ using path $P_1$ inside $\extree(p_1)$;
\item\label{p2} traveling inside the face $f$ from $v_1$ to the middle of edge $v_2v_3$; and
\item\label{p3} finishing by traveling inside the disk $D_3$ from the middle of edge $v_2v_3$ to $\cen(p_3)$.
\end{enumerate}
In part \ref{p1} of $\gamma$ we do not touch $W'$, because we only travel through vertices of $\Vorpar(p_1)$, whereas all the vertices on $W'$ belong to $\Vorpar(p_2)$. Then, in part \ref{p2} we do not touch $W'$, since $W'$ does not use the edge $v_2v_3$. Finally, in part \ref{p3} we do not touch $W'$ since $W'$ is disjoint with $D_3$. Hence, $\gamma$ is a curve from $\cen(p_1)$ to $\cen(p_3)$ that avoids $W'$, a contradiction.
\end{proof}

Finally, we define the third type of singular faces. We say that face $f$ is a {\em{type-$3$ singular face}} for $\Fam$, if there exists a quadruple of distinct objects $p_0,p_1,p_2,p_3\in \Fam$ satisfying the following. Let $\{v_1,v_2,v_3\}=V(\bnd f)$. The first requirement is that $v_1,v_2,v_3\in \Vorpar(p_0)$, but no edge of $\bnd f$ is contained in $\extree(p_0)$. Let $P_t$ be the path inside $\extree(p_0)$ from $\cen(p_0)$ to $v_t$, for $t=1,2,3$. Let $W_t$ be the closed walk formed by concatenation of paths $P_{t+1}$, $P_{t+2}$ and the edge $v_{t+1}v_{t+2}$, where the indices behave cyclically. Observe that removal of walks $W_1,W_2,W_3$ partitions sphere $\Sigma$ into four open disks: disks $D_t$ for $t=1,2,3$ such that $W_t$ contains the boundary of $D_t$, and the last disk being simply the face $f$. Then the second requirement is that $\loc(p_t)$ is entirely contained in $D_t$, for $t=1,2,3$.

Similarly as before, we will say that a quadruple $(p_0,p_1,p_2,p_3)$ {\em{certifies}} that $f$ is a type-$3$ singular face. The following lemma shows that, again, there are not so many type-$3$ singular faces.

\begin{lemma}\label{lem:type-3}
Suppose $\{p_0,p_1,p_2,p_3\}=\Fam\subseteq \Obj$ is a normal family of four objects. Then there is at most $1$ face of $G$ that is a type-$3$ singular face for $\Fam$ and is certified by the quadruple $(p_0,p_1,p_2,p_3)$.
\end{lemma}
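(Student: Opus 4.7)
The plan is to prove the lemma by contradiction, following the overall strategy of the proof of Lemma~\ref{lem:type-2}. I will suppose that $f$ and $f'$ are two distinct faces that are both type-$3$ singular for $\Fam$ and certified by the same quadruple $(p_0,p_1,p_2,p_3)$, and adopt the primed notation for the data associated with $f'$ (leaves $v_t'$, paths $P_t'\subseteq\extree(p_0)$ from $\cen(p_0)$ to $v_t'$, walks $W_t'$, and open disks $D_t'\supseteq\loc(p_t)$). The goal is to construct a curve $\gamma$ on the sphere that joins $\cen(p_i)$ to $\cen(p_j)$ for some $i\neq j$ in $\{1,2,3\}$ and avoids the closed set $T'\cup\bnd f'=W_1'\cup W_2'\cup W_3'$; because this set partitions $\Sigma$ into the four open regions $f',D_1',D_2',D_3'$ with $\cen(p_t)\in D_t'$, the existence of such a curve yields the desired contradiction.

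Since $f'$ is a face of $G$ disjoint from $f$ and $T\cup\bnd f$ partitions $\Sigma$ into $f\cup D_1\cup D_2\cup D_3$, the connected open set $f'$ lies entirely in one of $D_1,D_2,D_3$. By the evident symmetry among the three petals, I may assume $f'\subseteq D_1$, and I will build $\gamma$ connecting $\cen(p_2)$ to $\cen(p_3)$ in three pieces: (a) from $\cen(p_2)\in\loc(p_2)\subseteq D_2$ through $D_2$ to the interior of the edge $v_1v_3$, (b) through the open face $f$ to the interior of $v_1v_2$, and (c) through $D_3$ to $\cen(p_3)\in\loc(p_3)\subseteq D_3$. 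Piece (b) lies in $f$ and crosses no part of $G$, so it automatically avoids $T'\cup\bnd f'$. The two edge crossings are safe because $v_1v_2,v_1v_3\in\bnd f$ are excluded from $\extree(p_0)\supseteq T'$ by the type-$3$ hypothesis on $f$, and neither edge can lie in $\bnd f'$, for otherwise $f'$ would be the face of $G$ on the $D_3$- or $D_2$-side of that edge, contradicting $f'\subseteq D_1$.

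The core of the proof is then to argue that pieces (a) and (c) can be drawn inside $D_2$ and $D_3$ without hitting $T'\cup\bnd f'$. For the primed boundary this is easy: $\bnd f'\subseteq\overline{f'}\subseteq\overline{D_1}$ is disjoint from the open sets $D_2$ and $D_3$. For $T'$ I will use the following containment. By Lemma~\ref{lem:connectivity}, $G[\Vorpar(p_2)]$ is connected, and its vertex set $\Vorpar(p_2)$ is disjoint from $\Vorpar(p_0)$, to which every vertex of $T'\cup\bnd f'$ belongs; hence the embedding of $G[\Vorpar(p_2)]$ lies inside a single open region of $\Sigma\setminus(T'\cup\bnd f')$. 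Since $\loc(p_2)\subseteq D_2'$, this region must be $D_2'$, so $G[\Vorpar(p_2)]\subseteq D_2\cap D_2'$; symmetrically, $G[\Vorpar(p_3)]\subseteq D_3\cap D_3'$. I plan to route piece (a) by first traversing the tree $\extree(p_2)$ within $G[\Vorpar(p_2)]\subseteq D_2'$ from $\cen(p_2)$ to a vertex lying on a face of $G$ incident with the edge $v_1v_3$, and then bridging across that face to the midpoint of $v_1v_3$; piece (c) is built symmetrically with $v_1v_2$ and $\loc(p_3)$.

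The main obstacle I foresee is justifying the bridging step: although $T'$ may in principle have vertices inside $D_2$, the containment $G[\Vorpar(p_2)]\subseteq D_2'$ forces any component of $T'\cap D_2$ to lie outside $D_2'$ and hence on the opposite side of $G[\Vorpar(p_2)]$ from $\loc(p_2)$. A careful walk along the faces of $G$ incident with $G[\Vorpar(p_2)]$ that are contained in $D_2\cap D_2'$ should therefore produce an unobstructed bridge from a vertex of $\Vorpar(p_2)$ to the interior of $v_1v_3$; this is the type-$3$ analogue of the implicit topological step used in the proof of Lemma~\ref{lem:type-2}, and I expect to resolve it by the same local face-walking analysis. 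Once $\gamma$ is in hand, it connects $\cen(p_2)\in D_2'$ and $\cen(p_3)\in D_3'$ without meeting $T'\cup\bnd f'$, a contradiction that establishes $f=f'$.
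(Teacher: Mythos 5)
Your overall strategy (two certified faces, $f'\subseteq D_1$ WLOG, then a contradiction with the primed partition) is sound, and pieces (b) and the two edge crossings are justified correctly. But the proof has a genuine gap at exactly the step you flag: you never establish that the primed walks $W_1'\cup W_2'\cup W_3'$ avoid $D_2$ and $D_3$, and your proposed substitute does not deliver this. The paper's argument (and also the corresponding step in the proof of Lemma~\ref{lem:type-2}, which you misremember as a ``local face-walking analysis'') rests on a tree argument: the paths $P_1',P_2',P_3'$ lie in the same tree $\extree(p_0)$ as the paths $P_2,P_3$ forming $W_1$, and a tree path cannot cross $W_1$ (it cannot use the non-tree edge $v_2v_3$, and returning to a vertex of $P_2\cup P_3$ after diverging would create a cycle); since $\bnd f'\subseteq D_1\cup W_1$, all primed walks are confined to $D_1\cup W_1$. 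With that containment in hand, either your curve construction or the paper's shortcut works (the paper avoids curves altogether: the complementary disk $\widetilde{D_1}$ is disjoint from all primed walks and from $f'$, so it lies in a single primed region, contradicting that it contains both $\loc(p_2)$ and $\loc(p_3)$).

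Your replacement argument — $G[\Vorpar(p_2)]\subseteq D_2\cap D_2'$ plus a ``bridge'' to the midpoint of $v_1v_3$ — does not close the gap. The containment of $G[\Vorpar(p_2)]$ is fine (connectivity plus vertex-disjointness from $\Vorpar(p_0)$), but the bridge must traverse territory whose vertices lie in $\Vorpar(p_0)$: the faces incident with the edge $v_1v_3$ are $f$ and one triangle $g\subseteq D_2$ whose third vertex need not lie in $\Vorpar(p_2)$, so there may be no vertex of $\Vorpar(p_2)$ on any face incident with $v_1v_3$, and any route from $G[\Vorpar(p_2)]$ to $g$ must cross edges with both endpoints in $\Vorpar(p_0)$ — exactly where $T'$ lives. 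Your claim that a component of $T'\cap D_2$ must lie ``on the opposite side of $G[\Vorpar(p_2)]$ from $\loc(p_2)$'' is not a valid topological statement ($G[\Vorpar(p_2)]$ is a connected subgraph, not a separating curve), and nothing you prove rules out an arc of $T'$ with endpoints on $\partial D_2$ that stays outside $D_2'$ yet separates $\cen(p_2)$ from the edge $v_1v_3$ inside $D_2$. The only way to exclude this is the global containment $T'\cup\bnd f'\subseteq D_1\cup W_1$, i.e., the tree argument you omitted; once you add it, your curve exists trivially and the contradiction follows.
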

\begin{proof}
Suppose there are two such faces $f$ and $f'$. We adopt the notation from the definition of a type-$3$ singular face for $f$, and we will follow the same notation but with primes for the face $f'$. By assumption, no edge of $f$ or $f'$ belongs to $\extree(p_0)$.

Recall that removal of walks $W_1,W_2,W_3$ partitions $\Sigma$ into four open disks: $D_1,D_2,D_3$ and $f$. Since $f'\neq f$, $f'$ is contained in one of the disks $D_1,D_2,D_3$. Without loss of generality suppose $f'$ is contained in $D_1$. Removal of $W_1$ from $\Sigma$ partitions $\Sigma$ into two disks: one of them is simply $D_1$, and the second is a disk $\widetilde{D_1}$ that contains $D_2$, $D_3$ and $f$. Observe that since $W_1$ is formed by edge $v_2v_3$ and two paths $P_2,P_3$, which are contained in the tree $\extree(p_0)$, then each of the paths $P_1',P_2',P_3'$ has to be entirely contained either in $D_1\cup W_1$ or in $\widetilde{D_1}\cup W_1$ --- this is because paths $P_1',P_2',P_3'$ are also contained in the tree $\extree(p_0)$. Since $\bnd f'\subseteq D_1\cup W_1$, we infer that all these paths are contained in $D_1\cup W_1$. Since $f'$ is also contained in $D_1$, we conclude that all the closed walks $W_1',W_2',W_3'$ are contained in $D_1\cup W_1$.

Now examine disk $\widetilde{D_1}$. Notice that this disk is disjoint with all the walks $W_1',W_2',W_3'$ and does not contain $f'$, which means that it is entirely contained in one of the disk $D_1'$, $D_2'$, or $D_3'$. However, $\widetilde{D_1}$ contains both $\loc(p_2)$ and $\loc(p_3)$, while every disk $D_1'$, $D_2'$, $D_3'$ contains only one of them. This is a contradiction.
\end{proof}

In Lemmas~\ref{lem:type-1},~\ref{lem:type-2}, and~\ref{lem:type-3} we focused only on families $\Fam$ of size $3$, $3$, and $4$, respectively. The following lemma, which follows immediately from Lemma~\ref{lem:inclusion} and the definition of singular faces of types $1$, $2$, and $3$, justifies why we can do it.

\begin{lemma}\label{lem:small-subfamily}
Suppose $\Fam\subseteq \Obj$ is a normal family of objects. Then for a face $f$ of $G$, the following conditions hold:
\begin{itemize}
\item If $f$ is a type-$1$ singular face for $\Fam$, certified by a triple $(p_1,p_2,p_3)$, then it is a type-$1$ singular face for $\{p_1,p_2,p_3\}$, certified by the same triple $(p_1,p_2,p_3)$.
\item If $f$ is a type-$2$ singular face for $\Fam$, certified by a triple $(p_1,p_2,p_3)$, then it is a type-$2$ singular face for $\{p_1,p_2,p_3\}$, certified by the same triple $(p_1,p_2,p_3)$.
\item If $f$ is a type-$3$ singular face for $\Fam$, certified by a quadruple $(p_0,p_1,p_2,p_3)$, then it is a type-$3$ singular face for $\{p_0,p_1,p_2,p_3\}$, certified by the same quadruple $(p_0,p_1,p_2,p_3)$.
\end{itemize}
\end{lemma}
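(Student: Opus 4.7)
The plan is to exploit the monotonicity stated in Lemma~\ref{lem:inclusion}: passing from a normal family $\Fam$ to a smaller normal subfamily $\Fam'\subseteq \Fam$ can only enlarge each Voronoi region $\Vorpar(p)$, and can only add edges to each spanning tree $\extree(p)$. I will verify, for each of the three definitions of a singular face, that all the data required to certify singularity (vertex memberships in Voronoi regions, root-to-vertex paths inside $\extree(p)$, and the closed walks built from those paths) is preserved when $\Fam$ is replaced by the sub-certificate $\Fam'=\{p_1,p_2,p_3\}$ or $\Fam'=\{p_0,p_1,p_2,p_3\}$.

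For type~$1$ the verification is immediate: if $\bnd f$ contains a vertex of $\Vorpar_\Fam(p_i)$ for $i=1,2,3$, then Lemma~\ref{lem:inclusion} ensures that the same vertex still lies in $\Vorpar_{\Fam'}(p_i)\supseteq \Vorpar_\Fam(p_i)$, so the three regions still pairwise meet at $f$. For types~$2$ and $3$, the same observation handles the condition that each $v_t$ lies in the prescribed region. The next ingredient I would record is a small lemma stating that the path from $\cen(p)$ to a fixed $v\in \Vorpar_\Fam(p)$ inside $\extree_\Fam(p)$ coincides with the corresponding path inside $\extree_{\Fam'}(p)$; this should follow because $\extree_\Fam(p)$ is a subtree of $\extree_{\Fam'}(p)$ that contains both endpoints, and unique paths in trees then force equality. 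Plugging this in, the closed walk $W$ (type~$2$), respectively $W_1,W_2,W_3$ (type~$3$), is literally the same walk as its $\Fam$-counterpart. The separation assertions, namely ``$W$ separates $\loc(p_1)$ from $\loc(p_3)$'' and ``$\loc(p_t)\subseteq D_t$'', are then purely topological statements about the embedding of $G$ and so inherit automatically.

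The step I expect to be the most delicate, and the main obstacle, is the extra clause in the type-$3$ definition requiring that no edge of $\bnd f$ belongs to $\extree(p_0)$: since $\extree_{\Fam'}(p_0)\supseteq \extree_\Fam(p_0)$, one must rule out a new edge of $\bnd f$ sneaking into the enlarged tree. The argument I would use is that $\extree_\Fam(p_0)$ is constructed as $\tree(p_0)$ together with the unique shortest paths from every $v\in \Vorpar_\Fam(p_0)$ to $\loc(p_0)$, and that the three vertices $v_1,v_2,v_3$ of $\bnd f$ already lie in $\Vorpar_\Fam(p_0)$. Because subpaths of shortest paths are themselves shortest and shortest paths are unique under \ass, any edge $v_iv_j$ that could appear on the shortest path from a new vertex $w\in \Vorpar_{\Fam'}(p_0)\setminus \Vorpar_\Fam(p_0)$ to $\loc(p_0)$ must already appear on the shortest path from either $v_i$ or $v_j$ to $\loc(p_0)$, and hence already lie in $\extree_\Fam(p_0)$. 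Contrapositively, absence of $\bnd f$-edges from $\extree_\Fam(p_0)$ forces the same property for $\extree_{\Fam'}(p_0)$, which closes the proof.
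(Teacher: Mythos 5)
Your proof is correct and takes essentially the paper's route: the paper derives the lemma directly from the monotonicity of Voronoi regions and spanning trees (Lemma~\ref{lem:inclusion}), together with the observation that the tree paths and hence the walks $W$, $W_1,W_2,W_3$ defining the separation conditions are literally unchanged when the trees only grow. Your additional verification of the type-$3$ clause that no edge of $\bnd f$ can enter the enlarged tree $\extree_{\Fam'}(p_0)$ --- using that $v_1,v_2,v_3\in\Vorpar_{\Fam}(p_0)$, that suffixes of shortest paths to $\loc(p_0)$ are shortest, and uniqueness under \ass --- is sound and fills in the one step the paper treats as immediate.
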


Now we show that for a normal family $\Fam$, every branching point of $\dgm_\Fam$ is always a singular face, and moreover its type depends on the number of bridges it is incident to in $\dgm_\Fam$. Note that since $\dgm_\Fam$ is $3$-regular, every its vertex is incident to $0$, $1$, or $3$ bridges of $\dgm_\Fam$.

\begin{lemma}\label{lem:types-bridges}
Suppose $\Fam\subseteq \Obj$ is a normal subfamily of objects, and let $\dgm=\dgm_\Fam$ be the corresponding Voronoi diagram. Then for every branching point $f\in V(\dgm)$, the following holds:
\begin{enumerate}[label=(\arabic*)]
\item\label{i1} If $f$ is not incident to any bridge in $\dgm$, then $f$ is a type-$1$ singular face for $\Fam$.
\item\label{i2} If $f$ is incident to exactly one bridge in $\dgm$, then $f$ is a type-$2$ singular face for $\Fam$.
\item\label{i3} If $f$ is incident to exactly three bridges in $\dgm$, then $f$ is a type-$3$ singular face for $\Fam$.
\end{enumerate}
In particular, every branching point of $\dgm$ is a singular face for $\Fam$ of one of the three types.
\end{lemma}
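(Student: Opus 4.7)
The plan is to analyze each branching point $f$ of $\dgm$ by how the three vertices $v_1, v_2, v_3$ of $\bnd f$ (a triangle in $G$, since $G$ is triangulated) distribute among the Voronoi regions. Being a branching point forces $f$ to have degree $3$ in $\pdgm$, which in turn means that no edge of $\bnd f$ lies in any spanning tree $\extree(p)$. There are exactly three cases: (a) $v_1, v_2, v_3$ lie in three distinct Voronoi regions $\Vorpar(p_1), \Vorpar(p_2), \Vorpar(p_3)$; (b) exactly two of them share a common region, say $v_2, v_3 \in \Vorpar(p_2)$ and $v_1 \in \Vorpar(p_1)$ with $p_1 \neq p_2$; or (c) all three lie in a single region $\Vorpar(p_0)$.

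Next I would match the number of bridges incident to $f$ in $\dgm$ with the case above. The key claim is that an edge $e^*$ of $\pdgm$ is a bridge if and only if both endpoints of the underlying edge $e \in E(G)$ lie in the same Voronoi region. This follows from planar duality: since $F := \bigsqcup_{p \in \Fam} \extree(p)$ is a forest in $G$, removing the dual set $F^*$ keeps $G^*$ connected, and $e^*$ is a bridge in $G^* - F^* \supseteq \pdgm$ exactly when $e$ closes a fundamental cycle with $F$, i.e., when its two endpoints lie in the same tree of $F$. A short case analysis of the degree-$2$ vertices of $\pdgm$ shows that each maximal degree-$2$ path (which is suppressed to a single edge of $\dgm$) is homogeneous with respect to this property, so the characterization lifts from $\pdgm$ to $\dgm$. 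Counting the edges of $\bnd f$ whose endpoints lie in the same Voronoi region now yields $0$, $1$, and $3$ bridges in cases (a), (b), and (c) respectively, matching items \ref{i1}--\ref{i3} (note that the $2$-bridge case cannot occur because ``same Voronoi region'' is a transitive relation on $\{v_1, v_2, v_3\}$).

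It remains to verify the singular-face definitions. Case (a) is immediate: the three regions pairwise meet at $f$. In cases (b) and (c), the closed walks $W$ (respectively $W_1, W_2, W_3$) from the singular-face definitions cut $\Sigma$ into the face $f$ together with $1$ (respectively $3$) additional disks; in case (b), the disk containing $v_1$ also contains $\loc(p_1)$ by Lemma~\ref{lem:connectivity}. The hard part will be producing the remaining objects: $p_3$ in case (b), and $p_1, p_2, p_3$ in case (c), whose locations must lie in the appropriate disks. I plan to argue by contradiction: suppose some such disk $D$ contains no vertex of $G$ outside $\Vorpar(p_2)$ in case (b), or outside $\Vorpar(p_0)$ in case (c). Since every edge of $G$ other than the ``face edge'' of $\bnd f$ that borders $D$ which has its dual in $\pdgm$ lies inside $\extree(\cdot)$ of that single Voronoi region, the unique $\pdgm$-edge crossing the bounding walk is the corresponding bridge $e^*$; deleting $e^*$ from $\dgm$ separates off a component $A$ living inside $D$. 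This component $A$ must be nonempty, for otherwise the iterative pruning of degree-$1$ vertices in the construction of $\pdgm$ would also delete $e^*$, contradicting the degree-$3$ status of $f$. By the bridge characterization, every edge inside $D$ is a bridge in $\dgm$, so $A$ is a forest; the entry vertex of $A$ has degree $2$ in $A$ (its third $\dgm$-edge being the removed $e^*$) while every other vertex of $A$ has degree $3$, so every vertex of $A$ has degree at least $2$, contradicting the fact that every nonempty forest has a leaf. Hence $D$ contains a vertex of some $\Vorpar(p)$ with $p$ different from $p_2$ (respectively $p_0$), and by connectedness of $\loc(p)$ together with its disjointness from the bounding walk (which lies entirely in the relevant single Voronoi region), $\loc(p) \subseteq D$ as required. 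In case (c), pairwise distinctness of the produced $p_1, p_2, p_3$ follows from the pairwise disjointness of $D_1, D_2, D_3$.
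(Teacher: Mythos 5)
Your proposal is correct, and although its skeleton matches the paper's (a three-way case analysis at the branching point $f$, matched to the number of incident bridges, plus a degree-counting argument showing that the part of the diagram cut off by a bridge cannot be a tree), the key technical device is different. The paper never characterizes bridges in primal terms: it reads off the distribution of the vertices of $\bnd f$ among Voronoi regions from the face--object correspondence of $\dgm$ (Lemmas~\ref{lem:pdiagram} and~\ref{lem:diagram}), and it produces the witness objects $p_3$ (resp.\ $p_1,p_2,p_3$) by noting that the cut-off subgraph $H_t$ of $\dgm$ is not a tree, hence has a face that is a genuine face of $\dgm$, hence corresponds to an object whose location lies in the appropriate disk. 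You instead prove a planar-duality lemma --- a surviving dual edge of $\pdgm_\Fam$ is a bridge if and only if its primal edge joins two vertices of the same Voronoi region, because $\bigcup_{p\in\Fam}\extree_\Fam(p)$ is a spanning forest whose trees are exactly the regions --- and use it twice: once to identify the bridge count at $f$ with the distribution of the triangle's vertices among regions, and once in contrapositive to find a vertex of a different region inside the disk $D$, after which Lemmas~\ref{lem:sanity} and~\ref{lem:connectivity} pull the whole location $\loc(p)$ into $D$. This buys a clean, reusable statement and avoids reasoning about faces of the cut-off subgraph, at the cost of verifying that bridgeness transfers from $G^*$ minus the forest duals to $\pdgm_\Fam$ (pruning degree-$1$ vertices never destroys a cycle) and then to $\dgm$ (an edge of $\dgm$ lies on a cycle iff its whole suppressed path does). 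Two details you should still write out, though neither is a real gap: in case (b), the disk $D$ across the bridge is the one \emph{not} containing $f$, so $v_1$ and hence $\loc(p_1)$ lie in the opposite disk, which is exactly what gives the required separation and makes the produced $p_3$ automatically distinct from $p_1$; and the nonemptiness of $A$ is immediate once the characterization tells you the edge is a bridge (hence not a loop), since its far endpoint is reached by a dual path entering $D$ through the unique crossing edge and therefore lies in $D$, so the pruning argument you sketch is superfluous.
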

\begin{proof}
Let $\tilde{e}_1^*,\tilde{e}_2^*,\tilde{e}_3^*$ be the three edges incident to $f$ in $\dgm$ (possibly two of these edges are equal, in case $f$ is incident to a loop in $V(\dgm)$). Then, in the prediagram $\pdgm$, $f$ was incident to three edges $e_1^*,e_2^*,e_3^*$, which are the first edges of the paths contracted to $\tilde{e}_1^*,\tilde{e}_2^*,\tilde{e}_3^*$, respectively (these three edges are pairwise different, since $\pdgm$ is simple). Edges $e_1^*,e_2^*,e_3^*$ are dual to edges $e_1,e_2,e_3$, respectively, which form the boundary of the face $f$.

We perform a case study depending on how many of edges $\tilde{e}_1^*,\tilde{e}_2^*,\tilde{e}_3^*$ are bridges in $\dgm$ (note that a loop is not considered a bridge). Observe that, for $t=1,2,3$, edge $\tilde{e}_t^*$ is a bridge in $\dgm$ if and only if $e_t^*$ is a bridge in $\pdgm$.

Suppose first that none of edges $\tilde{e}_1^*,\tilde{e}_2^*,\tilde{e}_3^*$ are bridges. Then $f$ is incident to three different faces of $\dgm$. Since faces of $\dgm$ correspond one-to-one to objects of $\Fam$ (see Lemmas~\ref{lem:pdiagram} and Lemmas~\ref{lem:diagram}), there are three distinct objects $p_1,p_2,p_3\in \Fam$ such that $f$ is incident to corresponding faces $f^*_{p_1},f^*_{p_2},f^*_{p_3}$ of $\dgm$. This means that one of the vertices of $f$ belongs to $\Vorpar(p_1)$, second to $\Vorpar(p_2)$, and third to $\Vorpar(p_3)$, and the triple $(p_1,p_2,p_3)$ certifies that $f$ is a type-$1$ singular face for $\Fam$. This proves \ref{i1}.

Suppose second that $\tilde{e}_1^*$ is a bridge in $\dgm$, while edges $\tilde{e}_2^*,\tilde{e}_3^*$ are not bridges. This means that there exists two distinct objects $p_1,p_2\in \Fam$, with corresponding faces of $\dgm$ being $f^*_{p_1}$ and $f^*_{p_2}$, such that the face incident to $f$ between $\tilde{e}_2^*,\tilde{e}_3^*$ is $f^*_{p_1}$, while the face $f^*_{p_2}$ resides on both sides of the edge $\tilde{e}_1^*$. Let $H_1$ be the subgraph of $\dgm$ induced by all the branching points that become disconnected from $f$ by removing the bridge $\tilde{e}_1^*$. Since $\dgm$ is $3$-regular, $H_1$ is not a tree, so it has at least two faces. Only one face of $H_1$ contains face $f^*_{p_2}$, and all the other faces of $H_1$ are actually faces of $\dgm$. Let then $p_3$ be any object of $\Fam$ whose corresponding face $f^*_{p_3}$ of $\dgm$ is a face of $H_1$. Similarly as in the previous case, one vertex $v_1\in V(\bnd f)$ belongs to $\Vorpar(p_1)$ and two vertices of $v_2,v_3\in V(\bnd f)$ belong to $\Vorpar(p_2)$. Moreover these two vertices $v_2,v_3$ are connected by the primal edge $e_1$. Consider now a closed walk $W$ obtained by concatenating: walk $P_2$ inside $\extree(p_2)$ from $\cen(p_2)$ to $v_2$, walk $P_3$ inside $\extree(p_2)$ from $\cen(p_2)$ to $v_3$, and edge $e_1=v_2v_3$. Removal of walk $W$ from the sphere separates the sphere into two open disks, one of which contains all the branching points of $V(H_1)$ --- and in particular also $\loc(p_3)$ --- and the second contains all other the branching points of $\dgm$ --- and in particular also $\loc(p_1)$. Hence, $W$ separates $\loc(p_1)$ from $\loc(p_3)$. This means that triple $(p_1,p_2,p_3)$ certifies that $f$ is a type-$2$ singular face for $\Fam$. This proves \ref{i2}.

Finally, suppose that all the edges $\tilde{e}_1^*,\tilde{e}_2^*,\tilde{e}_3^*$ are bridges in $\dgm$. This means that there is one object $p_0\in \Fam$ and corresponding face $f^*_{p_0}$ of $\dgm$, such that $f^*_{p_0}$ is on both sides of each of the edges $\tilde{e}_1^*,\tilde{e}_2^*,\tilde{e}_3^*$. Let $\{v_1,v_2,v_3\}=V(\bnd f)$, where $e_t=v_{t+1}v_{t+2}$ for $t=1,2,3$. Hence, in particular we have that $v_1,v_2,v_3\in \Vorpar(p_0)$. Similarly as in the previous case, let $H_t$ be the subgraph of $\dgm$ induced by all the branching points that become disconnected from $f$ by removing the bridge $\tilde{e}_t^*$, for $t=1,2,3$. Each $H_t$ is not a forest, since $\dgm$ is $3$-regular. Moreover, each $H_t$ has one face containing the original face $f^*_{p_0}$, whereas all the other faces of $\dgm$ are partitioned into face sets $F(H_1)$, $F(H_2)$, $F(H_3)$. Hence, there exist three distinct objects $p_1,p_2,p_3\in \Fam$, distinct from $p_0$, such that the corresponding faces $f^*_{p_1}$, $f^*_{p_2}$, $f^*_{p_3}$ of $\dgm$ are actually faces of $H_1$, $H_2$, and $H_3$, respectively. We now define $P_t$ to be the path inside $\extree(p_0)$ from $\cen(p_0)$ to $v_t$, for $t=1,2,3$, and $W_t$ to be the closed walk formed by concatenation of paths $P_{t+1}$, $P_{t+2}$ and the edge $v_{t+1}v_{t+2}$. Similarly as in the previous point, we see that each walk $W_t$ separates $\loc(p_t)$ from $\loc(p_{t+1})$ and $\loc(p_{t+2})$. This means that quadruple $(p_0,p_1,p_2,p_3)$ certifies that $f$ is a type-$3$ singular face for $\Fam$. This proves \ref{i3}, and concludes the proof.
\end{proof}

We are finally ready to prove Theorem~\ref{thm:impFaces}, that is, present an enumeration algorithm for important faces. Family $\impFaces$ will consist of a union of three families $\impFaces_1,\impFaces_2,\impFaces_3$. Families $\impFaces_1$ and $\impFaces_2$ are constructed by considering every triple objects $p_1,p_2,p_3\in \Fam$ s.t. $\{p_1,p_2,p_3\}$ is normal, and including all the faces that are type-$1$, resp. type-$2$, singular faces for $\{p_1,p_2,p_3\}$, certified by $(p_1,p_2,p_3)$. Lemmas~\ref{lem:type-1} and~\ref{lem:type-2} guarantees that for every triple we have at most two such faces for type $1$ and at most one such face for type $2$, so in total we have that $|\impFaces_1|\leq 2\onum(\onum-1)(\onum-2)$ and $|\impFaces_2|\leq \onum(\onum-1)(\onum-2)$. Similarly, family $\impFaces_3$ is constructed by considering every quadruple of objects $p_0,p_1,p_2,p_3$ s.t. $\{p_0,p_1,p_2,p_3\}$ is normal, and including all the faces that are type-$3$ singular faces for $\{p_0,p_1,p_2,p_3\}$, certified by $(p_0,p_1,p_2,p_3)$. Lemma~\ref{lem:type-3} guarantees that for every quadruple we have at most one such face, and thus $|\impFaces_3|\leq \onum(\onum-1)(\onum-2)(\onum-3)$. Note that, for a given triple or quadruple of objects, it can be verified in polynomial time using the definition whether a given face is a type-$1$/type-$2$/type-$3$ singular face for this triple/quadruple. Hence, families $\impFaces_1,\impFaces_2,\impFaces_3$ can be constructed in time $\Oh(\onum^4\cdot n^{\Oh(1)})$.

We now set $\impFaces=\impFaces_1\cup \impFaces_2\cup \impFaces_3$, so in particular we have the following:
$$|\impFaces|\leq 3\onum(\onum-1)(\onum-2)+\onum(\onum-1)(\onum-2)(\onum-3)=\onum^2(\onum-1)(\onum-2)<\onum^4.$$
Let us consider any normal subfamily of objects $\Fam\subseteq \Obj$. By Lemma~\ref{lem:small-subfamily}, we have that if $f\in F(G)$ is a type-$t$ singular face for $\Fam$, then $f$ has been included in respective family $\impFaces_t$, for $t=1,2,3$. On the other hand, Lemma~\ref{lem:types-bridges} implies that every branching point of $\dgm_\Fam$ is a singular face for $\Fam$ of one of the three types. We infer that every branching point of $\dgm_\Fam$ belongs to $\impFaces$. This concludes the proof of Theorem~\ref{thm:impFaces}.

\subsection{Curves, nooses, and sphere cut decompositions}\label{sec:curves}

\paragraph*{Curves.}

Recall that for us a {\em{curve}} on a sphere is a homeomorphic image of an interval, and a {\em{closed curve}} is a homeomorphic image of a circle. A {\em{directed (closed) curve}} $\crv$ is simply a (closed) curve with a specified direction of traversal. The same curve but with the opposite direction of traversal will be denoted by $\crv^{-1}$. For a directed closed curve $\crv$, the {\em{area enclosed}} by $\crv$, denoted $\enc(\crv)$, is the open disk of $\Sigma\setminus \crv$ that lies to the right of $\crv$ in the direction of its traversal (in this definition we fix the orientation of $\Sigma$). The second open disk, called the {\em{area excluded}} by $\crv$, will be denoted by $\exc(\crv)$. Note that $\enc(\crv)=\exc(\crv^{-1})$. 


\paragraph*{Nooses and the radial graph.}

Let us consider a situation where some connected multigraph $H$ is embedded on the sphere $\Sigma$. We say that a directed closed curve $\crv$ is a {\em{noose}}\footnote{In the classic literature nooses are not directed, but it will be convenient for us to work with directed ones. Also, often nooses are not required to visit every face at most once and nooses with this property are called {\em{tight}}. For us, all the considered nooses are tight.} if it intersects the embedding of $G$ only in the vertices of $H$, and moreover visits every face of $H$ at most once. The {\em{length}} of a noose $\crv$ is equal to $|V(H)\cap \crv|$, the number of vertices it traverses; we require that a noose traverses at least one vertex. We often say that a noose $\crv$ is {\em{with respect to $H$}} if the graph $H$ is not clear from the context.

When considering nooses, we shall work on the {\em{radial}} graph of $H$, denoted $\Rad(H)$. The graph $\Rad(H)$ is a bipartite multigraph, where one partite set consists of $V(H)$, the vertices of $H$, and the second partite set is $F(H)$, the faces of $H$. For every face $f\in F(H)$ and vertex $v\in V(H)$ we put an edge between $f$ and $v$ for every occurrence of $v$ on the boundary of $f$; note that if $v$ appears multiple times on the boundary of $f$, it will be connected to $f$ in $\Rad(H)$ via multiple edges. A sphere embedding of $H$ naturally induces a sphere embedding of $H$ as follows. For every face $f$ we put the corresponding vertex of $\Rad(H)$ inside $f$; we call this vertex the {\em{center}} of $f$, and denote it by $\cn(f)$. Then we connect it to the vertices on the boundary of $f$ radially, according to their order of traversal on $\bnd f$. See Figure~\ref{fig:noose} for an example.

Now observe that every noose $\crv$ can be homeomorphically transformed to an equivalent noose (in terms of the order of vertices and faces visited) that is a simple cycle in the radial graph $\Rad(H)$ with a chosen orientation. More precisely, whenever $\crv$ travels from a vertex $u$ to a vertex $v$ through a face $f$, then we replace this part of $\crv$ by going from $u$ to $\cn(f)$ along an edge of $\Rad(H)$, and then from $\cn(f)$ to $v$ again along an edge of $\Rad(H)$. Note that if $v$ appears on the boundary of $f$ multiple times, we may always pick the edge between $\cn(f)$ and $v$ that corresponds to the direction from which $v$ is accessed by $\crv$. The same holds for entering $f$ from $u$. Hence, from now on whenever considering some embedded connected multigraph $H$, we will consider only nooses that are oriented cycles in $\Rad(H)$. 


Given some noose $\crv$, we may define the {\em{subgraph enclosed}} by $\crv$ as the subgraph $\enc(\crv,H)$ of $H$ consisting of all the vertices and edges of $H$ that are embedded into $\enc(\crv)\cup \crv$. Similarly, the {\em{subgraph excluded}} by $\crv$ is the subgraph $\exc(\crv,H)$ of $H$ that consists of all the vertices and edges of $H$ that are embedded into $\exc(\crv)\cup \crv$. Note that $(E(\enc(\crv,H)),E(\exc(\crv,H)))$ is a partition of $E(H)$, and $V(\enc(\crv,H))\cap V(\exc(\crv,H))$ consists of exactly the vertices traversed by $\crv$.

\begin{figure}[htbp!]
                \centering
                \def\svgwidth{0.5\columnwidth}
                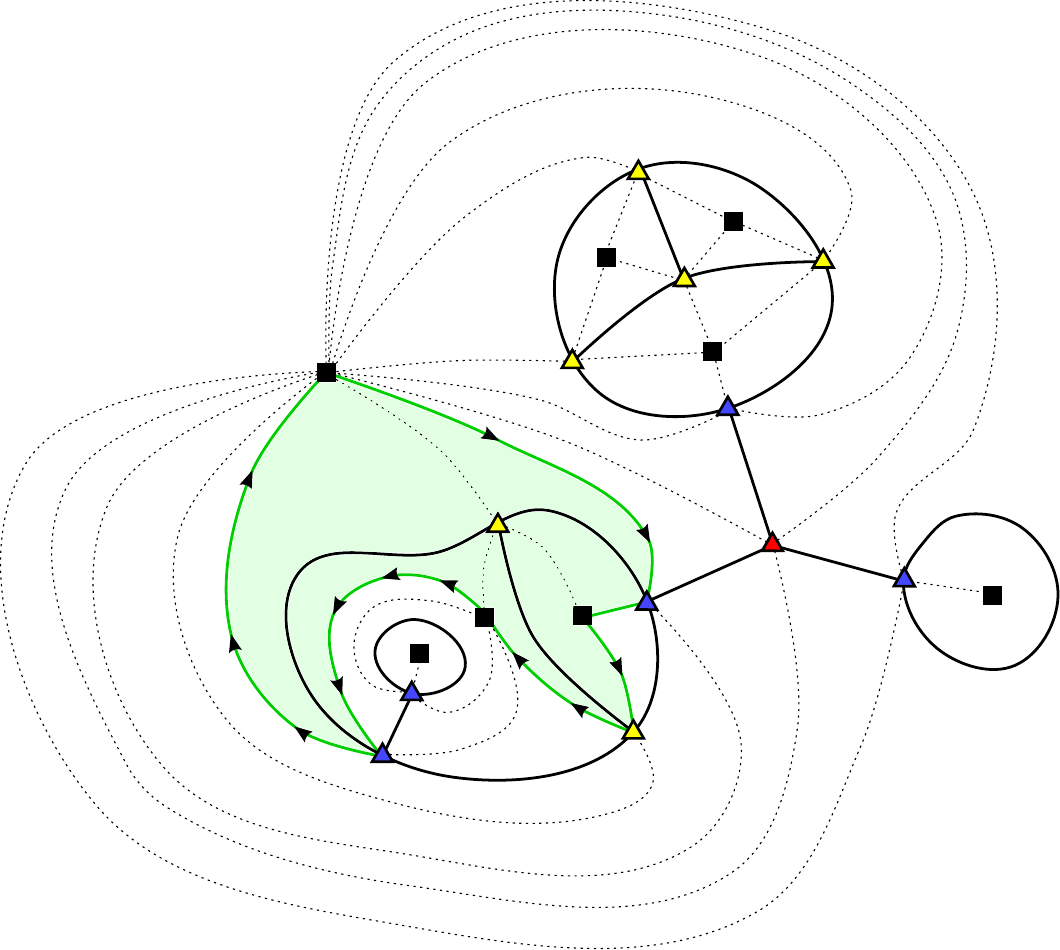
\caption{The Voronoi diagram from Figure~\ref{fig:diagram} together with its radial graph (dotted edges) and an exemplary noose $\crv$. The area enclosed by $\crv$ is depicted in green.}\label{fig:noose}
\end{figure}

\paragraph*{Sphere cut decompositions.}

We now recall the main tool that will be used to justify the correctness of our algorithm, namely sphere cut decompositions. Intuitively speaking, every graph $H$ embedded on a sphere has a hierarchical decomposition (formally, a branch decomposition), where every separator used to decompose the graph is formed by vertices traversed by some noose of length at most $\Oh(\sqrt{|V(H)|})$. In our case we will apply the sphere cut decomposition to the diagram $\dgm_{\SolObj}$, where $\SolObj$ is the optimum solution to the problem. By Lemma~\ref{lem:diagram} we have that $|V(\dgm_\SolObj)|=\Oh(k)$, which means that all the separators used in the (unknown) sphere cut decomposition of $\dgm_{\SolObj}$ have length $\Oh(\sqrt{k})$. By dint of Theorem~\ref{thm:impFaces}, we are able to construct a family of $\onum^{\Oh(\sqrt{k})}$ candidates for these separators. Then we run a recursive algorithm that at each step guesses a separator that splits the (unknown) solution in a balanced way, and then recursively solves simpler subinstances. Therefore, to see why the algorithm is correct, it is necessary to have a good understanding of noose separators in planar graphs, which brings us to the framework of sphere cut decompositions.

We now move to a formal introduction of sphere cut decompositions. Recall that for a multigraph $G$, a pair $\langle\tree,\iso\rangle$ is a {\em{branch decomposition}} of $G$ if $T$ is a tree with all the internal vertices of degree $3$, and $\iso$ is a bijection from $E(G)$ to the set of leaves of $T$. For an edge $e\in E(T)$, we define the {\em{middle set}} of $e$, denoted $\md(e)$, as follows. Suppose that removal of $e$ breaks $T$ into two trees $T_1,T_2$, and let $F_1,F_2$ be the sets of edges of $G$ mapped to the leaves of $T$ contained in $T_1$ and $T_2$, respectively. Then $(F_1,F_2)$ is a partition of $E(G)$. Set $\md(e)$ consists of all the vertices of $G$ that are incident both to an edge of $F_1$ and to an edge of $F_2$. The {\em{width}} of edge $e$ is then defined as $|\md(e)|$, and the {\em{width}} of decomposition $\langle\tree,\iso\rangle$ is the minimum width among the edges of $T$. The {\em{branchwidth}} of $G$, denoted $\bw(G)$, is the minimum possible width of a branch decomposition of $G$.

Suppose now that $G$ is a connected multigraph without loops that is embedded on a sphere $\Sigma$. A {\em{sphere cut decomposition}} ({\em{sc-decomposition}}) of $G$ is a triple $\langle\tree,\iso,\noose\rangle$, where:
\begin{itemize}
\item $\langle\tree,\iso\rangle$ is a branch decomposition of $G$;
\item $\noose$ is a function that assigns to every edge $e\in E(T)$ a noose $\noose(e)$ that traverses exactly the set of vertices $\md(e)$. Moreover, if $(F_1,F_2)$ is the partition of $E(G)$ as in the definition of $\md(e)$, and $G_1,G_2$ are the subgraphs of $G$ spanned by $F_1,F_2$, respectively, then one of $G_1,G_2$ is $\enc(\noose(e),G)$ and the second is $\exc(\noose(e),G)$.
\end{itemize}
Note that in the second condition, whether  $\enc(\noose(e),G)$ is $G_1$ or $G_2$ depends on the choice of direction of the noose $\noose(e)$. Observe also that the given partition $(F_1,F_2)$ of $E(G)$ uniquely defines the noose $\noose(e)$ (treated as a simple cycle in $\Rad(G)$, without orientation), for the following reason: For every face $f$ that is traversed by $\noose(e)$, the intersections of $\bnd f$ with $F_1$ and $F_2$ have to be just two intervals of the walk $\bnd f$, since face $f$ is visited only once by $\noose(e)$. Hence, by examining these intervals we can uniquely deduce between which vertices the noose travels within $f$, and from which directions these vertices are accessed.

We will regard sc-decompositions also as branch decompositions, with all the inherited notions for them.

As observed by Dorn et al.~\cite{DornPBF10}, the following theorem was implicitly proved by Seymour and Thomas in~\cite{SeymourT94}. The improved running time (from original $\Oh(n^4)$ to $\Oh(n^3)$) is due to Gu and Tamaki~\cite{GuT08}.

\begin{theorem}[\cite{DornPBF10,GuT08,SeymourT94}]\label{thm:sc-decomp}
Suppose $G$ is a bridgeless, connected multigraph $G$ without loops, embedded on a sphere $\Sigma$. Then there exists an sc-decomposition $\langle\tree,\iso,\noose\rangle$ such that $\langle\tree,\iso\rangle$ is a branch decomposition of $G$ of optimum width. Moreover, such an sc-decomposition can be found in time $\Oh(n^3)$.
\end{theorem}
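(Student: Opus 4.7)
The plan is to decompose the statement into a structural part (existence of an optimum branch decomposition whose middle sets come from nooses) and an algorithmic part (constructing such a decomposition in $\Oh(n^3)$ time), and to attack them via the classical duality between branch decompositions and ``carvings'' of the sphere, which is particularly well-behaved in the planar setting.

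For the structural part, I would start from any optimum branch decomposition $\langle \tree,\iso\rangle$ and, for each edge $e\in E(T)$, turn the middle set $\md(e)$ into a noose. Fix $e$ and let $(F_1,F_2)$ be the induced partition of $E(G)$; by definition, the embedded subgraphs $G[F_1]$ and $G[F_2]$ intersect exactly in the vertices of $\md(e)$. Consider the radial graph $\Rad(G)$: every face $f$ of $G$ either has all its incident edges in $F_1$, all in $F_2$, or has $\bnd f$ split into several arcs alternating between $F_1$-edges and $F_2$-edges; the latter faces are precisely those that must be ``crossed'' by any separating noose. By a routing argument (similar to finding a cycle in a planar dual), one obtains a closed walk in $\Rad(G)$ that alternates between vertices of $\md(e)$ and centers of ``mixed'' faces and that separates $G[F_1]$ from $G[F_2]$; taking a minimum-length such walk gives a noose that traverses exactly $\md(e)$ and visits every face at most once. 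A subtle point is that this must be done consistently for all edges of $T$ simultaneously, so that nooses coming from ancestor edges enclose nooses coming from descendant edges. I would organize the construction bottom-up in $T$: at a leaf one assigns the trivial noose surrounding the single edge of $G$, and at each internal node one merges the two nooses from its children along the shared boundary prescribed by the middle set.

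For the algorithmic part, I would use the Seymour--Thomas ratcatcher, which decides $\bw(G)\le k$ on a planar $G$ in polynomial time by recasting the question as a pursuit game on the medial graph; combined with binary search on $k$, one obtains the optimum width and, by an extraction step that iteratively fixes a minimum-middle-set edge of $T$ and recurses on the pieces, an optimum branch decomposition. The $\Oh(n^3)$ bound is the refinement of Gu and Tamaki, which speeds up the extraction using better data structures for manipulating tangles. Finally, one runs the structural construction above to equip this decomposition with the noose assignment $\noose$, so that the bridgelessness and looplessness of $G$ are used only to ensure that each middle set is nonempty and that no face is traversed more than once.

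The main obstacle I expect is precisely the compatibility of the individually-constructed nooses across $T$: for a generic choice of minimum noose at each edge, nooses associated with sibling edges may cross, which is forbidden in an sc-decomposition since their enclosed regions must be disjoint. The cleanest way to sidestep this is to work with the dual notion of a carving on the sphere rather than with nooses one edge at a time; the classical argument establishes branch/carving duality and then derives the noose structure as a corollary, which is morally why the algorithmic proof by Seymour--Thomas is stated in the language of carvings. A secondary subtlety is ensuring that the boundedness and connectivity assumptions (no bridges, no loops) are exactly what is needed to avoid degenerate nooses; these are preserved under the subproblem splits induced by the decomposition, so they do not interfere with the induction.
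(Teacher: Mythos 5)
There is a genuine gap in your structural part. Your plan is to take an \emph{arbitrary} optimum branch decomposition and, edge by edge, convert each middle set $\md(e)$ into a noose by a routing argument in $\Rad(G)$, then repair crossings between nooses bottom-up. But the difficulty is not only compatibility across different edges of $T$: for a single edge $e$ of an arbitrary optimum decomposition, the partition $(F_1,F_2)$ need not be realizable by \emph{any} noose. The sc-decomposition definition requires one side to be $\enc(\noose(e),G)$ and the other $\exc(\noose(e),G)$, i.e., each side must lie in one closed disk bounded by a single closed curve meeting $G$ only in $\md(e)$; if, say, $G[F_1]$ is disconnected with components sitting in several different faces of $G[F_2]$, no single noose separates them, and no amount of "straightening" or minimum-length routing produces one. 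So the real content of the theorem is the \emph{existence} of an optimum-width decomposition all of whose separations are noose separations, and that existence is exactly what your proposal assumes rather than proves. Your last paragraph correctly senses that the clean route is via carvings on the sphere, but you treat this as an optional way to "sidestep" the crossing issue and never carry it out; in fact it is the essential argument, not a convenience.

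For comparison, the paper's proof works entirely in that dual language: one passes to the medial graph $\Med(G)$ (the dual of $\Rad(G)$), uses Seymour--Thomas (5.1) that every $2$-connected multigraph admits an optimum carving decomposition consisting of \emph{bond} carvings, observes that carving decompositions of $\Med(G)$ correspond to branch decompositions of $G$ with the width halved, and that bond carvings of $\Med(G)$ correspond precisely to nooses of $G$ --- so an optimum bond carving decomposition of $\Med(G)$ \emph{is} an sc-decomposition of $G$ of optimum width. The algorithm is then Seymour--Thomas (9.1) (construction of an optimum bond carving decomposition, $\Oh(n^4)$) applied to $\Med(G)$, improved to $\Oh(n^3)$ by Gu and Tamaki; there is no separate ratcatcher-plus-extraction phase followed by a noose-decoration phase. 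Relatedly, your reading of the hypotheses is off: bridgelessness and connectivity of $G$ are needed because they are equivalent to $2$-connectivity of $\Med(G)$, the hypothesis of (5.1) (and a bridge $b$ would force the leaf separation $(\{b\},E(G)\setminus\{b\})$ to visit the face on both sides of $b$ twice), while loops are excluded because of degeneracies in forming the medial graph --- not merely to keep middle sets nonempty.
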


Let us now deliberate on the proof of Theorem~\ref{thm:sc-decomp}, as the statement differs in important details from the one provided by Dorn et al.~\cite{DornPBF10} (see Theorem~1 in this work). The following argumentation is based on the sketch of the proof of Theorem~1 in~\cite{DornPBF10}.

The proof follows from (5.1) and the proof of (7.2) in~\cite{SeymourT94}. In (5.1), it is proved that for every $2$-connected multigraph $G$, there exists an optimum carving decomposition of $G$ that uses only bond carvings (is a {\em{bond carving decomposition}}). Here, a carving decomposition is a similar concept to a branch decomposition, but this time leaves of decomposition $T$ correspond one-to-one to vertices of $G$, and the width of an edge $e\in E(T)$ is defined as the number of edges traversing the corresponding partition $(X_1,X_2)$ of $V(G)$. This partition $(X_1,X_2)$ is called a {\em{carving}}, and the carving is a {\em{bond carving}} if both $X_1$ and $X_2$ induce connected graphs in $G$. Equivalently, the edges traversing the carving form an inclusion-wise minimal edge cut between some two vertices of $G$. 

Note now that if $G$ is embedded on a sphere $\Sigma$, then such a minimal cut corresponds to a simple cycle in the dual $G^*$, which traverses duals of the consecutive edges of the cut. This cycle divides the sphere into two open disks such that vertices of $X_1$ are embedded into one of them and vertices of $X_2$ are embedded into the second of them.

Carving decompositions are then translated to branch decompositions using the notion of a {\em{medial graph}} $\Med(G)$, which is just the dual of the radial graph. More precisely, vertices of the $\Med(G)$ are the edges of the primal graph, and for every vertex $v\in V(G)$ we connect the edges incident to $v$ into a cycle, in the order of their appearance around $v$ in the sphere embedding. Thus, the medial graph is also embedded into the same sphere. In the proof of (7.2) in~\cite{SeymourT94}, Seymour and Thomas consider carving decompositions of the medial graph $\Med(G)$. Since the vertex set of $\Med(G)$ is exactly the edge set of $G$, carving decompositions of $\Med(G)$ correspond exactly to branch decompositions of $G$. Moreover, cycles in the dual of $\Med(G)$ (which is just $\Rad(G)$) correspond exactly to nooses in $G$. Hence, bond carving decompositions of $\Med(G)$ correspond to sc-decompositions of $G$. Seymour and Thomas prove that this correspondence translates an optimum bond carving decomposition of $\Med(G)$ into an optimum branch decomposition of $G$, where the width gets divided by two. In particular the carving width of $\Med(G)$ is twice larger than the branchwidth of $G$. Since this translation produces in fact an sc-decomposition, we conclude that there exists an sc-decomposition of $G$ of optimum width.

In (9.1) Seymour and Thomas give an $\Oh(n^4)$ algorithm that constructs an optimum bond carving decomposition of an planar graph. By applying this algorithm to $\Med(G)$ and translating the output to an optimum sc-decomposition of $G$, we obtain the sought result. The running time of the algorithm has been improved from original $\Oh(n^4)$ to $\Oh(n^3)$ by Gu and Tamaki~\cite{GuT08}.

Observe now that in (5.1) of~\cite{SeymourT94}, it was assumed that the graph, whose carving decompositions are considered, needs to be $2$-connected. Indeed, for every vertex $v\in V(G)$ we have that $(\{v\},V(G)\setminus \{v\})$ is one of the carvings in every carving decomposition. If now $v$ was a cutvertex, then this carving would not be a bond carving. 

It is now easy to see that the medial graph $\Med(G)$ is $2$-connected if and only if the primal graph $G$ is connected and bridgeless. Therefore, the whole presented argumentation holds provided that $G$ is connected and bridgeless. And indeed, if $G$ contained some bridge $b$, then $(\{b\},E(G)\setminus \{b\})$ would be one of the partitions $(F_1,F_2)$ considered for some edge of the decomposition $T$. However, then the noose enclosing only the bridge $b$ would need to visit the face on both sides of $b$ twice, which contradicts the definition of a noose. Therefore, in Theorem~\ref{thm:sc-decomp} it is necessary and sufficient to assume that $G$ is connected and bridgeless.

For the other assumptions, Dorn et al.~\cite{DornPBF10} consider only simple graphs. However, all the arguments work in the same way in presence of multiple edges, since the original proof of Seymour and Thomas~\cite{SeymourT94} works for multigraphs (as is noted in the first sentence of~\cite{SeymourT94}). We need, however, to exclude loops, since in the presence of loops, understood as curves connecting a vertex to itself, there arise technical problems with the medial graph $\Med(G)$. Namely, for the natural generalization of this notion to multigraphs with loops, a loop can correspond to a cutvertex in $\Med(G)$. We remark that Seymour and Thomas allow loops in their work, but they understand them differently, as edges of arity $1$, being de facto annotations at vertices. We also remark that in their version of Theorem~\ref{thm:sc-decomp}, Dorn et al.~\cite{DornPBF10} (incorrectly) assume only that the graph does not have vertices of degree $1$, instead of excluding bridges. As we have seen, this assumption is necessary.

The second necessary ingredient is the well-known fact that planar graphs have branchwidth bounded by approximately the square root of the number of vertices.

\begin{theorem}[\cite{FominT04}]\label{thm:bw-sqroot}
For a planar graph $G$ on $n$ vertices, it holds that $\bw(G)\leq \sqrt{4.5 n}$.
\end{theorem}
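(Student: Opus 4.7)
}
The plan is to prove the bound via the same kind of cycle-separator argument that we will invoke for Voronoi diagrams later in the paper: we build a branch decomposition top-down by repeatedly cutting the sphere-embedded graph along a short face-balanced noose, exactly like a sphere cut decomposition. So the whole argument is a careful recursion driven by a sharp planar separator theorem.

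First I would reduce to the case where $G$ is a connected, bridgeless, loopless, simple sphere-embedded graph. Bridges, loops and multi-edges contribute trivially to a branch decomposition (a single edge incident to such an element can be split off, raising the width by at most~$1$), and a disconnected graph can be decomposed component by component with the widths taken as a maximum. So from now on I would assume $G$ is $2$-edge-connected and $2$-cell embedded on the sphere. In this setting I would quote (or reprove) a sharp form of the planar cycle-separator theorem, in the spirit of Alon--Seymour--Thomas and Miller: there exists a noose $\delta$ with respect to $G$ of length at most $\sqrt{4.5\,|F(G)|}$ such that both $\enc(\delta,G)$ and $\exc(\delta,G)$ contain at most $\frac{2}{3}|F(G)|$ faces of $G$. (In fact one wants a version where the measure is edges, but by Euler's formula this is equivalent up to constants, and the extremal constant $\sqrt{4.5}$ tracks through the analysis.)

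Second, I would turn this noose into the topmost cut of a branch decomposition. Let $(E_{\text{in}},E_{\text{out}})$ be the partition of $E(G)$ induced by $\delta$. Recursively construct branch decompositions $\langle T_{\text{in}},\iso_{\text{in}}\rangle$ and $\langle T_{\text{out}},\iso_{\text{out}}\rangle$ of the planar multigraphs $\enc(\delta,G)$ and $\exc(\delta,G)$, and glue them at the root by joining the two roots through a new internal edge $e^\star$. The middle set at $e^\star$ is exactly $V(\enc(\delta,G))\cap V(\exc(\delta,G))$, which is the set of vertices traversed by $\delta$; this has size at most $\sqrt{4.5\,n}$ by the separator bound. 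All deeper middle sets live inside the two subdecompositions, whose inputs have at most $\frac{2}{3}n$ vertices, so by induction they have width at most $\sqrt{4.5\cdot \tfrac{2}{3}n} = \sqrt{3n} < \sqrt{4.5\,n}$. Hence the overall decomposition has width at most $\sqrt{4.5\,n}$.

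The main obstacle, as usual with this kind of result, is not the recursion itself but squeezing the right constant out of the separator theorem. The na\"{\i}ve planar separator statement would only yield a bound of the form $c\sqrt{n}$ for some larger constant $c$; to hit the tight value $\sqrt{4.5}$ one must use a face-balanced noose separator (rather than a vertex separator) and keep track of faces, vertices, and edges via Euler's formula, together with the exact constants in the cycle-separator argument. The other, more technical, irritation is ensuring that the recursive step always produces a noose in the sense required by branch/sphere-cut decompositions, i.e.\ visiting each face at most once and bounding a proper disk; this is precisely where the reduction to the bridgeless, $2$-cell embedded case at the beginning of the plan pays off, since otherwise the purported noose could degenerate into a walk that visits a face twice and fails to define a valid middle set.
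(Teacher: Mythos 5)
You should first note that the paper does not prove Theorem~\ref{thm:bw-sqroot} at all: it is quoted from Fomin and Thilikos~\cite{FominT04}, and in this paper the logical flow runs the other way around --- the existence of a short $\frac{2}{3}$-balanced noose (Theorem~\ref{thm:edge-balanced-noose}, Corollary~\ref{thm:face-balanced-noose}) is \emph{derived from} the branchwidth bound together with the sphere cut decomposition theorem (Theorem~\ref{thm:sc-decomp}). Your plan inverts this: its first step assumes, as a black box, a face-balanced noose separator with exactly the constant $\sqrt{4.5}$ and balance $\frac{2}{3}$ ``in the spirit of Alon--Seymour--Thomas and Miller.'' No off-the-shelf cycle-separator theorem has these constants (Miller's simple-cycle separator, for example, gives size about $\sqrt{8n}$), and within the toolchain used here such a noose statement is only available as a \emph{consequence} of the bound you are trying to prove, so the plan is either circular or rests on an unproved sharp separator theorem. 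There is also a units mismatch: you ask for a noose of length $\sqrt{4.5\,|F(G)|}$ but then claim the resulting middle set has size at most $\sqrt{4.5\,n}$; for a triangulated planar graph $|F(G)|\approx 2n$, so the constant does not survive the translation between faces and vertices.

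The more serious gap is in the recursion. After gluing, the middle set of an edge $e$ deep inside $T_{\text{in}}$ is computed with respect to the \emph{whole} edge set $E(G)$: one side of $e$ is some $F_1\subseteq E_{\text{in}}$, while the other side contains all of $E_{\text{out}}$, so every noose vertex incident to an edge of $F_1$ lands in $\md(e)$ regardless of whether it appears in the middle set of $e$ inside the sub-decomposition of $\enc(\delta,G)$. Hence the inductive claim ``deeper middle sets have width at most $\sqrt{4.5\cdot\frac{2}{3}n}$'' is false as stated. The standard repair is to recurse on instances that carry the separator vertices as a boundary, but then the separator sizes accumulate along the recursion and one only gets a bound of the form $\sqrt{4.5}\,\bigl(1+\sqrt{2/3}+(2/3)+\cdots\bigr)\sqrt{n}$, i.e. $\bw(G)=\Oh(\sqrt{n})$ with a constant several times larger than $\sqrt{4.5}$; your bridge-splitting preprocessing (``raising the width by at most $1$'') likewise spoils an exact constant. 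So the divide-and-conquer scheme proves the qualitative statement but not the inequality $\bw(G)\leq\sqrt{4.5n}$, whose known proof in~\cite{FominT04} does not proceed by this kind of recursion.
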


We note that Theorem~\ref{thm:bw-sqroot} can be trivially extended to multigraphs. Thus, Theorems~\ref{thm:sc-decomp} and~\ref{thm:bw-sqroot} together imply that every bridgeless, connected multigraph $G$ without loops that is embedded on a sphere, admits an sc-decomposition of width at most $\sqrt{4.5 |V(G)|}$. This corollary will be our main tool in the sequel.

\subsection{Balanced nooses in plane graphs}
\label{sec:balanc-noos-plane}

Given a connected multigraph $G$ embedded on a sphere, a noose $\crv$ w.r.t. $G$ is {\em{$\alpha$-edge-balanced}} if the numbers of edges enclosed and excluded by $\crv$ are bounded by $\alpha |E(G)|$, i.e. $|E(\enc(\crv,G))|\leq \alpha |E(G)|$ and $|E(\exc(\crv,G))|\leq \alpha |E(G)|$. The following lemma shows that the plane multigraphs we are interested in admit short edge-balanced nooses.

\begin{theorem}\label{thm:edge-balanced-noose}
Let $G$ be a connected 3-regular multigraph with $n$ vertices, $m\geq 6$ edges, possibly with loops, and embedded on a sphere $\Sigma$. Then there exists a $\frac{2}{3}$-edge-balanced noose w.r.t. $G$ that has length at most $\sqrt{4.5 n}$.
\end{theorem}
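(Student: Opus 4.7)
The plan is to combine Theorem~\ref{thm:sc-decomp} with Theorem~\ref{thm:bw-sqroot} to obtain a sphere-cut decomposition of $G$ of width $\Oh(\sqrt{n})$ and then extract a balanced edge via a centroid-style argument on the decomposition tree. The matching constant $\sqrt{4.5n}$ in the bound is obtained by careful accounting.

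Since Theorem~\ref{thm:sc-decomp} requires the input to be bridgeless and loopless, I would first preprocess $G$ to meet these conditions. The preprocessing transforms $G$ into a bridgeless, loopless multigraph $G'$ as follows: (i)~for each loop $\ell$ at a vertex $v$, subdivide $\ell$ with a new degree-$2$ vertex, which turns the loop into two parallel edges between $v$ and the new vertex; and (ii)~for each bridge $e$, add a parallel copy $e'$ routed through one of the two faces adjacent to $e$, eliminating $e$'s bridge status. Lemma~\ref{lem:3-reg-loops} (each loop bounds a trivial face on one side) shows that loops are rigidly structured in $3$-regular sphere-embedded graphs, and a careful count then bounds $|V(G')|$ in terms of $n$. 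Applying Theorem~\ref{thm:bw-sqroot} and Theorem~\ref{thm:sc-decomp} to $G'$ yields a sphere-cut decomposition $\langle \tree,\iso,\noose\rangle$ whose width is at most $\sqrt{4.5\,|V(G')|}$; tracking the preprocessing carefully will yield the width bound $\sqrt{4.5 n}$.

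The central combinatorial step is a centroid-style argument on $\tree$: I orient each edge of $\tree$ towards the side containing more of the $|E(G')|$ leaves. Any orientation of a tree has a sink $v^\star$, and for $|E(G')|\geq m\geq 6$ this sink must be an internal (degree-$3$) vertex of $\tree$, since orienting any edge incident to a leaf towards that leaf would require the single-leaf side to contain more leaves than the opposite side of size $|E(G')|-1$. At $v^\star$, the three subtrees of $\tree-v^\star$ contain $L_1,L_2,L_3$ leaves with $L_1+L_2+L_3=|E(G')|$ and $L_i<|E(G')|/2$ for each $i$ (since each edge at $v^\star$ points inward, its away-side is strictly lighter than its towards-side). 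By pigeonhole, at least one $L_i\geq |E(G')|/3$, so the corresponding edge $e^\star\in E(\tree)$ has both sides of size in $[\,|E(G')|/3,\,2|E(G')|/3\,]$. Consequently the noose $\noose(e^\star)$ has length at most the width of the decomposition, $\sqrt{4.5n}$, and is $\frac{2}{3}$-edge-balanced in $G'$.

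The last step is to translate $\noose(e^\star)$ from $G'$ back to a noose in $G$ of the same length and balance, by bypassing the subdivision vertices and discounting the duplicated bridge edges introduced in preprocessing. The main obstacle will be the bookkeeping in this translation: I must ensure that the constant $\sqrt{4.5}$ survives the preprocessing inflation of the vertex count, and that edges added during the modifications are correctly attributed to each side of the noose so that $\frac{2}{3}$-edge-balance is preserved when passing back to $G$. The rigidity of loops (each sitting inside a single face by Lemma~\ref{lem:3-reg-loops}) and bridges (organized into a bridge tree) in $3$-regular sphere-embedded multigraphs should make this analysis technical but conceptually straightforward.
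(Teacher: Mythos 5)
Your high-level skeleton (an optimum-width sphere-cut decomposition via Theorems~\ref{thm:sc-decomp} and~\ref{thm:bw-sqroot}, followed by an orientation/centroid argument on the decomposition tree to find a balanced tree edge) is the same as the paper's. The divergence is in how you dispose of bridges and loops, and this is where the argument has genuine gaps. First, balance does not transfer from $G'$ to $G$: you have $|E(G')|=m+b$, where $b$ is the number of bridges, and a $\frac{2}{3}$-edge-balanced noose w.r.t.\ $G'$ only bounds each side by $\frac{2}{3}(m+b)$. A $3$-regular multigraph can have $b$ close to $\frac{2}{3}m$ (take a cubic tree with a loop attached at every leaf), so this bound can exceed $m$ and the guarantee becomes vacuous for $G$. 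This particular defect could be repaired by running the centroid step with leaf weights ($1$ on one copy of each original edge, $0$ on duplicates), but your argument uses plain leaf counts of $G'$. Second, the constant: in the same example the number of loops is about $n/2$, so $|V(G')|\approx 1.5n$ and quoting Theorem~\ref{thm:bw-sqroot} on $G'$ gives width $\sqrt{6.75n}$, not $\sqrt{4.5n}$; to save the stated constant you would need an additional argument that branchwidth does not increase under subdividing loops and adding parallel edges, which you assert away as ``careful accounting.''

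The most serious gap is the translation of $\noose(e^\star)$ back to a noose w.r.t.\ $G$. A bridge $e$ of a connected plane graph is incident to a single face $f$, and your duplicate $e'$ splits $f$ into a bigon plus the remainder; nothing prevents the optimum sc-decomposition from separating $e$ from $e'$ across $e^\star$, in which case $\noose(e^\star)$ passes through both pieces of $f$ and its image in $G$ visits the face $f$ twice, so it is not a noose. Similarly, a noose separating the two parallel edges of a subdivided loop passes through the subdivision vertex, i.e., through an interior point of an edge of $G$. Fixing this requires rerouting the curve around each offending bridge or loop while deciding on which side the dangling material should be counted, and that re-attribution is exactly the nontrivial content of the paper's proof: it never modifies $G$, but instead selects the heaviest bridgeless block $H$ via the bridge tree, encodes each dangling component as extra weight on one incident edge of $H$, runs the (weighted) balance argument on an sc-decomposition of $H$ alone, and then perturbs the resulting noose locally around each attachment vertex so that each dangling component lands on the prescribed side; two degenerate cases (the block is a single vertex, or the sink of the oriented decomposition tree is a leaf) are handled by explicit length-$1$ nooses around a heavy bridge. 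The step you describe as ``technical but conceptually straightforward'' is precisely where this work lives, and as proposed it is not established.
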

\begin{proof}
Let $T^*$ be the tree of bridgeless components of $G$, where with every node $x\in V(T^*)$ we associate a subset $B_x\subseteq V(G)$ inducing a bridgeless component of $G$, and every edge $xy\in E(T^*)$ corresponds to a bridge $b_{xy}$ connecting $G[B_x]$ with $G[B_y]$. For every edge $xy\in E(T^*)$, consider the connected components $G_1$ and $G_2$ of $G-b_{xy}$ and let us direct this edge from $x$ to $y$ if $|E(G_1)|>|E(G_2)|$, from $y$ to $x$ if $|E(G_1)|<|E(G_2)|$, and breaking the tie arbitrarily in case $|E(G_1)|=|E(G_2)|$. Thus, every edge of $T^*$ becomes directed. Since $T^*$ has exactly $|V(T^*)|-1$ edges, we have that the sum of outdegrees in $T^*$ is equal to $|V(T^*)|-1$, and hence there exists a node $x_0\in V(T^*)$ that has no outgoing directed edge.

Let $H=G[B_{x_0}]$ and observe that $H$ is a bridgeless, connected multigraph embedded on $\Sigma$. Moreover, observe that $H$ does not contain loops for the following reason: A loop in a 3-regular multigraph must be attached to a vertex having exactly one other incident edge that is a bridge. Hence, if $H$ contained a loop then $B_{x_0}=V(H)$ would consist of one vertex and $x_0$ would be a leaf in $T^*$. Since the only edge incident to $x_0$ in $T^*$ is directed towards $x_0$ and $G[B_{x_0}]$ has one edge, we would have that $m\leq 2\cdot 1+1=3$. This would be a contradiction with the assumption that $m\geq 6$.

Suppose first that $|B_{x_0}|=1$, i.e., $H$ is graph consisting of a single vertex $u$. Then $u$ is incident to $3$ bridges in $G$, and removal of each of these bridges separates from $u$ a connected component having at most $\frac{m-1}{2}$ edges. Let $C$ be the component that has the most vertices among these three ones; then $|E(C)|\geq \frac{m-3}{3}=\frac{m}{3}-1$. Take a noose $\crv$ of length $1$ that visits $u$ and the unique face incident to $u$, chosen in such a manner that it encloses exactly component $C$ and the bridge connecting $C$ with $u$. Then, as $m\geq 6$, it follows that 
\begin{eqnarray*}
|E(\enc(\crv,G))| & = & |E(C)|+1\leq \frac{m-1}{2}+1\leq \frac{2}{3}m,\\
|E(\exc(\crv,G))| & = & m-(|E(C)|+1)\leq m-\left(\frac{m}{3}-1+1\right)=\frac{2}{3}m.
\end{eqnarray*}
Hence, $\crv$ satisfies the required properties.

From now on suppose that $|B_{x_0}|>1$. We apply Theorems~\ref{thm:sc-decomp} and~\ref{thm:bw-sqroot} to $H$ in order to find an sc-decomposition $\langle\tree,\iso,\noose\rangle$ of $H$ of width at most $\sqrt{4.5 |V(H)|}\leq \sqrt{4.5 n}$. Define a weight function $\wei\colon E(H)\to \N$ as follows: We first put $\wei(e)=1$ for all $e\in E(H)$. Then, for every bridge $uv$ such that $u\in B_{x_0}$ and $v\notin B_{x_0}$, consider the connected component $C$ of $G-uv$ that does not contain $B_{x_0}$ and add its edge count (including the bridge $uv$ itself) to the weight of one of the edges of $H$ incident to $u$; such an edge always exists for $H$ is bridgeless and has more than one vertex. Moreover, every vertex of $H$ is incident either to three edges of $H$ or to two edges of $H$ and one bridge, and hence it is easy to see that the distribution of weights can be done in such a manner that to the weight of every edge $e$ of $H$ we add the weight corresponding to at most one bridge incident to an endpoint of $e$. Observe that thus we have that $\wei(H)=m$, where the weight of a graph is defined as the sum of the weights of its edges. Moreover, since $G$ is $3$-regular and every edge of $T^*$ incident to $x_0$ was directed towards $x_0$, we have that $\wei(e)\leq 2+\frac{m-1}{2}=\frac{m}{2}+\frac{3}{2}$ for each $e\in E(H)$.

Now, consider any edge $e\in E(\tree)$. Let $(F_1,F_2)$ is the partition of $E(H)$ as in the definition of $\md(e)$, and let $H_1,H_2$ be the subgraphs of $H$ spanned by $F_1,F_2$; hence we have that one of $H_1,H_2$ is $\enc(\noose(e),H)$ and the second is $\exc(\noose(e),H)$, depending on the orientation of $\noose(e)$. For every $e\in E(\tree)$, let us select this orientation so that $\wei(\enc(\noose(e),H))\leq \wei(\exc(\noose(e),H))$, breaking ties arbitrarily in case of equality. Also, let us direct the edge $e$ in $\tree$ so that it points to the subtree of $\tree$ whose leaves span $\exc(\noose(e),G)$. Again, since after directing edges of $\tree$ the total sum of outdegrees in $|V(\tree)|-1$, there exists a node $z\in V(\tree)$ such that there is no outgoing edge from $z$ in $\tree$.

Consider first the case when $z$ is a leaf in $\tree$; let $uv=\iso(z)$ and let $e_z$ be the unique edge of $\tree$ incident to $z$. Hence, $\noose(e_z)$ is a noose that excludes only the edge $e_z$, and encloses the whole rest of $E(H)$. Since $e_z$ was directed towards $z$, we have that $\wei(uv)=\wei(\exc(\noose(e_z),H))\geq \frac{m}{2}$. Since $m\geq 6$, this means that one of the endpoints of $uv$, say $u$, is incident to a bridge $uw$ such that the weight corresponding to this bridge contributed to the weight of $uv$. More precisely, if $C$ is the connected component of $G-uw$ that does not contain $B_{x_0}$, then $\wei(uv)$ was increased from initial value of $1$ by contribution $1+|E(C)|$. Let $\crv$ be the length-$1$ noose that visits $u$ and the face around bridge $uw$, and encloses exactly $E(C)$ and the bridge $uw$. Then we have that $|E(\enc(\crv,G))|=|E(C)|+1=\wei(uv)-1\leq \frac{m}{2}+\frac{3}{2}-1\leq \frac{2}{3}m$, as $m\geq 6$. On the other hand, $|E(\exc(\crv,G))|=m-(|E(C)|+1)=m-(\wei(uv)-1)\leq m-\frac{m}{2}+1\leq \frac{2}{3}m$, again as $m\geq 6$. Hence, $\crv$ satisfies the required properties.

Finally, we are left with the case when $z$ is an internal vertex of $\tree$. Let $e_1,e_2,e_3$ be the edges of $\tree$ incident to $z$; recall that they are all directed towards $z$. For $t=1,2,3$, let $H_t=\enc(\noose(e_t),H)$. Then we have that $\wei(H_1)+\wei(H_2)+\wei(H_3)=\wei(H)=m$, and $\wei(H_t)\leq \frac{m}{2}$ for all $t=1,2,3$. W.l.o.g. suppose that $\wei(H_1)$ is the largest among $\wei(H_t)$ for $t=1,2,3$; then $\wei(H_1)\geq \frac{m}{3}$. Let $\crv'=\noose(e_1)$, which is a noose w.r.t. $H$; recall that the length of $\crv'$ is at most $\sqrt{4.5 |V(H)|}\leq \sqrt{4.5 n}$. Let us create a noose $\crv$ w.r.t. $G$ by modifying noose $\crv'$ as follows: whenever $\crv'$ traverses a vertex $u\in V(H)$ that in $G$ is incident to a bridge $uw$ connecting $u$ with a component $C$, then draw $\crv$ around $u$ in such a manner that it leaves $uw$ together with the whole component $C$ on the same side of $\crv$ as the edge of $H$ whose weight $C$ was contributing to (see Figure~\ref{fig:moving-noose}). Note that since $G$ is $3$-regular, this is always possible. In this manner, we have that $|E(\enc(\crv,G))|=\wei(H_1)$ and $|E(\exc(\crv,G))|=\wei(H_2)+\wei(H_3)$, and $\crv$ has the same length as $\crv'$.

\begin{figure}[htbp!]
                \centering
                \def\svgwidth{0.6\columnwidth}
                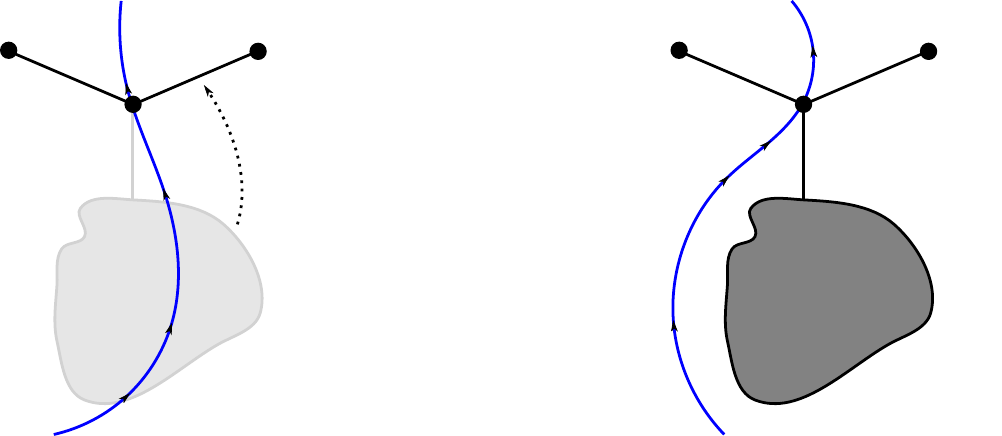
\caption{The modification applied to construct $\crv$ from $\crv'$ around vertices of $B_{x_0}$ that are incident to a bridge in $G$. The left panel shows noose $\crv'$ w.r.t. $H$, and the right panel shows the adjusted noose $\crv$ after reintroducing component $C$.}\label{fig:moving-noose}
\end{figure}

Of course, since $\wei(H_1)\leq \frac{m}{2}$, we have that $|E(\enc(\crv,G))|=\wei(H_1)\leq \frac{m}{2}\leq \frac{2}{3}m$. On the other hand, since $\wei(H_1)\geq \frac{m}{3}$, we have that $|E(\exc(\crv,G))|=m-\wei(H_1)\leq \frac{2}{3}m$. Thus, $\crv$ satisfies all the required properties.
\end{proof}

For the sake of later argumentation, we need a slightly different definition of balanceness that concerns faces instead of edges. We say that a noose $\crv$ is {\em{$\alpha$-face-balanced}} if the numbers of faces of $G$ that are strictly enclosed and strictly excluded by $\crv$, respectively, are not larger than $\alpha |F(G)|$. Here, $F(G)$ is the set of faces of $G$. Note that faces traversed by $\crv$ are neither strictly enclosed nor strictly excluded by $\crv$, and hence they contribute to neither of these numbers.

\begin{lemma}\label{lem:edge-to-faces}
Let $G$ be a connected 3-regular multigraph, possibly with loops, and embedded on a sphere $\Sigma$. If a noose $\crv$ w.r.t. $G$ is $\frac{2}{3}$-edge-balanced, then it is also $\frac{2}{3}$-face-balanced.
\end{lemma}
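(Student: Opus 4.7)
The plan is to apply Euler's formula separately to each side of $\crv$ and exploit $3$-regularity via handshake. Let $\ell$ denote the length of $\crv$ and set $m = |E(G)|$; on the enclosed side let $m_\mathrm{in} = |E(\enc(\crv,G))|$, let $n_\mathrm{in}$ be the number of vertices of $G$ strictly enclosed by $\crv$, and let $f_\mathrm{in}^\star$ be the number of faces of $G$ strictly enclosed by $\crv$ (with symmetric quantities $m_\mathrm{out}$, $n_\mathrm{out}$, $f_\mathrm{out}^\star$ on the excluded side). Viewing $\crv$ as a simple cycle in the radial graph, it alternates between vertices and face-centers, so it traverses exactly $\ell$ faces of $G$. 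From $3$-regularity and Euler's formula applied to $G$ itself, $|V(G)| = 2m/3$ and $|F(G)| = m/3 + 2$.

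\medskip
First I would form the auxiliary plane multigraph $H_\mathrm{in}^+ := \enc(\crv,G) \cup \crv$, with $\crv$ treated as a cycle of $\ell$ arcs drawn through the traversed faces of $G$. Since $G$ is connected, so is $H_\mathrm{in}^+$; it has $n_\mathrm{in} + \ell$ vertices and $m_\mathrm{in} + \ell$ edges, and Euler's formula gives $2 + m_\mathrm{in} - n_\mathrm{in}$ faces. These split as the $f_\mathrm{in}^\star$ strictly enclosed faces of $G$, the $\ell$ enclosed halves of faces traversed by $\crv$, and a single big face absorbing everything outside $\crv$, so $f_\mathrm{in}^\star = 1 + m_\mathrm{in} - n_\mathrm{in} - \ell$. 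Next, applying handshake to $\enc(\crv,G)$, every strictly enclosed vertex contributes $3$ and each noose vertex $v$ contributes $a_v$ edge-incidences lying inside; because $\crv$ enters and exits $v$ through two different face-centers in the radial graph, $1 \leq a_v \leq 2$, which yields $\sum_v a_v \leq 2\ell$ and hence $n_\mathrm{in} \geq (2m_\mathrm{in} - 2\ell)/3$ via $2m_\mathrm{in} = 3n_\mathrm{in} + \sum_v a_v$.

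\medskip
Combining these bounds with the edge-balance hypothesis $m_\mathrm{in} \leq \tfrac{2}{3} m$ gives
\[
f_\mathrm{in}^\star \;\leq\; 1 + \tfrac{m_\mathrm{in}}{3} - \tfrac{\ell}{3} \;\leq\; 1 + \tfrac{2m}{9} \;<\; \tfrac{2m}{9} + \tfrac{4}{3} \;=\; \tfrac{2}{3}|F(G)|,
\]
and the symmetric calculation on the excluded side bounds $f_\mathrm{out}^\star$ in the same way. The main point of care I anticipate is verifying $a_v \geq 1$ when $v$ carries a loop: one must exclude the possibility that $\crv$ enters and exits $v$ via the face enclosed by the loop. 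Lemma~\ref{lem:3-reg-loops} handles this, since that enclosed face has only $v$ on its boundary, and entering it would force $\crv$ to revisit $v$, contradicting the fact that nooses visit each vertex at most once.
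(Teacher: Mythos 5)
Your argument is correct and is essentially the paper's own proof: both compute $|F(G)|=2+\tfrac{1}{3}|E(G)|$ from Euler's formula and $3$-regularity, and then apply Euler's formula together with a degree bound to the enclosed (and, symmetrically, the excluded) side to convert the $\tfrac{2}{3}$-edge-balance into the $\tfrac{2}{3}$-face-balance. The differences are only bookkeeping---you augment $\enc(\crv,G)$ by the $\ell$ noose arcs and count the traversed-face halves explicitly, whereas the paper applies Euler directly to $\enc(\crv,G)$, whose faces are the strictly enclosed ones plus one---and in fact your refinement $1\le a_v\le 2$ is never needed: since you drop the $-\ell/3$ term anyway, the crude estimate $a_v\le 3$ (the paper's maximum-degree-$3$ bound) already yields $f^{\star}_{\mathrm{in}}\le 1+\tfrac{m_{\mathrm{in}}}{3}<\tfrac{2}{3}|F(G)|$, so the closing paragraph about loops is superfluous.
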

\begin{proof}
Since $G$ is connected and $3$-regular, by Euler's formula we have that $|V(G)|-|E(G)|+|F(G)|=2$ and by degree count we have that $|E(G)|=\frac{3}{2}|V(G)|$. From this it follows that 
\begin{equation}\label{eq:FG}
|F(G)|=2+\frac{1}{3}|E(G)|.
\end{equation}

As $G$ is connected and noose $\crv$ visits every face of $G$ at most once, it follows that both $G_1:=\enc(G,\crv)$ and $G_2:=\exc(G,\crv)$ are connected. Hence, $G_1$ is a connected plane multigraph, possibly with loops, with maximum degree at most $3$. From Euler's formula it follows that $|V(G_1)|-|E(G_1)|+|F(G_1)|=2$, and from the degree bound we infer that $|E(G_1)|\leq \frac{3}{2}|V(G_1)|$. Therefore,
\begin{eqnarray*}
|F(G_1)| & = & 2-|V(G_1)|+|E(G_1)|\leq 2-\frac{2}{3}|E(G_1)|+|E(G_1)|=2+\frac{1}{3}|E(G_1)|\\
& \leq & 2+\frac{1}{3}\cdot \frac{2}{3}|E(G)|= \frac{2}{3}+\frac{2}{3}\left(2+\frac{1}{3}|E(G)|\right)=\frac{2}{3}+\frac{2}{3}|F(G)|;
\end{eqnarray*}
here, the second inequality follows from the assumption that $\crv$ is $\frac{2}{3}$-edge-balanced, whereas the last equality follows from (\ref{eq:FG}). Observe now that the number of faces of $G$ strictly enclosed by $\crv$ is equal to $|F(G_1)|-1$; indeed the face set of $G_1$ comprises exactly the faces of $G$ strictly enclosed by $\crv$ plus one extra new face that contains the curve $\crv$. Hence, the number of faces of $G$ strictly enclosed by $\crv$ is at most $\frac{2}{3}+\frac{2}{3}|F(G)|-1<\frac{2}{3}|F(G)|$. A symmetric argument shows that also the number of faces strictly excluded by $\crv$ is at most $\frac{2}{3}|F(G)|$.
\end{proof}

From Theorem~\ref{thm:edge-balanced-noose} and Lemma~\ref{lem:edge-to-faces} we infer the following corollary.

\begin{corollary}\label{thm:face-balanced-noose}
Let $G$ be a connected 3-regular multigraph with $n$ vertices, $m\geq 6$ edges, possibly with loops, and embedded on a sphere $\Sigma$. Then there exists a $\frac{2}{3}$-face-balanced noose w.r.t. $G$ that has length at most $\sqrt{4.5 n}$.
\end{corollary}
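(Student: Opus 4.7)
The plan is essentially just to chain together the two immediately preceding results, since the corollary is stated precisely as their combination. First I would invoke Theorem~\ref{thm:edge-balanced-noose} on the given graph $G$, which, since $G$ is a connected $3$-regular multigraph on $n$ vertices with $m \geq 6$ edges (possibly with loops) embedded on $\Sigma$, yields a noose $\crv$ with respect to $G$ of length at most $\sqrt{4.5 n}$ that is $\frac{2}{3}$-edge-balanced, i.e., both $|E(\enc(\crv,G))|$ and $|E(\exc(\crv,G))|$ are at most $\frac{2}{3} m$.

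Next I would apply Lemma~\ref{lem:edge-to-faces} to this very noose $\crv$. The hypotheses of Lemma~\ref{lem:edge-to-faces} (namely, that $G$ is a connected $3$-regular multigraph possibly with loops, sphere-embedded, and that $\crv$ is $\frac{2}{3}$-edge-balanced with respect to $G$) match our situation exactly, so we conclude that $\crv$ is $\frac{2}{3}$-face-balanced, meaning that the numbers of faces of $G$ strictly enclosed and strictly excluded by $\crv$ are both bounded by $\frac{2}{3}|F(G)|$. Since the length bound of $\sqrt{4.5 n}$ is preserved from the first step, $\crv$ has all the properties required by the corollary.

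There is really no obstacle here: the two invoked results were proved with exactly this chaining in mind. The only thing worth checking is that the hypotheses of the two statements align—and they do, verbatim: both allow loops, both assume $3$-regularity, connectivity, and a sphere embedding; Theorem~\ref{thm:edge-balanced-noose} additionally requires $m \geq 6$, which is exactly the hypothesis of the corollary, while Lemma~\ref{lem:edge-to-faces} imposes no further size restriction. Hence the corollary follows immediately.
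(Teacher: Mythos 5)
Your proposal is correct and coincides with the paper's own argument, which likewise obtains the corollary by applying Theorem~\ref{thm:edge-balanced-noose} to produce a $\frac{2}{3}$-edge-balanced noose of length at most $\sqrt{4.5\,n}$ and then invoking Lemma~\ref{lem:edge-to-faces} to conclude it is $\frac{2}{3}$-face-balanced. Your hypothesis check is also accurate, so nothing is missing.
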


\newcommand{\perm}{\Gamma}

\subsection{Voronoi separators}\label{sec:vorseps}

Armed with a good understanding of Voronoi diagrams and knowledge about the existence of short balanced nooses in plane graphs, we can combine these two ingredients to design a Divide\&Conquer algorithm for \covProb. More precisely, if $\SolObj$ is the optimum solution to the considered instance of \covProb, then the algorithm will iterate through all the possible candidates for balanced nooses of the Voronoi diagram $\dgm_\SolObj$. Each candidate noose will separate the instance into a number of subinstances, which will be solved recursively. For the correct selection of the balanced noose of $\dgm_\SolObj$, each of the subinstances will contain only at most $\frac{2}{3}k$ objects from the solution, and hence we will apply the algorithm recursively only to instances with at most this value of the parameter; This will ensure that the running time is as promised in Theorem~\ref{sec:algo}.

However, first we need to understand formally what are the ``possible candidates'' for balanced nooses, and in what sense they separate the instance at hand into subinstances; these two questions are the topics of the current and the following section. More precisely, we will investigate the properties of {\em{Voronoi separators}}, which are structures to be used as separators in the forthcoming Divide\&Conquer algorithm.

Whenever we have some normal subfamily of objects $\Fam\subseteq \Obj$, then we have a corresponding Voronoi partition $\Vorpar_{\Fam}$ and Voronoi diagram $\dgm_{\Fam}$. The intuition now is that the nooses w.r.t. $\dgm_{\Fam}$ can be projected back to the original graph $G$. That is, suppose in some noose $\crv$ we travel from some center of a face $\cn(f)$ to an adjacent vertex $u$. Then in $G$ this corresponds to traveling from the center $\cen(p)$ of the corresponding object $p$ to the corresponding branching point of $\dgm_{\Fam}$ using a path inside the tree $\extree_{\Fam}(p)$. The crucial observation now is that, for a noose $\crv$, the knowledge of the order in which $\crv$ visits faces of $\dgm_{\Fam}$ (corresponding to objects of $\Fam$) and branching points of $\dgm_{\Fam}$, is sufficient to deduce the whole projection to $G$. This is because when projecting we can simply use shortest paths.

\paragraph*{Voronoi separators.}

We move on to a formal argumentation. A {\em{Voronoi separator}} of {\em{length}} $r$ is a sequence $\sep$ of the following form:
\begin{eqnarray}
\sep=\langle & p_1,u_1,f_1,v_1, & \nonumber \\
& p_2,u_2,f_2,v_2, & \nonumber \\
& \ldots & \nonumber \\
& p_r,u_r,f_r,v_r & \rangle\label{vorsep}
\end{eqnarray}
For this sequence, we put the following requirements (from now on, indices behave cyclically):
\begin{enumerate}[label=(\alph*)]
\item\label{vs1} $p_1,p_2,\ldots,p_r$ are pairwise disjoint objects from $\Obj$, forming a normal family $\Obj(\sep)=\{p_1,p_2,\ldots,p_r\}$.
\item\label{vs2} $f_1,f_2,\ldots,f_r$ are pairwise different faces of $G$.
\item\label{vs3} For $t=1,2,\ldots,r$, $u_t$ and $v_t$ are two different vertices on face $f_t$. Moreover, $v_{t-1},u_t\in \Vorpar_{\Obj(\sep)}(p_t)$, for all $t=1,2,\ldots,r$. In other words, $v_{t-1}$ and $u_t$ are closest to the location of $p_t$ in terms of the distance measure $\phi(v,p)=\dist(v,\loc(p))-\rad(p)$, among all the locations of objects traversed by $\sep$.
\end{enumerate}
We shall treat two Voronoi separators that differ only in a cyclic shift of the sequence as the same separator. However, similarly as with nooses, reverting the sequence results in a different separator, which we shall denote by $\sep^{-1}$. Observe that, given a sequence like in \eqref{vorsep}, we can verify in polynomial time whether it satisfies all the required properties of a Voronoi separator.

For $t=1,2,\ldots,r$, let $P_t$ be the path from $\cen(p_t)$ to $u_t$ obtained by concatenating a path inside $\tree(p_t)$ from $\cen(p_t)$ to the vertex of $\loc(p_t)$ closest to $u_t$, with the shortest path from $u_t$ to $\loc(p_t)$. Similarly define $Q_{t+1}$ for $v_t$ and $p_{t+1}$. By property \ref{vs3}, paths $P_t$ and $Q_t$ are subpaths of the tree $\extree_{\Obj(\sep)}(p_t)$, for every $t=1,2,\ldots,t$. Consequently, paths $\{P_t,Q_t\}_{t=1,2,\ldots,r}$ are all pairwise vertex-disjoint, apart from paths $P_t$ and $Q_t$ that can intersect on a common prefix from the side of $\cen(p_t)$, and otherwise they are vertex-disjoint.

For a Voronoi separator $\sep$, the {\em{perimeter}} of $\sep$ is defined as $\perm(\sep)=\bigcup_{t=1}^r V(P_t)\cup V(Q_t)$, i.e., it is the union of the vertex sets of all the paths $P_t$ and $Q_t$. Moreover, we define a directed closed walk $\walk(\sep)$ in $G$ by concatenating consecutive paths $Q_1,P_1,Q_2,P_2,\ldots,Q_t,P_t$, where $P_t$ and $Q_{t+1}$ are joined using edge $u_tv_t$, which lies on the boundary of face $f_t$. Note that $V(\walk(\sep))=\perm(\sep)$. See Figure~\ref{fig:vorsep} for an example. 

\begin{figure}[t]
                \centering
                \def\svgwidth{0.5\columnwidth}
                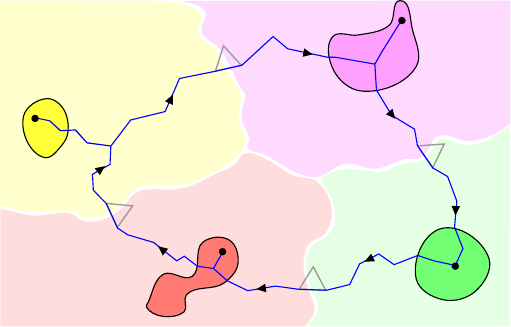
\caption{An exemplary Voronoi separator $\sep$ of length $4$, together with the corresponding walk $\walk(\sep)$ (blue). The Voronoi regions of $\Vorpar_{\Obj(\sep)}$ have been depicted in respective light colors.}\label{fig:vorsep}
\end{figure}

Note that, by the definition of a Voronoi separator, $\walk(\sep)$ is a simple cycle in $G$, possibly with simple paths attached to different vertices that correspond to common prefixes of paths $P_t$ and $Q_t$. In particular, it still holds that removal of $\walk(\sep)$ from $\Sigma$ partitions $\Sigma$ into two open disks, one on the right side of $\walk(\sep)$ (denoted $\enc(\sep)$), and the second on the left side (denoted $\exc(\sep)$). For any object $p\in \Obj\setminus \Obj(\sep)$, we shall say that $p$ is {\em{strictly enclosed}} by $\sep$ if all the vertices of $\loc(p)$ are embedded into $\enc(\sep)$ (in particular $\loc(p)\cap \perm(\sep)=\emptyset$). Similarly, $p\in \Obj\setminus \Obj(\sep)$ is {\em{strictly excluded}} by $\sep$ if all the vertices of $\loc(p)$ are embedded into $\exc(\sep)$. Finally, a client $q\in \Cli$ is {\em{enclosed}} by $\sep$, resp. {\em{excluded}} by $\sep$, if $\pla(q)$ is embedded into $\overline{\enc(\sep)}$, resp. into $\overline{\exc(\sep)}$; Note that thus every client $q$ such that $\pla(q)\in \perm(\sep)$ is both enclosed and excluded by $\sep$. Observe also that any path in $G$ that connects the location of a strictly enclosed object with the location of another, strictly excluded object, must necessarily cross walk $\walk(\sep)$, so it has to traverse a vertex of $\perm(\sep)$. The same observation holds also for paths connecting the location of a strictly enclosed object with the placement of an excluded client, and vice versa.

\paragraph*{Separators inherited from the diagram.}

Assume now that we have a normal subfamily of objects $\Fam\subseteq \Obj$, and let $\Vorpar=\Vorpar_{\Fam}$ and $\dgm=\dgm_{\Fam}$ be the corresponding Voronoi partition and diagram. Suppose further that $\crv$ is a noose w.r.t. $\dgm$. We define the corresponding Voronoi separator $\sep(\crv)$ as follows; see Figure~\ref{fig:prdg-vorsep} for a visualization.

\begin{figure}[t]
                \centering
                \def\svgwidth{0.7\columnwidth}
                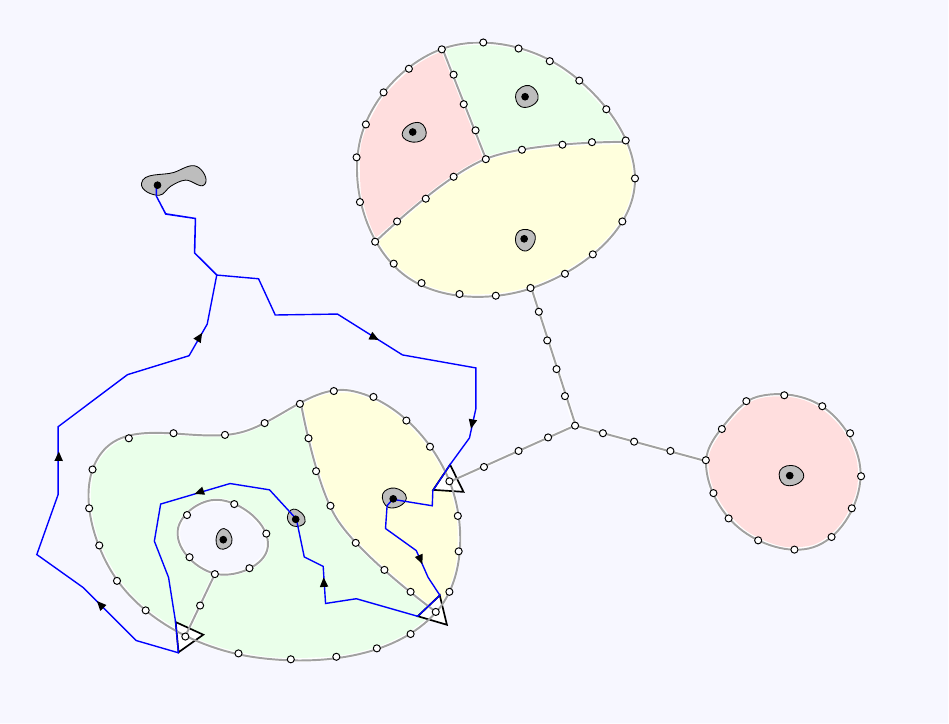
\caption{Separator $\sep(\crv)$ and corresponding walk $\walk(\sep(\crv))$ for the noose $\crv$ from Figure~\ref{fig:noose}.}\label{fig:prdg-vorsep}
\end{figure}

Suppose that $\crv$ visits faces and branching points $f_1^*,f_1,f_2^*,f_2,\ldots,f_r^*,f_r$ in this order, where $f_t^*$ are different faces of $\dgm$, and $f_t$ are different branching points of $\dgm$, that is, faces of $G$ (recall that $\dgm$ is constructed from the dual of $G$). By Lemmas~\ref{lem:pdiagram} and~\ref{lem:diagram}, faces of $\dgm$ correspond one-to-one to objects of $\Fam$. Let $p_1,p_2,\ldots,p_r\in \Fam$ be the objects that correspond to $f_1^*,f_2^*,\ldots,f_r^*$, respectively. Suppose that when entering $f_t$ from $f_t^*$, for some $t\in \{1,2,\ldots,r\}$, noose $\crv$ entered it between edges $\tilde{e}_1^*$ and $\tilde{e}_2^*$. Let $e_1^*$ and $e_2^*$ be the edges of the prediagram $\pdgm_{\Fam}$ incident to $f_t$ that got contracted onto $\tilde{e}_1^*$ and $\tilde{e}_2^*$ respectively; in case $\tilde{e}_1^*$ or $\tilde{e}_2^*$ is a loop, we choose the edge of $\pdgm_{\Fam}$ corresponding to the endpoint of the loop by which $f_t$ was entered from $f_t^*$. Then we define $u_t$ to be the common endpoint of the primal edges $e_1$ and $e_2$ corresponding to $e_1^*$ and $e_2^*$, respectively (recall that $f$ is a triangle, since $G$ is triangulated). We define $v_t$ analogically, based on how $\crv$ leaves branching point $f_t$ to face $f_{t+1}^*$. Note that since a noose, when crossing some branching point, never leaves this branching point to the same face and using the same direction as it entered, we have that $u_t\neq v_t$ for all $t=1,2,\ldots,r$. Then separator $\sep(\crv)$ is defined using $p_t$, $f_t$, $u_t$, and $v_t$ as in formula \eqref{vorsep}. Note that $\sep(\crv)$ has the same length as $\crv$, and moreover, $\sep(\crv)^{-1}=\sep(\crv^{-1})$. The following lemma encapsulates the main properties of separators inherited from the diagram that we shall use later on.

\begin{lemma}\label{lem:inherited-properties}
If $\crv$ is a noose w.r.t. $\dgm$, then:
\begin{enumerate}[label=(\roman*)]
\item\label{p:issep} $\sep:=\sep(\crv)$ is a Voronoi separator;
\item\label{p:vordgm} $\perm(\sep)\subseteq \bigcup_{p\in \Obj(\sep)} \Vorpar_{\Fam}(p)$;
\item\label{p:encl} For every object $p\in \Fam$ such that the face of $\dgm$ corresponding to $p$ is strictly enclosed by $\crv$ (resp. strictly excluded by $\crv$), we have that $p$ is strictly enclosed by $\sep$ (resp. strictly excluded by $\sep$).
\end{enumerate}
\end{lemma}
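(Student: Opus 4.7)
The plan is to verify the three claimed properties separately. Property~\ref{p:issep} reduces to checking the three axioms~\ref{vs1}--\ref{vs3} from the definition of a Voronoi separator. Axioms~\ref{vs1} and~\ref{vs2} follow almost directly from the fact that $\crv$, being a simple cycle in $\Rad(\dgm)$, visits each face and each vertex of $\dgm$ at most once: the objects $p_t$ are therefore pairwise distinct elements of $\Fam$ (and any subfamily of a normal family is normal), and the $f_t$'s are pairwise distinct branching points, i.e., distinct faces of $G$. For axiom~\ref{vs3}, the condition $u_t\neq v_t$ follows from the fact that when the noose crosses the branching point $f_t$ it never leaves through the same face of $\dgm$ and in the same direction by which it entered, while the condition $v_{t-1},u_t\in \Vorpar_{\Obj(\sep)}(p_t)$ is the substantive point. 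Here the plan is to locate $u_t$ inside the face $f^*_{p_t}$ of $\pdgm_\Fam$: since $e_1^*$ and $e_2^*$ bound $f^*_{p_t}$ at the vertex $f_t$ of $\pdgm_\Fam$, their wedge at $f_t$ is geometrically contained in the face of $G^*$ around $u_t$, so $u_t$ sits in $f^*_{p_t}$. Lemma~\ref{lem:pdiagram} then gives $u_t\in \Vorpar_\Fam(p_t)$, and Lemma~\ref{lem:inclusion} combined with $\Obj(\sep)\subseteq \Fam$ yields $u_t\in \Vorpar_{\Obj(\sep)}(p_t)$. The same reasoning applied to the exit side of the noose at $f_{t-1}$ covers $v_{t-1}$.

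For property~\ref{p:vordgm}, it is enough to unfold the definition $\perm(\sep)=\bigcup_t(V(P_t)\cup V(Q_t))$ and inspect each $P_t$ and $Q_t$ individually. The path $P_t$ is the concatenation of a path in $\tree(p_t)\subseteq G[\loc(p_t)]$ (whose vertices lie in $\loc(p_t)\subseteq \Vorpar_\Fam(p_t)$ by Lemma~\ref{lem:sanity}) with the unique shortest path from $u_t$ to $\loc(p_t)$ (whose vertices lie in $\Vorpar_\Fam(p_t)$ by Lemma~\ref{lem:connectivity}, using that $u_t\in\Vorpar_\Fam(p_t)$ was already established while proving property~\ref{p:issep}). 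The analogous statement for $Q_t$ follows identically, and taking the union over $t$ gives the claim.

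For property~\ref{p:encl}, the plan is a topological argument comparing $\crv$ and $\walk(\sep)$ on the sphere. Fix $p\in \Fam\setminus \Obj(\sep)$ whose face $f^*_p$ in $\dgm_\Fam$ is strictly enclosed by $\crv$, i.e., $f^*_p\subseteq \enc(\crv)$. First observe that $\loc(p)$ is disjoint from $\walk(\sep)$: by property~\ref{p:vordgm} the vertices of $\walk(\sep)$ lie in $\bigcup_{p'\in \Obj(\sep)} \Vorpar_\Fam(p')$, which is disjoint from $\Vorpar_\Fam(p)\supseteq \loc(p)$ by the disjointness of Voronoi regions of distinct objects. Consequently every vertex of $\loc(p)$ belongs to either $\enc(\sep)$ or $\exc(\sep)$, and since $G[\loc(p)]$ is connected and $\walk(\sep)$ separates $\Sigma$ into two disks, all of $\loc(p)$ lies on the same side. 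Hence it suffices to show that one vertex of $\loc(p)$ lies in $\enc(\sep)$. For this, construct a homotopy $H\colon S^1\times[0,1]\to\Sigma$ from $\crv$ to $\walk(\sep)$ whose support is contained in $\bigcup_{t}(\overline{f^*_{p_t}}\cup \overline{f_t})$: inside each $\overline{f^*_{p_t}}$ the arc of $\crv$ between the two crossings of the boundary of $f^*_{p_t}$ is deformed into the corresponding piece $Q_t\cdot P_t$ of $\walk(\sep)$, which is possible as both curves lie in the simply connected region $\overline{f^*_{p_t}}$; inside each closed triangle $\overline{f_t}$ the interior arc of $\crv$ across $f_t$ is swept onto the boundary edge $u_tv_t$. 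Any vertex of $\loc(p)$ is strictly inside $f^*_p$ and thus outside every $\overline{f^*_{p_t}}$; the only way it can sit in the support of the homotopy is if it coincides with the third vertex of some triangle $\overline{f_t}$ (i.e., the vertex opposite to $u_tv_t$), and the local deformation inside that triangle can be arranged to sweep the arc onto $u_tv_t$ without ever crossing this vertex. Thus the chosen vertex of $\loc(p)$ stays on a fixed side of the deforming curve, so belonging to $\enc(\crv)$ forces it to belong to $\enc(\sep)$; the exclusion half of the statement follows by applying the same argument to $\crv^{-1}$. The hard part will be the careful construction of the homotopy inside the closed triangles so as to avoid this ``third vertex'' degeneracy; everything else is bookkeeping on top of the structural facts in Lemmas~\ref{lem:pdiagram},~\ref{lem:inclusion},~\ref{lem:sanity}, and~\ref{lem:connectivity}.
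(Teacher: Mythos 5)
Your handling of parts \ref{p:issep} and \ref{p:vordgm} coincides with the paper's proof: the paper also reduces \ref{p:issep} to the membership $u_t,v_{t-1}\in \Vorpar_{\Fam}(p_t)$ (attributed to the construction together with Lemmas~\ref{lem:pdiagram} and~\ref{lem:diagram}, and transferred to $\Obj(\sep)$ via Lemma~\ref{lem:inclusion}), and proves \ref{p:vordgm} exactly as you do, from Lemma~\ref{lem:connectivity} together with $\loc(p_t)\subseteq \Vorpar_{\Fam}(p_t)$. Your wedge argument locating $u_t$ inside the face $f^*_{p_t}$ is just a more explicit version of what the paper leaves implicit, and it is correct.

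The divergence is in \ref{p:encl}, which the paper disposes of in one line, while you attempt a detailed homotopy argument; here there are two problems, one of emphasis and one of substance. First, what you call the hard part --- a vertex of $\loc(p)$ coinciding with the third vertex of some triangle $f_t$ --- is avoidable: the open edge $u_tv_t$ meets the embedding of $G^*$ only in its own dual edge, and the faces on the two sides of that dual edge are those containing $u_t$ and $v_t$, so $u_tv_t\subseteq \overline{f^*_{p_t}}\cup\overline{f^*_{p_{t+1}}}$; likewise the radial edges of $\crv$ incident to the branching point $f_t$ lie in $f^*_{p_t}$ and $f^*_{p_{t+1}}$. Hence both $\crv$ and $\walk(\sep)$, and any deformation between them, can be kept inside $\bigcup_t \overline{f^*_{p_t}}$, which is disjoint from the open face $f^*_p$ containing (by Lemma~\ref{lem:pdiagram}) every vertex of $\Vorpar_{\Fam}(p)\supseteq\loc(p)$, so no triangle needs to enter the support at all. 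Second, and more importantly, your concluding inference --- that a vertex avoided by the homotopy ``stays on a fixed side of the deforming curve'' --- is not valid as stated: the intermediate curves of a homotopy need not be simple, so they have no well-defined sides, and even as a homotopy-invariance claim it fails, since the sphere minus a single point is simply connected, so a free homotopy avoiding only that point carries no information about which complementary disk of the terminal curve contains it. To make this step sound you must either keep a second reference point, lying on the opposite side of $\crv$, outside the deformed region and argue via winding numbers (or the annulus $\pi_1$) in the twice-punctured sphere, or upgrade the homotopy to an isotopy through simple closed curves --- which additionally requires first replacing $\walk(\sep)$, which is not injective (it may traverse attached prefix paths twice), by its underlying simple cycle, noting that this does not change on which side a vertex off the walk lies. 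With either repair your argument goes through, so the gap is real but local and fixable.
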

\begin{proof}
For \ref{p:issep}, the only non-trivial condition is the last one, that $u_t,v_{t-1}\in \Vorpar_{\Obj(\sep)}(p_t)$ for every $t=1,2,\ldots,r$. However, from the construction of $u_t$ and $v_{t-1}$ and properties of the Voronoi diagram (see Lemmas~\ref{lem:pdiagram} and~\ref{lem:diagram}), it follows that $u_t,v_{t-1}\in \Vorpar_{\Fam}(p_t)$. Since $\Obj(\sep)\subseteq \Fam$, the fact that also $u_t,v_{t-1}\in \Vorpar_{\Obj(\sep)}(p_t)$ follows from Lemma~\ref{lem:inclusion}.

For \ref{p:vordgm}, in the paragraph above we have argued that $u_t,v_{t-1}\in \Vorpar_{\Fam}(p_t)$ for all $t=1,2,\ldots,r$. By Lemma~\ref{lem:connectivity}, also the unique shortest paths from $u_t$ and $v_{t-1}$ to $\loc(p_t)$ are entirely contained in $\Vorpar_{\Fam}(p_t)$. As $\loc(p_t)\subseteq \Vorpar_{\Fam}(p_t)$, we have that $V(Q_t),V(P_t)\subseteq \Vorpar_{\Fam}(p_t)$, and hence $\perm(\sep)\subseteq \bigcup_{p\in \Obj(\sep)} \Vorpar_{\Fam}(p)$.

Property \ref{p:encl} follows directly from \ref{p:vordgm} and the construction of the Voronoi diagram $\dgm$.
\end{proof}

\subsection{Interaction graph and separability of the problem}

\newcommand{\IntGraph}{\mathcal{L}}
\newcommand{\banned}{\mathbf{ban}}
\newcommand{\covered}{\mathbf{cov}}
\newcommand{\grds}{\mathcal{Q}}
\newcommand{\grdsep}{\mathbb{X}}

Suppose $(G,\Obj,\Cli,k)$ is an instance of \covProb. By the {\em{interaction graph}} $\IntGraph=\IntGraph(G,\Obj,\Cli,k)$ we mean a graph with vertex set $\Obj\cup \Cli$, where (a) for $p\in \Obj$ and $q\in \Cli$ we put $pq\in E(\IntGraph)$ if and only if object $p$ covers client $q$, and (b) for $p,p'\in \Obj$ we put $pp'\in E(\IntGraph)$ if and only if $p$ and $p'$ contradict the normality requirement, i.e., family $\{p,p'\}$ is not normal.

We now define the final abstraction of a separator, which we call a {\em{guarded separator}}. A guarded separator is simply a pair $\grdsep=(\grds,\perm)$, where $\grds\subseteq \Obj$ is a normal subfamily of objects and $\perm\subseteq V(G)$ is an arbitrary subset of vertices. We will use operators $\grds(\grdsep)$ and $\perm(\grdsep)$ to extract the first and the second coordinate of pair $\grdsep$, respectively. The {\em{length}} of a guarded separator is simply $|\grds(\grdsep)|$. Note that every Voronoi separator $\sep$ naturally induces a guarded separator $\grdsep(\sep)=(\Obj(\sep),\perm(\sep))$; this will be the main source of guarded separators in our algorithm.

Let $\grdsep=(\grds,\perm)$ be a guarded separator. We say that a client $q\in \Cli$ is {\em{covered}} by $\grdsep$ if $q$ is covered by an object belonging to $\grds$. We also say that an object $p\in \Obj\setminus \grds$ is {\em{banned}} by $\grdsep$ if either $\grds\cup \{p\}$ is not normal, or there exists $v\in \perm$ such that $\dist(v,\loc(p))-\rad(p)<\dist(v,\loc(p'))-\rad(p')$, where $p'$ is the object of $\grds$ closest to $v$ in terms of the distance measure $\phi(v,p')=\dist(v,\loc(p'))-\rad(p')$. An object $p\in \Obj\setminus \grds$ that is not banned by $\grdsep$ is said to be {\em{allowed}} by $p$. By $\covered(\grdsep)$ we denote the set of clients covered by $\grdsep$, and by $\banned(\grdsep)$ the set of objects banned by $\grdsep$. Obviously, by Lemma~\ref{lem:sanity} every object $p\in \Obj\setminus \grds$ such that $\loc(p)\cap \perm\neq \emptyset$ is banned by $\grdsep$. 


We say that two allowed objects $p,p'\in \Obj\setminus \grds$ are {\em{separated}} by $\grdsep$ if the shortest path in $G$ between $\loc(p)$ and $\loc(p')$ traverses a vertex of $\perm(\grdsep)$. Similarly, an allowed object $p\in \Obj\setminus \grds$ and a client $q\in \Cli$ are {\em{separated}} by $\grdsep$ if the shortest path between $\loc(p)$ and $\pla(q)$ traverses a vertex of $\perm(\grdsep)$. 

The intuition is that if a Voronoi separator $\sep$ is inherited from some solution, then the objects banned by $\grdsep(\sep)$ for sure are not used by the solution. This is because the inclusion of any banned object would either contradict the normality, or the fact that walk $\walk(\sep)$ traverses only the Voronoi regions corresponding to $\Obj(\sep)$ in the diagram induced by the whole solution. Thus, the clients covered by $\sep$ and the object banned by $\sep$ form a ``border'' that separates the part of the solution enclosed by $\sep$ from the part excluded by $\sep$. Here, by ``separates'' we mean the definition of separation for the guarded separator $\grdsep(\sep)$. In the following two lemmas we formalize this separation property.

\begin{lemma}\label{lem:sep-oc}
Suppose $\grdsep=(\grds,\perm)$ is a guarded separator. Suppose further that an allowed object $p\in \Obj\setminus \grds$ and a client $q\in \Cli$ are separated by $\grdsep$. Then the following implication holds: if $p$ covers $q$, then $q$ is covered by $\grdsep$.
\end{lemma}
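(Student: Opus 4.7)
The plan is to formalize the triangle-inequality argument that was already given informally in Section~\ref{sec:geom} for covering points with disks, lifted to the additively weighted distance measure $\phi(v,p') = \dist(v,\loc(p')) - \rad(p')$ used in the definition of banned objects. Under the hypothesis that $p$ covers $q$ we have $\dist(\pla(q), \loc(p)) \le \sen(q) + \rad(p)$, so it suffices to exhibit some $p' \in \grds$ with $\dist(\pla(q), \loc(p')) \le \sen(q) + \rad(p')$.

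First, I would take the (unique, by assumption \ass) shortest path $P$ from $\loc(p)$ to $\pla(q)$, whose total weight is $\dist(\loc(p), \pla(q))$. By the hypothesis that $p$ and $q$ are separated by $\grdsep$, path $P$ traverses some vertex $v \in \perm$. Because $v$ lies on a shortest path from $\loc(p)$ to $\pla(q)$, we obtain the identity
\[
\dist(\loc(p), \pla(q)) \;=\; \dist(\loc(p), v) + \dist(v, \pla(q)).
\]
Combined with the covering assumption, this gives the inequality $\dist(\loc(p),v) + \dist(v,\pla(q)) \le \sen(q) + \rad(p)$.

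Next, I would use the fact that $p$ is allowed by $\grdsep$. By the definition of ``banned,'' every vertex $v \in \perm$ satisfies $\dist(v,\loc(p)) - \rad(p) \ge \phi(v,p_0)$, where $p_0 \in \grds$ is the $\phi$-closest object of $\grds$ to $v$; in particular there exists $p' \in \grds$ with
\[
\dist(v, \loc(p')) - \rad(p') \;\le\; \dist(v, \loc(p)) - \rad(p).
\]
Now I chain these together via the triangle inequality:
\[
\dist(\pla(q), \loc(p')) \;\le\; \dist(\pla(q), v) + \dist(v, \loc(p')) \;\le\; \dist(\pla(q), v) + \dist(v, \loc(p)) - \rad(p) + \rad(p').
\]
Using the earlier equality to replace $\dist(\pla(q), v) + \dist(v, \loc(p))$ by $\dist(\loc(p), \pla(q)) \le \sen(q) + \rad(p)$, the $\rad(p)$ terms cancel and we obtain $\dist(\pla(q), \loc(p')) \le \sen(q) + \rad(p')$, which is exactly the statement that $p'$ covers $q$. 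Hence $q \in \covered(\grdsep)$, as required.

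The argument is essentially a one-line chain of triangle inequalities and poses no real obstacle; the only point requiring a small amount of care is ensuring that the shortest $\loc(p)$--$\pla(q)$ path actually decomposes at $v$ with equality, which follows immediately from the definition of distances together with the fact that $v$ lies on the shortest path guaranteed by the separation hypothesis.
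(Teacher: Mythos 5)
Your proposal is correct and follows essentially the same argument as the paper's proof: take the shortest $\loc(p)$--$\pla(q)$ path, locate a vertex $v\in\perm$ on it, invoke the definition of an allowed object to obtain $p'\in\grds$ with $\dist(v,\loc(p'))-\rad(p')\le\dist(v,\loc(p))-\rad(p)$, and chain the triangle inequality with the covering assumption. The only cosmetic difference is that the paper states the allowed-object inequality strictly while you use the non-strict form, which suffices equally well for the conclusion.
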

\begin{proof}
Suppose $p$ covers $q$, and let $P$ be the shortest path in $G$ between $\loc(p)$ and $\pla(q)$. By the definition of separation, we have that $P$ traverses a vertex of $\perm$, say $v$. Let $p'$ be the object that is closest to $v$ among objects of $\grds$ in terms of $\phi(v,p')$. Since $p$ is allowed by $\grdsep$, we have that 
\begin{equation}\label{oc1}
\dist(v,\loc(p))-\rad(p)>\dist(v,\loc(p'))-\rad(p').
\end{equation}
Since $p$ covers $q$, we also have that 
\begin{equation}\label{oc2}
\dist(\pla(q),\loc(p))\leq \sen(q)+\rad(p).
\end{equation}
Finally, since $P$ is the shortest path between $\loc(p)$ and $\pla(q)$, we have that
\begin{equation}\label{oc3}
\dist(\pla(q),\loc(p))=\dist(v,\loc(p))+\dist(\pla(q),v).
\end{equation}
Using (\ref{oc1}), (\ref{oc2}), and (\ref{oc3}), we infer that
\begin{eqnarray*}
\dist(\pla(q),\loc(p')) & \leq & \dist(v,\pla(q))+\dist(v,\loc(p'))\\
& < & \dist(v,\pla(q))+\dist(v,\loc(p))+\rad(p')-\rad(p)\\
& = & \dist(\pla(q),\loc(p))+\rad(p')-\rad(p)\\
& \leq & \sen(q)+\rad(p').
\end{eqnarray*}
This means that client $q$ is covered by $p'\in \grds$.
\end{proof}

\begin{lemma}\label{lem:sep-oo}
Suppose $\grdsep=(\grds,\perm)$ is a guarded separator. Suppose further that allowed objects $p_1,p_2\in \Obj\setminus \grds$ are separated by $\grdsep$. Then the family $\{p_1,p_2\}$ is normal.
\end{lemma}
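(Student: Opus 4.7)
The plan is to argue by contradiction, in a way that parallels the structure of the proof of Lemma~\ref{lem:sep-oc} but uses different ingredients at one step. I would suppose, towards contradiction, that $\{p_1,p_2\}$ is not normal, and without loss of generality $\rad(p_1)\ge\rad(p_2)$; then
$$\dist(\loc(p_1),\loc(p_2)) \le \rad(p_1) - \rad(p_2),$$
which I will ultimately contradict by producing a strictly larger lower bound on $\dist(\loc(p_1),\loc(p_2))$.

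First, I would invoke the separation hypothesis: by definition, the (unique, by \ass) shortest path from $\loc(p_1)$ to $\loc(p_2)$ visits some $v\in\perm$, so
$$\dist(\loc(p_1),\loc(p_2)) = \dist(v,\loc(p_1)) + \dist(v,\loc(p_2)).$$
Let $p'\in\grds$ minimise $\phi(v,p'') := \dist(v,\loc(p''))-\rad(p'')$ over $\grds$; since $p_1$ is allowed (and using \ass to make the inequality strict), this gives
$$\dist(v,\loc(p_1)) > \rad(p_1) + \dist(v,\loc(p')) - \rad(p').$$
This is the direct analogue of the first step of Lemma~\ref{lem:sep-oc}.

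The subtle step concerns $p_2$. Using the ``$p_2$ allowed'' inequality symmetrically and summing both would only yield $\dist(\loc(p_1),\loc(p_2)) > \rad(p_1)+\rad(p_2)+2\phi(v,p')$, which is too weak when $\phi(v,p')<-\rad(p_2)$ (intuitively, when the guard $p'$ already dominates $v$ with margin larger than $\rad(p_2)$). Instead, I would break the symmetry and replace the allowance inequality for $p_2$ by the triangle inequality
$$\dist(v,\loc(p_2)) \;\ge\; \dist(\loc(p'),\loc(p_2)) - \dist(v,\loc(p')),$$
combined with the fact that $p_2$ being allowed forces $\grds\cup\{p_2\}$ to be normal, and hence $\dist(\loc(p'),\loc(p_2)) > |\rad(p')-\rad(p_2)| \ge \rad(p')-\rad(p_2)$.

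Summing the bound from ``$p_1$ allowed'' with the triangle/normality bound for $p_2$, the terms $\dist(v,\loc(p'))$ and $\rad(p')$ cancel, leaving
$$\dist(\loc(p_1),\loc(p_2)) > \rad(p_1) + \dist(\loc(p'),\loc(p_2)) - \rad(p') > \rad(p_1) - \rad(p_2),$$
contradicting the non-normality inequality. The main obstacle is therefore identifying the right mix of ingredients rather than any hard calculation: the guard $p'$ may be ``closer than expected'' to $v$ in the $\phi$ sense, so $p_2$'s allowance alone is insufficient, and it is the normality of $\grds\cup\{p_2\}$ (inherited from $p_2$ being allowed) that saves the argument by anchoring $\loc(p_2)$ away from $\loc(p')$.
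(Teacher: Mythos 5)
Your proof is correct and follows essentially the same route as the paper's: the paper also splits $\dist(\loc(p_1),\loc(p_2))$ at a vertex $v\in\perm$ on the shortest path and obtains the bound $\dist(\loc(p_1),\loc(p_2))>\rad(p_1)-\rad(p_2)$ by adding the allowance inequality for $p_1$ at $v$ to the triangle inequality combined with the normality of $\{p',p_2\}$ (inherited from $p_2$ being allowed), which is exactly your asymmetric mix of ingredients. The only difference is presentational: the paper derives both directed inequalities via the two symmetric combinations and concludes normality directly, whereas you fix $\rad(p_1)\geq\rad(p_2)$ WLOG and argue by contradiction.
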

\begin{proof}
Let $P$ be the shortest path in $G$ between $\loc(p_1)$ and $\loc(p_2)$. By the definition of separation, we have that $P$ traverses some vertex of $\perm$, say $v$. Let $p'$ be the object of $\grds$ that is closest to $v$, in terms of $\phi(v,p')$. Since both $p_1$ and $p_2$ are allowed by $\grdsep$, we have that
\begin{align}
\dist(\loc(p_1),v)+\dist(\loc(p'),v)&\geq \dist(\loc(p_1),\loc(p')) \geq |\rad(p_1)-\rad(p')|\label{oo1}\\
\dist(\loc(p_2),v)+\dist(\loc(p'),v)&\geq \dist(\loc(p_2),\loc(p')) \geq |\rad(p_2)-\rad(p')|\label{oo2}\\
\dist(\loc(p_1),v)-\rad(p_1)&> \dist(\loc(p'),v)-\rad(p')\label{oo3}\\
\dist(\loc(p_2),v)-\rad(p_2)&> \dist(\loc(p'),v)-\rad(p').\label{oo4}
\end{align}
Since $P$ is the shortest path between $\loc(p_1)$ and $\loc(p_2)$, we have
\begin{equation}\label{oo5}
\dist(\loc(p_1),\loc(p_2))=\dist(v,\loc(p_1))+\dist(v,\loc(p_2)).
\end{equation}
By adding (\ref{oo2}) and (\ref{oo3}), we obtain:
\begin{eqnarray*}
& & \dist(\loc(p_1),v)+\dist(\loc(p_2),v)+\dist(\loc(p'),v)-\rad(p_1)\\
& >& |\rad(p_2)-\rad(p')|+\dist(\loc(p'),v)-\rad(p')\\
& \geq & \dist(\loc(p'),v)-\rad(p_2).
\end{eqnarray*}
Thus, by (\ref{oo5}) we obtain that
\begin{equation}\label{oo6}
\dist(\loc(p_1),\loc(p_2))\geq \rad(p_1)-\rad(p_2).
\end{equation}
By adding equations (\ref{oo1}) and (\ref{oo4}), we obtain:
\begin{eqnarray*}
& & \dist(\loc(p_1),v)+\dist(\loc(p_2),v)+\dist(\loc(p'),v)-\rad(p_2)\\
& > & |\rad(p_1)-\rad(p')|+\dist(\loc(p'),v)-\rad(p')\\
& \geq & \dist(\loc(p'),v)-\rad(p_1)
\end{eqnarray*}
Again by (\ref{oo5}) we obtain that
\begin{equation}\label{oo7}
\dist(\loc(p_1),\loc(p_2))\geq \rad(p_2)-\rad(p_1).
\end{equation}
Equations (\ref{oo6}) and (\ref{oo7}) together imply that $\dist(\loc(p_1),\loc(p_2))\geq |\rad(p_1)-\rad(p_2)|$, which means that $\{p_1,p_2\}$ is normal.
\end{proof}

For a guarded separator $\grdsep$, we will consider the modified interaction graph $\IntGraph(\grdsep)=\IntGraph-(\grds(\grdsep)\cup \banned(\grdsep)\cup \covered(\grdsep))$. Let $\Fam\subseteq \Obj$ be a normal subfamily of objects, and let $\Vorpar=\Vorpar_{\Fam}$ and $\dgm=\dgm_{\Fam}$ be the corresponding Voronoi partition and diagram. We will say that a guarded separator $\grdsep$ is {\em{compatible}} with $\Fam$ if $\grds(\grdsep)\subseteq \Fam$ and $\perm(\grdsep)\subseteq \bigcup_{p\in \grds(\grdsep)} \Vorpar_{\Fam}(p)$. We will moreover say that a guarded separator $\grdsep$ compatible with $\Fam$ is {\em{$\alpha$-interaction-balanced w.r.t. $\Fam$}} if every connected component of $\IntGraph(\grdsep)$ contains at most $\alpha|\Fam|$ objects from $\Fam$. Observe that if $\grdsep$ is compatible with $\Fam$, then $\Fam\cap \banned(\grdsep)=\emptyset$ by the definitions of banned objects and compatibility.

\begin{lemma}\label{lem:noose-guarded-sep}
If $\crv$ is an $\alpha$-face-balanced noose in $\dgm$, then $\grdsep(\sep(\crv))$ is compatible and $\alpha$-interaction-balanced w.r.t. $\Fam$.
\end{lemma}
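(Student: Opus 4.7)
The plan has two parts. For compatibility, let $\sep=\sep(\crv)$ and $\grdsep=\grdsep(\sep)=(\Obj(\sep),\perm(\sep))$. By construction, the faces of $\dgm$ that $\crv$ visits correspond, via the bijection of Lemmas~\ref{lem:pdiagram} and~\ref{lem:diagram}, to the objects $\Obj(\sep)\subseteq \Fam$. The second compatibility requirement, $\perm(\sep)\subseteq \bigcup_{p\in \Obj(\sep)}\Vorpar_\Fam(p)$, is exactly Lemma~\ref{lem:inherited-properties}\ref{p:vordgm}.

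For balancedness, first observe that $\Fam\cap \banned(\grdsep)=\emptyset$: for every $v\in \perm(\sep)$ we have $v\in \Vorpar_\Fam(p'')$ for some $p''\in \Obj(\sep)$, which means that for every $p\in \Fam\setminus \Obj(\sep)$ the inequality $\dist(v,\loc(p''))-\rad(p'')<\dist(v,\loc(p))-\rad(p)$ holds, and the fact that $\Fam$ is normal rules out the other way in which $p$ could be banned. Hence the vertex set of $\IntGraph(\grdsep)$ contains exactly the objects of $\Fam\setminus \Obj(\sep)$ among objects from $\Fam$.

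Now the walk $\walk(\sep)$ is embedded in $G$ and, as its removal from the sphere produces the two open disks $\enc(\sep)$ and $\exc(\sep)$, removing the vertices $\perm(\sep)$ from $G$ disconnects the graph into the subgraph embedded in $\enc(\sep)$ and the subgraph embedded in $\exc(\sep)$. Since every allowed object $p$ satisfies $\loc(p)\cap \perm(\sep)=\emptyset$ and $G[\loc(p)]$ is connected, $\loc(p)$ lies entirely on one side; call the corresponding objects $\Obj'_{\text{in}}$ and $\Obj'_{\text{out}}$. By Lemma~\ref{lem:inherited-properties}\ref{p:encl}, the objects of $\Fam\setminus\Obj(\sep)$ in $\Obj'_{\text{in}}$ correspond exactly to the faces of $\dgm$ strictly enclosed by $\crv$, and analogously for $\Obj'_{\text{out}}$. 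Since $|F(\dgm)|=|\Fam|$ (Lemma~\ref{lem:diagram}) and $\crv$ is $\alpha$-face-balanced, each of these two sets contains at most $\alpha|\Fam|$ objects from $\Fam$.

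The key step is now to show that every connected component of $\IntGraph(\grdsep)$ is either contained in $\Obj'_{\text{in}}\cup \Cli'_{\text{in}}$, or in $\Obj'_{\text{out}}\cup \Cli'_{\text{out}}$, or is a single isolated client placed on $\perm(\sep)$; then the bound on the number of objects from $\Fam$ in any component follows from the previous paragraph. To this end, I will rule out any edge of $\IntGraph(\grdsep)$ that would ``cross'' $\perm(\sep)$. For an object--object edge between $p_1\in \Obj'_{\text{in}}$ and $p_2\in \Obj'_{\text{out}}$, every $\loc(p_1)$--$\loc(p_2)$ path in $G$ must traverse a vertex of $\perm(\sep)$, so $p_1,p_2$ are separated by $\grdsep$; Lemma~\ref{lem:sep-oo} then gives that $\{p_1,p_2\}$ is normal, so there is no such edge in $\IntGraph$. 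For an object--client edge with $p$ on one side and $\pla(q)$ on the other side (strictly), the shortest path from $\loc(p)$ to $\pla(q)$ traverses $\perm(\sep)$, so $p,q$ are separated; Lemma~\ref{lem:sep-oc} forces $q\in \covered(\grdsep)$, contradicting $q$ being a vertex of $\IntGraph(\grdsep)$. Finally, if $\pla(q)\in \perm(\sep)$ and $q\notin \covered(\grdsep)$ is incident with an allowed object $p$ that covers it, the separation condition holds trivially and again Lemma~\ref{lem:sep-oc} yields a contradiction; hence such a client is isolated in $\IntGraph(\grdsep)$. I do not expect any real obstacle: once the picture of the walk $\walk(\sep)$ as a topological separator is combined with Lemmas~\ref{lem:sep-oc},~\ref{lem:sep-oo}, and~\ref{lem:inherited-properties}, the argument is essentially bookkeeping.
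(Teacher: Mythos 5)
Your proof is correct and follows essentially the same route as the paper: compatibility via Lemma~\ref{lem:inherited-properties}, the crossing-edge argument via the topological separation by $\perm(\sep)$ together with Lemmas~\ref{lem:sep-oo} and~\ref{lem:sep-oc}, and the balance bound via Lemma~\ref{lem:inherited-properties}\ref{p:encl} and the $\alpha$-face-balance of $\crv$. The only differences are cosmetic bookkeeping (you spell out $\Fam\cap\banned(\grdsep)=\emptyset$ directly and treat perimeter clients as isolated vertices, whereas the paper folds them into the enclosed/excluded classification).
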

\begin{proof}
Let $\sep=\sep(\crv)$ and $\grdsep=\grdsep(\sep)$. Lemma~\ref{lem:inherited-properties} (\ref{p:issep} and \ref{p:vordgm}) implies that $\grdsep$ is compatible with $\Fam$, so by the definition of banned objects we have that $\Fam\cap \banned(\grdsep)=\emptyset$. Hence $\Fam\setminus \Obj(\sep)\subseteq V(\IntGraph(\grdsep))$. 

Observe that whenever we have two allowed objects $p_1,p_2\in \Fam\setminus \Obj(\sep)$ such that $p_1$ is strictly enclosed by $\sep$ and $p_2$ is strictly excluded by $\sep$, then $\loc(p_1)$ and $\loc(p_2)$ lie entirely in two different connected components of $G-\perm(\sep)$, since any path connecting $\loc(p_1)$ and $\loc(p_2)$ has to cross the walk $\walk(\sep)$. Therefore, any such objects $p_1,p_2$ are separated by $\grdsep$, and Lemma~\ref{lem:sep-oo} implies that they are not adjacent in $\IntGraph(\grdsep)$. Similarly, if an allowed object $p\in \Fam\setminus \Obj(\sep)$ is strictly enclosed by $\sep$ and a client $q\in \Cli$ is excluded by $\sep$ (or vice versa), then they also lie in different connected components of $G-\perm(\sep)$ (or $q$ lies on $\perm(\sep)$), and hence are separated by $\grdsep$. Then Lemma~\ref{lem:sep-oc} implies that $p$ and $q$ are not adjacent in $\IntGraph(\grdsep)$.

Concluding, every connected component of $\IntGraph(\grdsep)$ either consists only of objects strictly enclosed by $\sep$ and clients enclosed by $\sep$, or of objects strictly excluded by $\sep$ and clients excluded by $\sep$. Lemma~\ref{lem:inherited-properties} \ref{p:encl} ensures that objects of $\Fam$ corresponding to faces of $\dgm$ strictly enclosed (resp. excluded) by $\crv$ are exactly the objects of $\Fam$ that are strictly enclosed (resp. excluded) by $\sep$. Since $\crv$ is $\alpha$-face-balanced, we infer that there are at most $\alpha|\Fam|$ objects of $\Fam$ that are strictly enclosed by $\sep$, and the same holds also for objects strictly excluded by $\Fam$. Hence, every connected component of $\IntGraph(\grdsep)$ can contain at most $\alpha|\Fam|$ objects of $\Fam$, and $\grdsep$ is $\alpha$-interaction-balanced. 
\end{proof}

We are finally ready to prove the main result of this section, that is, an enumeration algorithm for candidates for balanced guarded separators. This is the result whose simplified variant was Lemma~\ref{lem:guardedenum0}.

\begin{theorem}\label{thm:enumeration}
There exists an algorithm that, given an instance $\Ii=(G,\Obj,\Cli,k)$ of \covProb with $k\geq 4$, enumerates a family $\Vorsepfam$ of guarded separators with the following properties:
\begin{enumerate}[label=(\roman*)]
\item\label{p:size} $|\Vorsepfam|\leq (2\onum)^{15\sqrt{k}}$; and
\item\label{p:balanced} for every normal subfamily $\Fam\subseteq \Obj$ of cardinality exactly $k$, there exists a guarded separator $\grdsep\in \Vorsepfam$ that is $\frac{2}{3}$-interaction-balanced for $\Fam$.
\end{enumerate}
The algorithm works in total time $(2\onum)^{15\sqrt{k}}\cdot (\onum \cnum n)^{\Oh(1)}$ and outputs the guarded separators of $\Vorsepfam$ one by one, using additional (working tape) space $(\onum \cnum n)^{\Oh(1)}$.
\end{theorem}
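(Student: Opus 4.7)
The plan is to enumerate $\Vorsepfam$ as the set of guarded separators $\grdsep(\sep)$ arising from all Voronoi separators $\sep$ (in the sense of Section~\ref{sec:vorseps}) whose branching faces belong to the family $\impFaces$ of important faces from Theorem~\ref{thm:impFaces}, and whose length is at most $r_{\max} := \lceil 3\sqrt{k}\rceil$. The algorithm iterates over all candidate sequences of the shape~\eqref{vorsep} of length $r \leq r_{\max}$, discards the ones that do not satisfy conditions \ref{vs1}--\ref{vs3}, and outputs the guarded separator derived from each surviving $\sep$.

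First I would bound the size of the enumeration. A Voronoi separator $\sep$ of length $r$ is described by a sequence $\langle p_t, u_t, f_t, v_t\rangle_{t=1}^r$. For each $t$ there are $\onum$ choices for $p_t \in \Obj$, at most $\onum^4$ choices for $f_t \in \impFaces$ (by Theorem~\ref{thm:impFaces}), and since $G$ is triangulated at most $3\cdot 2 = 6$ choices for the ordered pair $(u_t, v_t)$ of distinct vertices on $\bnd f_t$. Hence the number of candidate sequences of length $r$ is at most $(6\onum^5)^r$. Checking conditions \ref{vs1}--\ref{vs3} for a given sequence reduces to a polynomial number of shortest-path and normality computations, costing $(\onum\cnum n)^{\Oh(1)}$ time. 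Summing over $r \in \{1,\ldots,r_{\max}\}$ and using the crude bound $r_{\max}\cdot 6^{3\sqrt{k}} \leq 2^{15\sqrt{k}}$ (valid already for $k\geq 4$), both the total output size and the total running time are at most $(2\onum)^{15\sqrt{k}}\cdot (\onum\cnum n)^{\Oh(1)}$. Since sequences are generated one-by-one, the working space is polynomial in the input, as required.

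Next I would verify the balancedness property. Fix any normal $\Fam\subseteq \Obj$ with $|\Fam|=k\geq 4$. By Lemma~\ref{lem:diagram}, the diagram $\dgm_{\Fam}$ is a connected $3$-regular sphere-embedded multigraph (possibly with loops) having $2k-4$ vertices and $3k-6 \geq 6$ edges. Corollary~\ref{thm:face-balanced-noose} then supplies a $\tfrac{2}{3}$-face-balanced noose $\crv$ with respect to $\dgm_{\Fam}$ of length at most $\sqrt{4.5(2k-4)} \leq 3\sqrt{k}$. Consider the Voronoi separator $\sep(\crv)$ defined in Section~\ref{sec:vorseps}: by construction its length equals that of $\crv$, its objects come from $\Fam\subseteq \Obj$, and its faces are exactly the branching points of $\dgm_{\Fam}$ traversed by $\crv$, which by Theorem~\ref{thm:impFaces} lie in $\impFaces$. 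Hence $\sep(\crv)$ is one of the sequences enumerated above, so $\grdsep(\sep(\crv)) \in \Vorsepfam$. Lemma~\ref{lem:noose-guarded-sep}, applied to the $\tfrac{2}{3}$-face-balanced $\crv$, yields that $\grdsep(\sep(\crv))$ is $\tfrac{2}{3}$-interaction-balanced w.r.t.~$\Fam$, as required.

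The only nontrivial obstacle is getting the $\onum^{\Oh(\sqrt{k})}$ bound rather than the weaker $n^{\Oh(\sqrt{k})}$ bound that would follow from letting $f_t$ range over all faces of $G$. This is precisely what Theorem~\ref{thm:impFaces} purchases for us: restricting branching faces to $\impFaces$, of size at most $\onum^4$, keeps the per-step branching polynomial in $\onum$. Once that is in place, the counting and the correctness proof are routine assembly of the tools developed in Sections~\ref{sec:impfaces}--\ref{sec:vorseps}, with Lemma~\ref{lem:noose-guarded-sep} providing the bridge between face-balancedness of the noose and interaction-balancedness of the resulting guarded separator.
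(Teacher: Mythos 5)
Your proposal is correct and follows essentially the same route as the paper's proof: enumerate all candidate sequences of length at most $3\sqrt{k}$ with branching faces restricted to $\impFaces$ (giving the $(6\onum^5)^r$ count), keep those that are genuine Voronoi separators, and for correctness combine Lemma~\ref{lem:diagram}, Corollary~\ref{thm:face-balanced-noose}, Theorem~\ref{thm:impFaces}, and Lemma~\ref{lem:noose-guarded-sep} exactly as the paper does. The only nitpick is that you should cap the length at $\lfloor 3\sqrt{k}\rfloor$ (the noose has length at most $\sqrt{4.5(2k-4)}<3\sqrt{k}$, and lengths are integers) rather than $\lceil 3\sqrt{k}\rceil$, so that the stated constant in the bound $(2\onum)^{15\sqrt{k}}$ goes through without adjustment.
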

\begin{proof}
We first apply the algorithm of Theorem~\ref{thm:impFaces} to compute a family $\impFaces$ of important faces of size at most $\onum^4$. Then, we enumerate all candidates sequences for Voronoi separators of length at most $3\sqrt{k}$ as in formula (\ref{vorsep}), where faces $f_t$ are chosen from the family $\impFaces$. Note that for a fixed length $r$ ($1\leq r\leq 3\sqrt{k}$), we have at most $\onum^r\cdot \onum^{4r}\cdot 6^r$ such candidates, since every object $p_t$ is chosen among $\onum$ options, every face $f_t$ is chosen among at most $\onum^4$ options, and for choosing vertices $u_t,v_t$ on face $f_t$ we have $6$ options. Observe that
$$\sum_{r=1}^{\lfloor 3\sqrt{k}\rfloor} (6\onum^5)^r\leq \sum_{r=1}^{\lfloor 3\sqrt{k}\rfloor} \frac{1}{4}(2\onum)^{5r}\leq (2\onum)^{15\sqrt{k}}.$$
As family $\Vorsepfam$, we output guarded separators $\grdsep(\sep)$ for all the candidates $\sep$ that are indeed a Voronoi separator; recall that this property can be checked in polynomial time, as well as the construction of $\grdsep(\sep)$ takes polynomial time. Thus we have $|\Vorsepfam|\leq (2\onum)^{15\sqrt{k}}$, which is exactly property \ref{p:size}. Moreover, $\Vorsepfam$ can be enumerated by examining the candidates one by one within the required time and additional space. We are left with proving property \ref{p:balanced}.

Take any normal subfamily $\Fam\subseteq \Obj$ with $|\Fam|=k$, and let $\Vorpar=\Vorpar_{\Fam}$ and $\dgm=\dgm_{\Fam}$ be the corresponding Voronoi partition and diagram. By Lemma~\ref{lem:diagram}, we have that $\dgm$ is a connected, $3$-regular multigraph (possibly with loops), $|V(\dgm)|=2|\Fam|-4=2k-4$, and $|E(\dgm)|=3|\Fam|-6=3k-6$. Since $k\geq 4$, we have that $|E(\dgm)|\geq 6$ and Corollary~\ref{thm:face-balanced-noose} asserts that there exists a $\frac{2}{3}$-face-balanced noose $\crv$ w.r.t. $\dgm$ that has length at most $\sqrt{4.5 |V(\dgm)|}\leq 3\sqrt{k}$. By Lemma~\ref{lem:inherited-properties}\ref{p:issep}, $\sep(\crv)$ is a Voronoi separator of the same length as $\crv$. Since $\Fam\subseteq \Obj$ is a normal subfamily, by Theorem~\ref{thm:impFaces} we have that $V(\dgm)\subseteq \impFaces$, and hence $\sep(\crv)$ uses only faces from $\impFaces$. This means that $\grdsep(\sep(\crv))\in \Vorsepfam$. By Lemma~\ref{lem:noose-guarded-sep}, we have that guarded separator $\grdsep(\sep(\crv))$ is $\frac{2}{3}$-interaction-balanced w.r.t. $\Fam$, and hence property \ref{p:balanced} is proven.
\end{proof}

\newcommand{\cc}{\textrm{cc}}
\newcommand{\comp}{\textrm{comp}}
\newcommand{\kv}{\mathbf{k}}
\newcommand{\AlgName}{\mathtt{SolveDNC}}

\subsection{The algorithm}

We first briefly discuss the intuition. Suppose $\SolObj$ is a solution to the considered instance $(G,\Obj,\Cli,k)$, i.e., it is a normal family of exactly $k$ objects from $\Obj$. We can assume that $k\geq 4$, since otherwise the instance can be solved in polynomial time by brute force. Theorem~\ref{thm:enumeration} gives us a method to enumerate a small family of candidate guarded separators with a promise, that one of them separates the optimum solution evenly. More precisely, there is a guarded separator whose objects all belong to the optimum solution $\SolObj$, and moreover after inferring all the information from this fact, i.e., excluding banned objects and covered clients, we arrive at the situation where every connected component of the remaining interaction graph contains only two thirds of the objects of $\SolObj$. This gives rise to a natural Divide\&Conquer algorithm: For every guarded separator $\grdsep$ output by Theorem~\ref{thm:enumeration}, we recurse into all the components of the modified interaction graph $\IntGraph(\grdsep)$ for all parameters between $0$ and $\lfloor\frac{2}{3}k\rfloor$, and then combine the results using a standard knapsack dynamic programming. Note that in the new subinstances solved recursively we only modify the sets of objects and clients, whereas the underlying graph $G$ remains unchanged. We now proceed to implementing this plan formally.

Let us first introduce some notation. For an instance $\Ii=(G,\Obj,\Cli,k)$ of \covProb, by $\Val[\Ii]$ we denote the maximum revenue of a solution in $\Ii$ that should be reported by the algorithm. When $\SolObj\subseteq \Obj$ is a normal subfamily of objects, by $\Val[\Ii,\SolObj]$ we denote the revenue of $\SolObj$ in $\Ii$. By $\Vorsepfam$ we denote the family of guarded separators computed for instance $\Ii$ using the algorithm of Theorem~\ref{thm:enumeration}. Suppose $\grdsep$ is some guarded separator in $\Vorsepfam$. By $\cc(\grdsep)$ we denote the set of connected components of $\IntGraph(\grdsep)=\IntGraph-(\grds(\grdsep)\cup \banned(\grdsep)\cup \covered(\grdsep))$, where $\IntGraph$ is the interaction graph of instance $\Ii$. For every $C\in \cc(\grdsep)$, by $\Obj(C)$ and $\Cli(C)$ we denote the sets of objects and clients in $C$, respectively. Finally, we shall say that a vector of nonnegative integers $\kv=(k_C)_{C\in \cc(\grdsep)}$ is {\em{compatible with $\grdsep$}} if $k_C\leq \frac{2}{3}k$ for each $C\in \cc(\grdsep)$ and $\sum_{C\in \cc(\grdsep)}k_C=k-|\grds(\grdsep)|$. The set of vectors compatible with $\grdsep$ will be denoted by $\comp(\grdsep)$.

The following lemma is the main argument justifying the correctness of the algorithm.

\begin{lemma}\label{lem:main-correctness}
Provided $k\geq 4$, the following recursive formula holds for every instance $(G,\Obj,\Cli,k)$ of \covProb:
\begin{equation}\label{eq:corr}
\Val[G,\Obj,\Cli,k]=\max_{\grdsep\in \Vorsepfam}\ \max_{\kv\in \comp(\grdsep)}\  \Pri(\grds(\grdsep))+\sum_{C\in \cc(\grdsep)}\Val[G,\Obj(C),\Cli(C),k_C] 
\end{equation}
\end{lemma}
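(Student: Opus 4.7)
The plan is to establish the equality by proving both inequalities, with a shared revenue decomposition as the technical core of the argument.

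For the direction $\text{LHS} \geq \text{RHS}$: I would fix any $\grdsep \in \Vorsepfam$ and any $\kv \in \comp(\grdsep)$, and for each component $C \in \cc(\grdsep)$ pick a normal family $\SolObj_C \subseteq \Obj(C)$ of size $k_C$ achieving $\Val[G,\Obj(C),\Cli(C),k_C]$ (if any subinstance value is $\minf$, the bound is vacuous). I would then argue that $\SolObj := \grds(\grdsep) \cup \bigcup_{C} \SolObj_C$ is a valid normal family of size exactly $k$ with revenue equal to the corresponding RHS term. The cardinality is immediate from the definition of $\comp(\grdsep)$. For normality, I split into pair-types: within $\grds(\grdsep)$ by definition of a guarded separator, within each $\SolObj_C$ by hypothesis, between $\grds(\grdsep)$ and a $\SolObj_C$ since no element of $\Obj(C)$ is banned (so clause (a) of banning fails), and between distinct $\SolObj_{C_1}, \SolObj_{C_2}$ because a violating pair would give an edge of $\IntGraph$ surviving in $\IntGraph(\grdsep)$, contradicting $C_1 \neq C_2$.

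For the direction $\text{RHS} \geq \text{LHS}$: Assuming $\Val[\Ii]$ is finite, let $\SolObj$ be an optimal solution. Since $k \geq 4$, Theorem~\ref{thm:enumeration} supplies a $\grdsep \in \Vorsepfam$ compatible with $\SolObj$ and $\tfrac{2}{3}$-interaction-balanced w.r.t.\ $\SolObj$. The crucial step is verifying that every $p \in \SolObj \setminus \grds(\grdsep)$ is allowed: clause (a) fails since $\grds(\grdsep) \cup \{p\} \subseteq \SolObj$ is normal, and clause (b) fails because compatibility places every $v \in \perm(\grdsep)$ in $\Vorpar_{\SolObj}(p')$ for some $p' \in \grds(\grdsep)$, so $p'$ already minimizes $\dist(v,\loc(\cdot)) - \rad(\cdot)$ over all of $\SolObj$ and in particular beats $p$. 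Consequently $\SolObj \setminus \grds(\grdsep) \subseteq V(\IntGraph(\grdsep))$, and setting $\SolObj_C := \SolObj \cap \Obj(C)$, $k_C := |\SolObj_C|$ produces a vector $\kv \in \comp(\grdsep)$: each $k_C \leq \tfrac{2}{3}k$ by balancedness, and the $k_C$ sum to $k - |\grds(\grdsep)|$.

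The technical heart, shared by both inequalities, is the revenue identity $\Pri(\SolObj) = \Pri(\grds(\grdsep)) + \sum_{C} R_C$, where $R_C$ denotes the revenue of $\SolObj_C$ measured inside the subinstance $(G,\Obj(C),\Cli(C))$. The costs split cleanly because $\SolObj = \grds(\grdsep) \sqcup \bigsqcup_C \SolObj_C$ is a genuine disjoint union (objects in different components of $\IntGraph(\grdsep)$ are disjoint, and $\grds(\grdsep)$ is removed from $V(\IntGraph(\grdsep))$). The delicate part, which I view as the main obstacle, is showing the set of clients covered by $\SolObj$ partitions into $\covered(\grdsep)$ together with, for each $C$, the clients of $\Cli(C)$ covered by $\SolObj_C$: the set $\covered(\grdsep)$ is disjoint from every $\Cli(C)$ by construction of $\IntGraph(\grdsep)$, and any $q \notin \covered(\grdsep)$ covered by some $p \in \SolObj_C$ must itself lie in $\Cli(C)$, because the covering edge $pq \in E(\IntGraph)$ persists in $\IntGraph(\grdsep)$ and places $p$ and $q$ in the same component. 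Plugging this identity into both directions, together with $R_C \leq \Val[G,\Obj(C),\Cli(C),k_C]$ on one side and equality on the other, yields both inequalities and completes the proof.
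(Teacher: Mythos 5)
Your proposal is correct and follows essentially the same route as the paper's proof: one direction assembles a solution from $\grds(\grdsep)$ and optimal subinstance solutions (checking normality pairwise via the banning/interaction-graph definitions), the other applies Theorem~\ref{thm:enumeration} to an optimal $\SolObj$ and uses compatibility to show no object of $\SolObj$ is banned, with the same exact revenue-partition identity in both directions. The only differences are presentational — you spell out why compatibility rules out banned objects (the paper cites this as a prior observation) and factor the revenue identity into a single shared step rather than verifying it twice.
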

\begin{proof}
Let $\Ii=(G,\Obj,\Cli,k)$. For a connected component $C$ on the right-hand side we shall denote $\Ii_C=(G,\Obj(C),\Cli(C),k_C)$. Let $L$ and $R$ denote the left- and right-hand side of formula (\ref{eq:corr}), respectively. 

We first show that $L\leq R$. If $L=\minf$ then the inequality is obvious, so suppose that $\SolObj\subseteq \Obj$ is a normal subfamily of cardinality exactly $k$ that maximizes the revenue on the left-hand side, i.e., $\Val[\Ii]=\Val[\Ii,\SolObj]$. By Theorem~\ref{thm:enumeration}, there exists a guarded separator $\grdsep\in \Vorsepfam$ that is $\frac{2}{3}$-interaction-balanced w.r.t. $\SolObj$. In particular $\grdsep$ is compatible with $\SolObj$, so $\banned(\grdsep)\cap \SolObj=\emptyset$.

For a connected component $C\in \cc(\grdsep)$, let $\SolObj_C=\SolObj\cap \Obj(C)$ and $k_C=|\SolObj_C|$; observe that $\SolObj_C$ is a normal subfamily of $\Obj(C)$. Since $\banned(\grdsep)\cap \SolObj=\emptyset$, we infer that $\SolObj$ is a disjoint union of $\grds(\grdsep)$ and sets $\SolObj_C$ for $C\in \cc(\grdsep)$. Hence $\sum_{C\in \cc(\grdsep)}k_C=k-|\grds(\grdsep)|$. Moreover, since $\grdsep$ is $\frac{2}{3}$-interaction-balanced w.r.t. $\SolObj$, we have that $k_C\leq \frac{2}{3}k$ for all $C\in \cc(\grdsep)$. This implies that vector $\kv:=(k_C)_{C\in \cc(\grdsep)}$ is compatible with $\grdsep$. 

Observe now that $\Val[\Ii,\SolObj]=\Pri(\grds(\grdsep))+\sum_{C\in \cc(\grdsep)}\Val[\Ii_C,\SolObj_C]$, where we denote $\Ii_C=(G,\Obj(C),\Cli(C),k_C)$. Indeed, the cost of every object of $\SolObj$ is counted once on both sides of the equality, and the same holds also for the prizes of clients covered by $\SolObj$: Prize of a client covered by an object of $\grds(\grdsep)$ is counted only in the term $\Pri(\grds(\grdsep))$ due to removing these clients when constructing $\IntGraph(\grdsep)$, and every client not covered by $\grds(\grdsep)$ but covered by $\SolObj$ is counted only in the term $\Val[\Ii_C,\SolObj_C]$ for the component $C$ it belongs to. Recall also that $\Val[\Ii]=\Val[\Ii,\SolObj]$ and $\Val[\Ii_C]\geq \Val[\Ii_C,\SolObj_C]$ by the definition of $\Val[\Ii_C]$. Therefore, we have:
\begin{equation*}
L = \Val[\Ii]=\Val[\Ii,\SolObj]=\Pri(\grds(\grdsep))+\sum_{C\in \cc(\grdsep)}\Val[\Ii_C,\SolObj_C]\leq \Pri(\grds(\grdsep))+\sum_{C\in \cc(\grdsep)}\Val[\Ii_C] \leq R;
\end{equation*}
the last inequality follows from the fact that pair $(\grdsep,\kv)$ is considered in the maxima on the right-hand side of (\ref{eq:corr}).

Now we prove that $L\geq R$. For this, it suffices to show that 
\begin{equation}\label{eq:arbitrary}
\Val[\Ii]\geq \Pri(\grds(\grdsep))+\sum_{C\in \cc(\grdsep)}\Val[\Ii_C]
\end{equation}
for every guarded separator $\grdsep$ and every vector $\kv=(k_C)_{C\in \cc(\grdsep)}$ such that $\sum_{C\in \cc(\grdsep)}k_C=k-|\grds(\grdsep)|$; here, we denote $\Ii_C=(G,\Obj(C),\Cli(C),k_C)$. Hence, let us fix such pair $(\grdsep,\kv)$. For each $C\in \cc(\grdsep)$, let $\SolObj_C\subseteq \Obj(C)$ be a normal subfamily of cardinality $k_C$ that maximizes the revenue in the instance $\Ii_C$; i.e., $\Val[\Ii_C,\SolObj_C]=\Val[\Ii_C]$. If such a family does not exist, then $\Val[\Ii_C]=\minf$ and the claimed inequality is trivial.

Let us now construct $\SolObj:=\grds(\grdsep)\cup \bigcup_{C\in \cc(\grdsep)} \SolObj_C$. Since all the sets in this union are pairwise disjoint, we have that $|\SolObj|=k$. 

We first verify that $\SolObj$ is normal. For the sake of contradiction, suppose that there are two objects $p,p'\in \SolObj$ that contradict the definition of normality. If $p,p'\in \grds(\grdsep)$, then $\grds(\grdsep)$ would not be normal, a contradiction with the definition of a guarded separator. If $p\in \grds(\grdsep)$ and $p'\notin \grds(\grdsep)$, then it would follow that $p'\in \banned(\grdsep)$, which means that $p'$ could not have been included into any $\SolObj_C\subseteq \Obj(C)$, since these sets are disjoint with $\banned(\grdsep)$. If $p,p'\notin \grds(\grdsep)$ but $p,p'\in \SolObj_C$ for the same component $C$, then we would have a contradiction with the normality of family $\SolObj_C$. Finally, if $p\in \SolObj_C$ and $p'\in \SolObj_{C'}$ for different components $C,C'\in \cc(\grdsep)$, then we would have that $pp'\in E(\IntGraph(\grdsep))$, a contradiction with $C$ and $C'$ being different connected components of $\IntGraph(\grdsep)$.

Now, we verify that 
\begin{equation}\label{eq:solobj-val}
\Val[\Ii,\SolObj]=\Pri(\grds(\grdsep))+\sum_{C\in \cc(\grdsep)}\Val[\Ii_C,\SolObj_C].
\end{equation}
Firstly, for the objects' costs, since $\SolObj=\grds(\grdsep)\cup \bigcup_{C\in \cc(\grdsep)} \SolObj_C$ we have that the cost of every object of $\SolObj$ is counted exactly once on each side of (\ref{eq:solobj-val}). Second, we check that the prize of every client $q\in \Cli$ covered by $\SolObj$ is counted exactly once in the right-hand side of (\ref{eq:solobj-val}); note that this formula does not count the prize of any client not covered by $\SolObj$. If $q$ is covered by $\grds(\grdsep)$, then $\pri(q)$ is counted once in $\Pri(\grds(\grdsep))$ and in none of the terms $\Val[\Ii_C,\SolObj_C]$, since $q$ was removed when constructing the graph $\IntGraph(\grdsep)$. If now $q$ is not covered by $\grds(\grdsep)$, then $q$ belongs to some component $C\in \cc(\grdsep)$. Since $q$ is covered by $\SolObj$, there exists some $p\in \SolObj$ that covers $q$; observe that by the definition of $\IntGraph(\grdsep)$, for each such $p$ we have that $pq\in E(\IntGraph(\grdsep))$, which implies that each such $p$ must also belong to $C$. We infer that $\pri(q)$ is counted once in term $\Val[\Ii_C,\SolObj_C]$, and is counted zero times in every term $\Val[\Ii_{C'},\SolObj_{C'}]$ for $C'\neq C$. This concludes the proof of formula (\ref{eq:solobj-val}).

Now, since $\Val[\Ii]\geq \Val[\Ii,\SolObj]$ and $\Val[\Ii_C]=\Val[\Ii_C,\SolObj_C]$ for every $C\in \cc(\grdsep)$, by (\ref{eq:solobj-val}) we obtain that
$$\Val[\Ii]\geq \Val[\Ii,\SolObj]=\Pri(\grds(\grdsep))+\sum_{C\in \cc(\grdsep)}\Val[\Ii_C,\SolObj_C]=\Pri(\grds(\grdsep))+\sum_{C\in \cc(\grdsep)}\Val[\Ii_C],$$
which proves (\ref{eq:arbitrary}). Since $\grdsep$ and $\kv$ were chosen arbitrarily, we have thus proved that $L\geq R$.
\end{proof}

Lemma~\ref{lem:main-correctness} justifies the correctness of the following recursive algorithm for computing the value of $\Val[G,\Obj,\Cli,k]$, summarized as Algorithm $\AlgName$. As the border case, if $k\leq 3$, then we compute the optimum revenue in a brute-force manner, by iterating through all the $k$-tuples of the objects. Otherwise, we aim at computing $\Val[\Ii]$ using formula (\ref{eq:corr}). To this end, we run the algorithm of Theorem~\ref{thm:enumeration} to enumerate the family $\Vorsepfam$ of candidates for guarded separators separating evenly the solution. For each enumerated guarded separator $\grdsep\in \Vorsepfam$ we need to compute the value $\max_{\kv\in \comp(\grdsep)}\ \Pri(\grds(\grdsep))+\sum_{C\in \cc(\grdsep)}\Val[G,\Obj(C),\Cli(C),k_C]$, since Lemma~\ref{lem:main-correctness} asserts that then computing $\Val[G,\Obj,\Cli,k]$ will boil down to taking the maximum of these values. For a fixed guarded separator $\grdsep$, we investigate all the connected components of $\cc(\grdsep)$, and for each $C\in \cc(\grdsep)$ and every integer $\ell$ with $0\leq \ell \leq \frac{2}{3}k$ we compute the value $\Val[G,\Obj(C),\Cli(C),\ell]$ using a recursive call to Algorithm $\AlgName$. Computing value $\max_{\kv\in \comp(\grdsep)}\ \Pri(\grds(\grdsep))+\sum_{C\in \cc(\grdsep)}\Val[G,\Obj(C),\Cli(C),k_C]$ based on all the relevant values $\Val[G,\Obj(C),\Cli(C),\ell]$ amounts to running a standard knapsack dynamic programming algorithm that keeps track of the optimum revenue for selecting $\ell$ objects from the first $i$ connected components of $\cc(\grdsep)$, for all $\ell\in \{0,1,\ldots,k\}$ and $i=1,2,\ldots,|\cc(\grdsep)|$.

\newcommand{\Myto}{\ \mathbf{to}\ }

\begin{algorithm}[h!]
  
  \KwIn{An instance $\Ii=(G,\Obj,\Cli,k)$} 
  \KwOut{Value $\Val[\Ii]$}  \Indp \BlankLine
  \If{$k\leq 3$}{Compute the optimum revenue $ret$ by iterating through all the $k$-tuples of objects\\ \KwRet{$ret$}}
  $ret\leftarrow \minf$\\
  \For{$\grdsep\in \Vorsepfam$, enumerated using the algorithm of Theorem~\ref{thm:enumeration}}{
	Compute $\Pri(\grds(\grdsep))$\\
	Compute sets $\banned(\grdsep)$ and $\covered(\grdsep)$\\
	Compute $\IntGraph(\grdsep)=\IntGraph-(\grds(\grdsep)\cup \banned(\grdsep)\cup \covered(\grdsep))$\\
	$C_1,C_2,\ldots,C_p\leftarrow$ Connected components of $\IntGraph(\grdsep)$\\
	\For{$i=1\Myto p$}{
		\For{$\ell=0\Myto\lfloor\frac{2}{3}k\rfloor$}{
			$A[i][\ell]\leftarrow \AlgName(G,\Obj(C_i),\Cli(C_i),\ell)$
		}	
	}
	$D[0][0]\leftarrow 0$\\
	\For{$\ell=1\Myto k$}{
		$D[0][\ell]\leftarrow \minf$
	}
	\For{$i=1\Myto p$}{
		\For{$\ell=0\Myto k$}{
			$D[i][\ell]\leftarrow \minf$\\
			\For{$\ell'=0\Myto \min(\ell,\lfloor\frac{2}{3}k\rfloor)$}{
				$D[i][\ell]\leftarrow \max(D[i][\ell],A[i][\ell']+D[i-1][\ell-\ell'])$
			}
		}	
	}
	$ret\leftarrow \max(ret,\Pri(\grds(\grdsep))+D[p][k-|\grds(\grdsep)|])$
  }
  \KwRet{$ret$}   
\caption{Algorithm $\AlgName$}
  \label{alg:main}
\end{algorithm}

Lemma~\ref{lem:main-correctness} ensures that the algorithm correctly computes the value $\Val[\Ii]$, so we are left with estimating the running time.

\begin{lemma}\label{lem:running-time}
Algorithm $\AlgName$, applied to an instance $(G,\Obj,\Cli,k)$ with $|V(G)|=n$, $|\Obj|=\onum$, and $|\Cli|=\cnum$, runs in time $\onum^{\Oh(\sqrt{k})}\cdot (\cnum n)^{\Oh(1)}$ and space $(\onum\cnum n)^{\Oh(1)}$.
\end{lemma}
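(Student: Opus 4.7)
The plan is to analyze the recursive structure of Algorithm $\AlgName$ and solve the resulting recurrence. Since the base case $k\leq 3$ is handled by polynomial-time brute force, I will focus on $k\geq 4$. At the top of such a recursive call, Theorem~\ref{thm:enumeration} enumerates at most $(2\onum)^{15\sqrt{k}}$ guarded separators $\grdsep$, in total time $(2\onum)^{15\sqrt{k}}\cdot(\onum\cnum n)^{\Oh(1)}$ and using only $(\onum\cnum n)^{\Oh(1)}$ working space. For each $\grdsep$, computing $\banned(\grdsep)$, $\covered(\grdsep)$, the graph $\IntGraph(\grdsep)$, its connected components $C_1,\ldots,C_p$, and the knapsack tables $A[\cdot][\cdot]$, $D[\cdot][\cdot]$ all take polynomial time; the only nontrivial cost is the recursive calls $\AlgName(G,\Obj(C_i),\Cli(C_i),\ell)$ for $i\leq p$ and $\ell\leq\lfloor 2k/3\rfloor$.

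A key observation is that although $p$ can be as large as $\onum+\cnum$, components $C_i$ with $\Obj(C_i)=\emptyset$ are essentially trivial for the recursion: in such a call, either $\ell\leq 3$ and brute force over the empty family takes polynomial time (returning $0$ if $\ell=0$ and $\minf$ otherwise), or $\ell\geq 4$ and then the family $\Vorsepfam$ enumerated for an empty object set is empty, so the call terminates immediately returning $\minf$. Hence only the at most $\onum$ components containing some object trigger genuinely recursive work, each contributing $\Oh(k)$ calls on instances with parameter at most $\lfloor 2k/3\rfloor$. Writing $T(k)$ for the running time with $n,\onum,\cnum$ treated as fixed upper bounds, the plan is to establish the recurrence
\[
T(k)\;\leq\;(2\onum)^{15\sqrt{k}}\cdot\bigl[\,\onum\cdot k\cdot T(\lfloor 2k/3\rfloor)+\poly(n,\onum,\cnum)\,\bigr].
\]

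Unrolling this recurrence to depth $D=\Oh(\log k)$, the product of the enumeration factors telescopes into $(2\onum)^{15\sqrt{k}\cdot\sum_{i\geq 0}(\sqrt{2/3})^{i}}=\onum^{\Oh(\sqrt{k})}$, since the geometric ratio $\sqrt{2/3}<1$ yields a convergent series. The per-level polynomial branching factor $\onum\cdot k$ taken $D$ times contributes $(\onum\cdot k)^{\Oh(\log k)}\leq\onum^{\Oh(\log k)}$ (using $k\leq\onum$), and because $\log k=\Oh(\sqrt{k})$ this quantity is at most $2^{\Oh(\sqrt{k}\log\onum)}=\onum^{\Oh(\sqrt{k})}$. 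The additive $\poly(n,\onum,\cnum)$ terms at each of the $\Oh(\log k)$ levels are each amplified by at most the full telescoped product, so their sum is likewise bounded by $\onum^{\Oh(\sqrt{k})}\cdot(\cnum n)^{\Oh(1)}$, yielding the claimed overall time bound.

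For the space bound, the guarded separators are generated one at a time via Theorem~\ref{thm:enumeration}, so only polynomial working space is used at each recursive level; the DP tables $A$ and $D$ together have at most $\Oh(k\cdot(\onum+\cnum))$ entries. Since the recursion depth is $\Oh(\log k)=(\onum\cnum n)^{\Oh(1)}$, the total space stays $(\onum\cnum n)^{\Oh(1)}$. The main subtlety in the whole analysis is the treatment of empty-object components: naively charging $p\leq\onum+\cnum$ per separator would inject a factor $(\onum+\cnum)^{\Oh(\log k)}$, which is not polynomial in $\cnum$ when $\cnum\gg\onum$, and the base-case argument above is what allows us to replace it with $\onum^{\Oh(\log k)}\leq\onum^{\Oh(\sqrt{k})}$ and keep the dependence on $\cnum$ polynomial.
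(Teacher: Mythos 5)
Your proposal is correct and follows essentially the same route as the paper: you isolate the trivial subcalls on components with no objects so that only $\Oh(\onum k)$ nontrivial children arise per guarded separator, and then resolve the recursion using the convergent geometric series over $\sqrt{(2/3)^i k}$ together with $\log k=\Oh(\sqrt{k})$ to absorb the per-level polynomial branching, plus the $\Oh(\log k)$ depth for the space bound. The paper packages the same unrolling as an explicit leaf-counting induction on the tree of nontrivial subcalls (to extract a concrete constant in the exponent), but this is only a presentational difference.
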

\begin{proof}
We shall say that a subcall $(G,\Obj',\Cli',k')$ to Algorithm $\AlgName$ is {\em{trivial}} if $\Obj'=\emptyset$. Note that trivial subcalls are resolved by $\AlgName$ in polynomial time since family $\Vorsepfam$ is empty.

Consider the application of $\AlgName$ to an instance $(G,\Obj,\Cli,k)$ with $k\geq 4$, and let us count the number of subcalls to $\AlgName$ it invokes. We have that $|\Vorsepfam|\leq (2\onum)^{15\sqrt{k}}$, and for every $\grdsep\in \Vorsepfam$ we call $\AlgName$ on $|\cc(\grdsep)|\cdot (\lfloor \frac{2}{3}k\rfloor+1)$ subinstances. At most $\onum$ components of $\cc(\grdsep)$ contain an object of $\Obj$, which means that at most $\onum\cdot (\lfloor \frac{2}{3}k\rfloor+1)\leq \onum k$ of these calls are non-trivial. In total, we have at most $\onum k\cdot (2\onum)^{15\sqrt{k}}\leq (2\onum)^{17\sqrt{k}}$ nontrivial subcalls.

Let $T^*$ be the tree of nontrivial recursive subcalls of the application of $\AlgName$ to instance $(G,\Obj,\Cli,k)$. Observe that $T^*$ has depth $\Oh(\log k)$, since for each call of $\AlgName$, in all the recursive subcalls the target size of the sought family is at most a $\frac{2}{3}$-fraction of the original one. For $0\leq \ell\leq k$, let $\phi(\ell)$ be the maximum number of leaves in the subtrees of $T^*$ rooted at calls where the target size of the sought family is at most $\ell$ (or $1$, if there are no such calls). Since in the recursive calls both the target family size and the total number of objects can only decrease, we have that $\phi(\ell)$ satisfies that:
\begin{equation}\label{eq:leaves-rec}
\phi(\ell)\leq (2\onum)^{17\sqrt{\ell}}\cdot \phi(\lceil 2\ell/3\rceil)\qquad\textrm{for $4\leq \ell\leq k$.}
\end{equation}
Moreover, since for $\ell\leq 3$ Algorithm $\AlgName$ solves the instance by brute-force, we have that
\begin{equation}\label{eq:leaves-base}
\phi(\ell)=1\qquad\textrm{for $0\leq \ell\leq 3$.}
\end{equation}

\begin{claim}\label{cl:leaves-induction}
For every function $\phi:\mathbb{N}\to \mathbb{N}$ that satisfies (\ref{eq:leaves-rec}) and (\ref{eq:leaves-base}), the following holds:
\begin{equation}
\phi(\ell)\leq (2\onum)^{17(3+\sqrt{6})\sqrt{\ell}}\qquad\textrm{for $0\leq \ell\leq k$.}
\end{equation}
\end{claim}
\begin{proof}
We prove the claim by induction. The base for $\ell\leq 3$ is trivial by (\ref{eq:leaves-base}), so let us show the inductive step for $\ell\geq 4$:
\begin{eqnarray*}
\phi(\ell)& \leq & (2\onum)^{17\sqrt{\ell}}\cdot \phi(\lceil 2\ell/3\rceil)\\
& \leq & (2\onum)^{17\sqrt{\ell}}\cdot (2\onum)^{17(3+\sqrt{6})\sqrt{\frac{2}{3}}\sqrt{\ell}}\\
& = & (2\onum)^{17(1+\sqrt{6}+2)\sqrt{\ell}}=(2\onum)^{17(3+\sqrt{6})\sqrt{\ell}};
\end{eqnarray*}
Here, in the first inequality we used (\ref{eq:leaves-rec}) and in the second we used the induction hypothesis.
\cqed\end{proof}

Claim~\ref{cl:leaves-induction} implies that the subcall tree $T^*$ has at most $(2\onum)^{17(3+\sqrt{6})\sqrt{k}}<(2\onum)^{93\sqrt{k}}$ leaves. Since the depth of $T^*$ is $\Oh(\log k)$, we infer that $T^*$ has $\Oh(\log k\cdot (2\onum)^{93\sqrt{k}})$ nodes. Now observe that for each call of $\AlgName$, the amount of work used in the main body of this algorithm (excluding the recursive subcalls) is equal to the maximum possible number of guarded separators considered at this step, i.e., $(2\onum)^{15\sqrt{k}}$, times a factor polynomial in $n$, $\onum$, and $\cnum$; here, we assume that the time used for resolving trivial subcalls is charged to the parent call. Hence, the total running time used by the algorithm is at most $(2\onum)^{108\sqrt{k}}\cdot (\onum\cnum n)^{\Oh(1)}\leq \onum^{\Oh(\sqrt{k})}\cdot (\cnum n)^{\Oh(1)}$, as requested. To argue that the algorithm runs in polynomial space, observe that at each moment the algorithm keeps track of the data stored for $\Oh(\log k)$ recursive calls of $\AlgName$, and, due to the algorithm of Theorem~\ref{thm:enumeration} using only polynomial working space, each call uses only polynomial space for its internal data.
\end{proof}

Thus, Lemmas~\ref{lem:main-correctness} and~\ref{lem:running-time} conclude the proof of Theorem~\ref{thm:main1}.

\section{Hardness results}

\newcommand{\parf}{\sigma}
\newcommand{\infvect}{\mathbf{u}}
\newcommand{\vectne}{\nearrow}
\newcommand{\vectnw}{\nwarrow}
\newcommand{\vectse}{\searrow}
\newcommand{\vectsw}{\swarrow}

\newcommand{\xR}{Q}
\newcommand{\parity}{\oplus}
\newcommand{\dirvect}[1]{\oplus[#1]}
\newcommand{\dirvectx}[1]{\oplus^{\circlearrowright}[#1]}

In this section, we prove the lower bounds in Theorems
\ref{thm:intro_planarcover}--\ref{thm:intro_recthard2} suggesting that
for many natural covering problems the $n^{O(k)}$ time brute force
algorithms are almost optimal. Let us start with a very simple
reduction showing such a lower bound in the setting of planar graphs. P\u{a}tra\c{s}cu and Williams gave a very tight lower bound for \probDS, assuming the Strong Exponential Time Hypothesis (SETH).
\begin{theorem}[\cite{DBLP:conf/soda/PatrascuW10}]\label{thm:domset-SETH}
Assuming SETH, there is no $f(k)\cdot n^{k-\epsilon}$ time algorithm for \probDS, where $n$ is the number of vertices of the input graph $G$.
\end{theorem}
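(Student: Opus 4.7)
The plan is a reduction from CNF-SAT, which by SETH requires time $2^{(1-\delta)N}$ for every $\delta > 0$, where $N$ denotes the number of variables. Given a CNF formula $\phi$ on $N$ variables and $M$ clauses, and a fixed target parameter $k$, I would first partition the variables into $k$ blocks $B_1, \ldots, B_k$ of size roughly $N/k$. For each block $B_i$ I introduce a vertex set $V_i$ containing one vertex $x_\alpha$ per truth assignment $\alpha$ of $B_i$, so $|V_i| = 2^{\lceil N/k \rceil}$, and I turn each $V_i$ into a clique so that selecting any single vertex of $V_i$ dominates all of $V_i$.

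To force any dominating set of size $k$ to pick exactly one vertex per $V_i$, I attach a forcing gadget: for each $i$, add two new vertices $s_i$ and $t_i$, both adjacent to every vertex of $V_i$, with $s_i t_i$ not an edge and with no other incidences in the graph. Then the closed neighborhoods of $s_i$ and $t_i$ intersect exactly in $V_i$, so the only way to dominate both $s_i$ and $t_i$ using a single vertex is to pick a vertex of $V_i$; paying with $s_i$ and $t_i$ themselves costs two slots. Since we have $k$ independent such pairs and only $k$ vertices in the dominating set, summing the per-pair lower bounds forces exactly one pick from each $V_i$, which I interpret as a partial assignment $\alpha_i$ of block $B_i$. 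Finally, for each clause $C$ of $\phi$ I add a vertex $c_C$ adjacent precisely to those $x_\alpha \in V_i$ (over all $i$) whose $\alpha$ satisfies at least one literal of $C$. Under the forcing, $c_C$ is dominated iff the combined assignment $\alpha_1 \cup \cdots \cup \alpha_k$ satisfies $C$, so $G$ has a dominating set of size $k$ iff $\phi$ is satisfiable; the graph has $n = k \cdot 2^{\lceil N/k \rceil} + 2k + M$ vertices.

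An assumed $f(k) \cdot n^{k-\epsilon}$ algorithm for \probDS would then decide CNF-SAT in time $f(k) \cdot \poly(M,N) \cdot 2^{(1-\epsilon/k)N}$ for each fixed $k$, contradicting SETH as soon as we choose $\delta < \epsilon/k$. The one delicate point is the forcing argument: I would need to verify that no ``budget transfer'' between groups is available, which follows because every pair $\{s_i, t_i\}$ has all its incidences inside $V_i$, and adding a clause vertex $c_C$ into the dominating set cannot compensate for a missing $V_i$-pick since $c_C$ is not adjacent to any $s_j$ or $t_j$.
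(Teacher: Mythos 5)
Your reduction is correct and follows essentially the same route as the P\u{a}tra\c{s}cu--Williams proof that the paper cites for this theorem (the paper itself gives no proof): partition the variables into $k$ blocks, introduce one vertex per partial assignment of a block, turn each block into a clique, add clause vertices adjacent to the satisfying partial assignments, and force one pick per block, so that an assumed $f(k)\cdot n^{k-\epsilon}$ algorithm for \probDS applied with a fixed $k$ decides CNF-SAT in time $2^{(1-\epsilon/k)N}\cdot\poly(M)$, contradicting SETH. Your pair $\{s_i,t_i\}$ of nonadjacent forcing vertices is a harmless variant of the dummy-vertex gadget used there, and your budget argument (the $k$ disjoint sets $V_i\cup\{s_i,t_i\}$ each absorb exactly one of the $k$ slots, which must then lie in $V_i$) correctly rules out any transfer of budget to clause vertices.
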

Using this result, the proof of Theorem~\ref{thm:intro_planarcover} is very simple and transparent.

\restateintroplanarcover*

\begin{proof}
  Let $H$ be an arbitrary graph with $n$ vertices $v_1$, $\dots$,
  $v_n$. Let $G$ be a star with leaves $v_1$, $\dots$, $v_n$ and
  center $x$. For every vertex $v_i\in V(H)$, let us introduce a set
  $S_i$ into $\Obj$ that contains $v_i$, every neighbor of $v_i$ in
  $H$, and the center vertex $x$ of $G$; clearly, the set $S_i$ is
  connected in $G$. It is easy to see that $k$ vertices of $H$ form a
  dominating set of $G$ if and only if the corresponding $k$ sets of
  $\Obj$ cover every vertex of $G$. Therefore, an $f(k)\cdot
  (|\Obj|+|V(G)|)^{k-\epsilon}$ time algorithm for the covering
  problem on $G$ would give an $f(k)n^{k-\epsilon}$ time algorithm for
  solving \probDS on $H$; this contradicts SETH by
  Theorem~\ref{thm:domset-SETH}.
\end{proof}

In Sections~\ref{sec:convex-polygons}--\ref{sec:almost-squares}, we
give similar lower bounds for geometric objects, proving
Theorems~\ref{thm:intro_convexhard}--\ref{thm:intro_recthard2}.

\subsection{Convex polygons}
\label{sec:convex-polygons}

The proof of Theorem~\ref{thm:intro_convexhard} is again a simple
reduction from \probDS. We observe that if a family of points lies on
a convex curve, then any subset of them is in convex position, hence
convex polygons can cover arbitrary subsets of these points. This simple idea was also used, for example, by Har-Peled~\cite{DBLP:journals/corr/abs-0908-2369}, but here we use it to obtain a different conclusion (tight lower bound on the exponent instead of APX-hardness).

\restateintroconvexhard*

\begin{proof}
  Given an instance $(G,k)$ of \probDS with $n=|V(G)|$, we construct a
  set $\Obj$ of $n$ convex polygons and a set $\pointfam$ of $n+2$
  points such that $G$ has a dominating set of size $k$ if and only if
  $\pointfam$ can be fully covered by $k$ convex polygons from $\Obj$.
  Therefore, an $f(k)\cdot n^{k-\epsilon}$ time algorithm for the
  problem would imply an algorithm for \probDS with similar running
  time, which would contradict SETH by Theorem~\ref{thm:domset-SETH}.
\begin{figure}
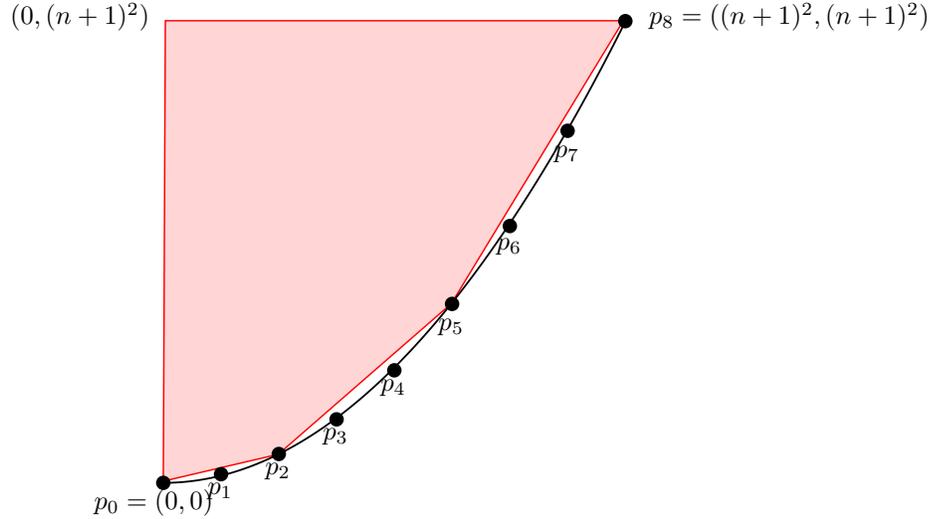

\begin{center}
{\small \svg{0.6\linewidth}{convex}}
\end{center}
\caption{Proof of Theorem~\ref{thm:intro_convexhard}. The points on a parabola and the convex polygon corresponding to a vertex whose closed neighborhood is $\{v_2,v_5\}$.}\label{fig:parabola}
\end{figure}

  We define the point set $\pointfam=\{ p_i=(i(n+1),i^2) \mid 0 \le i
  \le n+1\}$, which is a set of $n+1$ points on a parabolic curve (see
  Figure~\ref{fig:parabola}). Let $v_1$, $\dots$, $v_n$ be the
  vertices of $G$. For every $1\le i \le n$, we introduce into $\Obj$
  a convex polygon corresponding to $v_i$. Suppose that the closed
  neighborhood of $v_i$ is $v_{j_1}$, $\dots$, $v_{j_d}$ for some
  $1\le j_1 < j_2 < \dots <j_d \le n$. Then $v_i$ is represented by a
  convex polygon $P_i$ with vertices $(0,(n+1)^2)$, $p_0$, $p_{j_1}$,
  $p_{j_2}$, $\dots$, $p_{j_r}$, $p_{n+1}$. As the parabola is a
  convex curve, this polygon $P_i$ is convex and it covers exactly the
  points $p_0$, $p_{j_1}$, $p_{j_2}$, $\dots$, $p_{j_r}$, $p_{n+1}$.
  That is, if $v_j$ is not in the closed neighborhood of $v_i$, then $p_j$ is
  outside $P_j$. It is easy to see that a subset of $\Obj$ covers
  $\pointfam$ if and only if the union of the closed neighborhoods of
  the corresponding vertices of $G$ cover $V(G)$, or in other words,
  the corresponding vertices of $G$ form a dominating set.
\end{proof}

\subsection{Thin rectangles}
\label{sec:thin-rectangles}
In the proof of Theorem~\ref{thm:intro_recthard1}, we are
reducing from the \partbic problem: given a graph $G$ with a partition
$(A_1,\dots,A_k,B_1,\dots,B_k)$ of the vertices, the task is to find
$2k$ vertices $a_1\in A_1$, $\dots$, $a_k\in A_k$, $b_1\in B_1$,
$\dots$, $b_k\in B_k$ such that $a_i$ and $b_j$ are adjacent for every
$1\le i , j \le k$. There is a very simple reduction from
\probClique to \partbic (see, e.g., \cite{DBLP:conf/soda/DellM12}) where the output parameter equals the input one,
hence the lower bound of Chen et
al.~\cite{DBLP:journals/jcss/ChenHKX06} can be transferred to \partbic.

\begin{theorem}\label{th:mcb}
Assuming ETH, \partbic cannot be solved in time $f(k)n^{o(k)}$ for any computable function $f$.
\end{theorem}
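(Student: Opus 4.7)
The plan is to derive Theorem~\ref{th:mcb} from the classical lower bound of Chen et al.~\cite{DBLP:journals/jcss/ChenHKX06} stating that under ETH \probClique admits no $f(k) n^{o(k)}$ time algorithm. The excerpt already notes that a very simple parameter-preserving reduction from \probClique to \partbic exists (for example the one of Dell and Marx~\cite{DBLP:conf/soda/DellM12}), so the main work is to write down this reduction and verify its properties.

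Given an instance $(G, k)$ of \probClique with $|V(G)| = n$, I would construct a bipartite graph $H$ whose vertex set is partitioned into $2k$ parts $A_1, \ldots, A_k, B_1, \ldots, B_k$, where each $A_i$ and each $B_j$ is a disjoint copy of $V(G)$. For $u \in A_i$ and $v \in B_j$, include the edge $uv$ in $H$ if and only if either (a) $i = j$ and $u, v$ are copies of the same vertex of $G$, or (b) $i \neq j$ and the vertices of $G$ that $u$ and $v$ represent are adjacent in $G$.

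For correctness, any $k$-clique $\{v_1, \ldots, v_k\}$ in $G$ yields a \partbic solution by picking the copy of $v_i$ in $A_i$ and in $B_i$: condition (a) provides the edges $a_i b_i$, while (b) provides the edges $a_i b_j$ for $i \neq j$. Conversely, for any \partbic solution the requirement $a_i b_i \in E(H)$ forces, by (a), that $a_i$ and $b_i$ represent the same vertex $v_i \in V(G)$. Then the edges $a_i b_j$ for $i \neq j$ translate, by (b), into $v_i v_j \in E(G)$; since $G$ is loopless, this additionally implies $v_i \neq v_j$, so $\{v_1, \ldots, v_k\}$ is a $k$-clique.

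The reduction runs in polynomial time, produces a bipartite graph of size $\Oh(kn)$, and yields a \partbic instance whose parameter equals $k$. Thus an $f(k) (n')^{o(k)}$ algorithm for \partbic on instances of size $n'$ would compose with this reduction into an $f(k') n^{o(k')}$ algorithm for \probClique, contradicting ETH. The only conceptual obstacle is forcing the consistency constraint ``$a_i$ and $b_i$ encode the same vertex of $G$'' through purely biclique-type requirements; rule (a) --- inserting an edge between copies on opposite sides precisely when the indices agree and the copies are of the same $G$-vertex --- is the gadget that handles this cleanly, and no quadratic blow-up in the parameter is incurred.
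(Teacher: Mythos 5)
Your proposal is correct and follows essentially the same route as the paper: the paper simply cites the known parameter-preserving reduction from \probClique to \partbic (via Dell and Marx) and transfers the Chen et al.\ lower bound, while you spell out that reduction explicitly (diagonal edges encoding equality, off-diagonal edges encoding adjacency) and verify it. Your correctness argument, including the observation that looplessness forces the selected vertices to be distinct, is sound.
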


The proof of Theorem~\ref{thm:intro_recthard1} is a parameterized reduction from \partbic.

\restateintrorectharda*
\begin{proof}
  We prove the theorem by a reduction from \partbic. Let $G$ be a graph
  with a partition $(A_1,\dots,A_k,B_1,\dots,B_k)$ of the vertices.
Without loss of generality, we assume that every class of the partition has the same size $n$. For $1\le i \le k$, we denote the vertices of $A_i$ and $B_i$ by $a_{i,j}$ and $b_{i,j}$ ($1\le j \le n$), respectively.

\myparagraph{Construction.}  We construct a set $\rectfam$ of axis-parallel
rectangles and a set $\pointfam$ of points such that there are $k':=8k$
rectangles in $\rectfam$ covering $\pointfam$ if and only if $G$ contains the required partitioned biclique. For simplicity of notation, we construct
open rectangles, that is, selecting a rectangle does not cover the
points on its boundary. However, it is easy to modify the reduction
(by decreasing the size of each rectangle slightly) so that it works
for closed rectangles as well.

First, for every $1\le i \le k$, we define the rectangles (see Figure~\ref{fig:thinrectangles})
\[
\begin{array}{l}
V^T_{i}=(4i-0.5,4i+0.5)\times (4(k+1),8(k+1))\\
V^B_{i}=(4i-0.5,4i+0.5)\times (-4(k+1),0)\\
V^L_{i}=(4i-1,4i)\times (0,4(k+1))\\
 V^R_{i}=(4i,4i+1)\times (0,4(k+1))\\
\end{array}
\]
and for every $1\le j \le k$, we define the rectangles
\[
\begin{array}{l}
H^T_{j}=(0,4(k+1))\times (4j,4j+1)\\
H^B_{j}=(0,4(k+1))\times (4j-1,4j)\\
H^L_{j}=(-4(k+1),0)\times (4j-0.5,4j+0.5)\\
H^R_{j}=(4(k+1),8(k+1))\times (4j-0.5,4j+0.5)\\
\end{array}
\]

\begin{figure}
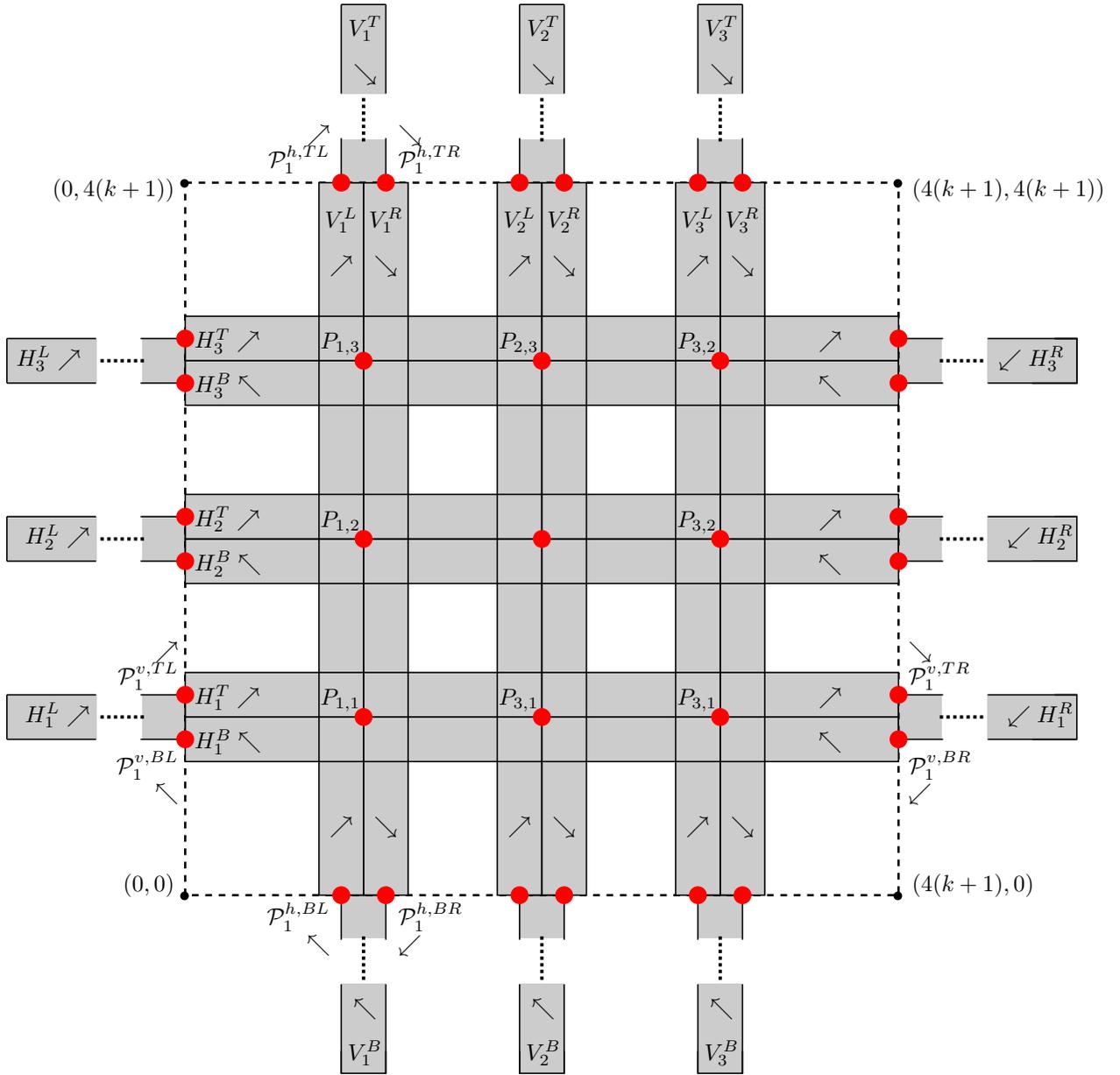

\begin{center}
{\small \svg{\linewidth}{thinrectangles}}
\end{center}
\caption{Proof of Theorem~\ref{thm:intro_recthard1}. The rectangles on
  the left and the right are shortened to save space. The directions
  of the arrows on the rectangles show the offsets of the rectangles
  $V^T_{i,\alpha}$ etc. as $\alpha$ and $\beta$ grow. Similarly, the
  arrows at $\pointfam^{h,TL}_i$ etc. show the offset of the vertices
  $h^{TL}_i$.  }\label{fig:thinrectangles}
\end{figure}

These rectangles themselves do not appear in the set $\rectfam$, but they will used be for the definition of rectangles in $\rectfam$, as follows.
For a rectangle $R$, we use the notation $R+(x,y)$ to denote the rectangle obtained from $R$ by shifting it horizontally by $x$ and vertically by $y$.
We use the notation $\nearrow$, $\searrow$ etc.~for the diagonal vectors $(1,1)$, $(1,-1)$ etc.~with unit projections; then, e.g., $\lambda\cdot \searrow$ denotes the vector $(\lambda,-\lambda)$.

Let $\epsilon=1/(100n)$.
For every $1\le  i\le k$ and $1\le \alpha \le n$, we introduce the following rectangles into $\rectfam$ (see the directions shown by the arrows in Figure~\ref{fig:thinrectangles}): 
\[
\begin{array}{l}
V^T_{i,\alpha}=V^T_{i}+\alpha\epsilon\cdot \vectse\\
V^B_{i,\alpha}=V^B_{i}+\alpha\epsilon\cdot \vectnw\\
V^L_{i,\alpha}=V^L_{i}+\alpha\epsilon\cdot \vectne\\
V^R_{i,\alpha}=V^R_{i}+\alpha\epsilon\cdot \vectse\\
\end{array}
\]
Furthermore, for every $1\le j \le k$ and $1\le \beta \le n$, we introduce the following rectangles:
\[
\begin{array}{l}
H^T_{j,\beta}=H^T_{j}+\beta\epsilon\cdot \vectne\\
H^B_{j,\beta}=H^B_{j}+\beta\epsilon\cdot \vectnw\\
H^L_{j,\beta}=H^L_{j}+\beta\epsilon\cdot \vectne\\
H^R_{j,\beta}=H^R_{j}+\beta\epsilon\cdot \vectsw\\
\end{array}
\]
This completes the description of $\rectfam$; note that $|\rectfam|=8kn$. To define $\pointfam$, for every $1\le i \le k$ and $1\le \alpha \le n$, we introduce the points (see again the directions shown by arrows in Figure~\ref{fig:thinrectangles})
\[
\begin{array}{ll}
v^{TL}_{i,\alpha}=(4i-0.5,4(k+1))+(\alpha+0.5)\epsilon\cdot \vectne & v^{TR}_{i,\alpha}=(4i+0.5,4(k+1))+(\alpha+0.5)\epsilon\cdot \vectse \\
v^{BL}_{i,\alpha}=(4i-0.5,0)+(\alpha+0.5)\epsilon\cdot \vectnw& v^{BR}_{i,\alpha}=(4i+0.5,4(k+1)) +(\alpha+0.5)\epsilon\cdot \vectsw\\
\end{array}
\]
and for every $1\le j \le k$ and $1\le \beta \le n$, we introduce the points
\[
\begin{array}{ll}
h^{TL}_{j,\beta}=(0,4j+0.5)+(\beta+0.5)\epsilon\cdot \vectne & h^{TR}_{j,\beta}=(4(k+1),4j+0.5)+(\beta+0.5\epsilon)\cdot \vectse \\
h^{BL}_{j,\beta}=(0,4j-0.5)+(\beta+0.5)\epsilon\cdot \vectnw & h^{BR}_{j,\beta}=(4(k+1),4j-0.5)+(\beta+0.5)\epsilon\cdot \vectsw \\
\end{array}
\]
We will use the notation $\pointfam^{v,TL}_i:=\{v^{TL}_{i,\alpha}:1\le \alpha
\le n\}$ etc. for these sets of points. Notice that these points are
not on the boundary of any of the rectangles (because of the terms
$0.5\epsilon$), thus it does not matter for the covering of these
points that the rectangles are open.

Finally, for every $1\le i,j \le k$, we add a set $P_{i,j}$ of points
to $\pointfam$ the following way: for every $1\le \alpha,\beta\le n$,
the point $(4i+\alpha\epsilon, 4j +\beta\epsilon)$ is in $P_{i,j}$ if
$a_{i,\alpha}$ and $b_{j,\beta}$ are {\em not} adjacent in $G$. This
completes the description of construction, we have
\[
\pointfam=\bigcup_{i=1}^k (\pointfam^{v,TL}_{i}\cup \pointfam^{v,TR}_{i}\cup \pointfam^{v,BL}_{i}\cup \pointfam^{v,BR}_{i})
\cup \bigcup_{j=1}^k (\pointfam^{h,TL}_{j}\cup \pointfam^{h,TR}_{j}\cup \pointfam^{h,BL}_{h}\cup \pointfam^{h,BR}_{j})
\cup \bigcup_{1\le i,j\le k} P_{i,j}.
\]
Note that $|\pointfam|\leq 8kn+k^2n^2$.

 We claim that we can cover
the points in $\pointfam$ with $8k$ rectangles of $\rectfam$ if and
only if the \partbic instance has a solution. The intuition behind the construction is the following. A solution of size $8k$ has to select exactly one rectangle of each of the $8k$ types $V^T_{1,\alpha}$, $V^B_{1,\alpha}$ etc. The fact that the point set $\pointfam^{v,TL}_{i}\cup \pointfam^{v,TR}_{i}\cup \pointfam^{v,BL}_{i}\cup \pointfam^{v,BR}_{i}$ has to be covered by the four selected rectangles of the form $V^T_{i,\alpha_T}$, $V^B_{i,\alpha_B}$,
$V^L_{i,\alpha_L}$, $V^R_{i,\alpha_R}$ ensures that the selection of these rectangles are synchronized, that is, we have to select $V^T_{i,\alpha_i}$, $V^B_{i,\alpha_i}$,
$V^L_{i,\alpha_i}$, $V^R_{i,\alpha_i}$ for some $1\le \alpha_i \le n$. Similarly, for every $1\le j \le k$, we have to select  $H^T_{j,\beta_j}$, $H^B_{j,\beta_j}$,
$H^L_{j,\beta_j}$, $H^R_{j,\beta_j}$ for some $1\le \beta_j \le n$. The role of the point set $P_{i,j}$ is to ensure that vertices $a_{i,\alpha_i}$ and $b_{j,\beta_j}$ are adjacent in $G$. Indeed, if they are not adjacent, then the point $(4i+\alpha\epsilon, 4j +\beta\epsilon)$ is in $P_{i,j}$, but it is not covered by any rectangle: it is on the boundary of each of $V^L_{i,\alpha_i}$, $V^R_{i,\alpha_i}$,
$H^T_{j,\beta_j}$, $H^B_{j,\beta_j}$.

For the formal proof of the equivalence, let us start with a useful preliminary observation.
\begin{claim}\label{cl:rectvert}
For every $1\le j \le k$ and $1\le \alpha_T,\alpha_B,\alpha_L,\alpha_R\le n$, the rectangles $V^T_{i,\alpha_T}$, $V^B_{i,\alpha_B}$,
$V^L_{i,\alpha_L}$, $V^R_{i,\alpha_R}$ cover all the points in $\pointfam^{v,TL}_{i}\cup \pointfam^{v,TR}_{i}\cup \pointfam^{v,BL}_{i}\cup \pointfam^{v,BR}_{i}$ if and only if $\alpha_T=\alpha_B=\alpha_L=\alpha_R$.
\end{claim}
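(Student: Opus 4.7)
The plan is to reduce the claim to a cyclic chain of inequalities among $\alpha_T, \alpha_B, \alpha_L, \alpha_R$ by analyzing, for each of the four corner point sets, which of the four given rectangles can cover it and under what numerical condition. The choice $\epsilon = 1/(100n)$ guarantees that the diagonal offsets are much smaller than the unit distances that separate the four rectangles from one another, so the geometric picture is rigid.

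First I will argue that the relevant coverage is purely local. For instance, every point of $\pointfam^{v,TL}_i$ lies near the corner $(4i-0.5,\,4(k+1))$, with $x$-coordinate strictly less than $4i$ and $y$-coordinate strictly greater than $4(k+1)$; since $V^B_{i,\alpha_B}$ lies entirely below $y\leq \alpha_B\epsilon$ and $V^R_{i,\alpha_R}$ lies in $x\geq 4i+\alpha_R\epsilon$, neither can cover such a point. Thus $\pointfam^{v,TL}_i$ must be covered by $V^T_{i,\alpha_T}\cup V^L_{i,\alpha_L}$. Analogous arguments handle the other three sets, each of which is constrained to be covered by exactly the two rectangles sharing its corner.

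Second, I will derive the exact coverage conditions by plugging the coordinates of $v^{TL}_{i,\alpha}$, $v^{TR}_{i,\alpha}$, $v^{BL}_{i,\alpha}$, $v^{BR}_{i,\alpha}$ into the (open) rectangle inequalities. The $x$- and $y$-conditions simplify, thanks to the shifts being unit diagonals, to the discrete inequalities: $v^{TL}_{i,\alpha}\in V^T_{i,\alpha_T}$ iff $\alpha_T\leq \alpha$; $v^{TL}_{i,\alpha}\in V^L_{i,\alpha_L}$ iff $\alpha_L> \alpha$, and symmetrically for the other three corners. Requiring that every $\alpha\in\{1,\dots,n\}$ satisfies at least one of the two disjuncts then forces $\alpha_T\leq \alpha_L$ from $\pointfam^{v,TL}_i$ (otherwise taking $\alpha=\alpha_L$ leaves $v^{TL}_{i,\alpha_L}$ uncovered), and by the same reasoning $\alpha_R\leq \alpha_T$, $\alpha_L\leq \alpha_B$, and $\alpha_B\leq \alpha_R$.

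Chaining these gives $\alpha_T\leq \alpha_L\leq \alpha_B\leq \alpha_R\leq \alpha_T$, so all four indices are equal, proving the ``only if'' direction. For the ``if'' direction, substituting a common value $\alpha_T=\alpha_B=\alpha_L=\alpha_R=\alpha^\star$ into the derived conditions shows that each point of each of the four sets satisfies one of the two disjuncts (namely, those with $\alpha\geq \alpha^\star$ are covered by the appropriate top/bottom rectangle and those with $\alpha<\alpha^\star$ by the appropriate left/right rectangle), so coverage holds. The only real obstacle is the somewhat tedious bookkeeping of eight rectangle-point incidence checks, where care is needed to verify that the $x$-side bounds (which differ by a constant $0.5$) are always loose thanks to $\epsilon\leq 1/(100n)$, whereas the $y$-side (or, symmetrically, the narrow-dimension) bounds give the tight discrete inequalities listed above.
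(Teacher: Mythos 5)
Your proposal is correct and follows essentially the same route as the paper's proof: for each corner set you isolate the two rectangles that can cover it, extract the tight discrete coverage conditions, and chain the four inequalities $\alpha_T\le\alpha_L\le\alpha_B\le\alpha_R\le\alpha_T$, using the same witness points as the paper (e.g.\ $v^{TL}_{i,\alpha_L}$ when $\alpha_L<\alpha_T$). One small slip in your parenthetical for the ``if'' direction: the split is not uniformly ``$\alpha\ge\alpha^\star$ covered by top/bottom, $\alpha<\alpha^\star$ by left/right''---for $\pointfam^{v,TR}_{i}$ the points with $\alpha\ge\alpha^\star$ are covered by $V^R_{i,\alpha^\star}$ and those with $\alpha<\alpha^\star$ by $V^T_{i,\alpha^\star}$, and the roles are likewise swapped for $\pointfam^{v,BL}_{i}$; this does not affect the argument, since in each corner the two coverage conditions are complementary.
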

\begin{proof}
  Assume first that $\alpha_T=\alpha_B=\alpha_L=\alpha_R=\alpha^*$. Consider a point
  $v^{TL}_{i,\alpha}$. If $\alpha<\alpha^*$, then rectangle $V^L_{i,\alpha^*}$ covers this
  point: the vertical coordinate of this point is $4(k+1)+\alpha\epsilon+\epsilon/2$ and the top boundary of $V^L_{i,\alpha^*}$ is at $4(k+1)+\alpha^*\epsilon$. If $\alpha\ge \alpha^*$, then rectangle $V^{T}_{i,\alpha^*}$ covers the point:
the horizontal coordinate of this point is $4i-0.5+(\alpha+0.5)\epsilon$ and the left boundary of $V^T_{i,\alpha^*}$ is at $4i-0.5+\alpha^*\epsilon$.
 In a
  similar way, we can verify that the four rectangles cover
  $v^{TR}_{i,\alpha}$, $v^{BL}_{i,\alpha}$, $v^{BR}_{i,\alpha}$ for every $1\le \alpha \le
  n$.

  Suppose now that the four rectangles cover all the $4n$ points. If
  $\alpha_L<\alpha_T$, then point $v^{TL}_{i,\alpha_L}$ is not
  covered: rectangle $V^L_{i,\alpha_L}$ covers only the points
  $v^{TL}_{1,1}$, $\dots$, $v^{TL}_{1,\alpha_L-1}$ of $\pointfam^{v,TL}_{i}$,
  while rectangle $V^L_{i,\alpha_T}$ covers only the points
  $v^{TL}_{i,\alpha_T}$, $\dots$, $v^{TL}_{i,n}$ of
  $\pointfam^{v,TL}_{i}$. Therefore, we have $\alpha_L\ge \alpha_T$. In a
  similar way, we can prove $\alpha_T\ge \alpha_R\ge \alpha_B\ge
  \alpha_L\ge \alpha_T$, implying that all these values are equal.
  \cqed
\end{proof}
We can prove an analogous statement for the vertical rectangles.
\begin{claim}\label{cl:recthor}
For every $1\le j \le k$ and $1\le \beta_T,\beta_B,\beta_L,\beta_R\le n$, the rectangles $H^T_{j,\beta_T}$, $H^B_{j,\beta_B}$,
$H^L_{j,\beta_L}$, $H^R_{j,\beta_R}$ cover all the points in $\pointfam^{h,TL}_{j}\cup \pointfam^{h,TR}_{j}\cup \pointfam^{h,BL}_{j}\cup \pointfam^{h,BR}_{j}$ if and only if $\beta_T=\beta_B=\beta_L=\beta_R$.
\end{claim}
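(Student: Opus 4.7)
The plan is to mirror the argument of Claim~\ref{cl:rectvert} essentially verbatim, swapping the roles of horizontal and vertical coordinates, since the configuration of horizontal rectangles $H^T_{j,\beta}, H^B_{j,\beta}, H^L_{j,\beta}, H^R_{j,\beta}$ around the four point sets $\pointfam^{h,TL}_{j}, \pointfam^{h,TR}_{j}, \pointfam^{h,BL}_{j}, \pointfam^{h,BR}_{j}$ is the exact $90^\circ$ rotation of the vertical configuration. In particular, the directions of the offsets $\vectne,\vectnw,\vectse,\vectsw$ assigned to the four horizontal rectangle types and the four horizontal point sets exhibit the same cyclic pattern as for the vertical case, so no new geometric phenomenon can arise.

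For the ``if'' direction I will assume $\beta_T=\beta_B=\beta_L=\beta_R=\beta^*$ and verify, for each of the four point families, that every point is covered by at least one of the four selected horizontal rectangles. Following the template of the previous claim, the key short computation will be: for $h^{TL}_{j,\beta}$, if $\beta < \beta^*$ then $h^{TL}_{j,\beta}$ lies inside $H^T_{j,\beta^*}$ because its horizontal coordinate is $(\beta+0.5)\epsilon$ while the left edge of $H^T_{j,\beta^*}$ is at $\beta^*\epsilon$; and if $\beta \ge \beta^*$ then $h^{TL}_{j,\beta}$ lies inside $H^L_{j,\beta^*}$ because its vertical coordinate is $4j+0.5+(\beta+0.5)\epsilon$ while the top edge of $H^L_{j,\beta^*}$ is at $4j+0.5+\beta^*\epsilon + \tfrac{4(k+1)-0.5}{?}$ (I will just reread the shift of $H^L_j$, but the relevant inequality is forced by the $0.5\epsilon$ gap). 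Analogous two-case arguments handle $h^{TR}_{j,\beta}, h^{BL}_{j,\beta}, h^{BR}_{j,\beta}$, using the pair of rectangles whose shift directions include a component aligned with the point's offset.

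For the ``only if'' direction I will argue contrapositively exactly as before: if, say, $\beta_T < \beta_L$, then the point $h^{TL}_{j,\beta_T}$ is left uncovered among the four selected horizontal rectangles, because $H^T_{j,\beta_T}$ can cover only $h^{TL}_{j,1},\ldots,h^{TL}_{j,\beta_T-1}$ inside $\pointfam^{h,TL}_j$ (those lying to the left of its right boundary), while $H^L_{j,\beta_L}$ can cover only $h^{TL}_{j,\beta_L},\ldots,h^{TL}_{j,n}$ (those lying above its top boundary), and the other two rectangles $H^B_{j,\beta_B}, H^R_{j,\beta_R}$ are entirely disjoint from the slab in which $\pointfam^{h,TL}_j$ resides. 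This forces $\beta_T \ge \beta_L$, and cyclically $\beta_L \ge \beta_B \ge \beta_R \ge \beta_T \ge \beta_L$, so all four indices coincide.

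There is essentially no main obstacle: the statement is the exact $90^\circ$-rotated twin of Claim~\ref{cl:rectvert}, and the construction was designed symmetrically for this purpose. The only care required is bookkeeping the signs of the offsets $\vectne,\vectnw,\vectse,\vectsw$ correctly when checking which rectangle boundary is responsible for covering or failing to cover a particular point; I will thus write the proof as a short parallel to Claim~\ref{cl:rectvert}, possibly even stating explicitly that it follows by the same argument with $x$ and $y$ axes interchanged, and then only spelling out the computation for $h^{TL}_{j,\beta}$ as a representative case.
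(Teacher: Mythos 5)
Your overall strategy is exactly the paper's: Claim~\ref{cl:recthor} is stated there without a separate proof, as the rotated analog of Claim~\ref{cl:rectvert}, so ``redo the same two-case computations with the axes interchanged'' is the intended argument. However, the representative computation you spell out is wrong as written, because the rotation does not preserve the labels $T,B,L,R$: under the $90^\circ$ rotation relating the two gadgets, $V^T$ corresponds to $H^L$, $V^L$ to $H^B$, $V^B$ to $H^R$, and $V^R$ to $H^T$, so you cannot pair the rectangles with the point sets by matching superscripts. Concretely, $h^{TL}_{j,\beta}=\bigl((\beta+0.5)\epsilon,\,4j+0.5+(\beta+0.5)\epsilon\bigr)$, the left edge of $H^T_{j,\beta^*}$ is at $x=\beta^*\epsilon$, and the top edge of $H^L_{j,\beta^*}$ is at $y=4j+0.5+\beta^*\epsilon$. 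Hence if $\beta<\beta^*$ the point lies strictly to the \emph{left} of $H^T_{j,\beta^*}$ but inside $H^L_{j,\beta^*}$, while if $\beta\ge\beta^*$ it lies \emph{above} $H^L_{j,\beta^*}$ but inside $H^T_{j,\beta^*}$ --- the opposite of the assignment in your proposal (and the ``$?$'' left in your formula for the top edge of $H^L_{j,\beta^*}$ indicates the offsets were not actually traced through).

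The same swap propagates to your ``only if'' direction: from $\pointfam^{h,TL}_j$ one gets that $H^L_{j,\beta_L}$ covers only the points with $\beta<\beta_L$ and $H^T_{j,\beta_T}$ only those with $\beta\ge\beta_T$, so the extracted inequality is $\beta_L\ge\beta_T$, not $\beta_T\ge\beta_L$ as you claim. Because the four corner point sets still produce a closed cycle of inequalities, the conclusion $\beta_T=\beta_B=\beta_L=\beta_R$ survives once the roles are corrected, so this is a fixable bookkeeping error rather than a flaw in the method; but as written, the step you chose to exhibit is false (under your assignment the point $h^{TL}_{j,\beta}$ would simply not lie in the named rectangle). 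When you write the final version, either redo the four short computations directly for the $H$-rectangles, or invoke the rotation explicitly together with the correct permutation of the labels.
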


\myparagraph{Biclique $\Rightarrow$ covering rectangles.}
Suppose that vertices $a_{1,\alpha_{1}}$, $\dots$, $a_{k,\alpha_{k}}$, $b_{1,\beta_{1}}$, $\dots$, $b_{k,\beta_{k}}$ form a biclique in $G$. We claim that the $8k$ rectangles
\begin{itemize}
\item $V^T_{i,\alpha_{i}}$, $V^B_{i,\alpha_{i}}$, $V^L_{i,\alpha_{i}}$, $V^R_{i,\alpha_{i}}$ for $1\le i \le k$, and 
\item $H^T_{j,\beta_{j}}$, $H^B_{j,\beta_{j}}$, $H^L_{j,\beta_{j}}$, $H^R_{j,\beta_{j}}$ for $1\le j \le k$
\end{itemize}
cover every point in $\pointfam$. By
Claim~\ref{cl:rectvert}, for every $1\le i \le k$, the rectangles $V^T_{i,\alpha_{i}}$,
$V^B_{i,\alpha_{i}}$, $V^L_{i,\alpha_{i}}$, $V^R_{i,\alpha_{i}}$ cover every point in
$\pointfam^{v,TL}_{i}\cup \pointfam^{v,TR}_{i}\cup \pointfam^{v,BL}_{i}\cup \pointfam^{v,BR}_{i}$
 and by Claim~\ref{cl:recthor}, for every $1\le j \le k$, the rectangles
$H^T_{j,\beta_{j}}$, $H^B_{j,\beta_{j}}$, $H^L_{j,\beta_{j}}$,
$H^R_{i,\beta_{j}}$ cover every point in $\pointfam^{h,TL}_{j}\cup \pointfam^{h,TR}_{j}\cup \pointfam^{h,BL}_{h}\cup \pointfam^{h,BR}_{j}$. Consider now a point
$p=(4i+\alpha\epsilon,4j+\beta\epsilon)\in P_{i,j}$.  If $\alpha<\alpha_i$, then
$V^L_{i,\alpha_i}$ covers $p$; if $\alpha>\alpha_i$, then
$V^R_{i,\alpha_i}$ covers $p$.  If $\beta<\beta_j$, then
$H^B_{j,\beta_j}$ covers $p$; if $\beta>\beta_j$, then
$V^T_{j,\beta_j}$ covers $p$. Therefore, we have a problem only if
$\alpha=\alpha_i$ and $\beta=\beta_j$. However, we know that vertices
$a_{i,\alpha_i}$ and $b_{j,\beta_j}$ are adjacent in $G$, which
means that the point $p=(4i+\alpha\epsilon,4j+\beta\epsilon)$ was {\em
  not} added to $P_{i,j}$. Therefore, the selected $8k$ rectangles indeed
cover every point in $\pointfam$.

\myparagraph{Covering rectangles $\Rightarrow$ biclique.}
For the proof of the reverse direction, suppose that $\pointfam$ was
covered by $8k$ rectangles from $\rectfam$. For every $1\le i \le k$,
at least one rectangle of the from $V^T_{i,\alpha}$ has to be selected: no rectangle
of some other form can cover the point $v^{TL}_{i,n}$ with vertical
coordinate $4(k+1)+n\epsilon+\epsilon/2$. Similarly, at least one
rectangle of each of the forms $V^B_{i,\alpha}$, $V^L_{i,\alpha}$,
$V^R_{i,\alpha}$ has to be selected. An analogous statement holds for
every $1\le j \le k$ and rectangles of the form $H^T_{j,\beta}$,
$H^B_{j,\beta}$, $H^L_{j,\beta}$, $H^R_{j,\beta}$.  As exactly $8k$
rectangles were selected, exactly one rectangle has to be selected for
each type. Consider some $1\le i \le k$ and suppose that
$V^T_{i,\alpha_T}$, $V^B_{i,\alpha_B}$, $V^L_{i,\alpha_L}$, $V^R_{i,\alpha_R}$
were selected. As these rectangles have to cover 
$\pointfam^{v,TL}_{i}\cup \pointfam^{v,TR}_{i}\cup \pointfam^{v,BL}_{i}\cup \pointfam^{v,BR}_{i}$, Claim~\ref{cl:rectvert} implies that 
$\alpha_T=\alpha_B=\alpha_L=\alpha_R$; let us define $\alpha_i$ to be this number. Similarly, using Claim~\ref{cl:recthor}, we can show that for every $1\le j \le k$, there is a $1 \le \beta_j\le n$ such that the rectangles $H^T_{j,\beta_j}$, $H^B_{j,\beta_j}$, $H^L_{j,\beta_j}$, $H^R_{j,\beta_j}$ were selected.

We claim that $a_{1,\alpha_1}$, $\dots$, $a_{k,\alpha_k}$,
$b_{1,\beta_1}$, $\dots$, $b_{k,\beta_k}$ form a solution to
the \partbic problem. Suppose for a contradiction that
$a_{i,\alpha_i}$ and $b_{j,\alpha_j}$ are not adjacent in $G$. Then
the point $(4i+\alpha_i\epsilon,4j+\beta_j\epsilon)$ is in
$P_{i,j}$. We arrive to a contradiction by showing that this point is
not covered by any of the selected rectangles. Indeed, this point is
on the boundary of each of $V^L_{i,\alpha_i}$, $V^R_{i,\alpha_i}$,
$H^T_{j,\beta_j}$, $H^B_{j,\beta_j}$ and recall that all the
rectangles are open. Therefore, we have shown that the \partbic instance has a solution.
\end{proof}

\subsection{Almost squares}
\label{sec:almost-squares}The goal of this section is to prove Theorem~\ref{thm:intro_recthard2}, which is another lower bound for covering
points with axis-parallel rectangles. In
Theorem~\ref{thm:intro_recthard1}, we had only two types of ``thin''
rectangles. In this section, we consider sets of rectangles of many
different sizes, but they are all ``almost unit squares'': every
rectangle has width and height in the range $[1-\epsilon_0,1+\epsilon_0]$ for
some fixed $\epsilon_0>0$.  

Restricting the problem to almost unit squares makes the hardness
proof significantly more challenging.  Let us observe that the proof
of Theorem~\ref{thm:intro_recthard1} crucially relied on the fact that
the selection of the horizontal rectangles $H^T_{j,\beta}$,
$H^B_{j,\beta}$ interacted with the selection of the vertical
rectangles $V^L_{i,\alpha}$, $V^R_{i,\alpha}$ for every $1\le i,j\le
k$. Intuitively speaking, the selection of these $O(k)$ rectangles
were constrained by $O(k^2)$ interactions. The reason why such a large
number of interactions could be reached is because a wide horizontal
rectangle could be intersected by $O(k)$ independent thin vertical
rectangles. However, if every rectangle is almost a unit square, then
this is no longer possible. After trying some possible configurations
that may be useful for a hardness proof, one gets the impression that
an almost unit square can have independent interactions only with
$O(1)$ other almost unit squares. This means that we can have a total
of $O(k)$ interactions, which does not seem to be sufficient to
express a problem such as \probClique or \partbic, as these problems
have a richer interaction structure with $O(k^2)$
interactions. Therefore, it seems difficult to reduce from
these problems with only a linear blowup of the parameter, which is
necessary for the lower bounds that we want to prove.

To get around these limitations, we are reducing from a problem where
the interaction structure is sparser and a lower bound ruling out
$f(k)n^{o(k)}$ (or so) algorithms is known even with only $O(k)$
constraints.  The \partsub problem is similar to \partbic, but now the
graph we are looking for can be arbitrary, it is not necessarily a
biclique. The input consists of a graph $H$ with vertex set
$\{u_1,\dots,u_k\}$ and a graph $G$ whose vertex set is partitioned
into $k$ classes $V_1$, $\dots$, $V_k$. The task is to find a mapping
$\mu:V(H)\to V(G)$ such that $\mu(u_i)\in V_i$ for every $1\le i \le
k$ and $\mu$ is a subgraph embedding, that is, if $u_i$ and $u_j$ are
adjacent in $H$, then $\mu(u_i)$ and $\mu(u_j)$ are adjacent in
$G$. This problem contains \partbic as a special case, thus
Theorem~\ref{th:mcb} implies that it cannot be solved in time
$f(k)n^{o(k)}$ for any computable function $f$. However, if we
express \partbic with \partsub, then $H$ has $O(k^2)$
edges. Interestingly, the lower bound remains valid (up to a log
factor) even if we restrict $H$ to be sparse.
\begin{theorem}[\cite{marx-toc-treewidth}]\label{thm:subgraph}
  Assuming ETH, \partsub cannot be solved in time $f(k)n^{o(k/\log
    k)}$ (where $k=|V(H)|$) for any computable function $f$, even when
  $H$ is restricted to be a 3-regular bipartite graph.
\end{theorem}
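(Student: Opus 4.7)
The proof follows the blueprint of Marx's treewidth-based lower-bound framework from~\cite{marx-toc-treewidth}; I sketch the high-level structure. The plan is to reduce $3$-SAT to \partsub in such a way that the pattern graph can be chosen to be $3$-regular bipartite and the parameters align to rule out an $f(k)n^{o(k/\log k)}$ algorithm under ETH.

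Starting from a 3-CNF formula $\phi$ on $N$ variables, I first apply the sparsification lemma to assume $\phi$ has $O(N)$ clauses; under ETH, $\phi$ requires $2^{\Omega(N)}$ time. Fix a $3$-regular bipartite graph $H$ on $k$ vertices having treewidth $\Omega(k)$; such $H$ exists, for example as the bipartite double cover of a $3$-regular expander on $k/2$ vertices. I will construct a \partsub instance with pattern $H$ and host graph $G$ on $n = 2^{\Theta(N\log k / k)}$ vertices such that $\phi$ is satisfiable if and only if there is a valid partitioned embedding of $H$ into $G$. Once this is done, an $f(k)n^{o(k/\log k)}$ algorithm for \partsub would yield an $f(k) \cdot 2^{o(N)}$ algorithm for $\phi$, contradicting ETH.

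The construction spreads the $N$ variables of $\phi$ across the $k$ vertex classes of $H$, with each variable placed into $\Theta(\log k)$ distinct blocks, so that every block $B_i$ has size $r = \Theta(N \log k / k)$. The vertex class $V_i$ of $G$ is the set of all assignments to block $B_i$ (so $|V_i| = 2^r$), and for each $\{u_i, u_j\} \in E(H)$ I put an edge $\alpha\beta$ for $\alpha \in V_i, \beta \in V_j$ iff $\alpha$ and $\beta$ agree on shared variables and jointly satisfy every clause of $\phi$ whose variables lie entirely in $B_i \cup B_j$. A valid partitioned embedding of $H$ into $G$ then corresponds to a family of block-local assignments consistent on shared variables and satisfying every clause; combining them yields a satisfying assignment of $\phi$, and vice versa.

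The main obstacle is showing that a suitable variable-to-block assignment exists: one needs, for every clause $C$ of $\phi$, all (up to three) variables of $C$ to sit simultaneously in some pair $B_i \cup B_j$ with $\{u_i, u_j\} \in E(H)$. This is where both the expansion of $H$ and the $\log k$ overhead enter the picture. Marx handles this with a probabilistic placement: each variable is put into $\Theta(\log k)$ independently chosen blocks, and a union-bound estimate over the $O(N)$ clauses combined with expander mixing shows that with positive probability every clause is covered by some edge of $H$. Fewer than $\Theta(\log k)$ repetitions provably do not suffice to beat the union bound, which is the precise source of the $\log k$ slack in the final exponent. Plugging everything in gives $n^{o(k/\log k)} = 2^{o(N)}$, which contradicts ETH and completes the proof.
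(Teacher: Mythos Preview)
This theorem is not proved in the present paper: it is quoted from \cite{marx-toc-treewidth} and used as a black box in the reduction of Theorem~\ref{th:2track}. So there is no ``paper's own proof'' to compare against here.

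That said, your sketch is not a faithful summary of how the cited result is actually established, and the crucial combinatorial step has a genuine gap. You propose placing each variable into $\Theta(\log k)$ random blocks and then arguing, via a union bound plus expander mixing, that every clause lands inside $B_i\cup B_j$ for some edge $\{u_i,u_j\}\in E(H)$. But do the arithmetic: for a fixed edge, each of the three variables of a clause lies in $B_i\cup B_j$ with probability $\Theta(\log k/k)$, so the clause is covered by that edge with probability $\Theta((\log k)^3/k^3)$; summing over the $\Theta(k)$ edges of a $3$-regular $H$ gives only $\Theta((\log k)^3/k^2)$ in expectation. This tends to $0$, so the union bound over $O(N)$ clauses fails badly. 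Expander mixing does not rescue this: it controls edge counts between large vertex sets, not the event that three specific variables simultaneously fall into the two endpoints of some edge.

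Marx's actual argument is structurally different. The engine is an \emph{embedding lemma}: any graph with $m$ edges can be embedded into any graph $H$ of treewidth at least $t$ with depth $O((m/t)\log |V(H)|)$, where ``embedding'' means mapping vertices to connected subgraphs and edges to adjacent pairs of subgraphs. This lemma is what produces the $\log k$ loss, and its proof is not a one-line probabilistic placement---it goes through structural properties of high-treewidth graphs (well-linked sets, flow/routing arguments). One then embeds the constraint graph of a $3$-SAT (or $3$-coloring) instance into $H$ via this lemma and encodes partial assignments on the resulting connected pieces. Choosing $H$ to be a $3$-regular bipartite expander gives $\mathrm{tw}(H)=\Omega(k)$ and hence the stated bound. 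Your overall reduction shape (blocks encode partial assignments, edges check consistency) is right, but the variable-to-block assignment must come from the embedding lemma, not from independent random placement.
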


\newcommand{\I}{\mathcal I}

We prove a lower bound first for an intermediary problem, covering
points by 2-track intervals, and then obtain
Theorem~\ref{thm:intro_recthard2} using a simple reduction of Chan and
Grant~\cite{DBLP:journals/comgeo/ChanG14}. Many natural optimization
problems on intervals (such as finding a maximum set of disjoint
intervals or covering points on the line by the minimum number of
intervals) are polynomial-time solvable. One can obtain
generalizations of these problems by considering {\em $c$-intervals}
instead of intervals, that is, the objects given in the input are
unions of $c$ intervals and we have to select $k$ such objects subject
to some disjointness or covering condition
\cite{DBLP:journals/talg/ButmanHLR10,DBLP:journals/tcs/FellowsHRV09,DBLP:conf/latin/FominGGSSLVV12,DBLP:journals/algorithmica/Francis0O15,DBLP:journals/algorithmica/Jiang13,DBLP:journals/tcs/JiangZ12}. These
problems are typically much harder than the analogous problems on
intervals. A special case of a $c$-interval is a {\em $c$-track
  interval:} we imagine the problem to be defined on $c$ independent
lines (tracks) and each $c$-track interval consists of an interval on
each of the tracks. Many of the hardness results for $c$-intervals
remain valid also for $c$-track intervals.

Let us formally define the intermediary problem and prove the lower
bound we need.  A {\em 2-track interval} is an ordered pair
$(I^1,I^2)$ of intervals on the real line; for concreteness the reader may assume that the intervals are closed, but this does not have any influence on the validity of the claims to follow.  We say that 2-track interval $(I^1,I^2)$ {\em covers a
  point $x$ on the first (resp., second) track} if $x\in I^1$ (resp.,
$x\in I^2$).  A {\em 2-track point set $\pointfam$} is an ordered pair
$(\pointfam^1,\pointfam^2)$ of two sets of real numbers.  We say that
that a set $\I$ of 2-track intervals covers a 2-track point set
$(\pointfam^1,\pointfam^2)$ if
\begin{itemize}
\item for every $x\in \pointfam^1$, there is a 2-track interval $(I^1,I^2)\in\I$ that covers $x$ on the first track (i.e., $x\in I^1$) and 
\item for every $x\in \pointfam^2$, there is a 2-track interval $(I^1,I^2)\in\I$ that covers $x$ on the second track (i.e., $x\in I^2$).
\end{itemize}
We prove a lower bound for the problem of covering a 2-track point set
by selecting $k$ 2-track intervals. Note that this problem is related
to, but not the same as, the \probDS problem for 2-track
interval graphs, for which W[1]-hardness results are known
\cite{DBLP:journals/tcs/FellowsHRV09,DBLP:journals/tcs/JiangZ12} (however no ETH-based lower bound was stated).

\begin{theorem}\label{th:2track}
  Consider the problem of covering a 2-track point set
  $\pointfam=(\pointfam^1,\pointfam^2)$ by selecting $k$ 2-track intervals from
  a set $\I$.  Assuming ETH, there is no algorithm for this problem with running time
  $f(k)\cdot (|\pointfam|+|\I|)^{o(k/\log k)}$ for any computable function $f$.
\end{theorem}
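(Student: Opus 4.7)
The plan is to reduce from \partsub restricted to 3-regular bipartite target graphs, for which Theorem~\ref{thm:subgraph} provides the lower bound $f(k) n^{o(k/\log k)}$. Given an input $(H, G, V_1, \ldots, V_k)$ of \partsub with $H$ bipartite and $3$-regular on $k$ vertices (bipartition $V(H)=A\dot{\cup}B$) and $|V(G)|=n$, I will build a 2-track covering instance with parameter $k' = \Theta(k/\log k)$ and total size $N = n^{O(\log k)}$. A hypothetical algorithm running in time $f(k') N^{o(k'/\log k')}$ then solves the \partsub instance in time $f(k) \cdot n^{O(\log k) \cdot o(k/\log^2 k)} = f(k)\, n^{o(k/\log k)}$, contradicting Theorem~\ref{thm:subgraph}. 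The log-saving comes from \emph{bundling}: a single 2-track interval in the output will represent a partial assignment for an entire block of $\lceil\log k\rceil$ vertices of $H$.

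For the construction, partition $A$ into blocks $A_1,\ldots,A_{k_A}$ and $B$ into blocks $B_1,\ldots,B_{k_B}$, each of size at most $\lceil\log k\rceil$, with $k_A+k_B = k' = O(k/\log k)$. For every block $X$ and every partial assignment $\sigma\colon X \to \bigcup_j V_j$ that is type-correct (i.e. $\sigma(u_j)\in V_j$), create a 2-track interval $I_{X,\sigma}$; this gives $|\I|\le k' n^{\log k} = n^{O(\log k)}$. The two tracks are used asymmetrically: track~1 is dedicated to selecting one interval per $A$-block and to carrying edge-consistency checks for half of $E(H)$, while track~2 plays the symmetric role for $B$-blocks and the other half. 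For every block $X$ I place a \emph{block-selection} point that only the intervals of the form $I_{X,\sigma}$ can cover on the relevant track, ensuring at least one interval per block is picked; with a budget of exactly $k'$, this forces exactly one per block. Then, for every edge $e = uv$ of $H$ with $u\in A_i$, $v\in B_j$ and every non-edge $(x,y)\in V_{\mathrm{idx}(u)}\times V_{\mathrm{idx}(v)}$ with $xy\notin E(G)$, I introduce a single \emph{edge-consistency} point $p_{e,x,y}$ (on whichever track handles $e$) whose location is chosen so that
\[
p_{e,x,y}\in I^{\cdot}_{A_i,\sigma} \iff \sigma(u)\neq x, \qquad p_{e,x,y}\in I^{\cdot}_{B_j,\sigma'} \iff \sigma'(v)\neq y.
\]
Then $p_{e,x,y}$ fails to be covered by the chosen $k'$ intervals precisely when $\sigma(u)=x$ and $\sigma'(v)=y$, i.e. when the selected assignment realizes the forbidden non-edge. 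Hence coverings of size $k'$ biject with solutions of the \partsub instance.

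The main obstacle is purely geometric: realizing the two indicator conditions above with genuine \emph{contiguous} intervals, even though each block $X$ of size $\log k$ is incident to $3\log k$ edges of $H$, so $I^{\cdot}_{X,\sigma}$ must simultaneously "miss'' one designated point per incident edge. The idea is to bucket the points $p_{e,x,y}$ inside each block's zone in lexicographic order along the axis, grouped first by the vertex $u\in X$ and then by the value $x\in V_{\mathrm{idx}(u)}$, so that for every fixed $\sigma$ the set of points that $I^{\cdot}_{X,\sigma}$ must \emph{exclude} consolidates into a single contiguous sub-range that can be cut out by shortening the interval. Using 3-edge-colorability of $H$ (König's theorem) one splits $E(H)$ between the two tracks so that each block contributes a manageable number of excluded positions per track, permitting a single contiguous gap per interval. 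Once this layout is verified, both directions of the correspondence are immediate, giving the desired reduction; Theorem~\ref{thm:intro_recthard2} will then follow by a straightforward transport of 2-track intervals into axis-parallel almost-unit squares, as in Chan--Grant~\cite{DBLP:journals/comgeo/ChanG14}.
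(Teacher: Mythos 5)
The reduction you sketch hinges entirely on the final ``purely geometric'' step, and that step fails. A 2-track interval contributes a single \emph{contiguous} interval per track, and the complement of an interval on a line consists only of the two unbounded rays beyond its endpoints: an interval can never omit an interior point while covering points on both sides of it. Your block interval $I_{X,\sigma}$, however, is supposed to cover the block-selection point and, in each of the roughly $\tfrac{3}{2}\lceil\log k\rceil$ edge-zones assigned to that track, cover all points $p_{e,x,y}$ with $\sigma(u)\neq x$ while omitting exactly those with $\sigma(u)=x$. Generically the omitted points sit strictly inside each zone, so the interval would need an interior gap in each of $\Theta(\log k)$ zones simultaneously -- not ``a single contiguous gap,'' and in fact not even one interior gap is geometrically possible. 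The problem persists even within a single zone: for a fixed edge $e=uv$ the points $p_{e,x,y}$ must be excluded along the row $x=\sigma(u)$ by the $A$-side interval and along the column $y=\sigma'(v)$ by the $B$-side interval; under any linear ordering of the zone, at most one of these two families of exclusion sets is contiguous, and whichever interval faces the non-contiguous (or interior) exclusion set cannot exist. No lexicographic bucketing or use of the 3-edge-coloring circumvents this, because the exclusions demanded of one interval are governed by $\Theta(\log k)$ independent coordinates of $\sigma$, which cannot all be pushed to a prefix or suffix of a single linear order. So the core gadget of the bundling scheme is unrealizable with genuine intervals.

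The bundling is also unnecessary, and this is where your route diverges from the paper. Theorem~\ref{thm:subgraph} already carries the $\log k$ loss, so a reduction with only a \emph{linear} parameter blowup and polynomial instance size suffices; the paper reduces from the same 3-regular bipartite \partsub but introduces, for each vertex $u_i$ of $H$ and each candidate $j$, two 2-track intervals, plus one 2-track interval per edge of $G$ (colored by a 3-edge-coloring of $H$), with budget $k'=3.5k$. Consistency is enforced not by indicator points for non-edges but by forcing the selected intervals to cover a contiguous range $[4ni,4ni+4n]$ on both tracks: the interval endpoints are offset by the chosen index $j$, and the requirement that five offset intervals tile the range yields a cyclic chain of inequalities $j\ge j^0\ge j^1\ge j^2\ge j'\ge j$ forcing equality, so every interaction an interval participates in occurs at one of its two endpoints -- exactly respecting the geometric limitation your gadget runs into. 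If you want to salvage your write-up, replacing the block/indicator-point machinery by this endpoint-offset tiling mechanism (or any other scheme in which each interval checks consistency only at its two ends) is the essential repair.
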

\begin{proof}
  We prove the theorem by a reduction from \partsub. Let $H$ and $G$
  be two graphs, let $V(H)=\{u_1,\dots, u_{k}\}$, and let $(V_1,\dots,
  V_{k})$ be a partition of $V(G)$. By copying vertices if necessary, we may assume that every $V_i$
  has the same size $n$; let us denote by $\{v_{i,1},\dots,v_{i,n}\}$
  the vertices in $V_i$. As $H$ is bipartite and 3-regular, the edges
  of $H$ are 3-colorable. Let us fix a 3-coloring of the edges of $H$
  with colors in $\{0,1,2\}$. If there is an edge $u_{i_a}u_{i_b}$ in
  $H$ having color $c\in\{0,1,2\}$, then we will also refer to the
  edges of $G$ between $V_{i_a}$ and $V_{i_b}$ as having color $c$. We
  may assume that every edge of $G$ receives a color this way, that
  is, there is no edge induced by any $V_{i}$ and there can be an edge between
  $V_{i_a}$ and $V_{i_b}$ only if $u_{i_a}u_{i_b}$ is an edge of
  $H$ (every other edge of $G$ is useless for the purposes of this problem). As $H$ is 3-regular, the two partite classes of $H$ have to be
  of the same size; we may assume without loss of generality that
  $\{u_1,\dots,u_{k/2}\}$ and $\{u_{k/2+1},\dots,u_k\}$ are these two
  classes.

\textbf{Construction.} We define the set $\I$ of 2-track intervals the following way (see Figure~\ref{fig:2track}). First, for every $1\le i \le k/2$ and $1\le j \le n$, let us introduce the following two 2-track intervals into $\I$:
\begin{itemize}
\item $I_{i,j,1}:=([4ni,4ni+j],[4ni,4ni+n-j])$
\item $I_{i,j,2}:=([4ni+3n+j,4ni+4n],[4ni+n-j,4ni+4n])$
\end{itemize}
For every $k/2+1 \le i\le k$ and $1\le j\le n$, we proceed similarly, but exchanging the role of the two tracks. That is, we introduce the following two 2-track intervals into $\I$:
\begin{itemize}
\item $I_{i,j,1}:=([4ni,4ni+n-j],[4ni,4ni+j])$
\item $I_{i,j,2}:=([4ni+n-j,4ni+4n],[4ni+3n+j,4ni+4n])$.
\end{itemize}
Next, we introduce a 2-track interval into $\I$ for every edge of $G$. Let $e=v_{i_a,j_a}v_{i_b,j_b}$ be an edge of $G$ with $1\le i_a \le k/2$ and $k/2+1\le i_b\le k$. Suppose that this edge $e$ has color $c\in\{0,1,2\}$ in the edge coloring we fixed.  Then we introduce
\begin{itemize}
\item $I^*_{e}:=([4ni_a+cn+j_a,4ni_a+(c+1)n+j_a],
[4ni_b+cn+j_b,4ni_b+(c+1)n+j_b])$
\end{itemize}
into the set $\I$. This completes the construction of $\I$; note that $|\I|=3.5nk$ (as $H$ is 3-regular).
\begin{figure}
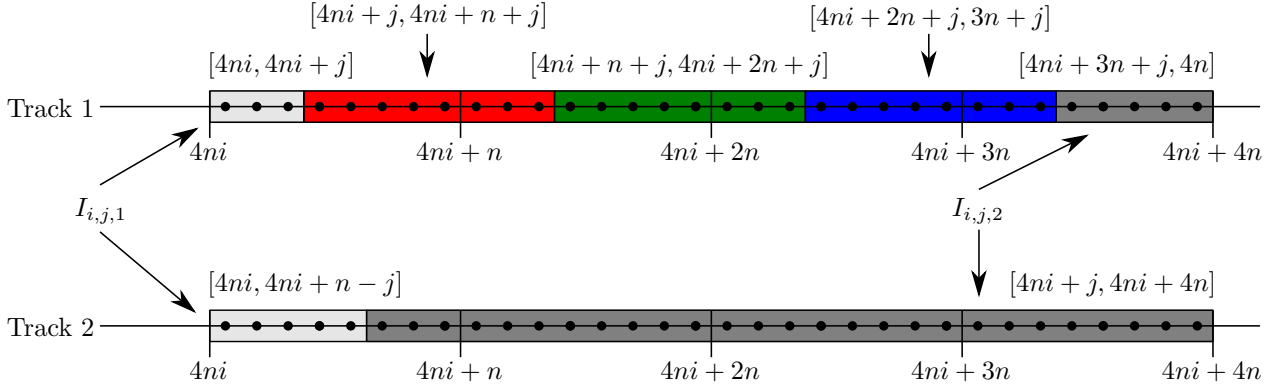

\begin{center}
{\small \svg{\linewidth}{2track2}}
\end{center}
\caption{Proof of Theorem~\ref{th:2track}. Let $n=8$ and $j=3$. The figure shows the 2-track intervals $I_{i,j,1}$ (light gray) and $I_{i,j,2}$ (dark gray). Notice that together they cover $[4ni,4ni+4n]$ on the second track. On the first track, the three intervals of length $n$ (shown by red, green, and blue) are needed to complete the covering of the range $[4ni,4ni+4n]$. These intervals can be provided by the 2-track intervals $I^*_{e^0}$, $I^*_{e^1}$, $I^*_{e^2}$, where $e^c$ is an edge of $G$ incident to $v_{i,j}$ and having color $c$.}\label{fig:2track}
\end{figure}

Let $\pointfam=\{4n+0.5,4n+1.5,\dots, 4n(k+1)-0.5\}$. Let $k'=2|V(H)|+|E(H)|=3.5k$. We claim that it is possible to select $k'$ members of $\I$ to cover the 2-track point set $(\pointfam,\pointfam)$ if and only if the \partsub instance has a solution.

\textbf{Subgraph embedding $\Rightarrow$ 2-track intervals.} Suppose first that vertices $v_{1,s_1}$, $\dots$, $v_{k,s_k}$ form a
solution of the \partsub instance. Then we can cover $(\pointfam,\pointfam)$ by selecting the following set $\I'$ of 2-track intervals:
\begin{itemize}
\item $I_{i,s_i,1}$ and $I_{i,s_i,2}$ for $1\le i \le k$, and
\item for every edge $u_{i_a}u_{i_b}$ of $H$ (with $1\le i_a\le k/2$
  and $k/2+1\le i_b \le k$), edge $e=v_{i_a,s_{i_a}}v_{i_b,s_{i_b}}$
  has to be an edge of $G$; we select the corresponding 2-track
  interval $I^*_e$.
\end{itemize}
We claim that every point in $[4n,4n(k+1)]$ is covered by $\I'$ on both
tracks; this clearly implies that $(\pointfam,\pointfam)$ is covered. Consider the range $[4ni,4ni+4n]$ for some $1\le i \le k/2$ (the case $k/2+1\le i \le k$ is similar). On the second track, $I_{i,s_i,1}$ covers $[4ni,4ni+n-s_i]$ and $I_{i,s_i,2}$ covers $[4ni+n-s_i,4ni+4n]$, thus the range is indeed covered (see Figure~\ref{fig:2track}). Let $k/2\le i^0,i^1,i^2\le k$ be the three neighbors of $u_i$ in $H$, with the edge $u_iu_{i^c}$ having color $c$. This means that the edge $e^c=v_{i,s_i}v_{i^c,s_{i^c}}$ exists in $G$ and $I^*_{e^c}$ was selected into $\I'$ for every $c\in \{0,1,2\}$.
Recall from the definition that $I^*_{e^c}$ covers $[4ni+cn+s_i,4ni+(c+1)n+s_i]$ on the first track.
 Now we have that, on the first track,
\begin{itemize}
\item $[4ni,4ni+s_i]$ is covered by $I_{i,j,1}$,
\item $[4ni+s_i,4ni+n+s_i]$ is covered by $I^*_{e^0}$,
\item $[4ni+n+s_i,4ni+2n+s_i]$ is covered by $I^*_{e^1}$,
\item $[4ni+2n+s_i,4ni+3n+s_i]$ is covered by $I^*_{e^2}$, and
\item $[4ni+3n+s_i,4ni+4n]$ is covered by $I_{i,j,2}$.
\end{itemize}
Thus $[4ni,4ni+4n]$ is indeed covered and we have shown that $(\pointfam,\pointfam)$ is covered by $\I'$.

\textbf{2-track intervals $\Rightarrow$ subgraph embedding.} Suppose now that there is subset $\I'\subseteq \I$ of size $k'$ that
covers the 2-track point set $(\pointfam,\pointfam)$. Let us define the following subsets of $\I$:
\begin{itemize}
\item $\I_{i,1}:=\{I_{i,j,1}\mid 1\le j \le n\}$,
\item $\I_{i,2}:=\{I_{i,j,2}\mid 1\le j \le n\}$,
\item $\I^*_{i,c}:=\{I^*_e\mid \text{edge $e$ has color $c$ and has an endpoint in $V_i$}\}$.
\end{itemize}
Notice that if $u_{i_a}u_{i_b}$ is an edge of $H$ having color $c$, 
then $\I^*_{i_a,c}=\I^*_{i_b,c}$ (as every edge with color $c$ and having an endpoint in $V_{i_a}$ has its other endpoint in $V_{i_b}$). Looking at the definitions, we
can observe that
\begin{itemize}
\item for $1\le i \le k$, point $4ni+0.5$ on the first track is covered only by members of $\I_{i,1}$,
\item for $1\le i \le k/2$, point $4ni+4n-0.5$ on the second track is covered only by members of $\I_{i,2}$,
\item for $k/2+1\le i \le k$, point $4ni+4n-0.5$ on the first track is covered only by members of $\I_{i,2}$,
\item for $1\le i \le k/2$, $0\le c \le 2$, point $4ni+(c+1)n+0.5$ on the first track is covered only by members of $I^*_{i,c}$, and
\item for $k/2+1\le i \le k$, $0\le c \le 2$, point $4ni+(c+1)n+0.5$ on the second track is covered only by members of $I^*_{i,c}$, and
\end{itemize}
That is, at least one member has to be selected from each of the $5k$ sets $\I_{i,1}$, $\I_{i,2}$, $\I^*_{i,c}$. Note that every member of $\I$ is contained either in exactly one of
these sets (in $\I_{i,1}$ or $\I_{i,2}$ for some $1\le i \le k$) or in
exactly two of these sets (in $\I^*_{i_a,c}$ and $\I^*_{i_b,c}$ for some
$1\le i_a \le k/2$, $k/2+1\le i_b \le k$, $c\in\{0,1,2\}$). Therefore,
the only way to cover the $5k$ points enumerated above by selecting
$k'=3.5k$ members of $\I$ is to select exactly one member of each
$\I_{i,1}$, $\I_{i,2}$, and $\I^*_{i,c}$.

Suppose that for some $1\le i \le k/2$, the solution $\I'$ contains
the unique members $I_{i,j}\in \I_{i,1}$, $I_{i,j'}\in \I_{i,2}$, and
$I^*_{e^c}\in \I^*_{i,c}$ for $c\in \{0,1,2\}$.  By definition of
$\I^*_{i,c}$, edge $e^c$ has an endpoint $v_{i,j^c}\in V_i$. We
claim that $j=j'=j^0=j^1=j^2$. Recall that $I_{i,j,1}$ covers
$[4ni,4ni+n-j]$ on the second track, while $I_{i,j',2}$ covers
$[4ni+n-j',4ni+4n]$ on the second track. Thus if $j>j'$, then point
$4ni+n-j'-0.5$ would not be covered on the second track; hence we have $j'\ge
j$. Observe next that $I_{i,j,1}$ covers $[4ni,4ni+j]$ and $I^*_{e^0}$ covers
$[4ni+j^0,4ni+n+j^0]$ on the first track. Thus we infer that
$j\ge j^0$: otherwise, point $4ni+j^0-0.5$ on the first track would not be
covered. In a similar way, we can show that $j^1\ge j^2$ and $j^2\ge
j'$ also hold. Therefore, we get the chain of inequalities
\[
j\ge j^0 \ge j^1 \ge j^2 \ge j' \ge j,
\]
implying that we have equalities throughout, as claimed. 

In the previous paragraph, we have shown that 
for every $1\le i \le k/2$, there is a $1\le s_i \le n$ such that $\I'$ contains
  the unique members $I_{i,s_i}\in \I_{i,1}$, $I_{i,s_i}\in \I_{i,2}$,
  and $I^*_{e^c}\in \I^*_{i,c}$ for $c\in \{0,1,2\}$ with $v_{i,s_i}$ being an endpoint of $e^c$.
In a similar way, we can prove the same statement for every $k/2+1\le i \le k$.
We claim that $v_{1,s_1}$, $\dots$, $v_{k,s_k}$ form a
  solution. Suppose that $u_{i_a}u_{i_b}$ for some $1\le i_a \le k/2$,
  $k/2+1\le i_b \le k$ is an edge of $H$, having color $c\in
  \{0,1,2\}$. We have to show that $v_{i_a,s_{i_a}}$ and
  $v_{i_b,s_{i_b}}$ are adjacent in $G$. Recall that $\I^*_{i_a,c}=\I^*_{i_b,c}$ and consider the unique member $I^*_e\in \I^*_{i_a,c}$ selected into $\I'$. As shown above, this means that edge $e$ of $G$ has an endpoint $v_{i_a,s_{i_a}}\in V_{i_a}$ and also an endpoint $v_{i_b,s_{i_b}}\in V_{i_b}$, what we had to show.
\end{proof}

We are now able to prove Theorem~\ref{thm:intro_recthard2} by a simple
and transparent reduction from covering points by 2-track
intervals. The basic idea is that if we have two parallel lines, then
for any pair of segments on these two lines, there is a rectangle that
intersects these lines exactly at these two segments. Therefore, by
arranging the points of a 2-track point set on two parallel lines, we
can easily turn the problem of covering by 2-track intervals into a
problem of covering by rectangles. Chan and
Grant~\cite{DBLP:journals/comgeo/ChanG14} used the same reduction to
establish the APX-hardness of the problem (by reducing a certain
special case of covering by 2-track intervals). Thus once we have
Theorem~\ref{th:2track}, our lower bound for covering by 2-track
intervals, the reduction of Chan and
Grant~\cite{DBLP:journals/comgeo/ChanG14} implies
Theorem~\ref{thm:intro_recthard2}.

\restateintrorecthardb*
\begin{proof}
  The proof is by reduction from the problem of covering points by
  2-track intervals. Let $\I$ be a set of 2-track intervals, let
  $\pointfam=(\pointfam^1,\pointfam^2)$ be a 2-track point set, and
  let $k$ be an integer. By rescaling if necessary, we may assume without loss of generality that
  every point in $\pointfam^1$ and $\pointfam^2$ is in the interval
  $[0,1]$ and hence every $(I^1,I^2)\in \I$ satisfies
  $I^1,I^2\subseteq [0,1]$. As shown in Figure~\ref{fig:2track-almost}, we arrange the points of $\pointfam^1$
  (resp., $\pointfam^2$) on the line $y=x+1$ (resp., $y=x-1$). That
  is, we construct a 2-dimensional point set $\pointfam^*$ that
  contains
\begin{itemize}
\item point $(\epsilon_0 s,1+\epsilon_0 s)$ for every $s\in \pointfam^1$ and 
\item point $(1+\epsilon_0 s,\epsilon_0 s)$ for every $s\in \pointfam^2$.
\end{itemize}
\begin{figure}
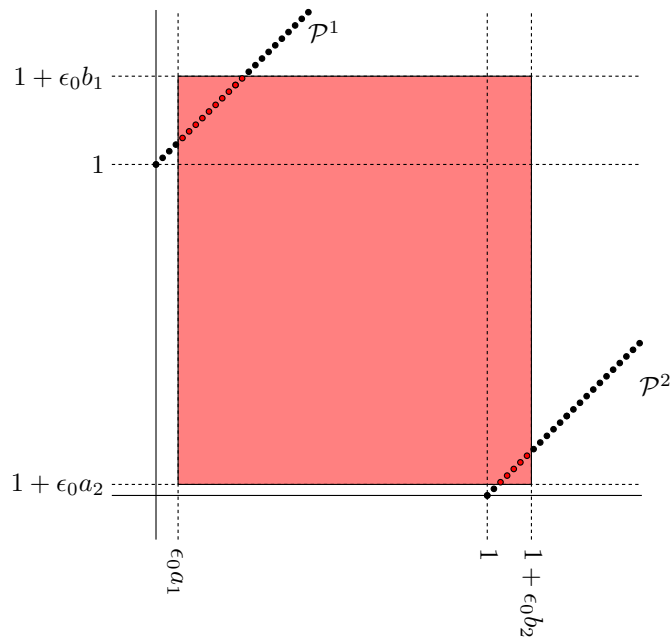

\begin{center}
{\small \svg{0.6\linewidth}{2track-almost}}
\end{center}
\caption{Proof of Theorem~\ref{thm:intro_recthard2}.
The rectangle  $[\epsilon_0 a_1,1+\epsilon_0 b_2]\times [\epsilon_0 a_2,1+\epsilon_0 b_1]$ covers exactly those points that correspond to points in $[a_1,b_1]$ on the first track and or to points in $[a_2,b_2]$ on the second track.
}\label{fig:2track-almost}
\end{figure}

Then we represent every 2-track interval
$([a_1,b_1],[a_2,b_2])\in \I$ by a rectangle whose horizontal
projection is $[\epsilon_0 a_1,1+\epsilon_0 b_2]$ and vertical
projection is $[\epsilon_0 a_2,1+\epsilon_0 b_1]$; note that in this proof we assume that the rectangles are closed, but to obtain the same result for open rectangles we can simply make them slightly larger. Observe that the
sizes of both projections are between $1-\epsilon_0$ and
$1+\epsilon_0$. Moreover, these rectangles faithfully represent the way the 2-track intervals cover the points in $(\pointfam^1,\pointfam^2)$.
For example, 2-track interval $([a_1,b_1],[a_2,b_2])$ covers point $s$ on the first track if and only if $a_1\le s \le b_1$ and the corresponding rectangle covers point $(\epsilon_0 s,1+\epsilon_0 s)$ exactly under the same condition (if $s<a_1$, then the point is to the left of the rectangle; if $s>b_1$, then the point is above the rectangle). Therefore, we can cover the point set $\pointfam^*$ with $k$ of the constructed rectangles if and only if we can cover the 2-track point set $\pointfam$ with $k$ of the 2-track intervals from $\I$.
\end{proof}



\bibliographystyle{abbrv}
\bibliography{voronoi}

\begin{thebibliography}{10}

\bibitem{DBLP:conf/focs/AdamaszekW13}
A.~Adamaszek and A.~Wiese.
\newblock Approximation schemes for maximum weight independent set of
  rectangles.
\newblock In {\em FOCS 2013}, pages 400--409, 2013.

\bibitem{DBLP:conf/soda/AdamaszekW14}
A.~Adamaszek and A.~Wiese.
\newblock A {QPTAS} for maximum weight independent set of polygons with
  polylogarithmically many vertices.
\newblock In {\em SODA 2014}, pages 645--656, 2014.

\bibitem{DBLP:journals/siamcomp/AgarwalOS06}
P.~K. Agarwal, M.~H. Overmars, and M.~Sharir.
\newblock Computing maximally separated sets in the plane.
\newblock {\em {SIAM} J. Comput.}, 36(3):815--834, 2006.

\bibitem{DBLP:journals/algorithmica/AgarwalP02}
P.~K. Agarwal and C.~M. Procopiuc.
\newblock Exact and approximation algorithms for clustering.
\newblock {\em Algorithmica}, 33(2):201--226, 2002.

\bibitem{DBLP:journals/jal/AlberF04}
J.~Alber and J.~Fiala.
\newblock Geometric separation and exact solutions for the parameterized
  independent set problem on disk graphs.
\newblock {\em J. Algorithms}, 52(2):134--151, 2004.

\bibitem{DBLP:journals/talg/ButmanHLR10}
A.~Butman, D.~Hermelin, M.~Lewenstein, and D.~Rawitz.
\newblock Optimization problems in multiple-interval graphs.
\newblock {\em {ACM} Transactions on Algorithms}, 6(2), 2010.

\bibitem{DBLP:journals/comgeo/ChanG14}
T.~M. Chan and E.~Grant.
\newblock Exact algorithms and apx-hardness results for geometric packing and
  covering problems.
\newblock {\em Comput. Geom.}, 47(2):112--124, 2014.

\bibitem{DBLP:journals/jcss/ChenHKX06}
J.~Chen, X.~Huang, I.~A. Kanj, and G.~Xia.
\newblock Strong computational lower bounds via parameterized complexity.
\newblock {\em J. Comput. Syst. Sci.}, 72(8):1346--1367, 2006.

\bibitem{DBLP:conf/soda/ChitnisHM14}
R.~H. Chitnis, M.~Hajiaghayi, and D.~Marx.
\newblock Tight bounds for {P}lanar {S}trongly {C}onnected {S}teiner {S}ubgraph
  with fixed number of terminals (and extensions).
\newblock In {\em SODA 2014}, pages 1782--1801, 2014.

\bibitem{DBLP:conf/soda/DellM12}
H.~Dell and D.~Marx.
\newblock Kernelization of packing problems.
\newblock In {\em SODA 2012}, pages 68--81, 2012.

\bibitem{DBLP:journals/siamdm/DemaineFHT04}
E.~D. Demaine, F.~V. Fomin, M.~T. Hajiaghayi, and D.~M. Thilikos.
\newblock Bidimensional parameters and local treewidth.
\newblock {\em SIAM J. Discrete Math.}, 18(3):501--511, 2004.

\bibitem{DBLP:journals/talg/DemaineFHT05}
E.~D. Demaine, F.~V. Fomin, M.~T. Hajiaghayi, and D.~M. Thilikos.
\newblock Fixed-parameter algorithms for $(k,r)$-{C}enter in planar graphs and
  map graphs.
\newblock {\em ACM Transactions on Algorithms}, 1(1):33--47, 2005.

\bibitem{DBLP:journals/jacm/DemaineFHT05}
E.~D. Demaine, F.~V. Fomin, M.~T. Hajiaghayi, and D.~M. Thilikos.
\newblock Subexponential parameterized algorithms on bounded-genus graphs and
  ${H}$-minor-free graphs.
\newblock {\em J. ACM}, 52(6):866--893, 2005.

\bibitem{DBLP:journals/cj/DemaineH08}
E.~D. Demaine and M.~Hajiaghayi.
\newblock The bidimensionality theory and its algorithmic applications.
\newblock {\em Comput. J.}, 51(3):292--302, 2008.

\bibitem{DBLP:journals/combinatorica/DemaineH08}
E.~D. Demaine and M.~Hajiaghayi.
\newblock Linearity of grid minors in treewidth with applications through
  bidimensionality.
\newblock {\em Combinatorica}, 28(1):19--36, 2008.

\bibitem{DBLP:conf/gd/DemaineH04}
E.~D. Demaine and M.~T. Hajiaghayi.
\newblock Fast algorithms for hard graph problems: Bidimensionality, minors,
  and local treewidth.
\newblock In {\em Graph Drawing}, pages 517--533, 2004.

\bibitem{DBLP:conf/stacs/DornFLRS10}
F.~Dorn, F.~V. Fomin, D.~Lokshtanov, V.~Raman, and S.~Saurabh.
\newblock Beyond bidimensionality: {P}arameterized subexponential algorithms on
  directed graphs.
\newblock In {\em STACS 2010}, pages 251--262, 2010.

\bibitem{DBLP:journals/csr/DornFT08}
F.~Dorn, F.~V. Fomin, and D.~M. Thilikos.
\newblock Subexponential parameterized algorithms.
\newblock {\em Computer Science Review}, 2(1):29--39, 2008.

\bibitem{DornPBF10}
F.~Dorn, E.~Penninkx, H.~L. Bodlaender, and F.~V. Fomin.
\newblock Efficient exact algorithms on planar graphs: {E}xploiting sphere cut
  decompositions.
\newblock {\em Algorithmica}, 58(3):790--810, 2010.

\bibitem{DBLP:journals/tcs/FellowsHRV09}
M.~R. Fellows, D.~Hermelin, F.~A. Rosamond, and S.~Vialette.
\newblock On the parameterized complexity of multiple-interval graph problems.
\newblock {\em Theor. Comput. Sci.}, 410(1):53--61, 2009.

\bibitem{DBLP:conf/latin/FominGGSSLVV12}
F.~V. Fomin, S.~Gaspers, P.~A. Golovach, K.~Suchan, S.~Szeider, E.~J. van
  Leeuwen, M.~Vatshelle, and Y.~Villanger.
\newblock $k$-gap interval graphs.
\newblock In D.~Fern{\'{a}}ndez{-}Baca, editor, {\em {LATIN} 2012}, volume 7256
  of {\em Lecture Notes in Computer Science}, pages 350--361. Springer, 2012.

\bibitem{DBLP:journals/ipl/FominLRS11}
F.~V. Fomin, D.~Lokshtanov, V.~Raman, and S.~Saurabh.
\newblock Subexponential algorithms for partial cover problems.
\newblock {\em Inf. Process. Lett.}, 111(16):814--818, 2011.

\bibitem{FominT04}
F.~V. Fomin and D.~M. Thilikos.
\newblock A simple and fast approach for solving problems on planar graphs.
\newblock In {\em STACS 2004}, pages 56--67, 2004.

\bibitem{DBLP:journals/siamcomp/FominT06}
F.~V. Fomin and D.~M. Thilikos.
\newblock Dominating sets in planar graphs: Branch-width and exponential
  speed-up.
\newblock {\em SIAM J. Comput.}, 36(2):281--309, 2006.

\bibitem{DBLP:journals/algorithmica/Francis0O15}
M.~C. Francis, D.~Gon{\c{c}}alves, and P.~Ochem.
\newblock The maximum clique problem in multiple interval graphs.
\newblock {\em Algorithmica}, 71(4):812--836, 2015.

\bibitem{GuT08}
Q.-P. Gu and H.~Tamaki.
\newblock Optimal branch-decomposition of planar graphs in ${O}(n^3)$ time.
\newblock {\em ACM Transactions on Algorithms}, 4(3), 2008.

\bibitem{DBLP:journals/corr/abs-0908-2369}
S.~Har{-}Peled.
\newblock Being fat and friendly is not enough.
\newblock {\em CoRR}, abs/0908.2369, 2009.

\bibitem{DBLP:conf/compgeom/Har-Peled14}
S.~Har{-}Peled.
\newblock Quasi-polynomial time approximation scheme for sparse subsets of
  polygons.
\newblock In {\em SoCG 2014}, page 120, 2014.

\bibitem{DBLP:journals/algorithmica/HwangLC93}
R.~Z. Hwang, R.~C.~T. Lee, and R.~C. Chang.
\newblock The slab dividing approach to solve the euclidean p-center problem.
\newblock {\em Algorithmica}, 9(1):1--22, 1993.

\bibitem{MR1894519}
R.~Impagliazzo, R.~Paturi, and F.~Zane.
\newblock Which problems have strongly exponential complexity?
\newblock {\em J. Comput. System Sci.}, 63(4):512--530, 2001.

\bibitem{DBLP:journals/algorithmica/Jiang13}
M.~Jiang.
\newblock Recognizing d-interval graphs and d-track interval graphs.
\newblock {\em Algorithmica}, 66(3):541--563, 2013.

\bibitem{DBLP:journals/tcs/JiangZ12}
M.~Jiang and Y.~Zhang.
\newblock Parameterized complexity in multiple-interval graphs: Domination,
  partition, separation, irredundancy.
\newblock {\em Theor. Comput. Sci.}, 461:27--44, 2012.

\bibitem{DBLP:conf/icalp/KleinM12}
P.~N. Klein and D.~Marx.
\newblock Solving {P}lanar $k$-{T}erminal {C}ut in ${O}(n^{c\sqrt{k}})$ time.
\newblock In {\em ICALP 2012 (1)}, pages 569--580, 2012.

\bibitem{DBLP:conf/soda/KleinM14}
P.~N. Klein and D.~Marx.
\newblock A subexponential parameterized algorithm for {S}ubset {TSP} on planar
  graphs.
\newblock In {\em SODA 2014}, pages 1812--1830, 2014.

\bibitem{DBLP:journals/eatcs/LokshtanovMS11}
D.~Lokshtanov, D.~Marx, and S.~Saurabh.
\newblock Lower bounds based on the {E}xponential {T}ime {H}ypothesis.
\newblock {\em Bulletin of the EATCS}, 105:41--72, 2011.

\bibitem{DBLP:conf/esa/Marx05}
D.~Marx.
\newblock Efficient approximation schemes for geometric problems?
\newblock In {\em ESA 2005}, pages 448--459, 2005.

\bibitem{DBLP:conf/iwpec/Marx06}
D.~Marx.
\newblock Parameterized complexity of independence and domination on geometric
  graphs.
\newblock In {\em IWPEC 2006}, pages 154--165, 2006.

\bibitem{DBLP:conf/focs/Marx07a}
D.~Marx.
\newblock On the optimality of planar and geometric approximation schemes.
\newblock In {\em FOCS 2007}, pages 338--348, 2007.

\bibitem{marx-toc-treewidth}
D.~Marx.
\newblock Can you beat treewidth?
\newblock {\em Theory of Computing}, 6(1):85--112, 2010.

\bibitem{DBLP:conf/compgeom/MarxS14}
D.~Marx and A.~Sidiropoulos.
\newblock The limited blessing of low dimensionality: when $1-1/d$ is the best
  possible exponent for $d$-dimensional geometric problems.
\newblock In {\em SoCG 2014}, page~67, 2014.

\bibitem{DBLP:journals/corr/MustafaRR14}
N.~H. Mustafa, R.~Raman, and S.~Ray.
\newblock {QPTAS} for geometric set-cover problems via optimal separators.
\newblock {\em CoRR}, abs/1403.0835, 2014.

\bibitem{DBLP:conf/stacs/PilipczukPSL13}
M.~Pilipczuk, M.~Pilipczuk, P.~Sankowski, and E.~J. van Leeuwen.
\newblock Subexponential-time parameterized algorithm for {S}teiner {T}ree on
  planar graphs.
\newblock In {\em STACS 2013}, pages 353--364, 2013.

\bibitem{DBLP:conf/focs/PilipczukPSL14}
M.~Pilipczuk, M.~Pilipczuk, P.~Sankowski, and E.~J. van Leeuwen.
\newblock Network sparsification for {S}teiner problems on planar and
  bounded-genus graphs.
\newblock In {\em FOCS 2014}, pages 276--285. {IEEE} Computer Society, 2014.

\bibitem{DBLP:conf/soda/PatrascuW10}
M.~P\u{a}tra\c{s}cu and R.~Williams.
\newblock On the possibility of faster {SAT} algorithms.
\newblock In {\em SODA 2010}, pages 1065--1075, 2010.

\bibitem{SeymourT94}
P.~D. Seymour and R.~Thomas.
\newblock Call routing and the ratcatcher.
\newblock {\em Combinatorica}, 14(2):217--241, 1994.

\bibitem{DBLP:conf/focs/SmithW98}
W.~D. Smith and N.~C. Wormald.
\newblock Geometric separator theorems {\&} applications.
\newblock In {\em FOCS 1998}, pages 232--243, 1998.

\bibitem{DBLP:conf/esa/Thilikos11}
D.~M. Thilikos.
\newblock Fast sub-exponential algorithms and compactness in planar graphs.
\newblock In {\em ESA 2011}, pages 358--369, 2011.

\end{thebibliography}
\fi

\end{document}
